\definecolor{DarkBlue}{rgb}{0.1,0.1,0.5}
\definecolor{midnight}{rgb}{0.0, 0.2, 0.4}
\definecolor{dblue}{rgb}{0.12, 0.56, 1.0}
\definecolor{amber}{rgb}{1.0, 0.49, 0.0}
\definecolor{dorange}{rgb}{1.0, 0.55, 0.0}
\definecolor{dtang}{rgb}{1.0, 0.66, 0.07}
\definecolor{dsblue}{rgb}{0.0, 0.75, 1.0}
\definecolor{denim}{rgb}{0.08, 0.38, 0.74}
\definecolor{pblue}{rgb}{0.2, 0.2, 0.6}
\definecolor{gblue}{rgb}{0.0, 0.58, 0.71}
\definecolor{cdblue}{rgb}{0.16, 0.32, 0.75}
\definecolor{dlavender}{rgb}{0.45, 0.31, 0.59}
\definecolor{alizarin}{rgb}{0.82, 0.1, 0.26}
\definecolor{dorchid}{rgb}{0.6, 0.2, 0.8}
\definecolor{dimgray}{rgb}{0.41, 0.41, 0.41}
\definecolor{skobeloff}{rgb}{0.0, 0.48, 0.45}
\definecolor{cgreen}{rgb}{0.0, 0.8, 0.6}
\tikzset{
    ncbar angle/.initial=90,
    ncbar/.style={
        to path=(\tikztostart)
        -- ($(\tikztostart)!#1!\pgfkeysvalueof{/tikz/ncbar angle}:(\tikztotarget)$)
        -- ($(\tikztotarget)!($(\tikztostart)!#1!\pgfkeysvalueof{/tikz/ncbar angle}:(\tikztotarget)$)!\pgfkeysvalueof{/tikz/ncbar angle}:(\tikztostart)$)
        -- (\tikztotarget)
    },
    ncbar/.default=0.1cm,
}
\tikzset{square left brace/.style={ncbar=0.1cm}}
\tikzset{square right brace/.style={ncbar=-0.1cm}}
\newcommand{\NP}{\textsf{NP}\xspace}
\newcommand{\set}[1]{\left\{#1\right\}}
\newcommand{\grp}[1]{\left(#1\right)}
\newcommand{\pair}[1]{\langle #1\rangle}
\DeclareMathOperator{\poly}{poly}
\DeclareMathOperator{\disj}{disj}
\DeclareMathOperator{\early}{early}
\newcommand{\dia}{\diamond}
\DeclareMathOperator{\dist}{dist}
\newcommand{\ca}{\mathcal}
\renewcommand{\t}{\text}
\newcommand{\vin}{V_{\t{in}}}
\newcommand{\vout}{V_{\t{out}}}
\newcommand{\vmix}{V_{\t{mix}}}
\newcommand{\outdeg}{\deg_{\text{out}}}
\newcommand{\rev}{\overleftarrow}
\renewcommand{\bar}{\overline}
\newcommand{\sub}{\subseteq}
\newcommand{\agree}{\text{agree}}
\newcommand{\Fagree}{F_{\agree}}
\newcommand{\Frev}{F_{\text{dis}}}
\newcommand{\Fdis}{\mathcal{F}_{\text{dis}}}
\DeclareMathOperator{\pos}{pos}
\theoremstyle{definition}
\newtheorem{theorem}{Theorem}
\newtheorem{proposition}[theorem]{Proposition}
\newtheorem{lemma}[theorem]{Lemma}
\newtheorem{corollary}[theorem]{Corollary}
\newtheorem{definition}[theorem]{Definition}
\newtheorem{hypothesis}[theorem]{Hypothesis}
\newtheorem{remark}[theorem]{Remark}
\newtheorem{claim}[theorem]{Claim}
\crefname{claim}{claim}{claims}
\newtheorem{open}{Open Problem}
\NewDocumentEnvironment{claimproof}{ }{%
  % proof-specific \qedsymbol
  \proof
}{%
  \endproof
}
\definecolor{amethyst}{rgb}{0.6, 0.4, 0.8}
\title{Detecting Disjoint Shortest Paths in Linear Time and More}
\author{
Shyan Akmal\thanks{\url{naysh@mit.edu}. Partially supported by Virginia Vassilevska Williams' Simons Investigator Award.}
\\MIT 
\and 
Virginia Vassilevska Williams\thanks{\url{virgi@mit.edu}. Supported by NSF Grant CCF-2330048, BSF Grant 2020356, and a Simons Investigator Award.}
\\MIT
\and
Nicole Wein\thanks{\url{nswein@umich.edu}.}\\University of Michigan}
\date{\vspace{-3ex}}
\begin{document}

\maketitle

\vspace{-4ex}
\begin{abstract}
    In the \textsf{$k$-Disjoint Shortest Paths ($k$-DSP)} problem, we are given a graph $G$ (with positive edge weights) on $n$ nodes and $m$ edges with specified source vertices $s_1, \dots, s_k$, and target vertices $t_1, \dots, t_k$, and are tasked with determining if $G$ contains vertex-disjoint $(s_i,t_i)$-shortest paths.
    For any constant $k$, it is known that \textsf{$k$-DSP} can be solved in polynomial time over undirected graphs and directed acyclic graphs (DAGs).
    % [\textcolor{magenta}{Lochet, SODA 2021}].
     However, the \textit{exact} time complexity of \textsf{$k$-DSP} remains mysterious,
    with large gaps between the fastest known algorithms and best conditional lower bounds.
    In this paper, we obtain faster algorithms for important cases of \textsf{$k$-DSP}, and present better conditional lower bounds for \textsf{$k$-DSP} and its variants.
    
    Previous work solved \textsf{2-DSP} over weighted undirected graphs in $O(n^7)$ time, 
    % [\textcolor{magenta}{Akhmedov, CSR 2020}], 
    and weighted DAGs in $O(mn)$ time. 
    % [\textcolor{magenta}{Fortune, Hopcroft, and Wyllie, Theoretical Computer Science 1980}].
    For the main result of this paper, we present optimal \emph{linear time}  algorithms for solving \textsf{2-DSP} on weighted undirected graphs and DAGs.
    Our linear time algorithms are algebraic however, and so only solve the detection rather than search version of \textsf{2-DSP} (we show how to solve the search version in $O(mn)$ time, which is  faster than the previous best runtime in weighted undirected graphs, but only matches the previous best runtime for DAGs).

    We also obtain a faster algorithm for \textsf{$k$-Edge Disjoint Shortest Paths ($k$-EDSP)} in DAGs, the variant of \textsf{$k$-DSP} where one seeks edge-disjoint instead of vertex-disjoint paths between sources and their corresponding targets.
    Algorithms for \textsf{$k$-EDSP} on DAGs from previous work take  $\Omega(m^k)$ time.
    We show that \textsf{$k$-EDSP} can be solved over DAGs in $O(mn^{k-1})$ time, matching the fastest known runtime for solving \textsf{$k$-DSP} over DAGs.

    Previous work established conditional lower bounds for solving \textsf{$k$-DSP} and its variants via reductions from detecting cliques in graphs.
    Prior work implied 
    that \textsf{$k$-Clique} can be reduced to \textsf{$2k$-DSP} in DAGs and undirected graphs with $O((kn)^2)$ nodes.
    We improve this reduction, by showing how to reduce from \textsf{$k$-Clique} to  \textsf{$k$-DSP} in DAGs and undirected graphs with $O((kn)^2)$ nodes (halving the number of paths needed in the reduced instance).
    A variant of \textsf{$k$-DSP} is the \textsf{$k$-Disjoint Paths ($k$-DP)} problem, where the solution paths no longer need to be shortest paths.
    Previous work reduced from \textsf{$k$-Clique} to 
    \textsf{$p$-DP} in DAGs with $O(kn)$ nodes, for $p= k  + k(k-1)/2$.
    % [\textcolor{magenta}{Slivkins, SIAM Journal of Computing 2010}]. 
We improve this by showing a reduction from \textsf{$k$-Clique} to \textsf{$p$-DP}, for $p=k + \lfloor k^2/4\rfloor$.

Under the \textsf{$k$-Clique Hypothesis} from fine-grained complexity, our results establish better conditional lower bounds for \textsf{$k$-DSP} for all $k\ge 4$, and better conditional lower bounds for \textsf{$p$-DP} for all $p\le 4031$.
Notably, our work gives the first nontrivial conditional lower bounds \textsf{4-DP} in DAGs and \textsf{4-DSP} in undirected graphs and DAGs.
Before our work, nontrivial conditional lower bounds were only known for \textsf{$k$-DP} and \textsf{$k$-DSP} on such graphs when $k\ge 6$.

\end{abstract}

% \vfill

\paragraph*{Acknowledgements}
We thank anonymous reviewers for pointing out issues with a proof in a previous version of this work, and simplifying our construction of covering families.
We thank Malte Renken and Andr\'{e} Nichterlein for answering questions about previous work.
We thank Matthias Bentert for pointing out that  \cite{BentertFominGolovach2024} independently
proves \Cref{lbdspboth}. 
The first author additionally thanks Ryan Williams and Zixuan Xu for conversations about early versions of  ideas presented here, and Yinzhan Xu for nice discussions about the arguments in this paper.

% Remove page number for abstract page.
 % \newpage
\pagenumbering{gobble}
\newpage
 \pagenumbering{arabic}

\section{Introduction}
\label{sec:intro}

Routing disjoint paths in graphs is a classical problem in computer science.
For positive integer $k$, in the \textsf{$k$-Disjoint Paths ($k$-DP)} problem, we are given a graph $G$ with $n$ vertices and $m$ edges, with specified source nodes $s_1, \dots, s_k$ and target nodes $t_1, \dots, t_k$, and are tasked with determining if $G$ contains $(s_i,t_i)$-paths which are internally vertex-disjoint.
Beyond being a natural graph theoretic problem to study, \textsf{$k$-DP}  is important because of its deep connections with various computational tasks from the Graph Minors project \cite{RobertsonSeymour1995}.

Following a long line of research, the polynomial-time complexity of \textsf{$k$-DP} has essentially been settled: in directed graphs the \textsf{2-DP} problem is \NP-hard \cite[Lemma 3]{FortuneHopcroftWyllie1980}, and so is unlikely to admit a polynomial time algorithm, while in undirected graphs \textsf{$k$-DP} can be solved in $\tilde{O}(m+n)$ time for $k=2$ \cite{Tholey2005}, and in $O(n^2)$ time or $m^{1+o(1)}$ time for any constant $k\ge 3$ \cite{KawarabayashiKobayashiReed2012,KPS2024}. 

In this work we study an optimization variant of \textsf{$k$-DP}, the \textsf{$k$-Disjoint Shortest Paths ($k$-DSP)} problem.
In \textsf{$k$-DSP} we are given the same input as in \textsf{$k$-DP}, but are now tasked with determining if the input  contains \emph{$(s_i,t_i)$-shortest} paths which are internally vertex-disjoint.
This problem is interesting both because it is a natural graph algorithms question to investigate from the perspective of combinatorial optimization, and because understanding the complexity of \textsf{$k$-DSP} could lead to a deeper understanding of the interaction between shortest paths structures in graphs (analogous to how studying \textsf{$k$-DP} helped develop the rich theory surrounding forbidden minors in graphs). 

Compared to \textsf{$k$-DP}, not much is known about the exact time complexity of \textsf{$k$-DSP}.
In directed graphs, \textsf{2-DSP} can be solved in polynomial time \cite{BercziKobayashi2017}, but no polynomial-time algorithm (or \NP-hardness proof) is known for \textsf{$k$-DSP} for \emph{any} constant $k\ge 3$.
In undirected graphs, it was recently shown that for any constant $k$, \textsf{$k$-DSP} can be solved in polynomial time \cite{Lochet2021}. 
However,
the current best algorithms for \textsf{$k$-DSP} in undirected graphs run in 
$n^{O(k\cdot k!)}$ time, so in general this polynomial runtime is quite large for $k\ge 3$.
For example, the current fastest algorithm for \textsf{3-DSP} in undirected graphs takes $O(n^{292})$ time \cite{Geometric-Lens-conf}.

Significantly faster algorithms are known for detecting $k=2$ disjoint shortest paths.
The paper which first introduced the \textsf{$k$-DSP} problem in 1998 also presented an $O(n^8)$ time algorithm for solving \textsf{2-DSP} in weighted\footnote{Throughout, we assume that weighted graphs have positive edge weights.} undirected graphs \cite{Tzoreff1998}.
The first improvement for this problem came over twenty years later in \cite{Akhmedov2020}, 
which presented an algorithm solving \textsf{2-DSP} in weighted undirected graphs in $O(n^7)$ time, and in unweighted undirected graphs in $O(n^6)$ time.
Soon after, \cite[Theorem 1]{Geometric-Lens-conf} presented an even faster $O(mn)$ time algorithm for solving \textsf{2-DSP} in the special case of unweighted undirected graphs.\footnote{It seems plausible that the method of \cite{Geometric-Lens-conf} could be adapted to handle weighted undirected graphs as well, but such a generalization does not appear to currently be known.}

The main result of our work is an optimal algorithm for \textsf{2-DSP} in weighted undirected graphs.

\begin{restatable}{theorem}{twodspundir}
    \label{2dsp-undir}
    The \textsf{2-DSP} problem can be solved in weighted undirected graphs in $O(m+n)$ time.
\end{restatable}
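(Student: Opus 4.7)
The plan is to develop a randomized algebraic algorithm inspired by the Lindström--Gessel--Viennot (LGV) lemma. In outline, I would compute distances from all four terminals, reduce 2-DSP to checking whether a certain $2 \times 2$ polynomial determinant is nonzero, and then evaluate that determinant in linear time via dynamic programming on the shortest-path subgraph.

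Concretely, I would first compute the distances $d(s_i,\cdot)$ and $d(t_i,\cdot)$ for $i \in \{1,2\}$ in $O(m+n)$ time using Thorup's linear-time undirected SSSP algorithm. These distances identify, for each $i,j \in \{1,2\}$, the subgraph $H_{ij}$ consisting of edges that lie on some shortest $s_i$-to-$t_j$ path, and assign each such edge a canonical orientation from lower to higher $d(s_i,\cdot)$. I would then assign a fresh formal indeterminate $x_v$ to each internal vertex $v$ and form the $2 \times 2$ matrix $A$ with entries
\[
A_{ij} \;=\; \sum_{P:\; s_i \rightsquigarrow t_j \text{ shortest}} \; \prod_{v \in V(P)\setminus\{s_i,t_j\}} x_v.
\]
The goal is to prove that $\det(A) = A_{11}A_{22} - A_{12}A_{21}$ is a nonzero polynomial in the $x_v$ if and only if the 2-DSP instance has a solution, and then detect this by evaluating $\det(A)$ at a random point in a field $\mathbb{F}_q$ of size $\mathrm{poly}(n)$ and appealing to Schwartz--Zippel.

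Granted the determinant identity, evaluation is immediate in the target running time: each $A_{ij}$, instantiated at a random assignment $x_v \in \mathbb{F}_q$, can be computed by the standard linear-time ``number of shortest paths'' dynamic program on the DAG $H_{ij}$, so the total cost is $O(m+n)$. The forward direction of the identity is also straightforward: a vertex-disjoint solution to 2-DSP contributes a monomial to $A_{11}A_{22}$ whose set of variables is distinct from anything produced by $A_{12}A_{21}$, since the supports of disjoint $(s_1,t_1)$ and $(s_2,t_2)$ path pairs differ from the supports of disjoint $(s_1,t_2)$ and $(s_2,t_1)$ path pairs.

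The main obstacle I foresee is the converse: showing that every monomial contributed by a \emph{non-}disjoint pair of shortest paths in $A_{11}A_{22}$ is cancelled by a matching contribution in $A_{12}A_{21}$. In a DAG this follows from the classical LGV tail-swap at the first shared vertex, but in an undirected graph the two paths at a shared vertex $v$ may traverse it in opposite directions relative to the distance functions, so a naive swap can produce walks with repeated vertices or walks that are no longer shortest paths. I expect the fix to come from restricting to a canonical class of shortest paths (for instance, those that are monotone with respect to a fixed tiebreaking order on vertices derived from the four distance functions) and proving that this class is closed under tail-swapping at the first shared vertex; the positivity of edge weights and the triangle inequality should rule out the problematic degenerate crossings. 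A few boundary cases --- where some of the $s_i,t_j$ coincide, or where a terminal lies on a shortest path between two other terminals --- will need separate, but routine, handling.
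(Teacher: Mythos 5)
Your central idea --- form the $2\times 2$ LGV-style determinant $\det(A) = A_{11}A_{22} - A_{12}A_{21}$ where $A_{ij}$ enumerates shortest $s_i$-to-$t_j$ paths, and hope that intersecting pairs cancel via a tail swap at the first shared vertex --- has a gap that I do not believe can be repaired, and it is precisely the obstacle the paper was designed to route around. Suppose $P_1$ is a shortest $(s_1,t_1)$-path and $P_2$ is a shortest $(s_2,t_2)$-path, and they first meet at $v$. The tail swap produces walks $Q_1 = P_1[s_1,v]\dia P_2[v,t_2]$ and $Q_2 = P_2[s_2,v]\dia P_1[v,t_1]$ of lengths $d(s_1,v)+d(v,t_2)$ and $d(s_2,v)+d(v,t_1)$. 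Nothing forces these to equal $d(s_1,t_2)$ and $d(s_2,t_1)$ --- the vertex $v$ lies on a shortest $(s_1,t_1)$-path and on a shortest $(s_2,t_2)$-path, but generally not on a shortest $(s_1,t_2)$-path or a shortest $(s_2,t_1)$-path. So $\langle Q_1, Q_2 \rangle$ is typically not enumerated by $A_{12}A_{21}$ at all, and the claimed cancellation simply does not occur. This is not a degeneracy that a canonical tiebreaking order on shortest paths will cure; the swapped walks fail to be shortest paths for the swapped endpoint pairs, period. (For the same reason, the ``forward direction'' is also shaky: even when the 2-DSP instance \emph{is} solvable, you would still need to rule out that leftover uncancelled intersecting pairs in $A_{11}A_{22}$ cancel the genuine disjoint monomial, which the mismatched-support argument doesn't quite do.) In short, the LGV determinant is the right tool when the target endpoints are interchangeable, as in linkage problems or \textsf{MinSum 2-DP}, but 2-DSP fixes the $(s_i,t_i)$ pairing, which breaks the mechanism.

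The paper's proof sidesteps this by never re-pairing sources with targets. Instead of a tail swap, it swaps the \emph{internal} subpath between two intersection points of $P_1$ and $P_2$ --- both paths keep the same endpoints $s_i, t_i$, so by the shortest-path-swapping lemma the new pair is still a standard pair of shortest paths, and cancellation modulo $2$ goes through. Because undirected graphs have no topological order, the paper cannot simply condition on ``the earliest'' shared vertex and must split into the agreeing/disagreeing cases (\Cref{cor:types}) with separate relaxation arguments for each; the disagreeing case also needs reverse traversal. None of that machinery is a cosmetic variant of LGV --- it exists exactly because the determinant approach fails here. I'd suggest re-reading the paragraph in \Cref{subsec:prev-alg} headed ``Paths and Linkages with Satisfying Length Conditions,'' which states this obstruction explicitly, and then working through \Cref{lem:subpath-swapping-cancels} and its uses in \Cref{linkage-relaxation}, \Cref{lem:edge-agree}, and \Cref{lem:relax-B} to see how a swap that preserves endpoints gives the cancellation you were hoping for.
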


% Unlike previous approaches to the \textsf{2-DSP} problem, our algorithm from \Cref{2dsp-undir} is algebraic in nature, and works by efficiently evaluating a polynomial whose terms encode all pairs of disjoint shortest paths.
% Once consequence of this approach is that our algorithm is randomized, and solves the problem with high probability.
% We discuss some limitations of this approach in \Cref{subsec:limit}, and compare the ideas used in our algorithm to previous algebraic graph algorithms in \Cref{subsec:prev-alg}. 
% \nicole{I think it would be clearer to just state the limitations here instead of having a section about it later. E.g. we could just put the word ``randomized" into the theorem statement and I'm not sure it's necessary to include the description for why it's randomized. And for detection vs search, I think we can add a sentence here saying that it only works for detection and also add a Corollary with the O(mn) search version.}

This result pins down the true time complexity of \textsf{2-DSP} in undirected graphs, and (up to certain limitations of our algorithm, which we discuss later) closes the line of research for this specific problem, initiated twenty-five years ago in \cite{Tzoreff1998}.

As discussed previously, over  directed graphs the  \textsf{2-DP} problem is \NP-hard, and the complexity of \textsf{$k$-DSP}  is open even for $k=3$.
This lack of algorithmic progress in general directed graphs has motivated researchers to characterize the complexity of disjoint path problems in restricted classes of directed graphs.
In this context, studying algorithms for routing disjoint paths in directed acyclic graphs (DAGs) has proved to be particularly fruitful.
For example, the only known polynomial time algorithm for \textsf{2-DSP} on general directed graphs works by reducing to several instances of \textsf{2-DP} on DAGs \cite{BercziKobayashi2017}.
Similarly, the fastest known algorithm for \textsf{$k$-DSP} on undirected graphs works by reducing to several instances of disjoint paths on DAGs \cite{Geometric-Lens-conf}.

It is known that \textsf{2-DP} in DAGs can be solved in linear time \cite{Tholey12}.
More generally, since 1980 it has been known that \textsf{$k$-DP} in DAGs can be solved in $O(mn^{k-1})$ time, and this remains the fastest known algorithm for these problems for all $k\ge 3$ \cite[Theorem 3]{FortuneHopcroftWyllie1980}.

As observed in \cite[Proposition 10]{BercziKobayashi2017}, the algorithm of \cite{FortuneHopcroftWyllie1980} for \textsf{$k$-DP} on DAGs can be  modified to solve \textsf{$k$-DSP} in weighted DAGs in the same $O(mn^{k-1})$ runtime.
This is the fastest known runtime for \textsf{$k$-DSP} in DAGs.
In particular, the fastest algorithm for \textsf{2-DSP} from previous work runs in $O(mn)$ time.

% \nicole{I would add here that in particular the fastest for 2-DSP is $O(mn)$ because otherwise it feels abrupt to go from talking about k-DSP to stating a result for 2-DSP}
% for all positive integers $k$.

The second result of our work is an optimal algorithm for \textsf{2-DSP} in weighted DAGs.

\begin{restatable}{theorem}{twodspDAG}
    \label{2dsp-DAG}
    The \textsf{2-DSP} problem can be solved in weighted DAGs in $O(m+n)$ time.
\end{restatable}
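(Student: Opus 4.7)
The plan is to combine a simple distance-based case split with an algebraic polynomial-identity test that can be evaluated in a single topological-order sweep. First I would compute, in $O(m+n)$ time via DP on the topological order, the four distance functions $d(s_i, v)$ and $d(v, t_i)$ for $i \in \{1,2\}$ and every $v \in V(G)$. From these one immediately extracts the shortest-path subgraphs $G_i$ (retain edge $(u,v)$ iff $d(s_i,u)+w(u,v)+d(v,t_i) = d(s_i,t_i)$) and the four shortest-path counts $N_{ij}$. Return NO if $d(s_i,t_i) = \infty$ for some $i$.

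Second, set $A = d(s_1,t_1)+d(s_2,t_2)$ and $B = d(s_1,t_2)+d(s_2,t_1)$ and handle the trivial subcase $A < B$ by returning YES. Indeed, suppose toward contradiction that some shortest $(s_1,t_1)$ path $P_1$ and shortest $(s_2,t_2)$ path $P_2$ share a vertex $v$; since both paths are shortest we get $d(s_1,v)+d(v,t_1)=d(s_1,t_1)$ and $d(s_2,v)+d(v,t_2)=d(s_2,t_2)$, while the triangle inequality gives $d(s_1,v)+d(v,t_2)\ge d(s_1,t_2)$ and $d(s_2,v)+d(v,t_1)\ge d(s_2,t_1)$. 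Summing the latter two yields $A\ge B$, contradicting $A<B$, so any two shortest paths are automatically vertex-disjoint.

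Third, when $A \ge B$ I would port the algebraic machinery underlying \Cref{2dsp-undir} to the DAG setting, where its evaluation collapses to a single topological sweep. Concretely, draw uniformly random weights $X_v \in \mathbb{F}_q$ with $q = \mathrm{poly}(n)$ and define generating polynomials such as $F_i(v) = \sum_{P:\, s_i \to v \text{ shortest}} \prod_{u \in P \setminus \{s_i\}} X_u$ together with their backwards analogues for $v \to t_i$; in a DAG each admits a one-pass DP evaluation in $O(m+n)$ field operations. Existence of a disjoint shortest-path pair is then captured (with high probability, via Schwartz--Zippel) by the non-vanishing of a specific polynomial combination of these quantities.

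The main obstacle is correctness of the algebraic test in the regime $A \ge B$. A na\"ive LGV-style determinant on the shortest-path counts alone can both spuriously vanish (when identity-pair and swap-pair disjoint systems cancel, as happens for four parallel direct edges forming a bipartite $K_{2,2}$ with $A>B$) and spuriously survive (when every identity-pair shortest solution is forced through a shared vertex off the swap-shortest structure). The polynomial used must therefore be carefully tuned --- matching the construction behind \Cref{2dsp-undir} with per-vertex random multipliers that make identity-pairing disjoint systems contribute surviving multilinear monomials --- to isolate exactly the identity-pairing disjoint solutions while respecting the $O(m+n)$ budget afforded by the DAG's topological order.
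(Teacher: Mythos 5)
Your framing is right as far as it goes: compute $\dist(s_i,\cdot)$ and $\dist(\cdot,t_i)$ by DP over a topological order, extract the shortest-path DAGs $G_1,G_2$, introduce generating polynomials for prefixes ($s_i\to v$) and suffixes ($v\to t_i$), and decide the problem by a randomized nonvanishing test via Schwartz--Zippel. The auxiliary $A<B$ shortcut is correct (your displacement argument is sound) but ultimately unnecessary; the paper's algorithm handles every distance regime uniformly. The real issue is that you have explicitly identified ``the main obstacle'' --- constructing the polynomial whose nonvanishing certifies two \emph{disjoint} shortest paths while staying within an $O(m+n)$ evaluation budget --- and then left it unresolved. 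Phrases like ``carefully tuned'' and ``per-vertex random multipliers that make identity-pairing disjoint systems contribute surviving multilinear monomials'' name the difficulty without giving a construction or a cancellation proof, so the proposal has a genuine gap precisely where all the work is.

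Concretely, the paper's resolution is an explicit inclusion--exclusion with a characteristic-2 trick, not a determinant. Writing $L_i(v)$, $R_i(v)$ for the prefix/suffix enumerators, the paper sets $F_{\disj}=L_1(t_1)L_2(t_2)-F_\cap$ and decomposes $F_\cap=\sum_v D(v)\,R_1(v)R_2(v)$ by the unique \emph{first} intersection vertex $v$ (unique because the graph is a DAG). The crux is that $D(v)$, a priori enumerating pairs of $(s_i,v)$-paths that meet \emph{only} at $v$, is shown over $\mathbb{F}_{2^q}$ to equal $L_1(v)L_2(v)-\sum_{u}L_1(u)L_2(u)x_{uv}^2$: the ``meet nowhere before $v$'' condition is relaxed to the purely local ``distinct penultimate vertices,'' and the surplus terms cancel in pairs by swapping the common $u$-to-$v$ subpaths (Lemma~\ref{lem:subpath-swapping-cancels}). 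This subpath-swapping cancellation, and the fact that you must work in characteristic 2 for it to annihilate the surplus, is the missing key lemma in your write-up; you observe that an LGV-style determinant would misbehave, but the replacement isn't to ``tune multipliers'' --- it is to replace the determinant entirely by a direct per-vertex subtraction whose validity rests on the mod-2 pairing. You also place variables on vertices rather than edges; that is a cosmetic difference for the swap argument itself, but you never invoke the swap at all, and nothing in your text forces characteristic 2, so the cancellation you would need is not actually available under what you have stated.
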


This settles the  time complexity of \textsf{2-DSP} in DAGs, and marks the first improvement over the $O(mn)$ time algorithm implied by \cite{FortuneHopcroftWyllie1980} from over thirty years ago.
The \textsf{2-DSP} problem in weighted DAGs generalizes the \textsf{2-DP} problem in DAGs, and so \Cref{2dsp-DAG} also offers an alternate linear time algorithm for \textsf{2-DP} in DAGs, which is arguably simpler than the previous approaches leading up to \cite{Tholey12}, many of which involved  tricky case analyses and carefully constructed data structures.

Our algorithms for solving \textsf{2-DSP} in undirected graphs and DAGs are algebraic, and work by testing whether certain polynomials, whose terms encode pairs of disjoint shortest paths in the input graph, are nonzero.
As a consequence, the algorithms establishing \Cref{2dsp-undir,2dsp-DAG} are randomized, and solve \textsf{2-DSP} with high probability.
Moreover, these algorithms determine whether the input graph has a solution, but do not explicitly return solution paths.
So while our algorithms solve the decision problem \textsf{2-DSP}, they do not solve the search problem of returning two disjoint shortest paths if they exist.
This is a common limitation for algebraic graph algorithms.

The \textsf{2-DSP} problem does admit a search to decision reduction -- with $O(m)$ calls to an algorithm which detects whether a graph contains two disjoint shortest paths, we can actually find two disjoint shortest paths if they exist.
Thanks to the algebraic nature of our algorithms, we can get a slightly better reduction, and find two disjoint shortest paths when they exist with only $O(n)$ calls to the algorithms from \Cref{2dsp-undir,2dsp-DAG}.

\begin{restatable}{theorem}{search}
    \label{thm:search}
    We can solve \textsf{2-DSP} over weighted DAGs and undirected graphs, and find a solution if it exists, in $O(mn)$ time.
\end{restatable}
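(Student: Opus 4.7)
The plan is a vertex-isolation style search-to-decision reduction, using only $O(n)$ calls to the linear-time decision oracle of \Cref{2dsp-undir} (respectively \Cref{2dsp-DAG} in the DAG case). Each call takes $O(m+n)$ time, for a total of $O(mn)$.

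First I would call the oracle once on $G$; if no 2-DSP solution exists, return failure. Otherwise, I iterate over the non-terminal vertices $v \in V(G) \setminus \{s_1,s_2,t_1,t_2\}$ in some fixed order, and for each $v$ call the oracle on $G - v$. If the oracle reports a 2-DSP solution in $G - v$, update $G \leftarrow G - v$; otherwise $v$ is essential and we retain it. At termination the remaining graph $G'$ still contains a 2-DSP $(P_1, P_2)$ (maintained as an invariant through the loop), and every non-terminal vertex of $G'$ is essential, so $V(G')$ equals $V(P_1) \cup V(P_2)$ for at least one specific 2-DSP solution. This phase contributes $O(n)$ oracle calls at $O(m+n)$ time each, i.e.\ $O(mn)$ in total.

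It remains to extract $P_1, P_2$ from $G'$ in $O(m+n)$ additional time. I would compute the four distance functions $d(s_i, \cdot)$ and $d(\cdot, t_i)$ for $i \in \{1,2\}$ via shortest-path computations on $G'$ and its reverse, and form the shortest-path subgraph $D$ consisting of all edges $(u,v) \in E(G')$ satisfying $d(s_i, u) + w(u,v) + d(v, t_i) = d(s_i, t_i)$ for some $i \in \{1, 2\}$. Every 2-DSP of $G$ uses only edges of $D$, so $D$ admits the pair $(P_1, P_2)$, and any vertex-disjoint pair of $(s_i,t_i)$-paths found in $D$ is automatically a pair of shortest paths in $G$. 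A linear-time 2-disjoint-paths subroutine (topological-order DP in the DAG case, or a fast undirected 2-DP routine in the undirected case) then returns the desired paths.

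The main obstacle lies in Step 3: the subgraph $D$ may contain edges unused by any 2-DSP, and in the undirected case may even pick up both orientations of a single undirected edge (one for each index $i$). The algebraic nature of the decision oracle is precisely what makes the $O(n)$-call reduction in Step 2 possible, because it lets us commit to a single surviving monomial, and hence a single canonical solution's vertex set, via vertex-level queries rather than the edge-level queries a standard reduction would need. The technical burden is to argue that despite the residual ambiguity in $D$, the linear-time disjoint-paths subroutine reliably recovers a valid pair of shortest paths; arguably the cleanest route is to exploit the shortest-path layered structure of $D$ (inherited from the distance labels) to linearize the disjoint-paths computation, so that the whole recovery step stays within the $O(m+n)$ budget and the overall runtime matches the claimed $O(mn)$.
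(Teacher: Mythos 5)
Your step~2 — deleting a non-terminal vertex $v$ whenever the decision oracle on $G-v$ reports YES — is not sound, and this is the central gap. The reason is that \textsf{2-DSP} is \emph{not monotone under vertex deletion}: removing $v$ can strictly increase $\dist(s_i,t_i)$, at which point the oracle on $G-v$ is certifying the existence of two disjoint paths that are shortest \emph{in $G-v$}, not in the original $G$. Concretely, take $G$ with unit weights, edges $s_1\to a\to t_1$, $s_1\to c\to d\to t_1$, $s_2\to b\to t_2$, $s_2\to e\to f\to t_2$, with $a,b,c,d,e,f$ distinct. The unique \textsf{2-DSP} solution of $G$ is $\langle s_1,a,t_1\rangle, \langle s_2,b,t_2\rangle$. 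If your loop processes $a$ first, then in $G-a$ we have $\dist_{G-a}(s_1,t_1)=3$ and the pair $\langle s_1,c,d,t_1\rangle,\langle s_2,b,t_2\rangle$ is vertex-disjoint and shortest \emph{in $G-a$}, so the oracle says YES, $a$ is deleted, and every solution surviving in $G'$ uses the length-3 path — which is not a shortest $(s_1,t_1)$-path in $G$. Your claimed invariant (``the remaining graph still contains a 2-DSP'' of the original instance) is broken on the very first iteration.

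The paper's proof avoids this entirely by \emph{not} deleting arbitrary vertices. It applies the Baur--Strassen theorem to the $O(m)$-size circuits for $F_{\disj}$ to evaluate all partial derivatives $\partial F_{\disj}/\partial x_{uv}$ at once; a nonzero derivative certifies (w.h.p.) that edge $(u,v)$ lies on some 2-DSP solution \emph{of $G$}. It then picks such an edge $(s_1,v)$, deletes only $s_1$, and replaces $s_1$ by $v$ as the first source. Because $(s_1,v)$ is the first edge of a genuine solution $(P_1,P_2)$ and the pairs counted by $F_{\disj}$ are vertex-disjoint (so $s_1\notin P_2$), deleting $s_1$ preserves $\dist(v,t_1)$ and $\dist(s_2,t_2)$, and the recursive instance is a faithful sub-instance of the original problem. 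Your proposal lacks this distance-preservation guarantee. You could try to patch your loop by additionally requiring $\dist_{G-v}(s_i,t_i)=\dist_G(s_i,t_i)$ for $i\in\{1,2\}$ before deleting $v$ (an $O(m+n)$ check per step), and then reproving the invariant; as written, though, the reduction is incorrect.

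A secondary issue: in your extraction step, the single subgraph $D$ built from edges that are tight for \emph{some} $i$ conflates the two shortest-path DAGs, so a path from $s_1$ to $t_1$ in $D$ need not be a shortest $(s_1,t_1)$-path — it could take a detour through edges that are only tight for $i=2$. You would need to keep $G_1$ and $G_2$ separate and find a disjoint pair with $P_i$ constrained to $G_i$, which is a two-layered disjoint-paths problem rather than plain \textsf{2-DP} on a single graph. This is fixable, but it is another place where the shortest-path structure cannot be treated as an afterthought.
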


So we can \emph{find} two disjoint shortest paths in weighted undirected graphs in $O(mn)$ time (which still beats the previous fastest $O(n^7)$ time algorithm for weighted undirected graphs, and matches the previous fastest algorithm for \emph{unweighted} undirected graphs), and in weighted DAGs in $O(mn)$ time (which only matches, rather than beats, the previous fastest runtime for \textsf{2-DSP} in DAGs).

Finally, one can also consider \emph{edge-disjoint} versions of all the problems discussed thus far.
The \textsf{$k$-Edge Disjoint Paths ($k$-EDP)} and \textsf{$k$-Edge Disjoint Shortest Paths ($k$-EDSP)} problems are the same as the respective \textsf{$k$-DP} and \textsf{$k$-DSP} problems, except the solutions paths merely need to be edge-disjoint instead of internally vertex-disjoint. 
% By subdividing each vertex in a graph, we can reduce from \textsf{$k$-DP} and \textsf{$k$-DSP} on $n$ nodes and $m$ edges to \textsf{$k$-EDP} and \textsf{$k$-EDSP} respectively on $2n$ nodes and $m+n$ edges.
% So the vertex-disjoint problems are not harder than their edge-disjoint variants.

For any constant $k$, there is a simple reduction from \textsf{$k$-EDSP} on $n$ nodes and $m$ edges to \textsf{$k$-DSP} on $O(m+n)$ nodes and $O(m)$ edges (see \Cref{app:unnecessary} for the details).
Combining this reduction with \Cref{2dsp-undir,2dsp-DAG}, we get that we can solve \textsf{2-EDSP} over weighted DAGs and undirected graphs in linear time as well.

\begin{corollary}
    \label{corr:2-EDSP}
    We can solve \textsf{2-EDSP} over weighted DAGs and undirected graphs in $O(m+n)$ time.
\end{corollary}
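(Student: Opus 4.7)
The plan is to straightforwardly compose the two ingredients already stated in the excerpt, so the ``proof'' will be essentially a one-liner. First I would invoke the reduction mentioned in the paragraph just preceding the Corollary: given a 2-EDSP instance on a weighted graph $G$ with $n$ vertices and $m$ edges, the reduction (deferred to \Cref{app:unnecessary}) produces in $O(m+n)$ time a 2-DSP instance on a weighted graph $G'$ with $O(m+n)$ vertices and $O(m)$ edges, with the property that $G$ admits two edge-disjoint shortest paths iff $G'$ admits two internally vertex-disjoint shortest paths. Next, assuming the reduction is type-preserving (DAGs mapping to DAGs and undirected graphs to undirected graphs), I would run the algorithm of \Cref{2dsp-undir} on $G'$ in the undirected case, or the algorithm of \Cref{2dsp-DAG} on $G'$ in the DAG case. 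Each of these takes time linear in $|V(G')| + |E(G')| = O(m+n)$, and summing with the time to build $G'$ yields the claimed $O(m+n)$ bound.

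The only nontrivial content lies in the reduction itself. I would expect the intended construction to subdivide each edge $e = (u,v)$ of weight $w_e$ by inserting a fresh vertex $x_e$ and splitting $e$ into two edges with positive weights summing to $w_e$, so that two paths sharing edge $e$ in $G$ must share the internal vertex $x_e$ in $G'$. The main obstacle to verifying correctness is the converse direction together with the possibility that two shortest paths in $G'$ meet at an original (non-subdivided) vertex without corresponding to edge-sharing paths in $G$. This is presumably handled either by a small additional per-vertex gadget or by a structural rerouting argument showing any 2-DSP solution in $G'$ can be massaged into one using only new vertices at intersections. Once those properties of the reduction are granted, the Corollary follows immediately.
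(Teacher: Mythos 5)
Your high-level argument---feed the 2-EDSP instance through the reduction in \Cref{app:unnecessary} and then run \Cref{2dsp-undir} or \Cref{2dsp-DAG} on the resulting $O(m+n)$-size instance---is exactly the paper's derivation of the Corollary, and that composition is sound (and the reduction is indeed type-preserving). The gap lies in your reconstruction of the reduction itself. Subdividing each edge by inserting a midpoint $x_e$ does not give a correct reduction from $k$-EDSP to $k$-DSP: edge-disjoint shortest paths in $G$ may freely share internal vertices, and after subdivision those shared original vertices remain internal vertices of both transformed paths, so they are not internally vertex-disjoint in $G'$. This failure is not cosmetic and cannot be patched by rerouting: if some vertex $a$ lies on every $(s_1,t_1)$-shortest path and every $(s_2,t_2)$-shortest path, the subdivided graph simply has no 2-DSP solution even when $G$ has a 2-EDSP solution. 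Your hoped-for ``small per-vertex gadget or structural rerouting argument'' is thus the entire content, and it is left unspecified.

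The paper's actual \Cref{app-prop:edge-to-vertex} uses a different construction. It makes $k$ copies $v_1,\dots,v_k$ of each original vertex $v$ and introduces one new node per original edge $e$. Each edge $e=(v,w)$ of weight $\ell(v,w)$ becomes the two-edge path $v_i\to e\to w_i$ (both new edges of weight $\ell(v,w)$) for every $i\in[k]$, and the new terminal $s'_i$ (resp.\ $t'_i$) connects to all $k$ copies of $s_i$ (resp.\ $t_i$). In the forward direction one routes $P'_i$ exclusively through the $i$-th copies, so distinct paths never collide at a vertex-copy and collide at an edge-node $e$ precisely when they share the edge $e$ in $G$; this makes internal vertex-disjointness in $G'$ equivalent to edge-disjointness in $G$. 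The resulting $G'$ preserves DAG-ness and undirectedness, has $m + k(n+2)$ nodes and $2k(m+1)$ edges, and for $k=2$ is of size $O(m+n)$, after which your one-line composition applies.
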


More generally, for all $k\ge 3$ the fastest known algorithms for \textsf{$k$-EDSP} on DAGs in the literature work by reducing this problem to \textsf{$k$-DSP} using the reduction mentioned in the previous paragraph.
Consequently, the current fastest algorithm for \textsf{$k$-EDSP} in DAGs runs in $O(m^k)$ time, which in dense graphs is much slower than the $O(mn^{k-1})$ time algorithm known for \textsf{$k$-DSP}.
For the same reason, the fastest known algorithm for \textsf{$k$-EDP} in DAGs for $k\ge 3$ runs in $O(m^k)$ time.

Our final algorithmic result is that we can solve \textsf{$k$-EDSP} in weighted DAGs as quickly as the fastest known algorithms for \textsf{$k$-DSP}.

\begin{restatable}{theorem}{edsp}
    \label{thm:edsp}
        The \textsf{$k$-EDSP} problem can be solved in weighted DAGs in $O(mn^{k-1})$ time.    
\end{restatable}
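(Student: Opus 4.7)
}
The plan is to adapt the Fortune--Hopcroft--Wyllie-style product-graph construction that gives the $O(mn^{k-1})$ algorithm for \textsf{$k$-DSP} on DAGs (as modified by B\'erczi--Kobayashi), rather than invoking the generic subdivision reduction from \textsf{$k$-EDSP} to \textsf{$k$-DSP}, which would blow the runtime up to $O(m^k)$.

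\emph{Step 1 (preprocessing).} In $O(k(m+n))$ time, for each $i \in [k]$, compute $\dist(s_i, \cdot)$ and $\dist(\cdot, t_i)$ in $G$ by two topological-order sweeps, and let $G_i$ be the subgraph consisting of all edges $(u,v)$ satisfying $\dist(s_i,u) + w(u,v) + \dist(v,t_i) = \dist(s_i,t_i)$, so that $(s_i,t_i)$-shortest paths in $G$ are exactly $(s_i,t_i)$-paths in $G_i$.

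\emph{Step 2 (product DAG).} Build a DAG $H$ whose vertices are tuples $(v_1,\ldots,v_k) \in V(G)^k$ (so $|V(H)| \le n^k$). From a state $(v_1,\ldots,v_k)$, include a transition advancing a single coordinate $i$ along some edge $(v_i,u) \in E(G_i)$, subject to a local rule designed to enforce edge-disjointness (described below). The number of such transitions across all states is at most $\sum_{i=1}^k n^{k-1}\,|E(G_i)| = O(k\,m\,n^{k-1})$, since each advance of coordinate $i$ picks one edge of $G_i$ and the other $k-1$ coordinates are free. A BFS/DFS reachability query from $(s_1,\ldots,s_k)$ to $(t_1,\ldots,t_k)$ then runs in $O(|V(H)| + |E(H)|) = O(n^k + kmn^{k-1}) = O(mn^{k-1})$.

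\emph{Step 3 (local transition rule).} The crux is defining transitions so that reachability in $H$ corresponds to the existence of $k$ pairwise \emph{edge-disjoint} shortest paths, not merely $k$ possibly overlapping ones. The rule I would try is: at state $(v_1,\ldots,v_k)$, pick the coordinate $i^*$ whose position $v_{i^*}$ has the smallest topological index in $G$ (breaking ties by coordinate index), and advance only $i^*$. Unlike \textsf{$k$-DSP}, coordinates are allowed to coincide, so several coordinates may simultaneously sit at a common ``frontier'' vertex $v^\ast$; when they do, the canonical ordering forces them to depart one at a time, and we augment the state with a small subset $U \subseteq \delta^+(v^\ast)$ recording the out-edges of $v^\ast$ already used by earlier departures. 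Once every coordinate has left $v^\ast$, the set $U$ becomes irrelevant (no coordinate can return to $v^\ast$ by acyclicity) and is reset. Since at most $k$ coordinates can ever sit at $v^\ast$, $U$ carries at most $k-1$ edges chosen from $\delta^+(v^\ast)$, and a careful accounting (summing $\deg_{\text{out}}(v)^{k-1}$ against the number of tuples with frontier $v$) keeps the total state-and-transition count within $O(mn^{k-1})$ for constant $k$.

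\emph{Step 4 (correctness).} The main obstacle, and the part I expect to require the most care, is proving the two directions of the reduction. For the forward direction, given an edge-disjoint $k$-EDSP solution $P_1,\ldots,P_k$, simulate it in $H$ by repeatedly advancing the minimum-topological-index coordinate; edge-disjointness of the $P_i$'s guarantees that the local out-edge-tracking set $U$ never blocks a legal advance. For the backward direction, given a walk in $H$, extract for each $i$ the sequence of edges used when coordinate $i$ advanced; each such sequence is an $(s_i,t_i)$-shortest walk in $G_i$ and is simple since $G$ is a DAG, and the local rule combined with the topological order forces distinct edge-uses of a common vertex to come from distinct coordinates. The inductive argument here is analogous to the standard FHW correctness proof, but refined to allow shared vertices while ruling out shared edges.
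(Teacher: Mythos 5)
Your high-level strategy matches the paper's: build a product DAG $H$ on $V(G)^k$ whose reachability encodes $k$ edge-disjoint shortest paths, which avoids the $\Omega(m^k)$ blow-up of the subdivision reduction. But the transition rule you propose in Step~3 is genuinely different from the paper's, and as sketched it has a counting gap.

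The paper does \emph{not} advance one coordinate at a time, and its states carry no auxiliary set $U$. A state is just a bare tuple $(v_1,\dots,v_k)$, and a single transition advances \emph{all} coordinates currently at the topologically earliest position $v^\ast = \early(\vec v)$ simultaneously, requiring them to land on \emph{pairwise distinct} out-neighbors of $v^\ast$ (in the respective shortest-path DAGs $G_i$), while every other coordinate stays put. The point is that, by acyclicity and the ``always advance the minimum'' discipline, every coordinate whose path uses an out-edge of $v^\ast$ is sitting at $v^\ast$ at the moment $v^\ast$ first becomes the frontier; so the one-step ``distinct targets'' constraint already encodes edge-disjointness exactly, with no memory of past departures needed. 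The edge count is then a one-line computation: a state with $\ell = |I(\vec v)|$ frontier coordinates has at most $(\outdeg(v^\ast))^\ell$ out-transitions, and $\sum_{\ell}\sum_v n^{k-\ell}(\outdeg v)^\ell \le k\,m\,n^{k-1}$.

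Your one-at-a-time rule with a used-edges set $U \subseteq \delta^+(v^\ast)$ can be pushed through, but the accounting you describe is not safe as written. If $U$ ranges over all $\le(k-1)$-subsets of $\delta^+(v^\ast)$, then the number of states at frontier $v^\ast$ carries a factor $\binom{\outdeg(v^\ast)}{k-1}$; multiplying this against the $\Theta(n^{k-1})$ tuples with frontier $v^\ast$ already exceeds $mn^{k-1}$ for $k\ge 4$ when a vertex has out-degree $\Theta(n)$. To rescue the bound you need the additional invariant that, while any coordinate remains at $v^\ast$, the coordinates that already departed $v^\ast$ have not moved again (they are not the minimum), so every edge in $U$ has its head at a \emph{currently occupied} out-neighbor of $v^\ast$. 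Hence $U$ is a subset of an at-most-$k$-element set determined by $(v_1,\dots,v_k)$, and the true state space is $O_k(n^k)$ with $O_k(mn^{k-1})$ transitions. Your phrase ``summing $\deg_{\text{out}}(v)^{k-1}$ against the number of tuples with frontier $v$'' reads as the unconstrained count and would overcount. So: the skeleton and the runtime target are right, but either switch to the paper's simultaneous-advance rule (which makes both the counting and the correctness argument immediate) or state and use the occupancy invariant on $U$ explicitly; without it, Step~3 does not deliver the claimed $O(mn^{k-1})$.
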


Since \textsf{$k$-EDSP} in weighted DAGs generalizes the \textsf{$k$-EDP} problem in DAGs (see \Cref{app:unnecessary} for details),
\Cref{thm:edsp} also implies faster algorithms for this latter problem.
Our algorithm is simple and employs the same general approach used by previous routines \cite{FortuneHopcroftWyllie1980,BercziKobayashi2017} for this problem.

\subsection*{Lower Bounds}

For $k\ge 3$, the known $O(mn^{k-1})$ algorithms for \textsf{$k$-DP} and \textsf{$k$-DSP} in DAGs have resisted any improvements over the past three decades.
Thus, it is natural to wonder whether there is complexity theoretic evidence that solving these problems significantly faster would be difficult. 
Researchers have presented some evidence in this vein, in the form of reductions from the conjectured hard problem of detecting  cliques in graphs.

Let $k = \Theta(1)$ be a positive integer.
A \emph{$k$-clique} is a collection of $k$ mutually adjacent vertices in a graph.
In the \textsf{$k$-Clique} problem,\footnote{This problem is sometimes referred to in the literature as \textsf{$k$-Multicolored Clique}. A folklore argument reduces from detecting a $k$-clique in an arbitrary $n$-node graph to the \textsf{$k$-Clique} problem as defined here, by making $k$ copies of the input graph, and only including edges between different copies, e.g. as in  \cite[Proof of Theorem 13.7]{parameterized-complexity-book}. }
 we are given a $k$-partite graph $G$ with vertex set $V_1\sqcup \cdots \sqcup V_k$, where each part $V_i$ has $n$ vertices,
and
are tasked with determining if $G$ contains a $k$-clique.

We can of course solve \textsf{$k$-Clique} in $O(n^k)$ time, just by trying out all possible $k$-tuples of vertices.
Better algorithms for \textsf{$k$-Clique} are known, which employ fast matrix multiplication.
Let $\omega$ denote the exponent of matrix multiplication (i.e., $\omega$ is the smallest real such that two $n\times n$ matrices can be multiplied in $n^{\omega+o(1)}$ time).
Given positive reals $a,b,c$, we more generally write $\omega(a,b,c)$ to denote 
the smallest real such that we can compute the product of an $n^a\times n^b$ matrix and an $n^b\times n^c$ matrix in $n^{\omega(a,b,c)+o(1)}$ time.
Then it is known that \textsf{$k$-Clique} can be solved in \[C(n,k) = \Theta(n^{\omega(\lfloor k/3\rfloor, \lceil (k-1)/3\rceil, \lceil k/3\rceil)})\]
time \cite{EisenbrandGrandoni2004}.
% \cite{nevsetvril1985complexity}
% When $3\mid k$, the above expression simplifies to $C(n,k) = \Theta(n^{\omega k/3})$.
% If $\omega = 2$, then $C(n,k) = O(n^{2k/3})$ for all $k\ge 3$ with $3\mid k$.
The current fastest matrix multiplication algorithms yield $\omega < 2.37156$ \cite{even-faster-matrix-mult}.
A popular fine-grained hardness hypothesis posits (e.g., in  \cite{williams2018some,dalirrooyfard2022induced}) that current algorithms for \textsf{$k$-Clique} are optimal.

\begin{hypothesis}[\textsf{$k$-Clique Hypothesis}] For any integer constant $k \geq 3$,  solving \textsf{$k$-Clique} requires at least $C(n,k)^{1-o(1)}$ time.
\end{hypothesis}

%This hypothesis is used as the basis for conditional hardness results in fine-grained complexity.
In this context, previous work  provided reductions from \textsf{$k$-Clique} to disjoint path problems. 
For example,
\cite{Geometric-Lens-conf} presents a reduction from \textsf{$k$-Clique} to \textsf{$2k$-DSP} on undirected graphs with $O((kn)^2)$ vertices (and this reduction easily extends to DAGs).
Our first conditional lower bound improves this result for DAGs, by halving the number of paths needed in the reduction.
% \nicole{Should we combine the following two into one theorem statement?}

\begin{restatable}{theorem}{lbdspboth}
    \label{lbdspboth}
    There is a reduction from \textsf{$k$-Clique}  to \textsf{$k$-DSP} on unweighted DAGs with $O((kn)^2)$ vertices, that runs in $O((kn)^2)$ time.
\end{restatable}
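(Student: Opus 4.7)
Given a \textsf{$k$-Clique} instance on a $k$-partite graph $G$ with parts $V_1,\dots,V_k$ of size $n$, the plan is to construct a DAG $D$ together with source/target pairs $(s_i,t_i)_{i\in[k]}$ such that $D$ admits $k$ internally vertex-disjoint shortest paths iff $G$ has a $k$-clique. The heart of the construction is an ``edge-check grid'' $H_{\{i,j\}}$ for each unordered pair $\{i,j\}\subseteq [k]$ (take $i<j$): an $n\times n$ grid DAG with rightward/downward edges, whose rows are indexed by $V_i$ and whose columns are indexed by $V_j$. For each cell $(u,v)\in V_i\times V_j$, I put \emph{two} distinct vertices $h_{u,v}$ (used by rightward row-traversal) and $c_{u,v}$ (used by downward column-traversal) when $uv\in E(G)$, but a \emph{single} shared vertex $x_{u,v}$ (used in both directions) when $uv\notin E(G)$. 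Consequently, the straight-across row-$u$ path and the straight-down column-$v$ path through $H_{\{i,j\}}$ are vertex-disjoint iff cell $(u,v)$ is split, i.e., iff $uv\in E(G)$.

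Next I stitch the $\binom{k}{2}$ gadgets together. Fix any linear ordering of the gadgets (e.g., lexicographic) consistent with DAG-acyclicity. For each $i$, path $P_i$ should visit precisely the $k-1$ gadgets involving $i$, in this order, playing the row-traverser role in $H_{\{i,j\}}$ when $i<j$ and the column-traverser role when $i>j$. I attach $s_i$ to every row (resp.\ column) entry of $P_i$'s first gadget (one per choice of $v_i\in V_i$), connect the row (resp.\ column) exit labeled $u$ of each gadget directly to the like-labeled entry of $P_i$'s next gadget, and finally connect the exits of $P_i$'s last gadget to $t_i$. Because the grid edges only go right/down, a row-$u$ entry can only reach the row-$u$ exit via the straight row traversal of length $n$ (any vertical detour leaves row $u$ forever); analogously for column traversals. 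Hence every $s_i$-$t_i$ path has length at least $(k-1)n+O(k)$, with equality exactly when $P_i$ stays on a single row $v_i\in V_i$ across every gadget it enters, thus encoding a consistent vertex choice.

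For correctness, a $k$-clique $(v_1,\dots,v_k)$ yields disjoint shortest paths by routing each $P_i$ through row/column $v_i$ in each of its gadgets: the crossing cell $(v_i,v_j)$ in $H_{\{i,j\}}$ is split because $v_iv_j\in E(G)$, so $P_i$ uses $h_{v_i,v_j}$ while $P_j$ uses $c_{v_i,v_j}$. Conversely, any family of $k$ disjoint shortest paths must individually take straight traversals in every gadget (forcing each $P_i$ to commit to one $v_i\in V_i$), and pairwise disjointness in $H_{\{i,j\}}$ forces cell $(v_i,v_j)$ to be split, i.e.\ $v_iv_j\in E(G)$, giving a $k$-clique. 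A total of $\binom{k}{2}\cdot O(n^2)+O(kn)=O((kn)^2)$ vertices suffice, and the construction is computable in $O((kn)^2)$ time. The main obstacle I expect is verifying that no unintended shortcut beats the straight-traversal length---for example, a path that leaves its row and re-enters some other row must be shown to strictly lengthen due to the DAG orientation and the fact that inter-gadget links exist only between matched row (resp.\ column) indices. A careful accounting of path lengths, together with possibly inserting balanced dummy arcs to equalize the lengths of the $n$ row choices at $s_i$ (so that no single row is ``cheaper'' than another), should suffice to make the shortest-path constraint bite exactly as intended.
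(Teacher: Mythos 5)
Your construction is genuinely different from the paper's, though both can be viewed as a folding of the $\Theta((kn)^2)$-vertex grid of \cite{Geometric-Lens}. The paper avoids any explicit grid: it introduces a single node $v(w)$ for each ordered pair of vertices $v,w$ from different parts, lets each candidate path $P(v)$ (for $v\in V_i$) be a simple linear chain through the nodes $v(w)$ for $w\in\bar V_i$, and identifies $v(w)=w(v)$ whenever $(v,w)$ is a non-edge. The only way a walk can deviate from some $P(v)$ is by switching chains at a merged node, and a single inductive lemma (canonical-sp) shows any such switch strictly increases the remaining distance to $t_i$, so the shortest $(s_i,t_i)$-paths are exactly the $P(v)$. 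That argument is short and entirely local.

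Your version replaces the identified-nodes chain with $\binom{k}{2}$ separate $n\times n$ grid gadgets $H_{\{i,j\}}$ (split vs.\ merged cells encoding edges vs.\ non-edges) chained in a fixed order, with each $P_i$ playing row/column roles as appropriate. The disjointness-detection at a crossing cell is the same idea, and the construction has the right vertex count. The place where you have to do more work is exactly where you flag it: certifying that shortest $(s_i,t_i)$-paths are straight traversals on a fixed label. In the grid-gadget version a path can in fact shorten its \emph{local} cost inside $H_{\{i,j\}}$ by taking an L-turn (e.g.\ entering at row $u$ and exiting at a column with smaller index), so the argument cannot be purely length-counting within one gadget; you have to argue globally that any L-turn routes the walk onto $P_j$'s downstream chain of gadgets and hence toward $t_j$ rather than $t_i$, with no way back to $P_i$'s track, and that deviations that remain on $P_i$'s track strictly increase length. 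Also note that your proposed fix---inserting balanced dummy arcs to equalize the $n$ row choices---addresses a worry that does not actually arise (all straight traversals have the same length by symmetry); the real remaining work is the tracking argument above, which plays the role the paper's Lemma canonical-sp plays in a single localized induction. So: the approach is correct in outline, and buys a conceptually modular gadget-per-pair picture, but at the cost of a substantially more delicate shortest-path structure lemma than the paper's chain-based construction requires.
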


% \begin{restatable}{theorem}{lbdsp}
%     \label{lbdsp}
%     There is a reduction from \textsf{$k$-Clique} on a $kn$-vertex graph to \textsf{$k$-DSP} on a DAG with $O((kn)^2)$ vertices, that runs in $O((kn)^2)$ time.
% \end{restatable}

%A small modification to this reduction shows hardness for \textsf{$k$-DSP} in undirected graphs as well. 

% \begin{restatable}{theorem}{lbdspundir}
%     \label{lbdspundir}
%     There is a reduction from \textsf{$k$-Clique} on a $kn$-vertex graph to \textsf{$k$-DSP} on an undirected graph with $O((kn)^2)$ vertices, that runs in $O((kn)^2)$ time.
% \end{restatable}

% One can plug these values into the \textsf{$k$-Clique} Hypothesis to obtain a complicated conditional lower bound statement; for simplicity, we state a corollary for the case of $3\mid k$.

\begin{corollary}
\label{cor:dsp}
    Assuming the \textsf{$k$-Clique Hypothesis}, \textsf{$k$-DSP} requires $C(n^{1/2},k)^{1-o(1)}$ time to solve on unweighted DAGs.
\end{corollary}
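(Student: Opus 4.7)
The plan is to derive the corollary by the usual contrapositive argument applied to the reduction of \Cref{lbdspboth}. First, suppose towards contradiction that there is an algorithm $\mathcal{A}$ solving \textsf{$k$-DSP} on $N$-vertex unweighted DAGs in time $T(N)$ where $T(N) = o(C(N^{1/2},k))$. I would use this algorithm together with \Cref{lbdspboth} to solve \textsf{$k$-Clique} on $n$-vertex inputs, by first running the reduction to produce an equivalent instance of \textsf{$k$-DSP} on a DAG with $N = O((kn)^2)$ vertices in time $O((kn)^2)$, and then running $\mathcal{A}$ on this instance.

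Next, I would bound the total runtime of this combined procedure. Since $k$ is a constant, we have $N = \Theta(n^2)$, so $n = \Theta(N^{1/2})$. The total time to solve \textsf{$k$-Clique} becomes
\[O((kn)^2) + T\!\left(O((kn)^2)\right) = O(n^2) + T(\Theta(n^2)),\]
which, under our supposition, is $o(C(n,k))$ since $C(n,k) \geq n^2$ for all $k\geq 3$ (using that $\omega(a,b,c) \geq a+b+c$ on a relevant regime and that $C(n,k)$ absorbs the $O(n^2)$ additive term). This contradicts the \textsf{$k$-Clique Hypothesis}, which asserts that \textsf{$k$-Clique} requires $C(n,k)^{1-o(1)}$ time. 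Hence any algorithm for \textsf{$k$-DSP} on $N$-vertex unweighted DAGs must take at least $C(N^{1/2},k)^{1-o(1)}$ time.

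The argument is essentially a parameter substitution, so there is no real obstacle. The only mild subtlety is verifying that the additive $O(n^2)$ overhead from running the reduction does not dominate the lower bound; this is immediate because $C(n,k) = \Omega(n^2)$ for all $k\geq 3$, so the hypothesized runtime $C(n,k)^{1-o(1)}$ already absorbs the reduction's cost. With this verified, the corollary follows directly.
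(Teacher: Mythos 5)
Your overall approach — compose the reduction of \Cref{lbdspboth} with a hypothetical fast \textsf{$k$-DSP} algorithm and invoke the contrapositive of the \textsf{$k$-Clique Hypothesis} — is the intended one; the paper states the corollary without an explicit proof because it is exactly this parameter substitution. However, two points in your write-up do not hold as stated.

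First, assuming $T(N) = o(C(N^{1/2},k))$ is too weak to produce a contradiction. From it you derive a \textsf{$k$-Clique} algorithm running in $o(C(n,k))$ time, but this does \emph{not} violate the hypothesis that \textsf{$k$-Clique} requires $C(n,k)^{1-o(1)}$ time: an algorithm running in time $C(n,k)/\log n$, say, is $o(C(n,k))$ yet still $C(n,k)^{1-o(1)}$. To negate ``\textsf{$k$-DSP} requires $C(n^{1/2},k)^{1-o(1)}$ time,'' you must posit an algorithm running in time $C(N^{1/2},k)^{1-\varepsilon}$ for some fixed constant $\varepsilon > 0$. Carrying this through the reduction gives a \textsf{$k$-Clique} algorithm in time $O(n^2) + C(\Theta(n),k)^{1-\varepsilon} = O\bigl(C(n,k)^{1-\varepsilon'}\bigr)$ for some $\varepsilon' > 0$ (using that $C(n,k)$ is a fixed polynomial in $n$, so constant factors in the argument are absorbed into the base), which does contradict the hypothesis.

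Second, the claim that ``$\omega(a,b,c) \geq a+b+c$'' is false; e.g., $\omega(1,1,1) = \omega$, which may be as small as $2 < 3$. The correct lower bound, which suffices here, is $\omega(a,b,c) \geq \max(a+b,\,b+c,\,a+c)$, since the algorithm must at least read its two inputs and write its output. For $k \geq 3$ the three arguments to $\omega$ are each at least $1$, so this maximum is at least $2$, giving $C(n,k) = \Omega(n^2)$ and justifying that the $O(n^2)$ reduction overhead is absorbed. With these two fixes the argument is correct.
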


The previous reduction of \cite{Geometric-Lens-conf}  yields a  weaker bound of $C(n^{1/2},\lfloor k/2\rfloor )^{1-o(1)}$ for the time needed to solve \textsf{$k$-DSP}, assuming the \textsf{$k$-Clique} Hypothesis.
If $\omega > 2$, this earlier result only gives nontrivial (that is, superquadratic) lower bounds for $k\ge 10$, and if $\omega  = 2$ is only nontrivial for $k\ge 14$.
In comparison, if $\omega > 2$, \Cref{cor:dsp} is nontrivial for $k\ge 5$, and if $\omega = 2$, \Cref{cor:dsp} is still nontrivial for $k\ge 7$.
See \Cref{table:DSP-small-values} for the concrete lower bounds we achieve for small $k$.

As mentioned before (and as shown in detail in \Cref{app:unnecessary}), the \textsf{$k$-DSP} problem in weighted DAGs generalizes \textsf{$k$-DP} in DAGs.
However, the current fastest algorithms for \textsf{$k$-DP} have the same time complexity as the current best algorithms for the more general \textsf{$k$-DSP} problem.
To explain this behavior, it is desirable to show conditional lower bounds for \textsf{$k$-DP} in DAGs, which are similar in quality to the known lower bounds for \textsf{$k$-DSP} in DAGs.

Such lower bounds have been shown by \cite{Chitnis2021}.
In particular, \cite{Chitnis2021} together with standard reductions in parameterized complexity \cite[Proofs of Theorems 14.28 and 14.30]{parameterized-complexity-book}
implies that there is a reduction from \textsf{$k$-Clique} to  \textsf{$8k$-EDSP} on graphs with $O((kn)^4)$ nodes.
One can easily modify this reduction, using the idea in the construction from \cite[Section 6]{Geometric-Lens}, to instead reduce from \textsf{$k$-Clique} to \textsf{$8k$-DSP} on graphs with $O((kn)^4)$ nodes.

This reduction implies that \textsf{$k$-DSP} requires $C(n^{1/4},\lfloor k/8\rfloor)^{1-o(1)}$ time to solve on DAGs, assuming the \textsf{$k$-Clique Hypothesis}.
For large $k$, this is the current best  conditional lower bound for \textsf{$k$-DP} in DAGs.
However, the blow-up in the graph size and parameter value in this reduction makes this lower bound irrelevant for small  $k$
(in fact, the reduction only yields nontrivial lower bounds under the \textsf{$k$-Clique} Hypothesis for $k\ge 96$).

For small values of $k$, better conditional lower bounds are known for \textsf{$k$-DP}.
In particular, \cite{Slivkins2010} presents reductions from \textsf{$k$-Clique} to \textsf{$p$-DP} and \textsf{$p$-DSP} on DAGs with $O(kn)$ vertices, for parameter value $p = k + \binom{k}{2}$.
For our final conditional lower bound, we improve this reduction, by reducing the number of paths needed to $k+ \lfloor k^2/4\rfloor$.

\begin{restatable}{theorem}{lbdp}
    \label{lbdp}
    Let $k\ge 3$ be a constant integer, and set $p=k+ \lfloor k^2/4\rfloor$.
    There are $O((kn)^2)$ time reductions from \textsf{$k$-Clique} to \textsf{$p$-DP} and \textsf{$p$-DSP} on unweighted DAGs with $O(kn)$ vertices.
\end{restatable}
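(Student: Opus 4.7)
The plan is to adapt the reduction of \cite{Slivkins2010}, which builds a DAG with $k+\binom{k}{2}$ vertex-disjoint paths, by bundling within-side edge-verifications into trails with cross-side verifications. Fix the topological order $1<2<\cdots<k$ of the parts and set $A=\{1,\dots,\lceil k/2\rceil\}$ and $B=\{\lceil k/2\rceil+1,\dots,k\}$, so that the $\binom{k}{2}$ pairs of parts split as $\binom{|A|}{2}+\binom{|B|}{2}$ within-side pairs and $|A|\cdot|B|=\lfloor k^2/4\rfloor$ cross pairs. The first step is a purely combinatorial claim: the $\binom{k}{2}$ pair-verifications can be grouped into exactly $\lfloor k^2/4\rfloor$ trails in $K_k$ respecting the topological order, by pairing every within-$A$ edge $\{i_1,i_2\}$ (with $i_1<i_2$) with a distinct cross edge $\{i_2,j\}$ (for some $j\in B$) to form the trail $i_1\to i_2\to j$, symmetrically pairing every within-$B$ edge with a distinct cross edge on its lower end, and letting every unused cross edge be its own singleton trail. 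The existence of this pairing follows from Hall's theorem using the inequality $|A|-1\le|B|$.

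Next I would build the DAG following the Slivkins template. Each part $V_i$ gets a selection gadget with $O(n)$ vertices whose selection path $P_i$ commits to a choice $v_i\in V_i$, and each trail is realized by a single verification path visiting its gadgets in order. For the length-2 trail $i_1\to i_2\to j$, this path leaves the ``out-port'' of $v_{i_1}$ in gadget $i_1$ through a DAG edge that is present iff $(v_{i_1},v_{i_2})\in E(G)$, traverses gadget $i_2$ to the out-port of $v_{i_2}$, and leaves again along an edge present iff $(v_{i_2},v_j)\in E(G)$. Length-1 trails use a single such adjacency-encoding edge, exactly as in \cite{Slivkins2010}. Consistency between selection and verification paths in each gadget is enforced, as usual, by vertex-disjointness: $P_i$ is routed to occupy every in/out port \emph{except} those of its chosen $v_i$. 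The total number of paths is thus $k+|A|\cdot|B|=k+\lfloor k^2/4\rfloor$. Extension to \textsf{$p$-DSP} is handled by the standard trick of padding each path with dummy vertices so that all source-to-target distances become equal.

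The main technical obstacle is extending the Slivkins gadget to support verification paths that cross three consecutive gadgets rather than two. In the middle gadget of a length-2 trail, the verification path must enter at the in-port and leave at the out-port of the selected vertex, while remaining internally vertex-disjoint from the selection path and from any other trail sharing that gadget. I would handle this by giving each choice vertex in a gadget $O(1)$ additional in- and out-ports per trail that could pass through it, keeping the total vertex count at $O(kn)$ (which is allowed since $k$ is constant). Correctness then follows by the usual two-way argument: a $k$-clique in the input yields $k+\lfloor k^2/4\rfloor$ disjoint paths by selecting vertices consistent with the clique, and conversely any disjoint-path solution in the constructed DAG must pick $k$ vertices $v_1,\dots,v_k$ whose pairwise adjacencies are witnessed by the trails, giving a clique in the input. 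The $O((kn)^2)$ runtime bound is immediate since the DAG has $O(kn)$ vertices and $O((kn)^2)$ edges, all describable in polynomial time from $G$.
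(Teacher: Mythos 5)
Your proposal shares the paper's two-layer structure: a combinatorial ``covering'' object that replaces the $\binom{k}{2}$ per-pair verification paths of \cite{Slivkins2010} with only $\lfloor k^2/4\rfloor$ verification paths, plugged into a Slivkins-style gadget reduction. The paper abstracts this as a \emph{$k$-covering family}---a set of increasing lists such that every pair appears consecutively in some list---and builds one from arithmetic progressions $L_{a,d}$ (lists of step size $d$), proving each element lies in fewer than $k$ lists so gadgets stay size $O(k)$. Your construction restricts to lists of length at most three (trails of one or two steps), achieved by pairing each within-$A$ and within-$B$ pair with a cross pair and leaving the rest as singletons; this is a perfectly valid covering family of the same size, and is in some ways more elementary.

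There are two concrete gaps, though. First, the combinatorial claim is underjustified. The inequality $|A|-1\le|B|$ only guarantees that \emph{each side separately} can be matched: for every $i_2\in A$, the $i_2-1$ within-$A$ edges $\{i_1,i_2\}$ can be injectively sent to the $|B|$ cross edges $\{i_2,\cdot\}$. It does not show that the $A$-side matching and the $B$-side matching can be chosen simultaneously without colliding on a cross edge. Indeed, if you choose the two matchings greedily and independently---say $\{i_1,i_2\}\mapsto\{i_2,|A|+i_1\}$ and $\{j_1,j_2\}\mapsto\{j_2-|B|,j_1\}$---then for $k=6$ both $\{1,2\}$ and $\{4,5\}$ claim the cross edge $\{2,4\}$. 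A joint matching does exist (one that works: $\{i_1,i_2\}\mapsto\{i_2,k+1-i_1\}$ and $\{j_1,j_2\}\mapsto\{j_2-j_1,j_1\}$, which never conflict because the former uses only cross edges $\{x,y\}$ with $x+y>k+1$ and the latter only those with $x+y\le k$), but establishing it requires either an explicit construction like this or a verification of Hall's condition on the \emph{combined} bipartite graph, neither of which your one-line appeal supplies. Second, you flag the gadget adaptation as ``the main technical obstacle'' and then dispose of it in one sentence. This is where the paper does most of its work: the gadget $G_v$ gets a dedicated column per list through $v$'s part, rows are linked by length-calibrated paths $\Pi_L(u,w)$, and the converse direction needs two nonobvious structural claims---that each verification path $Q_L$ never strays onto another list's between-row edges (\cref{claim:lists-are-safe}) and that every selection path $P_i$ must take a skip edge (\cref{captain-skipper}). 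Your ``usual two-way argument'' does not address how a disjoint-path solution is forced to respect the trail structure, and without that the reverse implication (disjoint paths $\Rightarrow$ clique) is not established.
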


\begin{table}[t]

    \centering
    \begin{tabular}{|c|c|c|}
    \hline
         \multirow{2}{*}{$k$} & \multicolumn{2}{|c|}{\textsf{$k$-DSP} Exponent Lower Bound}  \\ % \cline{2-3}
         & (for current $\omega$) & (if $\omega = 2)$ \\
         \hline
         $5$ & $2.042$ & Trivial \\

         $6$ &  $2.371$ & Trivial \\
         $7$ & $2.794$ & $2.5$ \\
         $8$ & $3.198$ & $3$ \\
         $9$ & $3.557$ & $3$ \\
         \hline
    \end{tabular}
    \caption{A list of lower bounds implied by \Cref{cor:dsp} for \textsf{$k$-DSP} when $5\le k\le 9$. 
    Each row corresponds to a value of $k$.
    An entry of $\alpha$ in the left column of the row for a given $k$ value indicates that solving \textsf{$k$-DSP} in $O(n^{\alpha-\delta})$ time for any constant $\delta > 0$ would require refuting the \textsf{$k$-Clique Hypothesis} or designing faster matrix multiplication algorithms. 
    An entry of $\beta$ in the right column in the row for a given $k$ value indicates that assuming the \textsf{$k$-Clique Hypothesis}, \textsf{$k$-DSP} requires $n^{\beta - o(1)}$ time to solve.
    \textbf{The previous reduction of \cite{Geometric-Lens-conf} gave no nontrivial lower bound for \textsf{$\boldsymbol{k}$-DSP} for any value of $\boldsymbol{k}$  in this table}, and the reduction of \cite{Slivkins2010} matches our lower bound for $k=6$, but is  worse everywhere else.
    Table entry values are based off rectangular matrix multiplication exponents from \cite[Table 1]{even-faster-matrix-mult}.
    }
\label{table:DSP-small-values}
\end{table}

For each integer $p\ge 5$, we can find the largest integer $k\ge 3$ such that $k + \lfloor k^2/4\rfloor \le p$, and then apply \Cref{lbdp} to obtain conditional lower bounds for \textsf{$p$-DP} and \textsf{$p$-DSP} on DAGs.

\begin{corollary}\label{cor:dp}
Assuming the \textsf{$k$-Clique Hypothesis}, the \textsf{$p$-DSP} and \textsf{$p$-DP} problems require 
\[\max\left(C(n,k_{\text{even}}(p)), C(n,k_{\text{odd}}(p))\right)^{1-o(1)}\]
time to solve on unweighted DAGs for all integers $p\ge 5$, where 
    \[k_{\text{even}}(p) = 2\lfloor\sqrt{p+1}\rfloor -2\]
and
    \[k_{\text{odd}}(p) = 2\left\lfloor\frac{\sqrt{p+5}-1}{2}\right\rfloor - 1\]
    are the largest even and odd integers $k$ such that $k + \lfloor k^2/4\rfloor \le p$ respectively.
\end{corollary}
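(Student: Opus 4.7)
The plan is to derive Corollary~\ref{cor:dp} as a direct consequence of the reduction in Theorem~\ref{lbdp}, combined with the \textsf{$k$-Clique Hypothesis}, by choosing the reduction parameter $k$ optimally for each given $p$.

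First, I would fix any integer $k \ge 3$ satisfying $k + \lfloor k^2/4 \rfloor \le p$ and invoke Theorem~\ref{lbdp} to obtain an $O((kn)^2)$-time reduction from \textsf{$k$-Clique} on $n$-vertex-per-part instances to \textsf{$p$-DP} (respectively, \textsf{$p$-DSP}) on unweighted DAGs with $N = O(kn)$ vertices. Since $k$ is a constant, a hypothetical algorithm solving \textsf{$p$-DP} or \textsf{$p$-DSP} in time $T(N)$ would yield a \textsf{$k$-Clique} algorithm running in time $T(O(n)) + O(n^2)$. Because $C(n,k) \ge n^{\omega} > n^2$ for $k \ge 3$, the reduction overhead is dominated by $C(n,k)$, so the \textsf{$k$-Clique Hypothesis} forces $T(N) \ge C(N,k)^{1-o(1)}$ for every eligible $k$.

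Second, I would optimize the lower bound over the choice of $k$. The inequality $k + \lfloor k^2/4 \rfloor \le p$ is cleanest to solve by splitting on the parity of $k$: writing $k = 2j$ reduces the constraint to $j^2 + 2j \le p$, whose largest integer solution $j = \lfloor \sqrt{p+1}\rfloor - 1$ directly gives the formula for $k_{\text{even}}(p)$; writing $k = 2j+1$ reduces it to $j^2 + 3j + 1 \le p$, which produces the formula for $k_{\text{odd}}(p)$ after applying the quadratic formula and a floor. This is the only step with any computational content, and it is a routine bookkeeping exercise.

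Finally, applying the first step with the two optimized choices $k = k_{\text{even}}(p)$ and $k = k_{\text{odd}}(p)$ (each whenever it is at least $3$) yields the two conditional lower bounds $C(n, k_{\text{even}}(p))^{1-o(1)}$ and $C(n, k_{\text{odd}}(p))^{1-o(1)}$, and taking their maximum gives the stated bound. I do not foresee any substantive obstacle: the heavy lifting is all inside Theorem~\ref{lbdp}, and the corollary is essentially a parametric optimization on top of that reduction, with care needed only in the floor-arithmetic that produces the closed-form expressions for $k_{\text{even}}$ and $k_{\text{odd}}$.
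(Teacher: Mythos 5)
Your approach is the correct one and matches the paper, which does not give a formal proof of this corollary but sketches exactly your argument in the preceding paragraph: pick the largest valid $k$, invoke \Cref{lbdp}, and absorb the $O((kn)^2)$ reduction cost into $C(n,k)^{1-o(1)}$ since $C(n,k)\ge n^\omega > n^2$ for $k\ge 3$. Two small points are worth making explicit. First, \Cref{lbdp} produces an instance of $p'$-DP/$p'$-DSP for $p' = k + \lfloor k^2/4\rfloor \le p$, not $p$-DP/$p$-DSP directly, so you should mention the (trivial) padding step of adding $p-p'$ isolated source-target pairs joined by a direct edge to lift the lower bound from $p'$-DP to $p$-DP; this is also what justifies optimizing over both parities rather than being locked into whichever parity exactly achieves $p$.

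Second, and more substantively: if you actually carry out the quadratic-formula step you describe, you will not recover the paper's stated formula. From $j^2+3j+1\le p$, the positive root is $\tfrac{-3+\sqrt{4p+5}}{2}$, giving $j_{\max}=\bigl\lfloor\tfrac{\sqrt{4p+5}-3}{2}\bigr\rfloor$ and hence $k_{\text{odd}}(p)=2\bigl\lfloor\tfrac{\sqrt{4p+5}-1}{2}\bigr\rfloor-1$, with $\sqrt{4p+5}$ rather than the $\sqrt{p+5}$ appearing in the statement of \Cref{cor:dp}. A quick sanity check at $p=5$ confirms the discrepancy: the largest odd $k$ with $k+\lfloor k^2/4\rfloor\le 5$ is $3$, and $2\lfloor(\sqrt{25}-1)/2\rfloor-1=3$, whereas $2\lfloor(\sqrt{10}-1)/2\rfloor-1=1$. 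So the paper's formula for $k_{\text{odd}}(p)$ appears to have a typo, and your derivation, carried to completion, would produce the corrected expression. Your claim that the quadratic formula ``produces the formula for $k_{\text{odd}}(p)$'' as printed is therefore slightly off — it produces the right formula, which disagrees with the printed one. Everything else (the $k_{\text{even}}$ derivation, handling the case $k_{\text{even}}(p)<3$ for $p\in\{5,6,7\}$ by noting only the odd branch applies, and taking the max) is sound.
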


% \Cref{cor:dp} implies better lower bounds for \textsf{$k$-DSP} on DAGs than \Cref{cor:dsp} for certain small values of $k$.
% In particular, assuming 
Assuming the \textsf{$k$-Clique Hypothesis}, \Cref{cor:dp} shows that \textsf{5-DSP} requires at least $n^{\omega - o(1)}$ time and \textsf{8-DSP} requires at least $C(n,4)^{1-o(1)}$ time to solve.
For the current value of $\omega$, these yield lower bounds of $n^{2.371-o(1)}$ for \textsf{5-DSP} and $n^{3.198-o(1)}$ for \textsf{8-DSP}, which are better than the lower bounds implied by \Cref{cor:dsp} (see \Cref{table:DSP-small-values}).
If $\omega = 2$ however, \Cref{cor:dp} does not yield better lower bounds than \Cref{cor:dsp} for \textsf{$k$-DSP}.

Previous reductions give nontrivial lower bounds for \textsf{$p$-DP} only when $p\ge 6$ if $\omega > 2$, and  $p\ge 10$ if $\omega = 2$.
In comparison,
\Cref{cor:dp} yields nontrivial lower bounds for \textsf{$p$-DP} under the \textsf{$k$-Clique Hypothesis} for $p\ge 5$ if $\omega > 2$, and $p\ge 8$ if $\omega = 2$.

Previously, the reduction of \cite{Slivkins2010} yielded the best lower bounds for \textsf{$p$-DP} for $p\le 2016$, and otherwise the reduction of \cite{Chitnis2021} yielded better lower bounds.
In comparison, \Cref{cor:dp} yields lower bounds matching the reduction from \cite{Slivkins2010}  for $p\in\set{6,7,10}$, and  otherwise, for $\omega > 2$, yields strictly better lower bounds for \textsf{$p$-DP} for all $p \ge 5$.
Moreover, for $\omega = 2$,  \Cref{cor:dp} yields the best lower bounds for \textsf{$p$-DP} for all $p\le 4031$ (with \cite{Chitnis2021} yielding better lower bounds only for larger $p$). 

To see quantitatively how \Cref{cor:dp} improves the best conditional lower bounds for \textsf{$p$-DP} from previous work at various concrete values of $p$, see \Cref{table:dp-various-values}.

\begin{table}[t]
    \centering
    \begin{tabular}{|c|c|c|c|}
    \hline
         \multirow{2}{*}{$p$} & \multicolumn{3}{|c|}{\textsf{$p$-DP} Exponent Lower Bound (if $\omega = 2)$} \\ %\cline{2-4}
         &  From \Cref{cor:dp} & Reduction of \cite{Slivkins2010} & Reduction of \cite{Chitnis2021} \\
         \hline
         $9$ & $3$ & Trivial & Trivial \\ %k=4
         $24$ & $6$ & $4$ & Trivial \\ %k=8, 6, 
         $89$ & $12$ & $8$ & Trivial \\ %k=17, 12
         $239$ & $20$ & $14$ & $5$ \\ %k = 29, 21, 30 
         $929$ & $40$ & $28$ & $19.5$ \\ %k=59, 42, 117
         $2016$ & $58$ & $42$ & $42$ \\ %k=87, 63, 252
         $2969$ & $72$ & $51$ & $62$\\ %k=107, 76, 371
         $4031$ & $84 $& $60$ & $84$ \\ %k=125, 89, 503
         \hline
    \end{tabular}
    \caption{A list of lower bounds implied by \Cref{cor:dp} (and previous work) for \textsf{$p$-DP} at various values of $p$.  Rows correspond to values of $p$. For a given such row, the entries  $\alpha, \beta, \gamma$ in the three columns collected under the heading of ``\textsf{$p$-DP} Exponent Lower Bound,'' read from left to right, indicate that \Cref{cor:dp}, the reduction of \cite{Slivkins2010}, and the reduction of \cite{Chitnis2021} imply that \textsf{$p$-DP} requires $n^{\alpha - o(1)}$, $n^{\beta  -o(1)}$, and $n^{\gamma - o(1)}$ time to solve respectively, assuming the \textsf{$k$-Clique Conecture}.}
    \label{table:dp-various-values}
\end{table}

% \subsection{Limitations of Algebraic Algorithms}
% \label{subsec:limit}

% Our algorithms for \textsf{2-DSP} in undirected graphs and DAGs work by checking that a certain polynomial, whose monomials correspond uniquely to pairs of disjoint shortest paths in the input graph, is nonzero.
% Compared to previous work, our approach has two main limitations.

% \subsubsection*{Deterministic versus Randomized}

% To check if the polynomial encoding pairs of disjoint shortest paths is nonzero, rather than explicitly construct the polynomial (which would be far too slow), we efficiently evaluate the polynomial at a uniform random point of some sufficiently large ground field.
% This leads to a randomized algorithm, while  previous algorithms for \textsf{2-DSP} in general undirected graphs and DAGs are deterministic.
% This reliance on randomness is common for algebraic graph algorithms.

\subsection{Comparison with Previous Algorithms}
\label{subsec:prev-alg}

% \nicole{How about an introductory sentence of two to restate that this is the first work to bring algebraic algorithms into these problems. But algebraic algorithms have been used for some related problems, and we describe some examples, and compare our work to them.} 

Previous algorithms for \textsf{2-DSP} and \textsf{2-DP} in DAGs are combinatorial in nature: they observe certain structural properties of candidate solutions, and then leverage these observations to build up pairs of disjoint paths.
In the special case of \emph{unweighted} undirected graphs, \cite{BjorklundHusfeldtKaski2022} presented an algebraic algorithm for solving a generalization of \textsf{2-DSP}, but all other prior algorithms for \textsf{2-DSP} and \textsf{2-DP} in undirected graphs are combinatorial.
Our work is the first to employ algebraic methods to tackle the general weighted \textsf{2-DSP} problem: our algorithms for \textsf{2-DSP} on undirected graphs and DAGs work by checking that a certain polynomial, whose monomials correspond uniquely to pairs of disjoint shortest paths in the input graph, is nonzero. 
To obtain the fast runtimes in \Cref{2dsp-undir,2dsp-DAG}, we evaluate this polynomial over a field of characteristic two, and crucially exploit certain symmetries 
% in this polynomial 
which make efficient evaluation possible when working modulo two.

Such ``mod 2 vanishing'' methods have appeared previously in the literature for algebraic graph algorithms, but the symmetries we exploit in our algorithms for \textsf{2-DSP} differ in interesting ways from those of previous approaches. 
For example, previous methods tend to work exclusively in undirected graphs (relying on the ability to traverse cycles in both the forwards and backwards directions to produce terms in polynomials which cancel modulo 2), while our approach is able to handle \textsf{2-DSP} in both undirected graphs and DAGs.
It is also interesting that our algorithms solve \textsf{2-DSP} in \emph{weighted} graphs without any issue, since the previous algebraic graph algorithms we are aware of are efficient in unweighted graphs, but in weighted graphs have a runtime which depends polynomially on the value of the maximum edge weight.

Below, we compare our techniques to previous algebraic algorithms  in the literature.

\paragraph{Two Disjoint Paths with Minimum Total Length}

The most relevant examples of algebraic graph algorithms in the literature to our work are previous algorithms for the \textsf{MinSum 2-DP} problem:
in this problem, we are given a graph $G$ on $n$ vertices, with specified sources $s_1,s_2$ and targets $t_1,t_2$, and are tasked with finding internally vertex-disjoint paths $P_i$ from $s_i$ to $t_i$, such that the sum of the lengths of $P_1$ and $P_2$ is minimized, or reporting that no such paths exists.
% This problem generalizes the \textsf{2-DSP} problem.

In \emph{unweighted} undirected graphs, \cite{BjorklundHusfeldt-SDP}  showed that \textsf{MinSum 2-DP} can be solved in 
polynomial time, with 
\cite[Section 6]{BjorklundHusfeldtKaski2022} providing a faster implementation of this approach running in  $\tilde{O}(n^{4+\omega})$ time.
Similar to our work, these algorithms
 check if a certain polynomial enumerating disjoint pairs of paths in $G$ is nonzero or not.
These methods rely on $G$ being undirected, and are based off computing determinants of $n\times n$ matrices.

Our approach for \textsf{2-DSP} differs from these arguments because we seek linear time algorithms, and so \textbf{avoid computing determinants} (which would yield $\Omega(n^\omega)$ runtimes).
We instead directly enumerate pairs of intersecting paths and subtract them out.
This alternate approach also allows us to obtain algorithms which apply to both  undirected graphs and DAGs, whereas the cycle-reversing arguments of \cite{BjorklundHusfeldtKaski2022} do not appear to extend to DAGs.

\paragraph{Paths and Linkages with Satisfying Length Conditions}
Given sets $S$ and $T$ of $p$ source and target vertices respectively, an $(S,T)$-linkage is a set of $p$ vertex-disjoint paths, beginning at different nodes in $S$ and ending at different nodes in $T$.
The length of such a linkage is the sum of the lengths of the paths it contains.
Recent work has presented algorithms for the problem of finding $(S,T)$-linkages in undirected graphs of length at least $k$, fixed-parameter tractable in $k$.
In particular, \cite[Section 4]{FPT-Max-Colored-Path} presents an algorithm solving this problem in $2^{k+p}\poly(n)$ time.
Their algorithm enumerates collections of $p$ walks beginning at different nodes in $S$ and ending at different nodes in $T$.
They then argue that all terms in this enumeration with intersecting walks cancel modulo 2, leaving only the  $(S,T)$-linkages.
One idea used in the above cancellation argument is that if two paths $P$ and $Q$ in a collection intersect at a vertex $v$, then we can pair this collection with a new collection obtained by swapping the suffixes of $P$ and $Q$ after vertex $v$.

In the \textsf{2-DSP} problem, solution paths must connect sources $s_i$ to corresponding targets $t_i$ instead of to arbitrary targets, and so we cannot use the above suffix-swapping argument to get cancellation.
So to enumerate disjoint shortest paths in our algorithms, we employ somewhat trickier cancellation arguments than what was previously used.

More recently, \cite[Section 6]{Det-Sieving} presented an algorithm solving the linkage problem discussed above in $2^k\poly(n)$ time (with runtime independent of $p$).
Their approach uses determinants to enumerate $(S,T)$-linkages.
As mentioned previously, we explicitly avoid using determinants so that we can obtain linear time algorithms.

\paragraph{Additional Related Work}
There are many additional examples of algebraic graph algorithms in the literature.
For example, \cite{BjorklundHusfeldtTaslaman} presents an efficient algorithm for finding shortest cycles through specified subsets of vertices, \cite{CyganGabowSankowski2015} presents algorithms for finding shortest cycles and perfect matchings in essentially matrix multiplication time, and \cite{BjorklundHusfeldtKaski2022} presents a polynomial time algorithm for finding a shortest cycle of even length in a directed graph.
Even more examples of algebraic methods in  parameterized algorithms are listed in \cite[Table 1]{Det-Sieving}.

\paragraph{Bibliographic Remark}
While the current paper was under submission, the work \cite{BentertFominGolovach2024}
 was posted online.
The reduction used to establish \cite[Theorem 1]{BentertFominGolovach2024} is essentially the same as the reduction we use to prove \Cref{lbdspboth},
so this result was independently shown by \cite{BentertFominGolovach2024}.

\subsection*{Organization}
In \Cref{sec:prelim} we introduce notation and recall useful facts about graphs and polynomials used in our results. 
In \Cref{sec:overview} we provide some intuition for the proofs of our results.
In \Cref{sec:2dsp} we present our linear time algorithms for \textsf{2-DSP} in weighted undirected graphs and DAGs. In \cref{sec:edge} we present our algorithm for \textsf{$k$-EDSP}. In \cref{sec:lb} we present our lower bounds.
Finally, we conclude in \Cref{sec:conclusion} by highlighting some open problems motivated by this work.

See \Cref{app:unnecessary} for proofs of some simple reductions between variants of disjoint path problems which were mentioned previously.

\section{Preliminaries}
\label{sec:prelim}

\subsubsection*{General Notation}

Given a positive integer $a$, we let $[a] = \set{1, \dots, a}$ denote the set of the first  $a$ positive integers.
Given positive integers $a$ and $b$, we let $[a,b] = \set{a, \dots, b}$ denote the set of consecutive integers from $a$ to $b$ inclusive (if $a > b$, then $[a,b]$ is the empty set).

Throughout, we let $k$ denote a constant positive integer parameter.

\subsubsection*{Graph Notation and Assumptions}

Throughout, we let $G$ denote the input graph on $n$ vertices and $m$ edges.
We let $s_1, \dots, s_k$ denote the source vertices of $G$, and $t_1, \dots, t_k$ denote the target vertices of $G$.
A \emph{terminal} is a source or target node.
We assume without loss of generality that $G$ is weakly connected (we can do this because we only care about solving disjoint path problems on $G$, and if terminals of $G$ are in separate weakly connected components, we can solve smaller disjoint path problems on each component separately).

Given an edge $e = (u,v)$, we let $\ell(u,v)$ denote the weight of $e$ in $G$.
We assume all edge weights are positive.
We let $\dist(u,v)$ denote the distance of a shortest path (i.e., the sum of the weights of the edges used in a shortest path) from $u$ to $v$. 
When we write ``path $P$ traverses edge $(u,v)$'' we mean that $P$ first enters $u$, then immediately goes to $v$.

We represent paths $P = \langle v_0, \dots, v_r\rangle$ as sequences of vertices.
If the path $P$ passes through vertices $u$ and $v$ in that order, we let $P[u,v]$ denote the subpath of $P$ which begins at $u$ and ends at $v$. 
We let $\rev{P}$ denote the \emph{reverse path} of $P$, which traverses the vertices of $P$ in reverse order.
Given two paths $P$ and $Q$ such that the final vertex of $P$ is the same as the first vertex of $Q$, we let $P\dia Q$ denote the concatenation of $P$ and $Q$.

\subsubsection*{Shortest Path DAGs}

Given a graph $G$ and specified vertex $s$, the \emph{$s$-shortest paths DAG of $G$} is the graph with the same vertex set as $G$, which includes edge $(u,v)$ if and only if $(u,v)$ is an edge traversed by an $(s,v)$-shortest path in $G$.
From this definition, it is easy to see that a sequence of vertices is an $(s,v)$-shortest path of $G$ if and only if it is an $(s,v)$-path in the $s$-shortest paths DAG of $G$.
Indeed, every edge of an $(s,v)$-shortest path in $G$ is contained in the $s$-shortest paths DAG by definition, and so forms a path in this graph.
Conversely, if the sequence of vertices $P = \langle v_0, \dots, v_r\rangle$ is an $(s,v)$-path in the $s$-shortest paths DAG of $G$, then we can inductively show that $P[s,v_i]$ is a shortest path in $G$ for each index $i$.

We observe that shortest paths DAGs can be constructed in linear time.

\begin{proposition}[Shortest Path DAGs]
    \label{prop:sp-DAG}
    Let $G$ be a weighted DAG or undirected graph with distinguished vertex $s$.
    Then we can construct the $s$-shortest paths DAG of $G$ in linear time.
\end{proposition}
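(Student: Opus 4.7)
The plan is to reduce the construction of the $s$-shortest paths DAG to two subroutines: (i) computing the single-source distances $\dist(s,v)$ for every vertex $v$ of $G$, and (ii) filtering the edges of $G$ to keep only those that lie on some shortest path from $s$. Once the distances are known, step (ii) runs in $O(m)$ time by a single sweep over the edges: for each edge $(u,v)\in E(G)$ include it in the output graph precisely when $\dist(s,u) + \ell(u,v) = \dist(s,v)$. Correctness of this filter is immediate from the definition given before the proposition, since an edge $(u,v)$ lies on some $(s,v)$-shortest path of $G$ if and only if concatenating an $(s,u)$-shortest path with the edge $(u,v)$ produces an $(s,v)$-shortest path, which is equivalent to the equation $\dist(s,u)+\ell(u,v) = \dist(s,v)$.

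Thus the whole task reduces to computing $\dist(s,\cdot)$ in linear time in each of the two settings. When $G$ is a weighted DAG, I would first compute a topological order of $G$ in $O(m+n)$ time, then initialize $\dist(s,s) = 0$ and $\dist(s,v) = +\infty$ for all other $v$, and process vertices in topological order, relaxing each outgoing edge. This is a standard $O(m+n)$ dynamic program, and since the underlying graph is acyclic the order of relaxations guarantees correct distances. When $G$ is a weighted undirected graph, I would instead invoke Thorup's linear-time single-source shortest paths algorithm for undirected graphs with positive edge weights, which computes $\dist(s,\cdot)$ in $O(m+n)$ time.

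For the undirected case there is a small bookkeeping point: since the output is a DAG, each undirected edge $\{u,v\}$ must be oriented. The filter naturally handles this by adding the directed edge $(u,v)$ whenever $\dist(s,u)+\ell(u,v)=\dist(s,v)$. It is possible in principle that both $(u,v)$ and $(v,u)$ satisfy their respective conditions, but adding the two equalities and using $\ell(u,v)=\ell(v,u)$ would force $\ell(u,v)=0$, contradicting the assumption of positive weights. Hence at most one orientation is added per undirected edge, and the resulting graph is indeed a DAG.

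The only nonroutine ingredient is the invocation of Thorup's algorithm for undirected SSSP, so the main ``obstacle'' is really just citing this external result; everything else is an elementary $O(m+n)$ traversal. If one wanted to avoid that dependency, the same construction still works in $O(m + n\log n)$ time using Dijkstra, but the linear-time bound in the undirected case genuinely requires Thorup's algorithm (or an equivalent).
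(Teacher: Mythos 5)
Your proof is correct and follows essentially the same approach as the paper: compute $\dist(s,\cdot)$ (topological-order DP for DAGs, Thorup's algorithm for undirected graphs), then filter edges by the equality $\dist(s,u)+\ell(u,v)=\dist(s,v)$. Your extra paragraph verifying that at most one orientation of each undirected edge survives the filter — so the output is genuinely acyclic — is a nice detail that the paper leaves implicit.
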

\begin{proof}
    By definition, an edge $(u,v)$ is in the $s$-shortest paths DAG of $G$ if and only if $(u,v)$ is the last edge of some  $(s,v)$-shortest path in $G$.
    This is equivalent to the condition that $(u,v)$ is an edge in $G$, and 
        \begin{equation}
        \label{eq:dsv}
        \dist(s,v) = \dist(s,u) + \ell(u,v).
        \end{equation}
    So, we can construct the $s$-shortest paths DAG of $G$ by computing the values of $\dist(s,v)$ for all vertices $v$, and then going through each edge $(u,v)$ in $G$ (if $G$ is undirected, we try out both ordered pairs $(u,v)$ and $(v,u)$ of an edge $\set{u,v}$) and checking if \cref{eq:dsv} holds.

    So to prove the claim, it suffices to compute $\dist(s,v)$ for all vertices $v$ in linear time.

    When $G$ is a weighted DAG, we can compute a topological order of $G$ in linear time, and then perform dynamic programming over the vertices in this order to compute $\dist(s,v)$ for all vertices $v$ in linear time (this procedure is just a modified breadth-first search routine).

    When $G$ is an undirected graph, we instead use Thorup's linear-time algorithm for single-source shortest paths in weighted undirected graphs \cite{Thorup1997} to compute $\dist(s,v)$ for all vertices $v$.
\end{proof}

\subsubsection*{Finite Fields}

Our algorithms for \textsf{2-DSP} in undirected graphs and DAGs involve working over a finite field $\mathbb{F}_{2^q}$ of characteristic two, where $q = O(\log n)$.
We work in the Word-RAM model with words of size $O(\log n)$, so that 
addition and multiplication over this field take constant time.

We make use of the following classical result, which shows that we can test if a polynomial is nonzero by evaluating it at a random point of a sufficiently large finite field. 

\begin{proposition}[Schwartz-Zippel Lemma]
\label{obvious}
    Let $f$ be a nonzero polynomial of degree at most $d$.
    Then a uniform random evaluation of $f$ over $\mathbb{F}$ is nonzero with probability at least $1-d/|\mathbb{F}|$.
\end{proposition}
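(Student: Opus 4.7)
The plan is to prove the statement by induction on the number of variables in $f$, which is the standard approach to Schwartz-Zippel. The base case handles univariate polynomials directly (since a nonzero polynomial of degree at most $d$ over a field has at most $d$ roots, a classical consequence of polynomial division / the factor theorem), and the inductive step reduces the multivariate case to the univariate case by fixing all but one variable.

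More concretely, for the induction step I would write $f(x_1, \ldots, x_n) = \sum_{i=0}^{k} x_n^{i}\, g_i(x_1, \ldots, x_{n-1})$, where $k$ is the largest index such that $g_k$ is nonzero. Since $\deg(x_n^{k} g_k) \le d$, we have $\deg g_k \le d - k$. Then I would sample $(a_1, \ldots, a_n)$ uniformly from $\mathbb{F}^n$ and split into two bad events: either $g_k(a_1, \ldots, a_{n-1}) = 0$, or $g_k(a_1, \ldots, a_{n-1}) \neq 0$ but the univariate polynomial $f(a_1, \ldots, a_{n-1}, x_n)$ evaluates to zero at $a_n$. By the inductive hypothesis the first event occurs with probability at most $(d-k)/|\mathbb{F}|$, and conditioned on the complement of the first event the polynomial $f(a_1, \ldots, a_{n-1}, x_n)$ is a nonzero univariate polynomial of degree exactly $k$, hence vanishes at a uniform random $a_n$ with probability at most $k/|\mathbb{F}|$. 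A union bound then gives probability of vanishing at most $d/|\mathbb{F}|$.

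The main obstacle, if any, is just bookkeeping: ensuring that $g_k$ is chosen so that the coefficient of the highest power of $x_n$ with a nonzero coefficient is isolated, so that the conditional univariate polynomial truly has degree $k$ and the union bound yields exactly $d/|\mathbb{F}|$ rather than something slightly larger. There are no serious difficulties beyond this. If the statement is intended in its univariate form (which matches the literal wording ``degree at most $d$''), then only the base case is needed, and the proof reduces to the observation that a nonzero degree-$d$ univariate polynomial over a field has at most $d$ roots, so a uniformly random evaluation point avoids the root set with probability at least $1 - d/|\mathbb{F}|$.
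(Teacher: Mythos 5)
Your proof is correct and is the standard argument for the Schwartz--Zippel Lemma. Note that the paper itself provides no proof of \Cref{obvious}: it is stated as a classical result and invoked directly, so there is no authorial argument to compare against. Your induction on the number of variables, with the base case given by the bound on roots of a univariate polynomial over a field and the inductive step isolating the top coefficient $g_k$ of $x_n$ and union-bounding the two failure events, is exactly the textbook proof and is complete; the bookkeeping concern you raise (ensuring $g_k\not\equiv 0$ so the conditional univariate polynomial has degree exactly $k$) is handled correctly by your choice of $k$.
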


% \subsubsection*{Multicolored Clique}

% For both of our lower bounds we will reduce from \textsf{$k$-Clique} on $kn$ vertices. In this problem the input is a $k$-vertex-colored graph, and the output is whether or not there is a $k$-clique with one vertex of each color. There is a standard reduction from $k$-\textsf{Clique} on $n$ vertices to \textsf{$k$-Clique} on $kn$ vertices: Copy the vertex set so that there are $k$ copies of each vertex, and color each copy a different color. Between vertices of the same color, there is no edge, and between vertices of different colors there is an edge if and only if there was an edge between these two vertices in the original graph.

\section{Technical Overview}
\label{sec:overview}

\subsection{2-DSP Algorithms}

We first outline a linear time algorithm solving \textsf{2-DP} in DAGs.
We then discuss the changes needed to solve the \textsf{2-DSP} problem in weighted DAGs, and then the additional ideas used to solve \textsf{2-DSP} in weighted undirected graphs.

Let $G$ be the input DAG.
For each edge $(u,v)$ in $G$, we introduce an indeterminate $x_{uv}$.
We assign each pair of paths in $G$ a certain monomial over the $x_{uv}$ variables, which records the pairs of consecutive vertices traversed by the paths.
These monomials are constructed so that any pair of disjoint paths has a unique monomial.

Let $F$ be the sum of monomials corresponding to all pairs of paths $\pair{P_1,P_2}$ such that $P_i$ is an $(s_i,t_i)$-path in $G$.
Let $F_{\disj}$ and $F_{\cap}$ be the sums of monomials corresponding to all such pairs of paths which are disjoint and intersecting respectively.
Since each disjoint pair of paths produces a distinct monomial, we can solve \textsf{2-DP} by testing whether $F_{\disj}$ is a nonzero polynomial. 
We can perform this test by evaluating $F_{\disj}$ at a random point, by the Schwartz-Zippel lemma (\Cref{obvious}).

Since every pair of paths is either disjoint or intersecting, we have 
    \[F = F_{\disj} + F_{\cap}\]
which implies that 
    \[F_{\disj} = F - F_{\cap}.\]
This relationship is pictured in \Cref{fig:disj-to-intersect}.

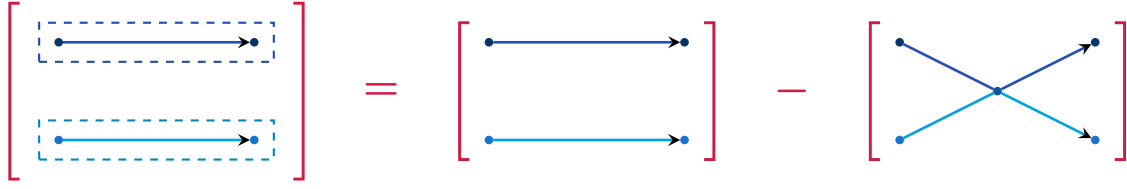
\begin{figure}[t]
    \centering
    \begin{tikzpicture}[scale = 1.3,
lightvtx/.style=
{draw=dblue!70!midnight, fill=dblue!70!midnight,
circle, inner sep=1pt, minimum width=1pt},
darkvtx/.style=
{draw=cdblue!5!midnight, fill=cdblue!5!midnight, circle, inner sep=1pt, minimum width=1pt},
intervtx/.style=
{draw=dblue!35!midnight, fill=dblue!35!midnight, circle, inner sep = 1pt, minimum width = 1pt},
lightedge/.style=
{draw=dsblue!80!midnight, line width = 1pt, -stealth},
lightedgenoarrow/.style=
{draw=dsblue!80!midnight, line width = 1pt},
darkedge/.style=
{draw=cdblue!90!midnight, line width = 1pt, -stealth},
darkedgenoarrow/.style=
{draw=cdblue!90!midnight, line width = 1pt},
lightborder/.style=
{draw=dsblue!60!midnight, thick, dashed},
darkborder/.style=
{draw=cdblue!80!midnight, thick, dashed}]

    %% Vertical Displacement
\def\yspace{1cm};
\def\yborder{0.2cm};
\def\ybigparen{0.2cm};
\def\yparen{0.2cm};

%% Horizontal Displacement
\def\xstep{2cm};
\def\xborder{0.2cm};
\def\xparen{0.2cm};

\def\extr{0.8cm};

% Horizontal Transition
\def\extr{0.9cm};

\node[darkvtx] (s1) at (0,\yspace) {};
\node[lightvtx] (s2) at (0,0) {};

\node[darkvtx] (t1) at (\xstep, \yspace) {};
\node[lightvtx] (t2) at (\xstep, 0) {};

\draw[darkedge] (s1) -- (t1);
\draw[lightedge] (s2) -- (t2);

\draw[darkborder] (-\xborder, \yspace - \yborder) rectangle (\xborder + \xstep, \yspace + \yborder);

\draw[lightborder] (-\xborder, - \yborder) rectangle (\xborder + \xstep, \yborder);

    \draw [alizarin, line width = 1.1pt] (-\xborder-\xparen,-\yborder-\ybigparen) to [square left brace ] (-\xborder-\xparen,\yspace + \yborder + \ybigparen);

    \draw [alizarin, line width = 1.1pt] (\xborder+\xstep+\xparen,-\yborder-\ybigparen) to [square right brace ] (\xborder+\xstep+\xparen,\yspace + \yborder + \ybigparen);

%% Equation Transition
\node at (\xstep + + \xborder + \xparen + \extr, 0.5*\yspace) 
{\scalebox{1.4}{\textcolor{alizarin}{$\boldsymbol{=}$}}};

\begin{scope}[xshift = \xstep + \xborder + \xparen + 2*\extr + \xparen]
    \node[darkvtx] (s1) at (0,\yspace) {};
\node[lightvtx] (s2) at (0,0) {};

\node[darkvtx] (t1) at (\xstep, \yspace) {};
\node[lightvtx] (t2) at (\xstep, 0) {};

\draw[darkedge] (s1) -- (t1);
\draw[lightedge] (s2) -- (t2);

    \draw [alizarin, line width = 1.1pt] (-\xparen,-\yparen) to [square left brace ] (-\xparen,\yspace + \yparen);

    \draw [alizarin, line width = 1.1pt] (\xstep+\xparen,-\yparen) to [square right brace ] (\xstep+\xparen,\yspace + \yparen);
\end{scope}

%% Subtraction 
\begin{scope}[xshift = \xstep + \xborder + \xparen + 2*\extr + \xparen + \xstep + \xparen]
\node at (\extr, 0.5*\yspace) 
{\scalebox{1.4}{\textcolor{alizarin}{$\boldsymbol{-}$}}};
\end{scope}

\begin{scope}[xshift = \xstep + \xborder + \xparen + 2*\extr + \xparen + \xstep + \xparen + 2*\extr + \xparen ]

    \node[darkvtx] (s1) at (0,\yspace) {};
\node[lightvtx] (s2) at (0,0) {};

\node[intervtx] (a) at (0.5*\xstep, 0.5*\yspace) {};

\node[darkvtx] (t1) at (\xstep, \yspace) {};
\node[lightvtx] (t2) at (\xstep, 0) {};

\draw[darkedge] (s1) -- (a) -- (t1);

\draw[lightedge] (s2) -- (a) -- (t2);

    \draw [alizarin, line width = 1.1pt] (-\xparen,-\yparen) to [square left brace ] (-\xparen,\yspace + \yparen);

    \draw [alizarin, line width = 1.1pt] (\xstep+\xparen,-\yparen) to [square right brace ] (\xstep+\xparen,\yspace + \yparen);
\end{scope}
    
\end{tikzpicture}
    \caption{To enumerate the family of disjoint pairs of paths on the left (the dashed borders around the paths indicate that the paths do not intersect), it suffices to enumerate all pairs of paths and subtract out those pairs in the family which intersect at some point.}
    \label{fig:disj-to-intersect}
\end{figure}

Thus, in order to evaluate $F_{\disj}$, it suffices to evaluate $F$ and $F_{\cap}$.
Since $F$ enumerates  pairs of paths from the sources to their corresponding targets with no constraints, it turns out that $F$ is easy to evaluate.
So solving \textsf{2-DP} amounts to evaluating $F_{\cap}$ efficiently.

To evaluate $F_{\cap}$, we need a way of enumerating over all pairs of intersecting paths. 
Each pair of intersecting paths overlaps at a unique earliest vertex $v$ (with respect to the topological order of $G$).
Consequently, if we let $F_v$ be the sum of monomials of pairs of intersecting paths with first intersection at $v$, we have 
    \begin{equation}
    \label{overview-disj}
    F_{\cap} = \sum_{v\in V} F_v
    \end{equation}
as depicted in \Cref{fig:intersection-casework}.

We evaluate each $F_v$ by relating it to a seemingly simpler polynomial.
Let $\tilde{F}_v$ be the polynomial enumerating pairs of paths $\pair{P_1,P_2}$ where $P_i$ is an $(s_i,t_i)$-path in $G$ such that
\begin{enumerate}
    \item $P_1$ and $P_2$ intersect at vertex $v$, and 
    \item the vertices appearing immediately before $v$ on $P_1$ and $P_2$ are distinct.
\end{enumerate}

We can think of property 2
% \nicole{should this say property 2?} 
as a relaxation of the condition that $P_1$ and $P_2$ have $v$ as their earliest intersection point: instead of requiring that $P_1[s_1,v]$ and $P_2[s_2,v]$ never overlap before $v$, we merely require that these subpaths do not overlap at the  position \emph{immediately} before $v$.
It turns out evaluating $\tilde{F}_v$ is easy, because we can enforce property 2 above by enumerating over all pairs of paths which intersect at $v$, and then subtracting out all such pairs  which overlap at some edge ending at $v$.
Simultaneously evaluating all $\tilde{F}_v$ can then be done in $O(m)$ time, roughly because we perform one subtraction for each possible edge the paths could overlap at.

So far, we have explained how to compute all $\tilde{F}_v$ values in linear time.
Now comes the key idea behind our algorithm: over fields of characteristic two,  the polynomials $\tilde{F}_v$ and $F_v$ are actually identical! 
Indeed, consider a pair of paths $\pair{P_1, P_2}$ enumerated by $\tilde{F}_v$, which intersects before $v$.
Let the first intersection point of these paths be some vertex $u$.
Then by condition 2 above, the subpaths $P_1[u,v]$ and $P_2[u,v]$ are distinct, because their penultimate vertices are distinct.
So if we define new paths 
    \[Q_1 = P_1[s_1,u]\dia P_2[u,v]\dia P_1[v,t_1]
    \quad\text{and}\quad 
    Q_2 = P_2[s_2,u]\dia P_1[u,v]\dia P_2[v,t_2]\]
obtained by swapping the $u$ to $v$ subpaths in $P_1$ and $P_2$, we get a new pair of paths $\pair{Q_1,Q_2}$ satisfying conditions 1 and 2 from before, such that each $Q_i$ is an $(s_i,t_i)$-path in $G$, which produces the same monomial as $\pair{P_1,P_2}$.
This subpath swapping operation is depicted in \Cref{fig:subpath-swap}, for $u=a$ and $v=b$.
Then modulo two, the contributions of the pairs $\pair{P_1,P_2}$ and $\pair{Q_1,Q_2}$ to $\tilde{F}_v$ will cancel out.
It follows that all pairs of paths which intersect before $v$ have net zero contribution to  $\tilde{F}_v$, and so $\tilde{F}_v = F_v$ as claimed.
This congruence is depicted in \Cref{fig:intersetion-relax},

Given this observation, we can use our evaluations of $\tilde{F}_v$ in \cref{overview-disj} to evaluate $F_{\cap}$ and thus, by the previous discussion, solve the \textsf{2-DP} problem.

\subsubsection*{From Disjoint Paths to Disjoint Shortest Paths}

To solve \textsf{2-DSP} in weighted DAGS, we can modify the \textsf{2-DP} algorithm sketched above as follows.
First, for $i\in [2]$, we compute $G_i$, the $s_i$-shortest paths DAG of $G$.
We then construct polynomials as above, but with the additional constraint that they only enumerate pairs of paths $\langle P_1,P_2\rangle$ with the property that every edge in path $P_i$ lies in $G_i$.
This ensures that we only enumerate pairs of paths which are shortest paths between their terminals.

With this change, the above algorithm for \textsf{2-DP} generalizes to solving \textsf{2-DSP}.

\begin{figure}[t]
    \centering
    \begin{tikzpicture}[scale = 1.6,
lightvtx/.style=
{draw=dblue!70!midnight, fill=dblue!70!midnight,
circle, inner sep=1pt, minimum width=1pt},
darkvtx/.style=
{draw=cdblue!5!midnight, fill=cdblue!5!midnight, circle, inner sep=1pt, minimum width=1pt},
intervtx/.style=
{draw=dblue!35!midnight, fill=dblue!35!midnight, circle, inner sep = 1pt, minimum width = 1pt},
lightedge/.style=
{draw=dsblue!80!midnight, line width = 1pt, -stealth},
lightedgenoarrow/.style=
{draw=dsblue!80!midnight, line width = 1pt},
darkedge/.style=
{draw=cdblue!90!midnight, line width = 1pt, -stealth},
darkedgenoarrow/.style=
{draw=cdblue!90!midnight, line width = 1pt},
lightborder/.style=
{draw=dsblue!60!midnight, thick, dashed},
darkborder/.style=
{draw=cdblue!80!midnight, thick, dashed}]

    %% Vertical Displacement
\def\yspace{1cm};
\def\yborder{0.2cm};
\def\ybigparen{0.2cm};
\def\yparen{0.2cm};

\def\yparenextra{0.06cm};

%% Horizontal Displacement
\def\xstep{2*\yspace}; 
\def\xborder{0.2cm};
\def\xparen{0.2cm};

\def\extr{0.8cm};
\def\sumspace{0.7cm};

% Horizontal Transition
\def\extr{0.9cm};

\def\boxoffset{0.19cm};

    \node[darkvtx] (s1) at (0,\yspace) {};
\node[lightvtx] (s2) at (0,0) {};

\node[intervtx] (a) at (0.5*\xstep, 0.5*\yspace) {};

\node[darkvtx] (t1) at (\xstep, \yspace) {};
\node[lightvtx] (t2) at (\xstep, 0) {};

\draw[darkedge] (s1) -- (a) -- (t1);

\draw[lightedge] (s2) -- (a) -- (t2);

    \draw [alizarin, line width = 1.1pt] (-\xparen,-\yparen) to [square left brace ] (-\xparen,\yspace + \yparen);

    \draw [alizarin, line width = 1.1pt] (\xstep+\xparen,-\yparen) to [square right brace ] (\xstep+\xparen,\yspace + \yparen);

\begin{scope}[xshift = \xstep + \xparen]
    \node at (\extr, 0.5*\yspace) 
{\scalebox{1.5}{\textcolor{alizarin}{$\boldsymbol{=}$}}};
\end{scope}

\begin{scope}[xshift = \xstep + \xparen + 2*\extr]
    \node at (0, 0.4*\yspace) 
{\scalebox{1.5}{\textcolor{alizarin}{$\displaystyle\sum_{\textcolor{dblue!30!midnight}{v\in V}}$}}};
\end{scope}

\begin{scope}[xshift = \xstep + \xparen + 2*\extr + \sumspace + \xparen]

    \node[darkvtx] (s1) at (0,\yspace) {};
\node[lightvtx] (s2) at (0,0) {};

\node[intervtx] (a) at (0.5*\xstep, 0.5*\yspace) {};

\node[below=3pt] at (a) {\textcolor{dblue!30!midnight}{$v$}};

\node[darkvtx] (t1) at (\xstep, \yspace) {};
\node[lightvtx] (t2) at (\xstep, 0) {};

\draw[darkedge] (s1) -- (a) -- (t1);

\draw[lightedge] (s2) -- (a) -- (t2);

\node[darkvtx] (predarka) at (0.3*\xstep, \yspace - 0.3*\yspace) {};
\node[lightvtx] (prelighta) at (0.3*\xstep, 0.3*\yspace) {};

\draw[darkborder, rotate around={333.434948:(0.15*\xstep, \yspace - 0.15*\yspace)}] (-\boxoffset, \yspace - 0.15*\yspace - 0.5*\boxoffset) rectangle (\boxoffset + 0.3*\xstep, \yspace - 0.15*\yspace + 0.5*\boxoffset);

\draw[lightborder, rotate around={26.5650512:(0.15*\xstep, 0.15*\yspace)}] (-\boxoffset, 0.15*\yspace - 0.5*\boxoffset) rectangle (\boxoffset + 0.3*\xstep, 0.15*\yspace + 0.5*\boxoffset);

    \draw [alizarin, line width = 1.1pt] (-\xparen,-\yparen-\yparenextra) to [square left brace ] (-\xparen,\yspace + \yparen + \yparenextra);

    \draw [alizarin, line width = 1.1pt] (\xstep+\xparen,-\yparen-\yparenextra) to [square right brace ] (\xstep+\xparen,\yspace + \yparen+\yparenextra);
    
\end{scope}

\end{tikzpicture}
    \caption{To enumerate the family of intersecting pairs of paths on the left, we can perform casework on the earliest intersection point $v$ for the paths (the dashed border on the subpaths on the right indicates that the paths do not intersect before $v$). }
    \label{fig:intersection-casework}
\end{figure}
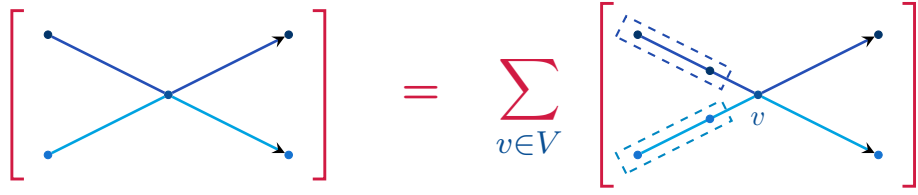

\begin{remark}[Enumeration Makes Generalization Easy]
Previous near-linear time algorithms for \textsf{2-DP} in DAGs and undirected graphs do not easily generalize to solving \textsf{2-DSP}.
In contrast, as outlined above, in our approach moving from \textsf{2-DP} to \textsf{2-DSP} is simple.
Why is this?

Intuitively, this happens because our algorithms take an \emph{enumerative} perspective on \textsf{2-DSP}, rather than the detection-based strategy of previous algorithms.
Older algorithms iteratively build up solutions to \textsf{2-DP} or \textsf{2-DSP}. 
Depending on the problem, this involves enforcing different sorts of constraints, since partial solutions to these problems may look quite different. 
In our approach, we just need to enumerate paths to solve \textsf{2-DP} and enumerate shortest paths to solve \textsf{2-DSP}.
Enumerating paths and shortest paths are both easy in DAGs by dynamic programming.
Hence algorithms for these two problems end up being essentially the same in our framework.
\end{remark}

\subsubsection*{From DAGs to Undirected Graphs}

When solving \textsf{2-DSP} in DAGs, we used the fact that DAGs have a topological order, so that any pair of paths intersects at a unique earliest vertex $v$ in this order.
This simple decomposition does not apply to solving \textsf{2-DSP} in undirected graphs, since we cannot rely on a fixed topological order.

Instead, we perform casework on the first vertex $v$ in $P_1$ lying in $P_1\cap P_2$.
We observe that in undirected graphs, there are two possibilities: $v$ is either the first vertex in $P_2$ lying in $P_1\cap P_2$, or it is the final vertex in $P_2$ lying in $P_1\cap P_2$.
Intuitively, the paths either ``agree'' and go in the same direction, or ``disagree'' and go in opposite directions. 

We then argue that over a field of characteristic two, we can efficiently enumerate over pairs of paths in each of these cases.
As with DAGs, 
we make this enumeration efficient by arguing that modulo 2 we can relax the (a priori difficult to check) condition of $v$ being the first intersection point on $P_1$ to some simpler ``local'' condition.
When the paths agree, this argument is similar to the reasoning used for solving \textsf{2-DSP} in DAGs.

For the case where the paths disagree, this enumeration is  more complicated, because there is no consistent linear ordering of the vertices neighboring $v$ on the two shortest paths, but can still be implemented in linear time using a more sophisticated local condition.
Specifically, if we let $a_i$ and $b_i$ denote the nodes appearing immediately before and after $v$ on path $P_i$, then to enumerate the `disagreeing paths'' modulo two, we prove that it suffices to enumerate paths $P_1$ and $P_2$ which intersect at $v$ and have the properties that $a_1\neq a_2$, $b_1\neq b_2$, and $a_1\neq b_2$.
Intuitively, using the subpath swapping idea depicted in \Cref{fig:intersetion-relax}, the conditions that $a_1\neq a_2$ and $a_1\neq b_2$ ensure that $v$ is the first vertex of $P_1$ lying in $P_1\cap P_2$, and the condition that $b_1\neq b_2$  ensures that the paths disagree.
To implement this idea, we need a slightly more complicated subpath swapping argument, which can also handle the case where two paths $P_1$ and $P_2$ intersect at vertices $u$ and $v$, with $u$ appearing before $v$ on $P_1$ but $u$ appearing after $v$ on $P_2$ (this situation does not occur in DAGs, but can occur in undirected graphs).
We do this by combining the previous subpath swapping idea with the observation that in undirected graphs we can also traverse subpaths in the reverse direction (so it is possible to swap the subpaths $P_1[u,v]$ and $P_2[v,u]$ in $P_1$ and $P_2$, even though $u$ and $v$ appear in different orders on $P_1$ and $P_2$).

By combining the enumerations for both cases, we can evaluate $F_{\disj}$, and thus solve \textsf{2-DSP} over undirected graphs.

\begin{figure}[t]
    \centering
    \begin{tikzpicture}[scale = 1.6,
lightvtx/.style=
{draw=dblue!70!midnight, fill=dblue!70!midnight,
circle, inner sep=1pt, minimum width=1pt},
darkvtx/.style=
{draw=cdblue!5!midnight, fill=cdblue!5!midnight, circle, inner sep=1pt, minimum width=1pt},
intervtx/.style=
{draw=dblue!35!midnight, fill=dblue!35!midnight, circle, inner sep = 1pt, minimum width = 1pt},
lightedge/.style=
{draw=dsblue!80!midnight, line width = 1pt, -stealth},
lightedgenoarrow/.style=
{draw=dsblue!80!midnight, line width = 1pt},
darkedge/.style=
{draw=cdblue!90!midnight, line width = 1pt, -stealth},
darkedgenoarrow/.style=
{draw=cdblue!90!midnight, line width = 1pt},
lightborder/.style=
{draw=dsblue!60!midnight, thick, dashed},
darkborder/.style=
{draw=cdblue!80!midnight, thick, dashed}]

    %% Vertical Displacement
\def\yspace{1cm};
\def\yborder{0.2cm};
\def\ybigparen{0.2cm};
\def\yparen{0.2cm};

\def\yparenextra{0.06cm};

%% Horizontal Displacement
\def\xstep{2*\yspace}; 
\def\xborder{0.2cm};
\def\xparen{0.2cm};

\def\extr{0.8cm};
\def\sumspace{0.7cm};

% Horizontal Transition
\def\extr{0.9cm};
\def\bigextr{1.3cm};

\def\boxoffset{0.19cm};

    \node[darkvtx] (s1) at (0,\yspace) {};
\node[lightvtx] (s2) at (0,0) {};

\node[intervtx] (a) at (0.5*\xstep, 0.5*\yspace) {};

\node[below=3pt] at (a) {\textcolor{dblue!30!midnight}{$v$}};

\node[darkvtx] (t1) at (\xstep, \yspace) {};
\node[lightvtx] (t2) at (\xstep, 0) {};

\draw[darkedge] (s1) -- (a) -- (t1);

\draw[lightedge] (s2) -- (a) -- (t2);

\node[darkvtx] (predarka) at (0.3*\xstep, \yspace - 0.3*\yspace) {};
\node[lightvtx] (prelighta) at (0.3*\xstep, 0.3*\yspace) {};

\draw[darkborder, rotate around={333.434948:(0.15*\xstep, \yspace - 0.15*\yspace)}] (-\boxoffset, \yspace - 0.15*\yspace - 0.5*\boxoffset) rectangle (\boxoffset + 0.3*\xstep, \yspace - 0.15*\yspace + 0.5*\boxoffset);

\draw[lightborder, rotate around={26.5650512:(0.15*\xstep, 0.15*\yspace)}] (-\boxoffset, 0.15*\yspace - 0.5*\boxoffset) rectangle (\boxoffset + 0.3*\xstep, 0.15*\yspace + 0.5*\boxoffset);

    \draw [alizarin, line width = 1.1pt] (-\xparen,-\yparen-\yparenextra) to [square left brace ] (-\xparen,\yspace + \yparen + \yparenextra);

    \draw [alizarin, line width = 1.1pt] (\xstep+\xparen,-\yparen-\yparenextra) to [square right brace ] (\xstep+\xparen,\yspace + \yparen+\yparenextra);

\begin{scope}[xshift = \xstep + \xparen + \extr]
        \node at (0, 0.5*\yspace) 
{\scalebox{1.5}{\textcolor{alizarin}{$\boldsymbol{\equiv}$}}};
\end{scope}

\begin{scope}[xshift = \xstep + \xparen + 2*\extr + \xparen]
        \node[darkvtx] (s1) at (0,\yspace) {};
\node[lightvtx] (s2) at (0,0) {};

\node[intervtx] (a) at (0.5*\xstep, 0.5*\yspace) {};

\node[below=3pt] at (a) {\textcolor{dblue!30!midnight}{$v$}};

\node[darkvtx] (t1) at (\xstep, \yspace) {};
\node[lightvtx] (t2) at (\xstep, 0) {};

\draw[darkedge] (s1) -- (a) -- (t1);

\draw[lightedge] (s2) -- (a) -- (t2);

\node[darkvtx] (predarka) at (0.3*\xstep, \yspace - 0.3*\yspace) {};
\node[lightvtx] (prelighta) at (0.3*\xstep, 0.3*\yspace) {};

\draw[darkborder, rotate around={333.434948:(predarka)}] (-\boxoffset + 0.3*\xstep, \yspace - 0.3*\yspace - 0.5*\boxoffset) rectangle (\boxoffset + 0.3*\xstep, \yspace - 0.3*\yspace + 0.5*\boxoffset);

\draw[lightborder, rotate around={26.5650512:(prelighta)}] (-\boxoffset + 0.3*\xstep, 0.3*\yspace - 0.5*\boxoffset) rectangle (\boxoffset + 0.3*\xstep, 0.3*\yspace + 0.5*\boxoffset);

    \draw [alizarin, line width = 1.1pt] (-\xparen,-\yparen) to [square left brace ] (-\xparen,\yspace + \yparen);

    \draw [alizarin, line width = 1.1pt] (\xstep+\xparen,-\yparen) to [square right brace ] (\xstep+\xparen,\yspace + \yparen);
\end{scope}

\begin{scope}[xshift = \xstep + \xparen + 2*\extr + \xstep + \xparen + \bigextr]
            \node at (0, 0.5*\yspace) 
{\scalebox{1.5}{\textcolor{alizarin}{$\pmod 2$}}};
\end{scope}

\end{tikzpicture}
    \caption{
    If we work modulo two, then we can enumerate pairs of paths which have common first intersection at node $v$ by enumerating pairs of paths which intersect at $v$ and have the property that the vertices appearing immediately before $v$ on each path are distinct. }
    \label{fig:intersetion-relax}
\end{figure}
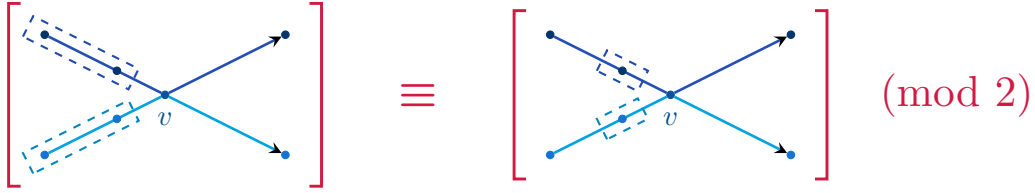

\subsection{\texorpdfstring{$\boldsymbol{k}$}{k}-EDSP Algorithm}

The previous algorithm of \cite[Proposition 10]{BercziKobayashi2017} for \textsf{$k$-EDSP} works by constructing a graph $G'$ encoding information about $k$-tuples of edge-disjoint shortest paths in the original graph $G$.
This new graph $G'$ has special nodes $\vec{s}$ and $\vec{t}$, such that there is a path from $\vec{s}$ to $\vec{t}$ in $G'$ if and only if $G$ contains $k$ edge-disjoint shortest paths connecting its terminals.
The nodes of the new graph $G'$ are $k$-tuples of edges $(e_1, \dots, e_k)$ where each $e_i$ is an edge in $G$.
So constructing $G'$ already takes $\Omega(m^k)$ time.

Our algorithm for \textsf{$k$-EDSP} uses the same general idea.
We construct an alternate graph $G'$ which still has the property that finding a single path between two specified vertices of $G'$ solves the \textsf{$k$-EDSP} problem in $G$.
However, we design $G'$ to have nodes of the form $(v_1, \dots, v_k)$, where each $v_i$ is a \emph{vertex} in $G$.
Our construction produces a graph with $n^k$ nodes and $O(mn^{k-1})$ edges, which yields the speed-up. 
We avoid the $\Omega(m^k)$ bottleneck of the previous algorithm by showing how to encode edge-disjointness information simply through the $k$-tuples of vertices, rather than edges, that the $k$ potential solution paths in $G$ traverse.

\subsection{Lower Bounds}

\subsubsection*{Disjoint Shortest Paths}

Our proof of \cref{lbdspboth} is based on the reduction of 
 % Bentert, Nichterlein, Renken, and Zschoche
\cite[Proposition 1]{Geometric-Lens} from \textsf{$k$-Clique} to $2k$-\textsf{DSP} on undirected graphs, which also easily extends to DAGs. %A similar reduction to $2k$-\textsf{EDSP} on planar DAGs is also known~\cite[Theorem 4]{AmiriWargallav4}. 
Our contribution is a transformation that reduces the number of paths in their reduction from $2k$ to $k$ by exploiting the symmetry of the construction. 

The reduction of \cite{Geometric-Lens} maps each vertex $v$ in the \textsf{$k$-Clique} instance to a horizontal path $P_v$ and a vertical path $Q_v$, each of length $n$. These paths are arranged so that for each pair of vertices $(v,w)$ in the input graph, the paths $P_v$ and $Q_w$ intersect if and only if $(v,w)$ is not an edge in the input graph. To achieve this, the paths are placed along a grid, and at the intersection point in the grid between paths $P_v$ and $Q_w$, these two paths are modified to bypass each other to avoid intersection if $(v,w)$ is an edge in the input graph.

The main idea of our transformation is the following. Since the known reduction is symmetric along the diagonal of the grid, it contains some redundancy. 
We remove this redundancy by only keeping the portion of the grid below the diagonal. 
To do this, we only have one path $P_v$ for each vertex $v$ in the input graph, and each such path has both a horizontal component and a vertical component. Each path turns from horizontal to vertical when it hits the diagonal. 
As a result, each pair of paths $(P_v, P_w)$ has exactly one intersection point in the grid (which we bypass if $(v,w)$ is an edge in the input graph). 
Since we produce only a single path $P_v$ for each vertex $v$, we obtain a reduction to $k$-\textsf{DSP} instead of $2k$-\textsf{DSP}.

\subsubsection*{Disjoint Paths}

The starting point for \cref{lbdp} is the work of \cite{Slivkins2010}, which reduces from \textsf{$k$-Clique} to \textsf{$p$-EDP} in a DAG with $O(kn)$ nodes, for $p=k + \binom{k}{2}$. 
The parameter blows up from $k$ to $p$ in this way because the reduction uses $k$ solution paths to pick $k$ vertices in the original graph, and then for each of the $\binom{k}{2}$ pairs of vertices chosen, uses an additional solution path to verify that the vertices in that pair are adjacent in the original graph. 

We improve upon this by modifying the reduction graph to allow some solution paths to check multiple edges simultaneously.
This lets us avoid using $\binom{k}{2}$ solution paths to separately check for edges between each pair of nodes in a candidate $k$-clique. 
Instead, we employ just $\lfloor k^2/4\rfloor$ solution paths in the reduction, roughly halving the number of paths needed. 

To do this, we need to precisely identify which paths can check for multiple edges without compromising the correctness of the reduction. 
To this end, we examine the structure of the reduction and define a notion of a \emph{covering family} which characterizes which paths can safely check for multiple edges at once. 
Formally, a $k$-covering family is a collection $\mathcal{L}$ of increasing lists of positive integers, with the property that for all integers $i,j$ with $1\le i < j\le k$, some list in $\mathcal{L}$ contains $i$ and $j$ as consecutive members.

We show that for any $k$, the smallest number of lists in a $k$-covering family is $\lambda(k)  = \lfloor k^2/4\rfloor$ (note that merely obtaining asymptotically tight bounds would not suffice for designing interesting conditional lower bounds). 
We then insert this construction of a minimum size covering family into the framework of the reduction and prove that the reduction remains correct. 
Intuitively, given lists in a covering family, we can map each list $L$ to a path  which checks edges between vertex parts $V_i$ and $V_j$ for each $(i,j)$ pair appearing as consecutive members of $L$.

The original reduction of \cite{Slivkins2010} corresponds to implementing this strategy with the trivial $k$-covering family using $\binom{k}{2}$ lists, achieved by taking a single increasing list of two elements for each unordered pair of integers from $[k]$.
Our improved reduction comes from implementing this framework with the optimal bound of $\lfloor k^2/4\rfloor$ lists.

This yields reductions from \textsf{$k$-Clique} to \textsf{$p$-DP} and \textsf{$p$-DSP} for $p= k + \lambda(k) = k + \lfloor k^2/4\rfloor$.

 %As a result, our construction is slightly different from Slivkins's and we will present it in a self-contained way instead of referencing his work as a black box.

%to improve its encoding efficiency. First we will define the puzzle and present our solution to it. 

% Previous work reduces \textsf{$k$-Clique} on $kn$ vertices to \textsf{$\grp{k + \binom{k}{2}}$-EDP} in a DAG with $O(kn)$ nodes \cite{Slivkins2010}.
% The parameter blows up in this way because the reduction uses $k$ solution paths to pick $k$ vertices in the original graph, and then for each of the $\binom{k}{2}$ pairs of vertices chosen, uses an additional solution path to verify that those vertices are adjacent in the original graph.

% We improve on this earlier argument by modifying the reduction graph to allow some solution paths to check multiple edges at once.
% This lets us avoid using $\binom{k}{2}$ solution paths to separately check for edges between each pair of nodes in a candidate $k$-clique, and instead employ just $\lfloor k^2/4\rfloor$ solution paths in the reduction.

% We relate the 

\section{2-DSP}
\label{sec:2dsp}

In this section, we present our new algorithms for \textsf{2-DSP}.

We begin in \Cref{subsec:alg-prelim} by introducing notation and terminology used in our algorithms.
We also present a general ``subpath swapping lemma,'' which we repeatedly invoke to efficiently evaluate polynomials modulo 2.
Then in \Cref{subsec:helpers} we define and prove results about various  polynomials used in our algorithms.
All the definitions and results in \Cref{subsec:alg-prelim} and \Cref{subsec:helpers} apply to both DAGs and undirected graphs.

Following this setup, in \Cref{subsec:2dsp-DAG-alg} we present an algorithm for \textsf{2-DSP} in DAGs proving \Cref{2dsp-DAG}, and in \Cref{subsec:2dsp-undir-alg} we present an algorithm for \textsf{2-DSP} in undirected graphs proving \Cref{2dsp-undir}.
We note that
\Cref{subsec:2dsp-undir-alg} can be read independently of \Cref{subsec:2dsp-DAG-alg} (but both \Cref{subsec:2dsp-DAG-alg,subsec:2dsp-undir-alg}
rely on the definitions and results from \Cref{subsec:alg-prelim,subsec:helpers}).

\subsection{Graph and Polynomial Preliminaries}
\label{subsec:alg-prelim}

\subsubsection*{Graph Notation}
Let $G$ denote the input graph on $n$ vertices and $m$ edges.
Let $V$ be the vertex set of $G$.

For each $i\in [2]$, we define $G_i$ to be the $s_i$-shortest paths DAG of $G$.

Given $i\in [2]$ and vertex $v$, we let $V_{\text{in}}^{i}(v)$ denote the in-neighbors of $v$ in $G_i$, and $\vout^i(v)$ denote the out-neighbors of $v$ in $G_i$.
We further let $\vin(v) = \vin^1(v)\cap \vin^2(v)$ and $\vout(v) = \vout^1(v)\cap \vout^2(v)$ be the sets of in-neighbors and out-neighbors respectively of node $v$ common to both $G_1$ and $G_2$.
Furthermore, we define $\vmix(v) = \vin^1(v)\cap \vout^2(2)$ to be the ``mixed neighborhood'' of $v$. 

We call a pair of paths $\langle P_1,P_2\rangle$ \emph{standard} if each $P_i$ is an $(s_i,t_i)$-path in $G_i$.

\subsubsection*{Algebraic Preliminaries}

For each edge $e = (u,v)$, we introduce an indeterminate $x_{uv}$.

If $G$ is undirected, we set 
$x_{uv} = x_{vu} = x_{\set{u,v}}$
to reflect that edges are unordered pairs of vertices.

\noindent Given a path $P = \langle v_0, \dots, v_r\rangle $, we assign it the monomial
    \[f(P) = \prod_{j=0}^{r-1}x_{v_jv_{j+1}}.\]    
Given a pair of paths $\ca{P} = \langle P_1, P_2\rangle$, we assign it monomial 
    \[f(\ca{P}) = f(P_1,P_2) = f(P_1)f(P_2).\]
Given a collection $\ca{S}$ of paths or pairs of paths, we say that a ``polynomial $F$ \emph{enumerates} $\ca{S}$,'' or equivalently ``$F$ is the \emph{enumerating polynomial} for $\ca{S}$,'' if 
    \[F = \sum_{S\in \ca{S}} f(S).\]

\subsubsection*{Subpath Swapping}

\begin{lemma}[Shortest Path Swapping]
\label{preserve-shortest-paths}
    Let $P_1$ and $P_2$ be shortest paths passing through vertices $a$ and $b$ in that order.
    Then the walks obtained by swapping the $a$ to $b$ subpaths in $P_1$ and $P_2$ are also shortest paths.
\end{lemma}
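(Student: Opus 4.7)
The plan is to exploit the fact that any subpath of a shortest path is itself a shortest path, so that the two $a$-to-$b$ subpaths being swapped must have equal length, namely $\dist(a,b)$. That will immediately give that the swap preserves total length, and the only remaining work is to upgrade ``walk of length equal to the distance'' to ``shortest path.''

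First, I would name the endpoints: say $P_i$ goes from $u_i$ to $w_i$, and write $P_i = P_i[u_i,a]\dia P_i[a,b]\dia P_i[b,w_i]$. The central observation is that $P_i[a,b]$ is a subpath of a shortest path, and hence is itself a shortest $(a,b)$-path in $G$; in particular $\ell(P_1[a,b]) = \ell(P_2[a,b]) = \dist(a,b)$. Therefore, if $Q_1$ and $Q_2$ denote the walks obtained by swapping these subpaths, the total weight of each $Q_i$ equals the total weight of $P_i$, which in turn equals $\dist(u_i,w_i)$ since $P_i$ was a shortest path.

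Second, I need to argue that each $Q_i$, which is a priori only a concatenation of paths (hence a walk), is actually a bona fide shortest path. Since $Q_i$ is a $u_i$-to-$w_i$ walk of weight exactly $\dist(u_i,w_i)$, it is certainly a minimum-weight such walk. In the DAG case there is nothing to check: walks in a DAG are automatically simple. In the undirected case, positivity of edge weights gives the standard ``shortcut'' argument: any repeated vertex would allow us to excise a closed subwalk of strictly positive length and obtain a shorter $u_i$-to-$w_i$ walk, contradicting minimality. Hence $Q_i$ is simple, i.e., a shortest path.

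I do not anticipate a real obstacle here; the main conceptual point is just recognizing that the subpaths being swapped have the same length because both are shortest $(a,b)$-paths. The one place to be slightly careful is to state the conclusion in a way that covers both the DAG and the undirected setting uniformly, which the positive-edge-weight hypothesis (already assumed throughout the paper) makes routine.
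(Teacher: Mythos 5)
Your proof is correct and follows essentially the same route as the paper: equate the lengths of the two $a$-to-$b$ subpaths via $\dist(a,b)$, conclude the swapped walks have the same total length as the originals, and then use positivity of edge weights to rule out repeated vertices by a shortcut argument. The extra remark that DAG walks are automatically simple is a harmless bonus observation.
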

\begin{proof}
Since $P_1$ and $P_2$ are shortest paths, each of their $a$ to $b$ subpaths have length $\dist(a,b)$.
Since these subpaths have the same length, swapping the $a$ to $b$ subpaths of $P_1$ and $P_2$ produce walks with the same endpoints and lengths as $P_1$ and $P_2$ respectively.
Since all edge weights are positive, these walks have no repeat vertices (since if one of the walks did have repeat vertices, then it would contain a cycle which we could remove to produce a walk of shorter total length between its endpoints, which would then contradict the assumption that $P_1$ and $P_2$ are shortest paths).
Thus these walks must be shortest paths as claimed.
\end{proof}

We will repeatedly make use of the following result, which gives a way of simplifying enumerating polynomials for certain families of pairs of paths.
Although the statement may appear technical, the lemma simply formalizes the idea that for certain enumerating polynomials, we can pair up terms of equal value and get cancellation modulo two.

\begin{lemma}[Vanishing Modulo 2]
\label{lem:subpath-swapping-cancels}
    Let $\ca{F}$ be a family of pairs of paths in $G$, and let $\ca{S}\subseteq\ca{F}$.
    Suppose there exist maps $\alpha, \beta: \ca{S}\to V$ and $\Phi: \ca{S}\to \ca{S}$ such that for all $\ca{P} = \langle P_1,P_2\rangle\in\ca{S}$, 
    \begin{enumerate}
        \item the vertices $a = \alpha(\ca{P})$ and $b=\beta(\ca{P})$ lie in $P_1\cap P_2$, $a$ appears before $b$ in $P_1$ and $P_2$, and the subpaths $P_1[a,b]$ and $P_2[a,b]$ are distinct;
        
        \item we have 
        $\Phi(\ca{P}) = \langle Q_1, Q_2\rangle$, where $Q_1$ is obtained by replacing the $a$ to $b$ subpath in $P_1$ with $P_2[a,b]$, and $Q_2$ is obtained by replacing the $a$ to $b$ subpath in $P_2$ with $P_1[a,b]$; and

        \item we have $\Phi(\Phi(\ca{P})) = \ca{P}$.
    \end{enumerate}
    Then the enumerating polynomial for $\ca{F}$ is the same as the enumerating polynomial for $\ca{F}\setminus\ca{S}$.
\end{lemma}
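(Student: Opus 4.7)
The plan is to decompose the enumerating polynomial via
\[F_\ca{F} \;=\; F_{\ca{F}\setminus\ca{S}} \,+\, F_\ca{S},\]
which is valid because $\ca{S}\subseteq\ca{F}$ partitions $\ca{F}$ into $\ca{F}\setminus\ca{S}$ and $\ca{S}$. So it suffices to show that $F_\ca{S}$ vanishes over the field $\mathbb{F}_{2^q}$ of characteristic two in which our algorithms work. To do this, I will argue that $\Phi$ is a fixed-point-free involution on $\ca{S}$ that preserves the monomial $f(\cdot)$, so that the elements of $\ca{S}$ can be matched into pairs whose monomials sum to zero modulo two.

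First I would verify the involution and fixed-point-free properties. Condition 3 immediately gives $\Phi\circ\Phi = \text{id}_\ca{S}$. For fixed-point-freeness, suppose some $\ca{P} = \langle P_1,P_2\rangle\in\ca{S}$ satisfies $\Phi(\ca{P}) = \ca{P}$; by the construction in condition 2, $Q_1 = P_1$ forces the replaced segment $P_2[a,b]$ to equal $P_1[a,b]$, contradicting condition 1 (which requires $P_1[a,b]\neq P_2[a,b]$). Hence $\Phi$ has no fixed points on $\ca{S}$.

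Next I would show that $f(\Phi(\ca{P})) = f(\ca{P})$. Writing $\Phi(\ca{P}) = \langle Q_1,Q_2\rangle$, the edge multiset of $Q_1$ is obtained from that of $P_1$ by removing the edges of $P_1[a,b]$ and inserting the edges of $P_2[a,b]$, and symmetrically for $Q_2$. Therefore the combined edge multiset used by $(Q_1,Q_2)$ equals that used by $(P_1,P_2)$. Since
    \[f(P_1)f(P_2) \;=\; \prod_{e} x_e,\]
taken over this combined multiset (and since for undirected graphs $x_{uv}=x_{vu}$ makes the traversal direction irrelevant, while in DAGs condition 1 ensures that $P_1[a,b]$ and $P_2[a,b]$ are both oriented from $a$ to $b$ so no edge is ``reversed''), the monomials coincide.

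Combining these three facts, $\Phi$ partitions $\ca{S}$ into unordered pairs $\{\ca{P},\Phi(\ca{P})\}$, and the contribution of each such pair to $F_\ca{S}$ is $f(\ca{P}) + f(\Phi(\ca{P})) = 2f(\ca{P}) = 0$ in characteristic two. Summing over all pairs yields $F_\ca{S} = 0$, which gives the claimed equality. The step I expect to require the most care is the monomial-preservation claim, since I must check that the edge-multiset bookkeeping is not disturbed by the directional conventions (undirected edges versus DAG edges) and that conditions 1 and 2 together genuinely ensure the swap is well-defined; once that is in place, the matching argument is routine.
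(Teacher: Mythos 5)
Your proposal is correct and follows essentially the same argument as the paper: both decompose $F_{\ca{F}} = F_{\ca{F}\setminus\ca{S}} + F_{\ca{S}}$, use conditions 1--3 to establish that $\Phi$ is a fixed-point-free involution on $\ca{S}$ preserving the monomial $f(\cdot)$, and conclude that $F_{\ca{S}}$ vanishes in characteristic two by pairing up terms. Your extra remark about why the edge multiset is preserved regardless of undirected-versus-DAG conventions is a small bonus the paper leaves implicit, but the structure of the argument is the same.
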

\begin{proof}
    Let $F$ be the enumerating polynomial for $\ca{F}$.
    By definition, we have 
        \begin{equation}
        \label{enum:def}
        F = \sum_{\ca{P}\in\ca{F}} f(\ca{P}) = \sum_{\ca{P}\in\ca{F}\setminus \ca{S}} f(\ca{P}) + \sum_{\ca{P}\in\ca{S}} f(\ca{P}).
        \end{equation}
Take any $\ca{P} = \langle P_1,P_2\rangle\in \ca{S}$.
By property 1 from the lemma statement, the subpaths from $\alpha(\ca{P})$ to $\beta(\ca{P})$ in $P_1$ and $P_2$ are distinct.
Then by property 2, $\Phi(\ca{P})\neq\ca{P}$.
Consequently, by property 3, we can partition $\ca{S} = \ca{S}_1\cup \ca{S}_2$ into two equally sized pieces such that $\Phi$ is a bijection from $\ca{S}_1$ to $\ca{S}_2$.
So we can write 
    \begin{equation}
    \label{summing-in-half}
    \sum_{\ca{P}\in\ca{S}} f(\ca{P}) = \sum_{\ca{P}\in\ca{S}_1} f(\ca{P}) + \sum_{\ca{P}\in\ca{S}_2} f(\ca{P}) = \sum_{\ca{P}\in\ca{S}_1} \grp{f(\ca{P}) + f(\Phi(\ca{P}))}.
    \end{equation}
By property 2, the multiset of edges traversed by the pair $\ca{P}$ is the same as the multiset of edges traversed by $\Phi(\ca{P})$, for all $\ca{P}\in\ca{S}$.
Consequently, $f(\ca{P}) = f(\Phi(\ca{P}))$ for all $\ca{P}\in \ca{S}$.

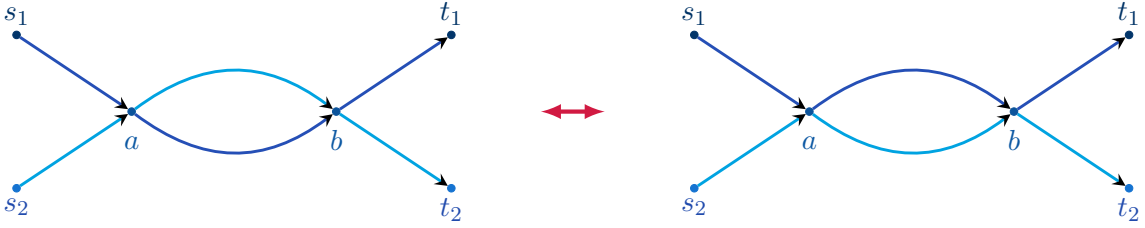
\begin{figure}[t]
    \centering
    %% General Subpath Swapping Setup
\begin{tikzpicture}[scale = 1.7,
lightvtx/.style=
{draw=dblue!70!midnight, fill=dblue!70!midnight,
circle, inner sep=1pt, minimum width=1pt},
darkvtx/.style=
{draw=cdblue!5!midnight, fill=cdblue!5!midnight, circle, inner sep=1pt, minimum width=1pt},
intervtx/.style=
{draw=dblue!35!midnight, fill=dblue!35!midnight, circle, inner sep = 1pt, minimum width = 1.1pt},
lightedge/.style=
{draw=dsblue!80!midnight, line width = 1.1pt, -stealth},
darkedge/.style=
{draw=cdblue!90!midnight, line width = 1.1pt, -stealth}]

%% Vertical Displacement
\def\ystep{0.6cm};
\def\controller{0.7*\ystep};

%% Horizontal Displacement
\def\xinit{0.9cm};
\def\xstep{1.6cm};

% Horizontal Transition
\def\extrw{0.5cm};
\def\extr{0.7cm};

\node[darkvtx] (s1) at (0,\ystep) {};
\node[lightvtx] (s2) at (0,-\ystep) {};

\node[midnight, above] at (s1) {$s_1$};
\node[cdblue!80!midnight, below] at (s2) {$s_2$};

\node[intervtx] (a) at (\xinit, 0) {};

\node[dblue!40!midnight, below=5.5pt] at (a) {$a$};

\node[intervtx] (b) at (\xinit + \xstep, 0) {};

\node[dblue!40!midnight, below=3pt] at (b) {$b$};

\node[darkvtx] (t1) at (2*\xinit + \xstep, \ystep) {};
\node[lightvtx] (t2) at (2*\xinit + \xstep, -\ystep) {};

\node[midnight, above] at (t1) {$t_1$};
\node[cdblue!80!midnight, below] at (t2) {$t_2$};

\node (lowerleftcontrol) at (\xinit + 0.35*\xstep, -\controller) {};
\node (lowerrightcontrol) at (\xinit + 0.65*\xstep, -\controller) {};

\node (upperleftcontrol) at (\xinit + 0.35*\xstep, \controller) {};
\node (upperrightcontrol) at (\xinit + 0.65*\xstep, \controller) {};

\draw[darkedge] (s1) -- (a);
\draw[lightedge] (s2) -- (a);

\draw[darkedge] (a) .. controls (lowerleftcontrol.center) and (lowerrightcontrol.center) ..   (b);
\draw[lightedge] (a) .. controls (upperleftcontrol.center) and (upperrightcontrol.center) ..   (b);

\draw[darkedge] (b) -- (t1);
\draw[lightedge] (b) -- (t2);

%% Edge Transition
\draw[ultra thick, alizarin, latex-latex] (2*\xinit + \xstep + \extr, 0) -- (2*\xinit + \xstep + \extr + \extrw, 0);

%% The swapped copy
\begin{scope}[xshift = 2*\xinit + \xstep + 2*\extr + \extrw]
    \node[darkvtx] (s1) at (0,\ystep) {};
\node[lightvtx] (s2) at (0,-\ystep) {};

\node[midnight, above] at (s1) {$s_1$};
\node[cdblue!80!midnight, below] at (s2) {$s_2$};

\node[intervtx] (a) at (\xinit, 0) {};

\node[dblue!40!midnight, below=5.5pt] at (a) {$a$};

\node[intervtx] (b) at (\xinit + \xstep, 0) {};

\node[dblue!40!midnight, below=3pt] at (b) {$b$};

\node[darkvtx] (t1) at (2*\xinit + \xstep, \ystep) {};
\node[lightvtx] (t2) at (2*\xinit + \xstep, -\ystep) {};

\node[midnight, above] at (t1) {$t_1$};
\node[cdblue!80!midnight, below] at (t2) {$t_2$};

\node (lowerleftcontrol) at (\xinit + 0.35*\xstep, -\controller) {};
\node (lowerrightcontrol) at (\xinit + 0.65*\xstep, -\controller) {};

\node (upperleftcontrol) at (\xinit + 0.35*\xstep, \controller) {};
\node (upperrightcontrol) at (\xinit + 0.65*\xstep, \controller) {};

\draw[darkedge] (s1) -- (a);
\draw[lightedge] (s2) -- (a);

\draw[lightedge] (a) .. controls (lowerleftcontrol.center) and (lowerrightcontrol.center) ..   (b);
\draw[darkedge] (a) .. controls (upperleftcontrol.center) and (upperrightcontrol.center) ..   (b);

\draw[darkedge] (b) -- (t1);
\draw[lightedge] (b) -- (t2);
\end{scope}
    
\end{tikzpicture}
    \caption{Given paths $P_1$ and $P_2$ which intersect at nodes $a = \alpha(P_1, P_2)$ and $b=\beta(P_1, P_2)$, such that $a$ appears before $b$ on both paths, if we swap the $a$ to $b$ subpaths of of $P_1$ and $P_2$ to produce new paths $Q_1$ and $Q_2$ respectively, then these pairs $f(P_1, P_2) = f(Q_1, Q_2)$ have the same monomials. 
    Moreover, swapping the $a$ to $b$ subpaths of $Q_1$ and $Q_2$ recovers $P_1$ and $P_2$.}
    \label{fig:subpath-swap}
\end{figure}

The subpath swapping procedure determined by $\Phi$ is depicted in \Cref{fig:subpath-swap}.

Since we work over a field of characteristic two, this implies that 
\[\sum_{\ca{P}\in\ca{S}_1} \grp{f(\ca{P}) + f(\Phi(\ca{P}))} = 0.\]

Substituting the above equation into \cref{summing-in-half} implies that 
    \[\sum_{\ca{P}\in\ca{S}} f(\ca{P}) = 0.\]

Then substituting the above equation into \cref{enum:def} yields 
    \[F = \sum_{\ca{P}\in\ca{F}\setminus \ca{S}} f(\ca{P}).\]

This proves that $F$ is the enumerating polynomial for $\ca{F}\setminus\ca{S}$ as desired.
\end{proof}

% Let $F_{\cap}$ be the enumerating polynomial for the collection of pairs of paths $\langle P_1, P_2\rangle$ where 
% \begin{enumerate}
%     \item $P_i$ is an $(s_i,t_i)$-path in $G_i$, and 
%     \item
%         $P_1$ and $P_2$ intersect at some vertex.
% \end{enumerate}

% The following lemma shows that to evaluate $F_{\disj}$, it suffices to evaluate $F_{\cap}$.

% \begin{lemma}
%     We have 
%         \[F_{\disj} = L_1(t_1)L_2(t_2) - F_{\cap}.\]
% \end{lemma}
% \begin{proof}
    
% \end{proof}

\subsection{Helper Polynomials}
\label{subsec:helpers}

In this section we define various enumerating polynomials which will help us solve \textsf{2-DSP}.
All of the results in this section hold if the input graph $G$ is a DAG or is undirected.

\subsubsection*{Paths to and from Terminals}

For each $i\in [2]$ and vertex $v$, 
let $L_i(v)$ be the enumerating polynomial for the set of $(s_i,v)$-paths in $G_i$, and let $R_i(v)$ be the enumerating polynomial for the set of $(v,t_i)$-paths in $G_i$.
\begin{lemma}[Path Polynomials]
    \label{lem:dp}
    For each $i\in [2]$ and vertex $v$, we have 
        \begin{equation}
        \label{left:eq}
        L_i(v) = \sum_{u\in \vin^{i}(v)} L_i(u)x_{uv} 
        \end{equation}
    and
        \begin{equation}
        \label{right:eq}
        R_i(v) = \sum_{w\in\vout^i(v)} x_{vw}R_i(w).
        \end{equation}
\end{lemma}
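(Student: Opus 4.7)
The plan is to prove both identities by the same style of decomposition argument: each nontrivial path in $G_i$ has a distinguished ``boundary'' edge (the last edge for $L_i$, the first edge for $R_i$) which splits the path into a shorter path plus one edge, and this splitting is reversible. Since $G_i$ is a DAG (being a shortest-paths DAG), every walk in $G_i$ is already a path, so we do not need to worry about accidentally creating cycles when we extend or shrink paths by an edge.

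For the $L_i$ identity, I will argue that there is a bijection between $(s_i,v)$-paths $P$ in $G_i$ with at least one edge and pairs $(u,P')$, where $u\in\vin^i(v)$ and $P'$ is an $(s_i,u)$-path in $G_i$. Indeed, given such a path $P = \langle v_0,\dots,v_r\rangle$ with $v_r = v$ and $r\ge 1$, its penultimate vertex $u = v_{r-1}$ satisfies $(u,v)\in E(G_i)$, i.e.\ $u\in\vin^i(v)$, and the prefix $P' = P[s_i,u]$ is an $(s_i,u)$-path in $G_i$. Conversely, any $u\in\vin^i(v)$ and any $(s_i,u)$-path $P'$ in $G_i$ can be extended by the edge $(u,v)$ to produce an $(s_i,v)$-path in $G_i$. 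By the definition of the monomial $f$, this extension multiplies the monomial by exactly $x_{uv}$, so $f(P) = f(P')\,x_{uv}$. Summing over all $(s_i,v)$-paths and grouping by the value of the penultimate vertex $u$ yields
\[
L_i(v) \;=\; \sum_{u\in\vin^i(v)} \sum_{P'\text{ an }(s_i,u)\text{-path in }G_i} f(P')\, x_{uv} \;=\; \sum_{u\in\vin^i(v)} L_i(u)\, x_{uv},
\]
which is \cref{left:eq}. The $R_i$ identity \cref{right:eq} follows by an entirely symmetric argument, now splitting each $(v,t_i)$-path on its \emph{first} edge $(v,w)$ with $w\in\vout^i(v)$ and writing the remaining suffix as a $(w,t_i)$-path in $G_i$.

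There is essentially no obstacle here; the only point to watch is the degenerate case $v = s_i$ (respectively $v = t_i$), where the trivial single-vertex path contributes a monomial of $1$ but $\vin^i(s_i) = \emptyset$ (respectively $\vout^i(t_i)=\emptyset$). The stated recurrences are implicitly applied for $v\neq s_i$ in \cref{left:eq} and $v\neq t_i$ in \cref{right:eq}, with the natural base cases $L_i(s_i) = 1$ and $R_i(t_i)=1$ handled separately; under this convention the decomposition above gives the full polynomial, since every nontrivial path has a well-defined last (respectively first) edge.
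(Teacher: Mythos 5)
Your proof is correct and uses essentially the same approach as the paper: decomposing each $(s_i,v)$-path by its unique penultimate vertex $u\in\vin^i(v)$ (and symmetrically for $R_i$). Your extra remark about the degenerate case $v=s_i$ is a reasonable point of care that the paper's proof glosses over, but it does not change the substance of the argument.
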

\begin{proof}
    Since $L_i(u)$ enumerates $(s_i,u)$-paths in $G_i$, the polynomial
        \[L_i(u)x_{uv}\]
    enumerates $(s_i,v)$-paths in $G_i$, whose penultimate vertex is $u$.
    Every $(s_i,v)$-path in $G_i$ has some unique penultimate vertex $u\in \vin^i(v)$.
    Consequently
    \[\sum_{u\in \vin^{i}(v)} L_i(u)x_{uv} \]
    enumerates all $(s_i,v)$-paths in $G_i$, which proves \cref{left:eq} as desired.
    
    Symmetric reasoning proves \cref{right:eq}.
\end{proof}

\subsubsection*{Intersecting Paths}

\noindent Let $F_{\disj}$ be the enumerating polynomial for the collection of vertex-disjoint, standard pairs of paths.
Let $F_\cap$ be the enumerating polynomial for the collection of intersecting, standard pairs of paths.
% Observe that $F_{\disj}$ and $F_\cap$ are essentially complementary, enumerating disjoint pairs of paths and intersecting pairs of paths respectively.
Our algorithms for \textsf{2-DSP} work by evaluating $F_{\disj}$ at a random point.
The following observation shows that we can compute an evaluation of $F_{\disj}$ by evaluating $F_\cap$ instead.

\begin{lemma}[Disjoint Paths $\le$ Intersecting Paths]
    \label{disj-to-intersect}
    We have
    \[F_{\disj} = L_1(t_1)L_2(t_2) - F_{\cap}.\]
\end{lemma}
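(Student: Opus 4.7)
The plan is to prove the identity by showing that $L_1(t_1)L_2(t_2)$ is the enumerating polynomial for the collection of all standard pairs of paths, and then splitting this collection into its disjoint and intersecting parts. Since the working field has characteristic two, subtraction is the same as addition, so the equation in the statement should really be read as $F_{\disj} = L_1(t_1)L_2(t_2) + F_\cap$, which is the form I will establish.

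First I would observe that by definition $L_i(t_i)$ enumerates $(s_i,t_i)$-paths in $G_i$, so that
\[
L_1(t_1)L_2(t_2) = \sum_{P_1}\sum_{P_2} f(P_1)f(P_2),
\]
where the outer (resp.\ inner) sum ranges over all $(s_1,t_1)$-paths $P_1$ in $G_1$ (resp.\ $(s_2,t_2)$-paths $P_2$ in $G_2$). By definition of $f$ on pairs, $f(P_1)f(P_2) = f(\langle P_1,P_2\rangle)$, and the set of pairs $\langle P_1,P_2\rangle$ being summed over is exactly the collection $\ca{F}$ of \emph{all} standard pairs of paths. Hence $L_1(t_1)L_2(t_2)$ is the enumerating polynomial for $\ca{F}$.

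Next I would partition $\ca{F}$ into the disjoint standard pairs $\ca{F}_{\disj}$ and the intersecting standard pairs $\ca{F}_{\cap}$ (every pair of paths is either vertex-disjoint or shares at least one vertex, so this is a genuine partition). Linearity of the enumerating polynomial in the underlying family then gives
\[
L_1(t_1)L_2(t_2) = F_{\disj} + F_{\cap}.
\]
Rearranging yields $F_{\disj} = L_1(t_1)L_2(t_2) - F_\cap$, which is the desired identity.

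There is no real obstacle here; the only thing to be careful about is that the notion of a ``standard pair'' already bakes in the requirement that $P_i$ be a path in $G_i$ (and hence an $(s_i,t_i)$-shortest path in $G$), so the ranges of the product $L_1(t_1)L_2(t_2)$ and of the sum $F_{\disj}+F_{\cap}$ match up exactly. Once that bookkeeping is observed, the lemma follows immediately from the definitions.
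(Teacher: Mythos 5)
Your proof is correct and follows the same approach as the paper: expand $L_1(t_1)L_2(t_2)$ to see it enumerates all standard pairs, partition into disjoint and intersecting pairs, and rearrange. Your added remark about characteristic two is accurate but not needed for the identity itself, which holds over any ring.
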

\begin{proof}
            By expanding out the product, we see that $L_1(t_1)L_2(t_2)$ enumerates all standard pairs of paths $\langle P_1,P_2\rangle$.
            Each such pair is either vertex-disjoint or consists of paths intersecting at a common node, so we have 
            \[L_1(t_1)L_2(t_2) = F_{\disj} + F_{\cap}\]
            which implies the desired result.
\end{proof}

\subsubsection*{Linkages from Two Sources to a Common Vertex}

Our algorithms will use the following ``linkage'' polynomials, which enumerate pairs of internally vertex-disjoint paths beginning at the source nodes $s_1$ and $s_2$, and ending at a common vertex.

\begin{definition}[Source Linkage]
\label{def:disjoint-v}
Given a vertex $v$, let $\ca{D}(v)$ be the set of pairs of paths $\langle P_1,P_2\rangle$ where each $P_i$ is an $(s_i,v)$-path in $G_i$ and the paths intersect only at $v$.
Let $D(v)$ be the enumerating polynomial for $\ca{D}(v)$.
\end{definition}

We now observe that for the purpose of enumeration modulo two, one can replace the ``global constraint'' that a pairs of paths in $\ca{D}(v)$ has no intersection anywhere before $v$ with the easier to check ``local constraint'' that a pair of paths has no intersection immediately before $v$.

\begin{definition}[Linkage Relaxation]
    Given a vertex $v$, let $\tilde{\ca{D}}(v)$ be the set of pairs of paths $\langle P_1,P_2\rangle$, where each $P_i$ is an $(s_i,v)$-path in $G_i$, such that the penultimate vertices of $P_1$ and $P_2$ are distinct.
\end{definition}

\begin{lemma}[Relaxing Source Linkages]
    \label{linkage-relaxation}
    For each vertex $v$, the polynomial $D(v)$ enumerates $\tilde{\ca{D}}(v)$.
\end{lemma}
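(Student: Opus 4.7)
My plan is to apply the Vanishing Modulo 2 Lemma (Lemma~\ref{lem:subpath-swapping-cancels}) to the family $\ca{F} = \tilde{\ca{D}}(v)$. The first observation is that $\ca{D}(v) \subseteq \tilde{\ca{D}}(v)$: if two paths intersect only at $v$, then their penultimate vertices, being neither $v$ nor in the intersection, must differ. So if I set $\ca{S} = \tilde{\ca{D}}(v) \setminus \ca{D}(v)$, then the Lemma will yield that the enumerating polynomial of $\tilde{\ca{D}}(v)$ equals that of $\ca{D}(v)$, which is $D(v)$ by definition.

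To invoke the Lemma, I would set $\beta(\ca{P}) = v$ for every $\ca{P} = \langle P_1, P_2\rangle \in \ca{S}$, and take $\alpha(\ca{P}) = a$ to be the last vertex in $P_1 \cap P_2$ appearing strictly before $v$ on $P_1$ (well-defined because $\ca{P} \notin \ca{D}(v)$). Since both $P_1$ and $P_2$ end at $v$, the vertex $a$ also precedes $v$ on $P_2$. The penultimate vertices of $P_1[a,v]$ and $P_2[a,v]$ agree with the (distinct) penultimate vertices of $P_1, P_2$, so these subpaths differ, establishing property~1. The map $\Phi$ is then defined by swapping these $a$-to-$v$ subpaths, giving property~2 by construction.

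The technical heart of the argument is verifying that $\Phi(\ca{P}) = \langle Q_1, Q_2\rangle$ actually lies in $\ca{S}$. For this I would combine the Shortest Path Swapping Lemma (Lemma~\ref{preserve-shortest-paths}), which shows that $Q_1$ and $Q_2$ are shortest paths in $G$ from $s_1$ and $s_2$ to $v$, with the standard substructure fact that, since $a$ lies on the $(s_i, v)$-shortest path $P_i$ in $G$, every $(a, v)$-shortest path in $G$ is contained in $G_i$ (derivable from Proposition~\ref{prop:sp-DAG} by concatenating with an $(s_i, a)$-shortest path to produce an $(s_i, v)$-shortest path whose edges all satisfy the defining equation of $G_i$). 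Applied to the spliced-in subpath, this places each $Q_i$ in $G_i$. Moreover $Q_1, Q_2$ intersect at $a \ne v$ so the pair is not in $\ca{D}(v)$, and the penultimate vertices of $Q_1, Q_2$ are those of $P_2, P_1$, which are distinct.

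The main obstacle I anticipate is property~3 --- confirming that $\Phi$ is an involution on $\ca{S}$. This reduces to checking $\alpha(\Phi(\ca{P})) = a$, i.e., that $a$ remains the last intersection strictly before $v$ on $Q_1$. The key input is the maximality in the choice of $a$: any vertex strictly between $a$ and $v$ on $Q_1$ is an internal vertex $w$ of $P_2[a,v]$, and such a $w$ cannot lie on $P_2[s_2,a]$ by simplicity of $P_2$, nor on $P_1[a,v]$ since otherwise $w$ would be an internal vertex of $P_1[a,v]$ lying in $P_2$, contradicting the choice of $a$. Hence $a$ is indeed the last intersection on $Q_1$ before $v$, and a second application of $\Phi$ swaps the subpaths back, recovering $\ca{P}$. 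Lemma~\ref{lem:subpath-swapping-cancels} then yields the conclusion.
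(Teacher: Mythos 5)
Your proof is correct and follows the same overall strategy as the paper: apply Lemma~\ref{lem:subpath-swapping-cancels} to $\ca{F}=\tilde{\ca{D}}(v)$ with $\ca{S} = \tilde{\ca{D}}(v)\setminus\ca{D}(v)$, where $\Phi$ swaps a judiciously chosen $a$-to-$v$ subpath and is verified to be a well-defined involution on $\ca{S}$. The only substantive difference is your choice of the swap vertex $a$: you take the \emph{last} common intersection strictly before $v$ on $P_1$, while the paper takes the \emph{first}, namely the vertex $u\in P_1\cap P_2$ maximizing $\dist(u,v)$. Both choices yield a valid involution (in each case, the subpath swap can only move intersection points of $Q_1\cap Q_2$ to the ``safe'' side of the chosen vertex, so the extremal choice is preserved), and your case analysis ruling out new intersection points strictly between $a$ and $v$ on $Q_1$ is correct. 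One minor simplification: the detour through a substructure fact about shortest-path DAGs is unnecessary — once Lemma~\ref{preserve-shortest-paths} gives that each $Q_i$ is an $(s_i,v)$-shortest path in $G$, it is automatically an $(s_i,v)$-path in $G_i$ by the equivalence noted immediately after the definition of the $s$-shortest-paths DAG in the preliminaries, which is exactly how the paper concludes this step.
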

        \begin{proof}
            For convenience, let $\ca{F} = \tilde{\ca{D}}(v)$.
            Let $\ca{S} = \ca{F}\setminus\ca{D}(v)$ be the family of pairs of paths $\langle P_1,P_2\rangle$ where each $P_i$ is an $(s_i,v)$-path in $G_i$, such that 
            \begin{enumerate}
                \item the paths $P_1$ and $P_2$ intersect at some node other than $v$, and 
                \item the nodes immediately before $v$ on $P_1$ and $P_2$ are distinct.
            \end{enumerate}
            Take arbitrary $\langle P_1,P_2\rangle\in\ca{S}$.
            Let $u$ be the vertex in $P_1\cap P_2$ maximizing the value of $\dist(u,v)$.
            By condition 1 above, $u\neq v$.
            By condition 2 above, $P_1[u,v]$ and $P_2[u,v]$ are distinct.
        
                Now define paths 
                \[Q_1 = P_1[s_1,u]\dia P_2[u,v] \quad
                \text{and}
                \quad
                Q_2 = P_2[s_2,u]\dia P_1[u,v].\]
            By \Cref{preserve-shortest-paths} each $Q_i$ is a shortest path, and thus an $(s_i,v)$-path in $G_i$.
        
            The pair $\langle Q_1,Q_2\rangle$ satisfies condition 1 above, since $Q_1$ and $Q_2$ intersect at $u$.
            This pair also satisfies condition 2 above, since the penultimate vertices of $Q_1$ and $Q_2$ are the same as the penultimate vertices of $P_2$ and $P_1$ respectively.
            Also, $u$ is the node in $Q_1\cap Q_2$ maximizing $\dist(u,v)$, so we can perform the same subpath swapping operation as above to go from $\langle Q_1, Q_2\rangle$ to $\langle P_1, P_2\rangle$.
        
            Then by the discussion in the previous paragraph, \Cref{lem:subpath-swapping-cancels} implies that the enumerating polynomial for $\ca{F}$ is the same as the enumerating polynomial for $\ca{F}\setminus\ca{S} = \ca{D}(v)$.
           The enumerating polynomial for $\ca{D}(v)$ is $D(v)$, so this proves the desired result.
        \end{proof}

\begin{lemma}[Enumerating Relaxed Linkages]
\label{enum:relax-linkage}
    For each vertex $v$, the polynomial
            \[L_1(v)L_2(v) - \sum_{u\in \vin(v)} L_1(u)L_2(u)x_{uv}^2\]
        enumerates $\tilde{D}(v)$.
\end{lemma}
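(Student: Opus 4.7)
The plan is to argue directly by inclusion: start with the polynomial enumerating all standard pairs of $(s_i, v)$-paths, then subtract those pairs whose paths share a common penultimate vertex. First I would observe that by the definitions of $L_1(v)$ and $L_2(v)$ together with the multiplicativity of $f$, the product $L_1(v)L_2(v)$ enumerates exactly the family $\ca{A}$ of pairs $\langle P_1, P_2\rangle$ such that each $P_i$ is an $(s_i,v)$-path in $G_i$, with no constraint on the penultimate vertices. Observe that $\tilde{\ca{D}}(v)$ is precisely the subfamily of $\ca{A}$ consisting of those pairs whose penultimate vertices differ, so $\ca{A}\setminus\tilde{\ca{D}}(v)$ is the family $\ca{B}$ of pairs whose penultimate vertices coincide.

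Next I would decompose $\ca{B}$ according to the common penultimate vertex $u$. If $\langle P_1, P_2\rangle\in\ca{B}$ has common penultimate vertex $u$, then $(u,v)$ is the last edge of $P_i$ in $G_i$ for each $i\in [2]$, so $u\in \vin^1(v)\cap\vin^2(v) = \vin(v)$. For a fixed $u\in \vin(v)$, the pairs in $\ca{B}$ with common penultimate vertex $u$ are in bijection with pairs $\langle P_1', P_2'\rangle$ where each $P_i'$ is an $(s_i, u)$-path in $G_i$, via $P_i = P_i'\dia \langle u, v\rangle$. Under this bijection, the monomial of $\langle P_1, P_2\rangle$ is exactly $f(P_1')f(P_2')\cdot x_{uv}^2$, so the enumerating polynomial for pairs in $\ca{B}$ with common penultimate vertex $u$ is $L_1(u)L_2(u)x_{uv}^2$. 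Summing over $u\in \vin(v)$ yields
\[
\sum_{u\in \vin(v)} L_1(u)L_2(u)x_{uv}^2
\]
as the enumerating polynomial for $\ca{B}$.

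Finally, since $\ca{A} = \tilde{\ca{D}}(v)\sqcup \ca{B}$ is a disjoint union, the enumerating polynomials add: the enumerating polynomial for $\tilde{\ca{D}}(v)$ equals $L_1(v)L_2(v)$ minus the enumerating polynomial for $\ca{B}$, giving the claimed formula. No step is a genuine obstacle here; the only subtle point to be explicit about is that each $P_i$ must traverse $(u,v)$ as an edge of $G_i$, so the sum ranges over $\vin(v) = \vin^1(v)\cap\vin^2(v)$ rather than over all in-neighbors of $v$ in $G$, and that the factor $x_{uv}^2$ (rather than $x_{uv}$) is correct since the edge $(u,v)$ is used once by each of $P_1$ and $P_2$ and we are in characteristic two where $x_{uv}^2$ is not zero.
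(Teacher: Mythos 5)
Your proof is correct and matches the paper's approach: both enumerate all pairs via $L_1(v)L_2(v)$ and subtract those with a common penultimate vertex $u\in\vin(v)$ by establishing the bijection between such pairs and pairs of $(s_i,u)$-paths extended by the edge $(u,v)$, yielding the term $L_1(u)L_2(u)x_{uv}^2$. One minor remark: the closing aside about characteristic two is a red herring --- the factor $x_{uv}^2$ is correct purely because each of $P_1$ and $P_2$ traverses the edge $(u,v)$ once, regardless of the field; the indeterminate $x_{uv}^2$ is never the zero polynomial in any characteristic, so no such justification is needed.
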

\begin{proof}
        By expanding out the product, we see that the polynomial  $L_1(v)L_2(v)$  enumerates all pairs of paths $\langle P_1,P_2\rangle$ where each $P_i$ is an $(s_i,v)$-path in $G_i$.
    We claim that 
    \begin{equation}
    \label{DAG:error}
    \sum_{u\in \vin(v)} L_1(u)L_2(u)x_{uv}^2
    \end{equation}
    enumerates all pairs of paths $\langle P_1,P_2\rangle$ such that each $P_i$ is a path in $G_i$ beginning at $s_i$ and ending at an edge $(u,v)$ for some node $u\in \vin(v)$.

    Indeed, paths $P_1$ and $P_2$ from any such pair can be split along their final edges into  
        \[P_1 = P_1[s_1,u]\dia (u,v) \quad\text{and}\quad
        P_2 = P_2[s_2,u] \dia (u,v).\]
    The paths $P_i[s_i,u]$ are enumerated by the $L_i(u)$ factors in \cref{DAG:error}, and the two copies of  $(u,v)$ are encoded by the $x_{uv}^2$ factor in \cref{DAG:error}.
    Conversely, any monomial in the expansion of
    \begin{equation}
        \label{eq:quick}
        L_1(u)L_2(u)x_{uv}^2
    \end{equation}
    is the product of monomials for some $(s_i,u)$-paths $Q_i$ in $G_i$ and two occurrences of the edge $(u,v)$, so that if we define
        \[P_1 = Q_1 \dia (u,v)\quad\text{and}\quad
        P_2 = Q_2 \dia (u,v)\]
    then the monomial we were considering is precisely the monomial for the pair $\langle P_1, P_2\rangle$.
    Summing over all vertices $u\in\vin$ proves that \cref{eq:quick} enumerates the pairs of paths we described above.

    From the discussion above, it follows that 
    \[
    L_1(v)L_2(v) - \sum_{u\in \vin(v)} L_1(u)L_2(u)x_{uv}^2
    \]
    enumerates all pairs of paths $\langle P_1,P_2\rangle$, where  $P_i$ is an $(s_i,v)$-path in $G_i$, such that $P_1$ and $P_2$ have distinct penultimate vertices.
    Thus this polynomial enumerates $\tilde{D}(v)$ as desired.
\end{proof}

\begin{lemma}[Enumerating Source Linkages]
\label{lem:DAG-left-disjoint}
    For each vertex $v$, we have 
\[
        D(v) = L_1(v)L_2(v) - \sum_{u\in \vin(v)} L_1(u)L_2(u)x_{uv}^2.
    \]
\end{lemma}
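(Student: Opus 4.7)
The plan is to obtain the claimed identity as an immediate consequence of the two preceding lemmas, with essentially no additional work.

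First I would invoke \Cref{linkage-relaxation}, which tells us that the polynomial $D(v)$---originally defined as the enumerating polynomial of the strict linkage family $\ca{D}(v)$---in fact also enumerates the relaxed family $\tilde{\ca{D}}(v)$, thanks to the subpath-swapping cancellation available over a field of characteristic two. Next I would invoke \Cref{enum:relax-linkage}, which directly establishes that the explicit expression $L_1(v)L_2(v) - \sum_{u\in \vin(v)} L_1(u)L_2(u)x_{uv}^2$ is itself an enumerating polynomial for the same family $\tilde{\ca{D}}(v)$.

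Since each pair of paths in $\tilde{\ca{D}}(v)$ contributes its own monomial $f(P_1, P_2)$, a family of pairs of paths has a well-defined enumerating polynomial. Both sides of the desired identity have been identified as enumerating polynomials for the same family $\tilde{\ca{D}}(v)$, so the two sides must coincide as formal polynomials, completing the proof. There is no real obstacle here: the substantive content---in particular, the nontrivial cancellation that collapses the ``global'' no-intersection condition defining $\ca{D}(v)$ to a ``local'' penultimate-vertex condition---has already been absorbed into \Cref{linkage-relaxation}, and this lemma simply records the useful closed-form recursion for $D(v)$ in terms of the path polynomials $L_1, L_2$, which is the expression that will be plugged into the \textsf{2-DSP} algorithm in \Cref{subsec:2dsp-DAG-alg} and \Cref{subsec:2dsp-undir-alg}.
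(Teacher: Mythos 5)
Your proposal is correct and matches the paper's proof exactly: the paper likewise proves this lemma in one line by combining \Cref{linkage-relaxation} and \Cref{enum:relax-linkage}, observing that both sides enumerate the same family $\tilde{\ca{D}}(v)$ and hence coincide as polynomials.
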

\begin{proof}
    This result follows by combining \Cref{linkage-relaxation,enum:relax-linkage}.
\end{proof}

    %% If we do not want to use the general lemma. 
    % It follows that this map we have defined, from $\langle P_1, P_2\rangle$ to $\langle Q_1, Q_2\rangle$, matches pairs of paths not in $\ca{D}(v)$ which are enumerated by \cref{eq:real-expr} into groups of two.
    % Moreover, this map preserves the monomial corresponding to a pair, because the multiset of edges used in the $P_i$ paths is the same as the multiset of edges used by the $Q_i$ paths.
    % Since we are working over a field of characteristic two, this implies that all such pairs of paths have net zero contribution to the polynomial from \cref{eq:real-expr}.

\subsection{2-DSP in DAGs}
\label{subsec:2dsp-DAG-alg}

In this subsection, assume that $G$ is a weighted DAG, and recall the definitions from \Cref{subsec:alg-prelim}.
We fix an arbitrary topological order on the vertices of $G$.

Our main goal is to efficiently evaluate a polynomial encoding pairs of disjoint shortest paths in $G$.
By \Cref{disj-to-intersect}, it suffices to evaluate a polynomial enumerating pairs of intersecting shortest paths in $G$.
We do this by casework on the first intersection point of these paths. 

\begin{lemma}[Enumerating Intersecting Paths]
    \label{lem:DAG-disj-poly}
    We have 
        \[F_{\cap} =  \sum_{v\in V} D(v)R_1(v)R_2(v).\]
\end{lemma}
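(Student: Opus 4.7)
The plan is to prove the identity by establishing a monomial-preserving bijection between the intersecting standard pairs counted on the left side and the triples $(v, \text{linkage}, Q_1, Q_2)$ enumerated on the right side. First I would use the fact that $G$ is a DAG with a fixed topological order to define, for any intersecting standard pair $\langle P_1, P_2 \rangle$, a unique ``earliest intersection'' vertex $v \in P_1 \cap P_2$, namely the one minimizing the topological rank. Since the set of intersections is nonempty and finite, such a $v$ is well-defined.

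Given such a pair, I would split each $P_i$ at $v$, writing $P_i = P_i[s_i, v] \diamond P_i[v, t_i]$. Each prefix $P_i[s_i, v]$ is an $(s_i, v)$-path in $G_i$ because a prefix of a path in $G_i$ is itself a path in $G_i$, and by choice of $v$ as the earliest intersection the two prefixes share no vertex other than $v$. Thus $\langle P_1[s_1, v], P_2[s_2, v]\rangle \in \mathcal{D}(v)$. Similarly, each suffix $P_i[v, t_i]$ is a $(v, t_i)$-path in $G_i$ and so contributes to $R_i(v)$. Because the monomial of a concatenated path is the product of the monomials of its pieces, we have $f(P_1, P_2) = f(P_1[s_1, v], P_2[s_2, v]) \cdot f(P_1[v, t_1]) \cdot f(P_2[v, t_2])$.

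Next I would establish the converse: for any vertex $v$, any linkage $\langle A_1, A_2\rangle \in \mathcal{D}(v)$, and any $Q_i$ that is a $(v, t_i)$-path in $G_i$, the concatenations $P_i = A_i \diamond Q_i$ form an intersecting standard pair whose earliest intersection is exactly $v$. The $P_i$ are paths in $G_i$ because in the DAG $G_i$ every vertex of $A_i$ is topologically strictly before $v$ while every vertex of $Q_i$ is at or after $v$, so the concatenation has no repeated vertex. This same topological separation implies that any common vertex $u \in P_1 \cap P_2$ must either equal $v$, or lie strictly before $v$ and thus belong to $A_1 \cap A_2$, which by the internal disjointness property of $\mathcal{D}(v)$ forces $u = v$. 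Hence $v$ is the earliest intersection of $\langle P_1, P_2 \rangle$.

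Putting the two directions together, the map $\langle P_1, P_2\rangle \mapsto (v, \langle P_1[s_1, v], P_2[s_2, v]\rangle, P_1[v, t_1], P_2[v, t_2])$ is a bijection between intersecting standard pairs of paths and triples of the type enumerated on the right, and this bijection preserves monomials. Summing monomials over all such triples yields $\sum_{v \in V} D(v) R_1(v) R_2(v)$, while summing over intersecting standard pairs yields $F_{\cap}$, proving the identity. I do not foresee a major obstacle in this argument; the only delicate point is verifying that the topological order of the DAG forces prefixes and suffixes to stay on opposite sides of $v$, ensuring both that concatenations are simple paths and that no spurious earlier intersection can arise, and this follows directly from the DAG structure of $G_i$.
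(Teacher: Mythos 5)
Your proposal is correct and follows essentially the same route as the paper's proof: fix the topologically earliest intersection vertex $v$, split each path into an $(s_i,v)$-prefix and a $(v,t_i)$-suffix, observe that the prefixes form a source-linkage in $\ca{D}(v)$ while the suffixes are enumerated by $R_i(v)$, and use the topological order of the DAG to verify that the reverse construction produces standard pairs whose first intersection is exactly $v$.

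One small inaccuracy worth flagging: you write that ``any common vertex $u \in P_1\cap P_2$ must either equal $v$, or lie strictly before $v$ and thus belong to $A_1\cap A_2$.'' That dichotomy is false in general --- the suffixes $Q_1$ and $Q_2$ may well share vertices that lie strictly after $v$ in the topological order. What you actually need, and what the rest of your argument correctly establishes, is the weaker statement that no common vertex lies \emph{strictly before} $v$: if $u\prec v$ then $u\in A_1\cap A_2=\set{v}$, a contradiction. Combined with the observations that $A_1\cap Q_2=A_2\cap Q_1=\set{v}$, this suffices to conclude that $v$ is the earliest intersection, which is all the bijection requires. So the proof stands once this one sentence is tightened.
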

\begin{proof}        
For any fixed vertex $v$, we claim that 
            \begin{equation}
            \label{leftright}
            D(v)R_1(v)R_2(v)
            \end{equation}
        enumerates all standard pairs of paths $\langle P_1,P_2\rangle$ such that $P_1$ and $P_2$ have first  intersection at $v$.
        Indeed, given any such pair of paths $\langle P_1,P_2\rangle$, we can decompose
            \[P_i = P_i[s_i,v]\dia P_i[v,t_i]\]
        and observe that the pair $\langle P_1[s_1,v],P_2[s_2,v]\rangle$ is enumerated by the $D(v)$ factor in \cref{leftright}, and the $P_i[v,t_i]$ paths are enumerated by the respective $R_i(v)$ factors from \cref{leftright}.
        Conversely, any monomial in the expansion of \cref{leftright} is the product of monomials encoding the edges of a pair of paths $\langle A_1,A_2\rangle$, where $A_i$ is an $(s_i,v)$-path in $G_i$ such that $A_1$ and $A_2$ only intersect at $v$, and paths $B_i$ from $v$ to $t_i$ in $G_i$.
        Then if we define 
            \[P_i = A_i\dia B_i\]
        we see that the $P_i$ are $(s_i,t_i)$-paths in $G_i$ with the property that $P_1$ and $P_2$ first intersect at $v$.
        Here, we are using the fact that $G$ is a DAG---this ensures that every node in $A_1$ or $A_2$ appears at or before $v$ in the topological order, and that every node in $B_1$ or $B_2$ appears at or after $v$ in the topological order, so that 
        $A_1\cap B_2 = A_2\cap B_1 = \set{v}$.

        Since every pair of intersecting paths intersects at a unique earliest vertex, it follows that 
        \[\sum_{v\in V} D(v)R_1(v)R_2(v)\]
        enumerates all intersecting, standard pairs of paths $\langle P_1,P_2\rangle$.
        This proves the desired result.
\end{proof}
% Our algorithm is as follows.
% \begin{enumerate}
%     \item Set each $x_{uv}$ equal to an independent, uniform random element of $\mathbb{F}_{2^d}$.
%     \item Compute $L_i(v)$ and $R_i(v)$ at this assignment, for each $i\in\set{1,2}$ and vertex $v$.
%     \item Compute $D(v)$ at this assignment for each vertex $v$.
%     \item 
%     For our assignment of values to the $x_{uv}$ variables, compute
%         \begin{equation}
%         \label{DAG-poly-expr}
%         L_1(t_1)L_2(t_2) + \sum_{v\in V} D(v)R_1(v)R_2(v)
%         \end{equation}
%     Return YES if the evaluation is nonzero, and NO otherwise.    
% \end{enumerate}
% \begin{lemma}
%     \label{lem:2dsp-DAG-correct}
%     The algorithm above correctly solves \textsf{2-DSP} with high probability.
% \end{lemma}
% \begin{proof}
% Let $F_{\disj}$ be the enumerating polynomial 
% Each pair of paths in the above set gives rise to a different monomial (since the edge sets of different pairs is different), so two disjoint shortest paths exist in $G$ if and only if $F_{\disj}$ is nonzero.
% We claim that 
%     \[F_{\disj} = L_1(t_1)L_2(t_2) + \sum_{v\in V} D(v)R_1(v)R_2(v).\]
% If we prove this result, then by the above discussion and the Schwartz-Zippel lemma 
% \end{proof}

\twodspDAG*
\begin{proof}
    Each pair of internally vertex-disjoint paths produces a distinct monomial in $F_{\disj}$.
    It follows that disjoint $(s_i,t_i)$-shortest paths exist in $G$ if and only if $F_{\disj}$ is nonzero as a polynomial.

    So we can solve \textsf{2-DSP} as follows.
    We assign each $x_{uv}$ variable an independent, uniform random element of $\mathbb{F}_{2^q}$, and then evaluate $F_{\disj}$ on this assignment.
    If the evaluation is nonzero we return YES (two disjoint shortest paths exist), and otherwise we return NO.
    
    If two disjoint shortest paths do not exist, then $F_{\disj}$ is the zero polynomial, and our algorithm correctly returns NO.
    If two disjoint shortest paths do exist, then $F_{\disj}$ is a nonzero polynomial of degree strictly less than $2n$, so by Schwartz-Zippel (\Cref{obvious}) our algorithm correctly returns YES with high probability for large enough $q = O(\log n)$.

    It remains to show that we can compute $F_{\disj}$ in linear time.

    First, we can compute $G_1$ and $G_2$ in linear time by \Cref{prop:sp-DAG}.
    
    Then, by dynamic programming forwards and backwards over the topological order of $G$, we can evaluate the polynomials  $L_i(v)$   and $R_i(v)$ for each $i\in [2]$ and vertex $v$ at our given assignment in linear time, using the recurrences from \Cref{lem:dp}.
    
    Having computed these values, \Cref{lem:DAG-left-disjoint} shows that for any vertex $v$ we can compute $D(v)$ at the given assignment in $O(\deg_{\text{in}}(v))$ time.
    So we can evaluate $D(v)$ for all $v$ in $O(m)$ time.
    
    Given the above evaluations, we can compute $F_{\cap}$ at the given point in $O(n)$  time by \Cref{lem:DAG-disj-poly}.
    
     Finally, given the value of $F_{\cap}$, we can evaluate $F_{\disj}$ in $O(1)$ additional time by \Cref{disj-to-intersect}.

     Thus we can solve \textsf{2-DSP} in linear time.
\end{proof}

\subsection{2-DSP in Undirected Graphs}
\label{subsec:2dsp-undir-alg}

In this subsection, we assume that $G$ is a weighted undirected graph.

To solve \textsf{2-DSP} in $G$, we will show how to efficiently evaluate a polynomial enumerating pairs of disjoint shortest paths in $G$.
To help construct this polynomial, it will first be helpful to characterize the ways in which two shortest paths can intersect in an undirected graph.

\subsubsection*{Structure of Shortest Paths}

We begin with the following simple observation about shortest paths in undirected graphs.

\begin{proposition}[Shortest Path Orderings]
\label{sp-orderings}
Let $G$ be an undirected graph.
    Suppose vertices $a,b,c$ appear in that order on some shortest path of $G$.
    Then on any shortest path in $G$ passing through these three nodes, $b$ appears between $a$ and $c$.    
\end{proposition}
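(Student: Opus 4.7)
The plan is to derive the result from a single distance equality extracted from the hypothesis and then rule out all other orderings by contradiction using positivity of edge weights. We may assume $a$, $b$, $c$ are pairwise distinct, since otherwise the statement is trivial.

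First I would extract the key algebraic consequence of the hypothesis. Since $P$ is a shortest path on which $a$, $b$, $c$ appear in that order, its $(a,c)$-subpath is itself a shortest $(a,c)$-path in $G$ that passes through $b$, and both of its pieces $P[a,b]$ and $P[b,c]$ are themselves shortest paths (any shorter alternative would yield a shorter $(a,c)$-path). This gives the triangle equality
\[\dist(a,c) = \dist(a,b) + \dist(b,c).\]
Note that since $a,b,c$ are distinct and edge weights are positive, each of $\dist(a,b), \dist(b,c), \dist(a,c)$ is strictly positive.

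Next I would do the casework on the other possible relative orderings of $a$, $b$, $c$ along $Q$. Up to reversing $Q$, there are exactly three orderings to consider, and I want to rule out the two in which $b$ is not the middle vertex. Suppose $Q$ visits $a$, $c$, $b$ in that order. Then $Q[a,b]$ is a shortest $(a,b)$-path passing through $c$, so $\dist(a,b) = \dist(a,c)+\dist(c,b)$; combining with the triangle equality gives $\dist(a,b) = \dist(a,b) + 2\dist(b,c)$, forcing $\dist(b,c) = 0$, contradicting distinctness and positivity. Symmetrically, if $Q$ visits $b$, $a$, $c$ in that order, then $Q[b,c]$ is a shortest $(b,c)$-path through $a$, so $\dist(b,c) = \dist(b,a)+\dist(a,c)$, which combined with the triangle equality forces $\dist(a,b)=0$, again a contradiction. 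Hence the only admissible ordering on $Q$ (up to reversal) is $a,b,c$, which is precisely the statement that $b$ lies between $a$ and $c$ on $Q$.

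I do not expect any real obstacle here: the whole argument is just the triangle equality plus two symmetric one-line contradictions, and the positivity of edge weights (already assumed in the \textbf{Preliminaries}) supplies the final step. The only care needed is to phrase ``appears between'' in a way that is invariant under reversing $Q$, which is automatic once we observe that the three orderings above exhaust all possibilities up to reversal.
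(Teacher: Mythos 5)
Your proof is correct and takes essentially the same approach as the paper: both rely on the fact that subpaths of shortest paths are shortest, use additivity of distances along the first shortest path, and derive a contradiction in the two bad orderings (up to reversal). The paper phrases the key fact as the strict inequalities $\dist(a,b) < \dist(a,c)$ and $\dist(b,c) < \dist(a,c)$, while you phrase it as the triangle equality $\dist(a,c) = \dist(a,b) + \dist(b,c)$ together with strict positivity of the individual terms; these are equivalent formulations of the same observation.
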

\begin{proof}
    Since some shortest path in $G$ passes through vertices $a,b,c$ in that order, we know that $\dist(a,b)$ and $\dist(b,c)$ are both less than $\dist(a,c).$

    Now, consider any shortest path in $G$ which passes through these three vertices.
    If $a$ appears between $b$ and $c$ in this shortest path, then 
        \[\dist(a,c) < \dist(b,c)\]
    which contradicts the observation from the first sentence.
    
    Similarly, if $c$ appears between $a$ and $b$ in this shortest path, then
        \[\dist(a,c) < \dist(a,b)\]
    which again contradicts the observation  from the first sentence of this proof.

    Thus $b$ must appear between $a$ and $c$ on the shortest path as claimed.
\end{proof}
% \Cref{sp-orderings} lets us characterize the ways two shortest paths can intersect in an undirected graph.
% The following definitions will help us describe this characterization.

% Note that the proof of \Cref{sp-orderings} requires that $G$ has positive edge weights.

\begin{definition}[Intersection Types]
    Let $P_1$ and $P_2$ be intersecting shortest paths in an undirected graph.
    Let $v$ be the first vertex in $P_1$ appearing in $P_1\cap P_2$.
\begin{itemize}
    \item If $P_1\cap P_2 = \set{v}$, we say $P_1$ and $P_2$ have \emph{single 
    intersection}.
    \item If $|P_1\cap P_2|\ge 2$ and $v$ is also the first vertex of $P_2$ in $P_1\cap P_2$, we say $P_1$ and $P_2$ \emph{agree}.
    \item 
If $|P_1\cap P_2|\ge 2$ and $v$ is the last vertex of $P_2$ in $P_1\cap P_2$, we say $P_1$ and $P_2$ are \emph{reversing}.
\end{itemize}
If $P_1$ and $P_2$ do not agree, we say they \emph{disagree}.
\end{definition}

In other words, $P_1$ and $P_2$ disagree if the first vertex $v$ of $P_1$ appearing in $P_1\cap P_2$ is also the last vertex of $P_2$ appearing in $P_1\cap P_2$.
Equivalently, paths $P_1$ and $P_2$ disagree if they have single intersection or are reversing.

\begin{lemma}[Intersection Types are Exhaustive]
\label{cor:types}
      Let $P_1$ and $P_2$ be intersecting shortest paths in an undirected graph.
    Then either $P_1$ and $P_2$ agree, are reversing, or have single intersection.  
\end{lemma}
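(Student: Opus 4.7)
The plan is to prove Lemma \ref{cor:types} by a direct case analysis on the position of the first intersection vertex $v$ of $P_1$ within $P_2$, using Proposition \ref{sp-orderings} (shortest path orderings) as the main tool. Let $v$ be the first vertex of $P_1$ lying in $P_1\cap P_2$. If $|P_1\cap P_2|=1$, then $P_1$ and $P_2$ have single intersection and we are done, so assume $|P_1\cap P_2|\ge 2$. It then suffices to show that $v$ is either the first or the last vertex of $P_2$ appearing in $P_1\cap P_2$.

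The key claim I would establish is the following structural fact: no vertex of $P_1 \cap P_2$ can appear strictly before $v$ on $P_2$ and another vertex of $P_1 \cap P_2$ appear strictly after $v$ on $P_2$ simultaneously. Suppose for contradiction this did happen; then there exist $u, u' \in P_1 \cap P_2$ with $u$ appearing before $v$ and $u'$ appearing after $v$ along $P_2$. Since $P_2$ is a shortest path passing through $u, v, u'$ in that order, Proposition \ref{sp-orderings} applied to the shortest path $P_1$ (which also passes through all three of these vertices, since $u, v, u' \in P_1 \cap P_2 \subseteq P_1$) forces $v$ to appear strictly between $u$ and $u'$ on $P_1$. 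But this contradicts the choice of $v$ as the \emph{first} vertex of $P_1$ in $P_1 \cap P_2$, because then $u$ (which lies in $P_1\cap P_2$) would appear before $v$ on $P_1$.

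Given this claim, the case analysis closes immediately: every vertex of $P_1 \cap P_2$ other than $v$ lies either entirely before $v$ on $P_2$ or entirely after $v$ on $P_2$. If all such vertices lie after $v$ on $P_2$, then $v$ is the first vertex of $P_2$ in $P_1 \cap P_2$, and since $|P_1\cap P_2|\ge 2$ the paths agree. Otherwise all such vertices lie before $v$ on $P_2$, so $v$ is the last vertex of $P_2$ in $P_1 \cap P_2$, and the paths are reversing. This covers all the possibilities and completes the proof.

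I don't expect any real obstacle here — the entire argument is a short deduction from Proposition \ref{sp-orderings}, and the only thing to be careful about is correctly invoking that proposition on $P_1$ using the ordering induced by $P_2$ (and not vice versa).
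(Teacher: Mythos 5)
Your proof is correct and follows essentially the same argument as the paper: both invoke \Cref{sp-orderings} on the shortest path $P_1$ using the ordering of three intersection vertices along $P_2$ to contradict the minimality of $v$. One tiny imprecision: after concluding that $v$ lies between $u$ and $u'$ on $P_1$, you assert that $u$ in particular appears before $v$, but \Cref{sp-orderings} does not determine the relative order of $u$ and $u'$ on $P_1$ — either one could be the predecessor of $v$; the contradiction still holds since both lie in $P_1\cap P_2$.
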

\begin{proof}
    If $|P_1\cap P_2| = 1$, then the paths have single intersection.

    Otherwise, $|P_1\cap P_2| \ge 2$.
Let $v$ be the first vertex in $P_1$ appearing in $P_1\cap P_2$.
It suffices to show $v$ is either the first or last vertex in $P_2$ appearing in $P_1\cap P_2$.

Suppose to the contrary this is not the case.
Then let $u$ and $w$ be the first and last vertices respectively in path $P_2$ appearing in $P_1\cap P_2$.
By assumption, $u,v,w$ are all distinct.
By definition, $u,v,w$ appear in that order on $P_2$.
Since $P_2$ is a shortest path, by \Cref{sp-orderings}, $v$ must appear between $u$ and $w$ on $P_1$ as well.
This contradicts the definition of $v$ as the first vertex on $P_1$ lying in the intersection $P_1\cap P_2$.
    
Thus our original supposition was false, and the paths $P_1$ and $P_2$ must agree or be reversing whenever $|P_1\cap P_2|\ge 2$, which proves the desired result.
\end{proof}
% Note that \Cref{cor:types} implies that any pair of intersecting shortest paths which disagree must have single intersection or reverse.
% This characterization will be useful in our algorithm for \textsf{2-DSP}.

\subsubsection*{Agreeing and Disagreeing Polynomials}

Let $\Fagree$ be the enumerating polynomial for the collection of standard pairs of paths $\langle P_1, P_2\rangle$ which agree.
Let $\Frev$ be the enumerating polynomial for the collection of standard pairs of paths $\langle P_1, P_2\rangle$ which disagree.
Any intersecting paths either agree or disagree, so
\begin{equation}
    \label{agree-taut}
    F_{\cap} = \Fagree + \Frev.
\end{equation}
Thus, to compute $F_{\cap}$, it suffices to compute $\Fagree$ and $\Frev$ separately.

\subsubsection*{Agreeing Paths}

We first show how to evaluate $\Fagree$ efficiently.

\begin{lemma}[Common Final Intersection] 
\label{last-intersection}
If paths $P_1$ and $P_2$ agree, then they have a common last intersection point, distinct from their common first intersection point.
\end{lemma}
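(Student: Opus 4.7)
Let $v$ denote the common first intersection point of $P_1$ and $P_2$ guaranteed by the definition of ``agree.'' I would define $w$ to be the last vertex of $P_1$ lying in $P_1\cap P_2$, and $w'$ to be the last vertex of $P_2$ lying in $P_1\cap P_2$. The goal is to show $w = w'$ and that this common vertex differs from $v$. The former is the main content; the latter will follow easily from the hypothesis $|P_1\cap P_2|\ge 2$.

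\textbf{Main step (existence of a common last intersection).} The plan is a proof by contradiction. Suppose $w\neq w'$. Since $w\in P_1\cap P_2\subseteq P_2$ and $w'$ is the last such vertex on $P_2$, and $w\ne w'$, the vertex $w$ must appear strictly before $w'$ on $P_2$. Symmetrically, $w'$ appears strictly before $w$ on $P_1$. Together with the fact that $v$ appears first on both paths (and $v\notin\{w,w'\}$, see the next paragraph), this shows the vertices $v,w',w$ appear in that order on $P_1$, while $v,w,w'$ appear in that order on $P_2$. Now I apply \Cref{sp-orderings} to the shortest path $P_1$ with the triple $(v,w',w)$: on any shortest path through these three vertices, $w'$ must lie between $v$ and $w$. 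But the order on $P_2$ puts $w$ between $v$ and $w'$, a contradiction. Hence $w=w'$, giving the desired common last intersection point.

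\textbf{Distinctness from $v$.} Since $P_1$ and $P_2$ agree, by definition $|P_1\cap P_2|\ge 2$, so there is a vertex $u\in P_1\cap P_2$ with $u\neq v$. Because $v$ is the first vertex of $P_1\cap P_2$ on $P_1$, vertex $u$ appears strictly after $v$ on $P_1$, so the last intersection vertex $w$ on $P_1$ also lies strictly after $v$. In particular $w\neq v$.

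\textbf{Expected obstacle.} The argument is essentially a careful bookkeeping of orderings on the two paths, and the one nontrivial ingredient is \Cref{sp-orderings}. The potential subtlety is verifying that $v$, $w$, $w'$ are genuinely three distinct vertices when reaching for the contradiction (so that \Cref{sp-orderings} is applicable). This is handled by observing, as above, that the agreement hypothesis forces $w,w'\neq v$, while the case assumption supplies $w\neq w'$. Once this is checked, the contradiction via \Cref{sp-orderings} is immediate.
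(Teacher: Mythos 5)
Your proof is correct, but it takes a somewhat different route than the paper. The paper's argument is direct: all vertices of $P_1\cap P_2$ lie at or after $v$ on both paths (since $v$ is the common first intersection), and on a shortest path the vertices past $v$ are strictly ordered by $\dist(v,\cdot)$; hence the vertex of $P_1\cap P_2$ maximizing $\dist(v,\cdot)$ is simultaneously the last intersection point on $P_1$ and on $P_2$, and it differs from $v$ because $|P_1\cap P_2|\ge 2$. You instead argue by contradiction, letting $w$ and $w'$ be the last intersections on $P_1$ and $P_2$ separately, deriving the incompatible orderings $(v,w',w)$ on $P_1$ and $(v,w,w')$ on $P_2$, and invoking \Cref{sp-orderings} to conclude $w=w'$. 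Both arguments ultimately rest on the same monotonicity of distance along shortest paths (the paper invokes it implicitly via $\dist(v,\cdot)$; you invoke it via \Cref{sp-orderings}), so they are morally equivalent, but the paper's version is a touch cleaner because it names the common last vertex outright rather than ruling out the alternative. One small presentational note: in your main step you need both $w\neq v$ and $w'\neq v$ to be able to speak of three distinct vertices and apply \Cref{sp-orderings}, but your ``distinctness'' paragraph only spells out $w\neq v$; the symmetric claim $w'\neq v$ should be stated explicitly (it follows identically since $v$ is also the first intersection vertex on $P_2$ by the agreement hypothesis).
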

\begin{proof}
    Since $P_1$ and $P_2$ agree, they have a common first intersection point at some vertex $v$.
    Then the vertex $w\in P_1\cap P_2$ which maximizes $\dist(v,w)$ must be the last node on both paths in $P_1\cap P_2$.
    Since the paths agree, we have $|P_1\cap P_2|\ge 2$, so $w\neq v$.
\end{proof}

\noindent We say a pair of paths $\langle P_1,P_2\rangle$ is \emph{edge-agreeing} if $P_i$ is an $(s_i,t_i)$-path in $G_i$, and $P_1$ and $P_2$ traverse a common edge in the same direction.

    \begin{lemma}[Edge-Agreeing $\subseteq$ Agreeing]
        \label{edge-agree}
        Any edge-agreeing pair is agreeing.
    \end{lemma}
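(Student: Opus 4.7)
The plan is to assume $\langle P_1, P_2\rangle$ is edge-agreeing via a common edge $(a,b)$ traversed from $a$ to $b$ by both paths, and to deduce that it must agree. Since $a, b \in P_1 \cap P_2$, we have $|P_1 \cap P_2| \geq 2$, so the pair is intersecting and single intersection is immediately ruled out. By \Cref{cor:types}, the pair must then either agree or be reversing, so the whole task reduces to eliminating the reversing case.

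For this I would argue by contradiction. Suppose $P_1$ and $P_2$ are reversing, and let $v$ be the first vertex of $P_1$ in $P_1 \cap P_2$; by the reversing hypothesis, $v$ is also the \emph{last} vertex of $P_2$ in $P_1 \cap P_2$. Because $a$ precedes $b$ on both paths (the edge is traversed in the same direction) and both $a, b$ lie in the intersection, $v$ must appear at or before $a$ on $P_1$, and at or after $b$ on $P_2$. The key step is to apply \Cref{sp-orderings} to the triple $v, a, b$ using $P_1$ as the witness: these three vertices appear in the order $v, a, b$ along $P_1$, so on any shortest path through them---in particular $P_2$---the vertex $a$ must lie between $v$ and $b$. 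But on $P_2$ the order is $a$ before $b$ before $v$, which forces $b$, rather than $a$, to be the middle vertex. This is the desired contradiction.

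The one subtlety to handle is that \Cref{sp-orderings} requires the three vertices to be distinct, so I will dispose of the edge cases $v = a$ and $v = b$ separately; in both, the contradiction falls out directly from the definition of $v$, without needing \Cref{sp-orderings}. If $v = a$, then $b \in P_1 \cap P_2$ appears strictly after $v$ on $P_2$, contradicting $v$ being the last intersection vertex of $P_2$; if $v = b$, then $a \in P_1 \cap P_2$ appears strictly before $v$ on $P_1$, contradicting $v$ being the first intersection vertex of $P_1$. I do not foresee any real obstacle here: the argument is purely a positional bookkeeping check around the shared edge $(a,b)$, and the only nontrivial ingredient is the single invocation of \Cref{sp-orderings} in the generic case.
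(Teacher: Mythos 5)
Your proof is correct and takes essentially the same route as the paper: observe $|P_1 \cap P_2| \geq 2$, invoke \Cref{cor:types} to reduce to ruling out the reversing case, show the three vertices $v,a,b$ are distinct, and then derive the contradiction from \Cref{sp-orderings} using the opposite orderings of $v,a,b$ on $P_1$ and $P_2$. The paper's handling of the degenerate cases $v=a$ and $v=b$ is the same positional bookkeeping you describe, just folded inline rather than separated out.
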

    \begin{proof}
        Suppose pair $\langle P_1,P_2\rangle$ is edge-agreeing.
        
        Let $e = (a,b)$ be a common edge traversed  by both $P_1$ and $P_2$.

        Since $\set{a,b}\subseteq P_1\cap P_2$, we have $|P_1\cap P_2|\ge 2$.
        
    So by \Cref{cor:types}, $P_1$ and $P_2$ are either agreeing or reversing.

        Suppose to the contrary that $P_1$ and $P_2$ are reversing.
        Let $v$ be the first vertex on $P_1$ in $P_1\cap P_2$.
        Then $v\neq b$, since $a$ appears before $b$ on $P_1$.
        
        Since the paths are reversing, $v$ is the final vertex on $P_2$ in $P_1\cap P_2$.
        Then $v\neq a$, since $b$ appears after $a$ on $P_2$.
        Path $P_1$ passes through $v, a, b$ in that order.
        Then by \Cref{sp-orderings}, $a$ must appear between $v$ and $b$ on any shortest path containing these three vertices.
        However, this contradicts the fact that $b$ is between $a$ and $v$ on $P_2$.

        Thus $P_1$ and $P_2$ are not reversing, and so must agree as claimed.
    \end{proof}

\begin{lemma}
\label{lem:edge-agree}
    The polynomial $\Fagree$ enumerates the collection of edge-agreeing paths in $G$.
\end{lemma}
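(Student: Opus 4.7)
The plan is to invoke the Vanishing Modulo 2 Lemma with $\mathcal{F}$ equal to the set of standard agreeing pairs (so that by definition $F_{\agree}$ is the enumerating polynomial of $\mathcal{F}$) and $\mathcal{S} \subseteq \mathcal{F}$ equal to the subset of standard agreeing pairs that are \emph{not} edge-agreeing. Combined with \Cref{edge-agree}, which says $\mathcal{F}\setminus\mathcal{S}$ is precisely the family of edge-agreeing pairs, the conclusion of that lemma will immediately give the claim.

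To set up the involution, for $\langle P_1,P_2\rangle\in\mathcal{S}$ I would define $\alpha(\langle P_1,P_2\rangle) = v$ to be the common first intersection vertex (which exists because $P_1,P_2$ agree) and $\beta(\langle P_1,P_2\rangle) = w$ to be the common last intersection vertex, which by \Cref{last-intersection} exists and satisfies $w\neq v$. I then let $\Phi$ be the map that swaps the $v$-to-$w$ subpaths in $P_1$ and $P_2$. The key sanity check is that $P_1[v,w]\neq P_2[v,w]$: if these subpaths were equal, then since $v\neq w$ they would share a common edge in a common direction, making $\langle P_1,P_2\rangle$ edge-agreeing and thus not in $\mathcal{S}$.

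Next I have to verify that $\Phi$ maps $\mathcal{S}$ into $\mathcal{S}$ and is an involution. Writing $\langle Q_1,Q_2\rangle = \Phi(\langle P_1,P_2\rangle)$, \Cref{preserve-shortest-paths} gives that each $Q_i$ is an $(s_i,t_i)$-shortest path of $G$, hence lies in $G_i$, so the pair is standard. Because the prefixes $P_1[s_1,v)$ and $P_2[s_2,v)$ are untouched and do not intersect (as $v$ was the first common intersection of $P_1,P_2$), $v$ remains the common first intersection of $Q_1,Q_2$; the analogous argument for suffixes shows $w$ remains the common last intersection. Since $\{v,w\}\subseteq Q_1\cap Q_2$ we have $|Q_1\cap Q_2|\ge 2$, so $Q_1,Q_2$ agree, and since the restored $v,w$ are the common first and last intersections, a second application of $\Phi$ swaps back and recovers $\langle P_1,P_2\rangle$.

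The main obstacle is verifying that $\langle Q_1,Q_2\rangle$ is still \emph{not} edge-agreeing, which is what keeps $\Phi$ in $\mathcal{S}$. I would argue this by partitioning the edges of $Q_1$ and $Q_2$ into the swapped segment ($[v,w]$) and the unchanged exterior ($[s_i,v]$ and $[w,t_i]$). Within the swapped region, $Q_1[v,w] = P_2[v,w]$ and $Q_2[v,w] = P_1[v,w]$ share no edge in a common direction because the original pair did not. In the exterior, the prefixes of $Q_1,Q_2$ share no vertex other than $v$ and the suffixes share no vertex other than $w$, so they share no edges. Finally, cross contributions between the swapped segment of one path and the exterior of the other are ruled out because $P_1$ (respectively $P_2$) is a simple path, so no edge of $P_1[v,w]$ appears in $P_1[s_1,v]\cup P_1[w,t_1]$. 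With this verified, the Vanishing Modulo 2 Lemma yields the statement.
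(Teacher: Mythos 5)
Your proposal is correct and follows essentially the same argument as the paper: set $\mathcal{F}$ to the agreeing pairs, $\mathcal{S}$ to the non-edge-agreeing ones, and apply the subpath-swapping cancellation with $\alpha$ and $\beta$ equal to the common first and last intersection points. You supply somewhat more detail than the paper does in checking that the swapped pair remains non-edge-agreeing, but the route is the same.
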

\begin{proof}
By \Cref{edge-agree}, the set of pairs $\Fagree$ enumerates includes all edge-agreeing pairs.

    Let $\ca{F}$ be the family of standard pairs of paths $\langle P_1,P_2\rangle$ such that $P_1$ and $P_2$ agree.
    
    By definition, $\Fagree$ enumerates $\ca{F}$.
    
    By \Cref{edge-agree}, every edge-agreeing standard pair of paths is in $\ca{F}$.

    Let $\ca{S}$ be the collection of pairs which are agreeing but not edge-agreeing.
    To prove the desired result, it suffices to show that the monomials corresponding to pairs in $\ca{S}$ have net zero contribution to $\Fagree$ over a field of characteristic two.

    To that end, take any pair $\langle P_1,P_2\rangle\in \ca{S}$.
    Since $\langle P_1,P_2\rangle$ is agreeing, $P_1$ and $P_2$ have a unique first intersection point $v$.
    By \Cref{last-intersection}, these paths also have a unique last intersection point $w\neq v$.

    Hence, we can decompose the paths into
    \[P_1 = P_1[s_1,v] \dia P_1[v,w] \dia P_1[w,t_1] \quad \text{and} \quad 
    P_2 = P_2[s_2,v] \dia P_2[v,w] \dia P_2[w,t_2].\]
    
    Define the paths 
        \[Q_1 = P_1[s_1,v] \dia P_2[v,w] \dia P_1[w,t_1] \quad \text{and} \quad 
    Q_2 = P_2[s_2,v] \dia P_1[v,w] \dia P_2[w,t_2]\]
    by swapping the $v$ to $w$ subpaths of $P_1$ and $P_2$.

    By \Cref{preserve-shortest-paths} the $Q_i$ are $(s_i,t_i)$-shortest paths.

        Since $P_1$ and $P_2$ are not edge-agreeing, the subpaths $P_1[v,w]$ and $P_2[v,w]$ are distinct.
        % so the pairs $\langle Q_1, Q_2\rangle$ and $\langle P_1, P_2\rangle$ are distinct.

    % Both $P_1[v,w]$ and $P_2[v,w]$ have length $\dist(v,w)$, so the $Q_i$ are  still $(s_i,t_i)$-shortest paths.

    The paths $Q_1$ and $Q_2$ have a common first intersection point of $v$, and so the pair $\langle Q_1, Q_2\rangle$ is agreeing.
    Since $\langle P_1,P_2\rangle$ is not edge-agreeing, neither is $\langle Q_1, Q_2\rangle$.
    Thus $\langle Q_1,Q_2\rangle\in \ca{S}$.
    % Also
        % \[f(P_1,P_2) = f(Q_1,Q_2)\]
    % since both pairs traverse the same multisets of edges.
    Finally, if we swap the $v$ to $w$ subpaths of $Q_1$ and $Q_2$, we recover $P_1$ and $P_2$.

    The above discussion implies that the map $\Phi$ from $\langle P_1, P_2\rangle$ to $\langle Q_1, Q_2\rangle$ described above
    % matches pairs from $\ca{S}$ into groups of two which produce the same monomials.
    % It follows that over a field of characteristic two, the sum of the monomials of all pairs in $\ca{S}$ is zero.
    % So only edge-agreeing pairs of paths have nonzero contribution to $\Fagree$, which proves the desired result.
    satisfies all conditions of \Cref{lem:subpath-swapping-cancels}, and so $\Fagree$ is the enumerating polynomial for $\ca{F}\setminus\ca{S}$, which is precisely the set of edge-agreeing pairs.
\end{proof}

% Actually, you only some in edge occurring in both G1 and G2, need to change that. 
\begin{lemma}[Enumerating Agreeing Pairs]
\label{enum:agreeing-pairs}
    We have 
        \[\Fagree = 
        \sum_{v\in V}
        \sum_{w\in \vout(v)} \grp{D(v)x_{vw}^2} R_1(w)R_2(w).\]
\end{lemma}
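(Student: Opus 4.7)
The plan is to match both sides as enumerators of the same family of pairs of paths, using the cancellation machinery of \Cref{lem:subpath-swapping-cancels} on top of what is already established in \Cref{lem:edge-agree}.

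First I will interpret the right-hand side. Expanding the product, each monomial in $D(v)\cdot x_{vw}^2\cdot R_1(w) R_2(w)$ corresponds to a tuple $(A_1,A_2,v,w,B_1,B_2)$ with $(A_1,A_2)\in\ca{D}(v)$, $w\in\vout(v)$, and $B_i$ a $(w,t_i)$-path in $G_i$; gluing via $P_i = A_i\dia(v,w)\dia B_i$ produces a standard pair $\langle P_1,P_2\rangle$ in which $(v,w)$ is a shared edge and the prefixes $P_i[s_i,v]$ meet only at $v$. I will then use \Cref{cor:types} (combined with \Cref{sp-orderings}) to rule out the possibility that $A_1$ and $B_2$ share a vertex other than endpoints, concluding that $v$ is necessarily the common first intersection of $\langle P_1,P_2\rangle$ and $w$ its common successor. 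Since this split is uniquely determined by the pair, the right-hand side enumerates the family $\ca{B}$ of standard pairs whose common first intersection has the same successor on both paths.

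On the left, \Cref{lem:edge-agree} gives that $\Fagree$ enumerates the family $\ca{F}$ of standard edge-agreeing pairs, and $\ca{B}\sub\ca{F}$ since each pair in $\ca{B}$ shares the edge $(v,w)$. I will then apply \Cref{lem:subpath-swapping-cancels} to show that $\ca{S} = \ca{F}\setminus\ca{B}$ cancels modulo two. I take $\alpha(\ca{P})$ and $\beta(\ca{P})$ to be the first and last intersection vertices of $P_1,P_2$ (well-defined and distinct for $\ca{P}\in\ca{F}$ by \Cref{last-intersection}, since edge-agreeing pairs satisfy $|P_1\cap P_2|\ge 2$), and let $\Phi$ swap the corresponding subpaths.

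The main obstacle will be verifying that $\Phi$ sends $\ca{S}$ into itself. For $\ca{P}\in\ca{S}$, the successors of $v=\alpha(\ca{P})$ on $P_1$ and $P_2$ must differ (else $\ca{P}\in\ca{B}$), so $P_1[v,\beta(\ca{P})]$ and $P_2[v,\beta(\ca{P})]$ are distinct as required. The swap produces a pair $\langle Q_1,Q_2\rangle$ with the same edge multiset, hence edge-agreeing and still standard by \Cref{preserve-shortest-paths}; the prefixes before $\alpha(\ca{P})$ and the suffixes after $\beta(\ca{P})$ are unchanged, so these remain the first and last intersections of the pair; and the successors of $v$ have been exchanged between the two paths but still differ, so $\langle Q_1,Q_2\rangle$ lies outside $\ca{B}$ and thus inside $\ca{S}$. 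The involution condition $\Phi\circ\Phi=\text{id}$ follows from the construction. \Cref{lem:subpath-swapping-cancels} then yields $\Fagree$ equal to the enumerator of $\ca{F}\setminus\ca{S} = \ca{B}$, which by the first step is the right-hand side.
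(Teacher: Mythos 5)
Your proposal is correct and follows essentially the same strategy as the paper: reduce to the edge-agreeing family via \Cref{lem:edge-agree}, show the right-hand side enumerates the subfamily whose common first intersection starts a shared edge (via \Cref{sp-orderings} applied to the glued paths, as in the paper's \Cref{first-edge-overlap}), and then cancel the remainder modulo two via \Cref{lem:subpath-swapping-cancels} with $\alpha,\beta$ being the first and last intersection points. The only thing to tidy up in a full write-up is to also rule out $A_2$ meeting $B_1$ (symmetric to the $A_1$-vs-$B_2$ argument you sketch), which is needed to conclude $v$ is the common first intersection of the glued pair.
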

\begin{proof}
Let $\ca{F}$ be the family of edge-agreeing pairs.
    By \Cref{lem:edge-agree}, it suffices to show that 
    \[\sum_{v\in V}
        \sum_{w\in \vout(v)} \grp{D(v)x_{vw}^2} R_1(w)R_2(w)\]
    is the enumerating polynomial for $\ca{F}$.
    To that end, the following claim about the individual terms of the above sum will be useful.

    \begin{claim}
        \label{first-edge-overlap}
        For any choice of vertices $v$ and $w$ with $w\in\vout(v)$,
        the polynomial
            \begin{equation}
            \label{eq:single-addend-agreeing}
            \grp{D(v)x_{vw}^2} R_1(w)R_2(w)
            \end{equation}
            enumerates all standard pairs of paths $\langle P_1, P_2\rangle$ such that $P_1$ and $P_2$ overlap at edge $e = (v,w)$, and have common first intersection at vertex $v$.
    \end{claim}
    \begin{claimproof}
       Take any pair $\langle P_1,P_2\rangle$ satisfying the conditions from the statement of the claim.
       Then we can decompose these paths in the form
        \[P_i = P_i[s_i,v]\dia (v,w) \dia P_i[w,t_i]\]
        such that the $P_i[s_i,v]$ subpaths  intersect only at $v$.
        
            This means that the pair $\langle P_1[s_1,v], P_2[s_2,v]\rangle$ is enumerated by $D(v)$, the two edges $(v,w)$ are enumerated by $x_{vw}^2$, and each path $P_i[w,t_i]$ is enumerated by $R_i(w)$, so that the expansion of the polynomial \cref{eq:single-addend-agreeing}
            includes some monomial corresponding to the given pair.

            Conversely, any monomial in the expansion of \cref{eq:single-addend-agreeing} is the product of monomials recording the edges traversed by a pair of paths $\langle A_1, A_2\rangle$ only intersecting at node $v$, where $A_i$ is an $(s_i,v)$-path in $G_i$, two copies of the edge $(v,w)$, and some $(w,t_i)$-paths $B_i$ in $G_i$.
            This product is equal to the monomial given by  
                \[f(A_1\dia (v,w)\dia B_1, A_2\dia (v,w)\dia B_2).\]
            Define the paths $P_i = A_i\dia (v,w)\dia B_i$ for each $i\in [2]$.
            Since $A_i$ and $B_i$ are paths in $G_i$, and $(v,w)$ is an edge in both $G_1$ and $G_2$, we know that each $P_i$ is an $(s_i,t_i)$-shortest path.
                
            We claim that $A_1$ does not intersect $B_2$.
            Indeed, suppose to the contrary that $A_1$ and $B_2$ intersect at some node $u$.
            Then $P_1$ is a shortest path which passes through nodes $u,v,w$ in that order, yet $P_2$ is a shortest path which passes through $v,w,u$ in that order, which contradicts \Cref{sp-orderings}.

            Hence $A_1$ does not intersect $B_2$.
            Symmetric reasoning shows that $A_2$ does not intersect $B_1$.
            
            Thus the paths $P_1$ and $P_2$ have common first intersection at $v$.
            Then the pair $\pair{P_1,P_2}$ satisfies the conditions from the claim statement, so the polynomial from \cref{eq:single-addend-agreeing} enumerates all pairs of paths described in the claim.
    \end{claimproof}
    By \Cref{first-edge-overlap}, the sum 
   \begin{equation}
   \label{eq:edge-agree-rhs}
    \sum_{v\in V}
        \sum_{w\in \vout(v)} \grp{D(v)x_{vw}^2} R_1(w)R_2(w)
   \end{equation}
   enumerates all edge-agreeing pairs  whose first intersection point $v$ is the beginning of an edge traversed by both paths in the pair.
   Let $\ca{H}$ be the set of such pairs, and let $\ca{S}=\ca{F}\setminus\ca{H}$.

   We claim that the set of pairs in $\ca{S}$ has net zero contribution to $\Fagree$.
   
   Indeed, take any pair $\langle P_1, P_2\rangle\in \ca{S}$.
    Since the pair is edge-agreeing, by \Cref{edge-agree} the pair is agreeing.
    Hence 
    $P_1$ and $P_2$ have a common first intersection at some node $v$.
    By \Cref{last-intersection}, these paths also have a common last intersection point at some node $w\neq v$.
    Since the pair is not in $\ca{H}$, the subpaths $P_1[v,w]$ and $P_2[v,w]$ are distinct.
    If we swap these subpaths 
    to produce paths 
        \[Q_1 = P_1[s_1,v] \dia P_2[v,w]\dia P_1[w,t_1]\quad\text{and}\quad 
        Q_2 = P_2[s_2,v]\dia P_1[v,w]\dia P_2[w,t_2]\]
    then by \Cref{preserve-shortest-paths} each $Q_i$ is still an $(s_i,t_i)$-path in $G_i$.
    % because $P_1[v,w]$ and $P_2[v,w]$ both have the same length.
    Since the $P_i$ are edge-agreeing, and this edge-overlap must occur in their $v$ to $w$ subpaths, the  $Q_i$ are also edge-agreeing.
    By assumption, $v$ is not the beginning of a common edge traversed by the $P_i$, so it is also not such a beginning for $Q_i$.
    These observations combined show that $\langle Q_1,Q_2\rangle\in\ca{S}$.
    Also, swapping $v$ to $w$ subpaths in the $Q_i$ recover the $P_i$ paths.
    It follows that the map $\Phi$ sending $\langle P_1,P_2\rangle$ to $\langle Q_1,Q_2\rangle$ satisfies all the conditions of \Cref{lem:subpath-swapping-cancels}, so in fact $\Fagree$ enumerates $\ca{F}\setminus\ca{S} = \ca{H}$.
    
    Since \cref{eq:edge-agree-rhs} also enumerates $\ca{H}$, this proves the desired result.
\end{proof}

\subsubsection*{Disagreeing Paths}

Let $\Fdis$ be the family of disagreeing, standard pairs of paths.
Recall that $\Frev$ enumerates $\Fdis$.
Our goal is to show how to efficiently evaluate $\Frev$.
To do this, we describe a subfamily of pairs in $\Fdis$, and argue that $\Frev$ enumerates this subfamily.

\begin{definition}
\label{def:local-relaxation}
For each node $v$, let $\tilde{\ca{B}}(v)$ be the set of standard pairs of paths $\langle P_1,P_2\rangle$ intersecting at $v$, such that 
    if we let $a_i$ and $b_i$ denote the nodes appearing immediately before and after $v$ on $P_i$ respectively, then
    \begin{enumerate}
    \item $a_1\neq a_2$, 
    \item $b_1\neq b_2$, and 
    \item $a_1\neq b_2$.
\end{enumerate}
Let $\ca{B}(v)\subseteq\tilde{\ca{B}}(v)$ be the subfamily of such pairs such that $v$ is the first vertex in $P_1$ lying in $P_1\cap P_2$.
Then define the collection
    \[\ca{B} = \bigcup_{v\in V} \ca{B}(v).\]
% Finally, let $\Bdis = \ca{B}\cap \Fdis$ be the family of disagreeing pairs in $\ca{B}$.
\end{definition}

Next, we prove some lemmas, which will help us prove that to enumerate $\Fdis$, it suffices to design enumerating polynomials for $\tilde{B}(v)$ for each vertex $v$. 

\begin{lemma}
\label{lem:relax-B}
    For each vertex $v$, the  enumerating polynomial for $\tilde{\ca{B}}(v)$ enumerates $\ca{B}(v)$.
\end{lemma}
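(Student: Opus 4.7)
The plan is to prove that $\tilde{\ca{B}}(v)$ and $\ca{B}(v)$ have the same enumerating polynomial over $\mathbb{F}_{2^q}$ by pairing every element of $\ca{S} := \tilde{\ca{B}}(v) \setminus \ca{B}(v)$ with a distinct partner in $\ca{S}$ producing the same monomial, so that their contributions cancel modulo two. Given $\pair{P_1, P_2} \in \ca{S}$, the vertex $v$ is not the first vertex of $P_1 \cap P_2$ on $P_1$, so I would let $x$ be that first vertex (necessarily $x \neq v$, and $P_1[s_1, x)$ is disjoint from $P_2$). Since $G$ is undirected, $x$ may lie either before or after $v$ on $P_2$; accordingly I would partition $\ca{S} = \ca{S}_A \sqcup \ca{S}_B$ and define a separate fixed-point-free involution on each.

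On $\ca{S}_A$, where $x$ precedes $v$ on both $P_1$ and $P_2$, I would invoke \Cref{lem:subpath-swapping-cancels} directly with $\alpha(\ca{P}) = x$ and $\beta(\ca{P}) = v$: swap the $x$-to-$v$ subpaths of $P_1$ and $P_2$ to get $\pair{Q_1, Q_2}$. The resulting walks are shortest paths by \Cref{preserve-shortest-paths}, and the subpaths $P_1[x,v]$ and $P_2[x,v]$ are distinct because their last edges terminate at $a_1$ and $a_2$ respectively, and $a_1 \neq a_2$. I would then check that the three $\tilde{\ca{B}}(v)$-conditions survive the swap: the new $a_1$ equals the old $a_2$ (and vice versa), $b_1$ and $b_2$ are unchanged, and the condition $a_1 \neq b_2$ turns into $a_2 \neq b_2$, which holds by simplicity of the shortest path $P_2$. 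Finally, $x$ remains the first vertex of $Q_1 \cap Q_2$ on $Q_1$, since $P_1[s_1, x)$ is disjoint from $P_2 \supseteq Q_2$, so $\Phi$ maps $\ca{S}_A$ to itself as a fixed-point-free involution.

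On $\ca{S}_B$, where $x$ follows $v$ on $P_2$, the two vertices appear in opposite orders on $P_1$ and $P_2$, so \Cref{lem:subpath-swapping-cancels} cannot be applied directly. Instead, I would use the reverse-subpath swap
\[
Q_1 = P_1[s_1, x] \dia \rev{P_2[v,x]} \dia P_1[v, t_1], \qquad Q_2 = P_2[s_2, v] \dia \rev{P_1[x,v]} \dia P_2[x, t_2].
\]
Each $Q_i$ is a walk of length $\dist(s_i, t_i)$ and hence a shortest $(s_i, t_i)$-path (simplicity follows from positivity of edge weights, as in the proof of \Cref{preserve-shortest-paths}). Because $G$ is undirected, $\set{P_1, P_2}$ and $\set{Q_1, Q_2}$ use the same multiset of edges, so $f(Q_1)f(Q_2) = f(P_1)f(P_2)$. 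I would verify that after this swap the new $a_1$ equals the old $b_2$, the new $b_2$ equals the old $a_1$, while $a_2, b_1$ are unchanged, so the conditions $a_1 \neq a_2$, $b_1 \neq b_2$, $a_1 \neq b_2$ become $b_2 \neq a_2$, $b_1 \neq a_1$, $b_2 \neq a_1$, which either repeat the original assumptions or follow from simplicity of $P_1$ and $P_2$. Moreover, $x$ stays the first vertex of $Q_1 \cap Q_2$ on $Q_1$ and still follows $v$ on $Q_2$, so $\Phi$ restricts to $\ca{S}_B$, and applying $\Phi$ twice recovers $\pair{P_1, P_2}$.

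Combining the two involutions, each element of $\ca{S}$ is paired with a distinct monomial-equal partner, so $\ca{S}$ contributes zero to the enumerating polynomial of $\tilde{\ca{B}}(v)$ modulo $2$. The main obstacle will be Case $\ca{S}_B$: \Cref{lem:subpath-swapping-cancels} as stated only covers swaps where both subpaths are traversed in the same direction, so I would either formulate a reversing analogue of that lemma in the undirected setting or carry out the Case-$\ca{S}_B$ cancellation directly, being especially careful when verifying that the three conditions defining $\tilde{\ca{B}}(v)$ and the identity of the first intersection vertex both survive the reverse-subpath swap.
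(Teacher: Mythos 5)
Your proposal is correct and takes essentially the same route as the paper: same set $\ca{S} = \tilde{\ca{B}}(v)\setminus\ca{B}(v)$, same first-intersection vertex $x$, same two-case split on whether $x$ precedes or follows $v$ on $P_2$, the same ordinary swap in the first case, and the same reverse-subpath swap with a direct mod-$2$ pairing argument in the second (where, as you correctly observe, \Cref{lem:subpath-swapping-cancels} does not literally apply since $x$ and $v$ appear in opposite orders on the two paths). One small slip in Case $\ca{S}_A$: you justify that $x$ stays the first intersection on $Q_1$ by asserting $P_2 \supseteq Q_2$, but $Q_2 = P_2[s_2,x]\dia P_1[x,v]\dia P_2[v,t_2]$ contains interior vertices of $P_1[x,v]$ that need not lie on $P_2$; the conclusion nonetheless holds because the prefix of $P_1$ strictly before $x$ is disjoint from $P_2$ (by the choice of $x$) and from $P_1[x,v]$ (by simplicity of $P_1$).
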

\begin{proof}
    Fix a vertex $v$.
    Let $\ca{S} = \tilde{\ca{B}}(v)\setminus\ca{B}(v)$.
    % We will show that monomials from pairs in $\ca{S}$ have net zero contribution the enumerating polynomial for $\ca{B}(v)$.

    Take any pair $\langle P_1,P_2\rangle\in\ca{S}$.
    Let $u$ be the first vertex in $P_1$ lying in $P_1\cap P_2$.
    Since the pair is not in $\ca{B}(v)$, we know that $u\neq v$.
    
    There are two cases to consider, based off the relative positions of $u$ and $v$ on $P_2$.

    \paragraph{Case 1: $\boldsymbol{u}$ before $\boldsymbol{v}$}  Suppose that $u$ appears before $v$ in $P_2$.
    In this case, we  form paths 
        \[Q_1 = P_1[s_1,u]\dia P_2[u,v]\dia P_1[v,t_1]\quad\text{and}\quad Q_2 = P_2[s_2,u]\dia P_1[u,v]\dia P_2[v,t_2]\]
    by swapping the $u$ to $v$ subpaths in $P_1$ and $P_2$.
    By \Cref{preserve-shortest-paths}, each $Q_i$ is still an $(s_i,t_i)$-path in $G_i$.
    After this subpath swap, the vertices immediately before  $v$ in the $Q_i$ are still distinct, and the vertices immediately after $v$ in the $Q_i$ are still distinct.
    Moreover, the vertex immediately before $v$ in $Q_1$ is not equal to the vertex immediately after $v$ in $Q_2$, since these are distinct nodes on  $P_2$.
    Then by \Cref{def:local-relaxation}, $\langle Q_1,Q_2\rangle\in\tilde{\ca{B}}(v)$.
    Also, since $Q_1$ and $Q_2$ intersect at $u$, $\langle Q_1,Q_2\rangle\not\in\ca{B}(v)$.
    Finally, since $u$ is the first vertex in $Q_1$ lying in $Q_1\cap Q_2$, applying the same subpath swap procedure as above to $\langle Q_1,Q_2\rangle$ recovers $\langle P_1,P_2\rangle$.
    
    So the map $\Phi$ sending $\langle P_1,P_2\rangle$ to $\langle Q_1,Q_2\rangle$ as above satisfies all conditions of \Cref{lem:subpath-swapping-cancels}.

    \paragraph{Case 2: $\boldsymbol{u}$ after $\boldsymbol{v}$} Suppose that $u$ appears after $v$ on $P_2$.
        In this case, we form paths 
        \[Q_1 = P_1[s_1,u]\dia \rev{P_2}[u,v]\dia P_1[v,t_i]\quad\text{and}\quad 
        Q_2 = P_2[s_1,v]\dia \rev{P_1}[v,u]\dia P_2[u,t_i]\]
    by swapping the $u$ to $v$ subpaths in $P_1$ and $P_2$ (and using the fact that $G$ is undirected, so we can traverse the edges in these subpaths backwards).
    Since the $u$ to $v$ subpaths of $P_1$  and $P_2$ both have length $\dist(u,v)$, each path $Q_i$ has the same length as $P_i$, and is thus still an $(s_i,t_i)$-path in $G_i$.

    Let $a_i$ and $b_i$ denote the vertices on $P_i$ immediately before and after $v$ respectively.

    % Since $\langle P_1,P_2\rangle\in\ca{S}$, by \Cref{def:local-relaxation} we know that $b_1\neq b_2$ and $a_1\not\in\set{a_2,b_2}$.
    Then the nodes immediately before and after $v$ on $Q_1$ are $b_2$ and $b_1$ respectively, and the nodes immediately before and after $v$ on $Q_2$ are $a_2$ and $a_1$.
    
    We know that $b_i\neq a_i$, since $a_i$ and $b_i$ are distinct vertices on the path $P_i$.
    Thus $\langle Q_1,Q_2\rangle$ satisfies the first two conditions from \Cref{def:local-relaxation}.
    
    Since $\langle P_1,P_2\rangle\in\ca{S}$, by condition three of \Cref{def:local-relaxation}, we know that $b_2\neq a_1$.
    Since $b_2$ is the node immediately before $v$ in $Q_1$ and $a_1$ is the node immediately after $v$ in $Q_2$, we see that $\langle Q_1,Q_2\rangle$ satisfies condition three of \Cref{def:local-relaxation} as well.
    
Thus  $\langle Q_1,Q_2\rangle\in\tilde{\ca{B}}(v)$.

    Furthermore, since $u\in Q_1\cap Q_2$ appears before $v$ in $Q_1$, we have $\pair{Q_1,Q_2}\not\in\ca{B}(v)$.

    Thus $\pair{Q_1,Q_2}\in\ca{S}$.

    This shows that the swapping procedure described above, sending $\pair{P_1,P_2}$ to $\pair{Q_1,Q_2}$, is a map from $\ca{S}$ to itself.
    Moreover, because $u$ is the first vertex on $P_1$ lying in $P_1\cap P_2$ and the first vertex on $Q_1$ lying in $Q_1\cap Q_2$, performing the above swapping procedure on $\pair{Q_1,Q_2}$ recovers $\pair{P_1,P_2}$.
    Finally, since the multisets of edges traversed by $\pair{P_1,P_2}$ and $\pair{Q_1,Q_2}$ are the same, we have 
        \[f(P_1,P_2) = f(Q_1,Q_2).\]
    The above discussion shows that $\ca{S}$ can be partitioned into groups of size two, with each group consisting of two pairs with the same monomial.
    Then the monomials from pairs in $\ca{S}$ have net zero contribution to the enumerating polynomial for $\tilde{\ca{B}}(v)$ modulo two.
    
    Hence the enumerating polynomial for $\tilde{\ca{B}}(v)$ actually enumerates $\tilde{\ca{B}}(v)\setminus\ca{S} = \ca{B}(v)$, as claimed.
\end{proof}

\begin{lemma}
\label{lem:relax-disagreeing-paths}
    The enumerating polynomial for $\ca{B}$ is $\Frev$.
\end{lemma}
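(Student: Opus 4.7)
The plan is to apply the vanishing-modulo-2 framework of \Cref{lem:subpath-swapping-cancels} with $\ca{F} = \ca{B}$ and $\ca{S} = \ca{B} \setminus \Fdis$. If I can construct an involution on $\ca{S}$ satisfying the hypotheses of that lemma, then the enumerating polynomial for $\ca{B}$ will coincide with the enumerating polynomial for $\ca{B} \setminus \ca{S} = \Fdis$, which is precisely $\Frev$ by definition.

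The first step is to verify the containment $\Fdis \subseteq \ca{B}$. Given a disagreeing standard pair $\pair{P_1, P_2}$, let $v$ be the first vertex of $P_1$ lying in $P_1 \cap P_2$; by the definition of \emph{disagreeing}, $v$ is simultaneously the last vertex of $P_2$ in $P_1 \cap P_2$. Writing $a_i, b_i$ for the vertices immediately before and after $v$ on $P_i$, each of the hypothetical equalities $a_1 = a_2$, $b_1 = b_2$, or $a_1 = b_2$ would force a vertex of $P_1 \cap P_2$ to lie strictly before $v$ on $P_1$ or strictly after $v$ on $P_2$, contradicting the extremality of $v$. Hence $\pair{P_1, P_2} \in \ca{B}(v) \subseteq \ca{B}$, and consequently every pair in $\ca{S}$ must be an \emph{agreeing} pair whose common first intersection vertex $v$ additionally satisfies $b_1 \neq b_2$.

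To construct the involution, I would take $\pair{P_1, P_2} \in \ca{S}$ with common first intersection $v$, apply \Cref{last-intersection} to obtain the common last intersection $w \neq v$, and set $\alpha(\pair{P_1, P_2}) = v$, $\beta(\pair{P_1, P_2}) = w$, with $\Phi$ swapping the $v$-to-$w$ subpaths of $P_1$ and $P_2$. The subpaths $P_1[v, w]$ and $P_2[v, w]$ are distinct because their second vertices $b_1$ and $b_2$ differ, and \Cref{preserve-shortest-paths} guarantees $Q_1, Q_2 := \Phi(\pair{P_1, P_2})$ remain standard. A double application of $\Phi$ evidently recovers $\pair{P_1, P_2}$.

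The main obstacle will be showing $\Phi(\ca{S}) \subseteq \ca{S}$, i.e., that $\pair{Q_1, Q_2}$ is itself an agreeing pair lying in $\ca{B}$. The key observation I expect to use is that because $v$ and $w$ are the common first and last intersections of the agreeing pair $\pair{P_1, P_2}$, every vertex of $P_1 \cap P_2$ lies weakly between $v$ and $w$ on both paths; hence the prefixes $Q_1[s_1, v) = P_1[s_1, v)$ and $Q_2[s_2, v) = P_2[s_2, v)$ remain disjoint from the rest of the swapped paths, and symmetrically for the suffixes after $w$. This forces $v$ to be the common first intersection of $Q_1, Q_2$, so the new pair is agreeing, and the three neighbor conditions for $\ca{B}(v)$ transfer from $\pair{P_1, P_2}$ to $\pair{Q_1, Q_2}$ by inspection: the new $a_1, a_2$ are the old $a_1, a_2$, the new $b_1, b_2$ are the old $b_2, b_1$, and the condition ``$a_1^{\mathrm{new}} \neq b_2^{\mathrm{new}}$'' reduces to $a_1(P_1) \neq b_1(P_1)$, which is immediate. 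All hypotheses of \Cref{lem:subpath-swapping-cancels} are then in place, and the lemma yields that the enumerating polynomial for $\ca{B}$ equals $\Frev$.
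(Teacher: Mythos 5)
Your proof is correct and follows essentially the same approach as the paper's: you establish $\Fdis\subseteq\ca{B}$ using the three neighbor conditions and the definition of disagreeing, let $\ca{S}=\ca{B}\setminus\Fdis$ be the agreeing pairs in $\ca{B}$, and apply the vanishing-modulo-2 lemma with $\alpha=v$ (common first intersection), $\beta=w$ (common last intersection, via the common-final-intersection lemma), with $\Phi$ swapping the $v$-to-$w$ subpaths. Your verification that the swapped pair remains in $\ca{S}$ --- including the observation that the subpaths are distinct because $b_1\neq b_2$, that the prefixes/suffixes outside $[v,w]$ are unchanged and disjoint so $v$ and $w$ remain the common first and last intersections, and the bookkeeping showing condition 3 reduces to $a_1\neq b_1$ --- is a somewhat more explicit rendering of the same checks the paper carries out.
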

\begin{proof}
    First, observe that $\Fdis\sub\ca{B}$.
    Indeed, suppose $\langle P_1,P_2\rangle\in\Fdis$ is a standard pair of paths which disagree.
    Let $v$ be the first vertex of $P_1$ in $P_1\cap P_2$.
    Let $a_i$ and $b_i$ be the vertices immediately before and after $v$ in $P_i$.
    Then $a_1\not\in\set{a_2,b_2}$ by the definition of $v$, and $b_1\neq b_2$ because $P_1$ and $P_2$ disagree.
    Thus $\langle P_1,P_2\rangle$ satisfies all the conditions from \Cref{def:local-relaxation}, so this pair is in $\ca{B}$.
    Since we chose an arbitrary pair $\langle P_1,P_2\rangle$ from $\Fdis$, we have $\Fdis\sub\ca{B}$ as claimed.

    Let $\ca{S} = \ca{B}\setminus \Fdis$ be the set of agreeing pairs in $\ca{B}$.

    Take any pair $\langle P_1,P_2\rangle\in\ca{S}$.
    Let $v$ be the first common intersection point of the pair.
    Let $w$ be the last common intersection point of the pair.
    By \Cref{last-intersection}, $w\neq v$.
    Let
        \[Q_1 = P_1[s_1,v]\dia P_2[v,w]\dia P_1[w,t_1]
        \quad\text{and}
        \quad
        Q_2 = P_2[s_1,v]\dia P_1[v,w]\dia P_2[w,t_1]
        \]
    be the paths formed by swapping the $v$ to $w$ subpaths in $P_1$ and $P_2$.

    We claim that $\langle Q_1,Q_2\rangle\in\ca{S}$ as well.

    % Indeed, since each $P_i[v,w]$ has length $\dist(v,w)$, each $Q_i$ path is still an $(s_i,t_i)$-path in $G_i$.
    By \Cref{preserve-shortest-paths} each $Q_i$ is still an $(s_i,t_i)$-path in $G_i$.
    The first and last intersection points of these paths are still $v$ and $w$, so the new pair is still agreeing.
    
    Let vertices $a_i$ and $b_i$ be vertices immediately before and after $v$ on each $P_i$ respectively.
    Since the $a_i$ occur before $v$, the nodes before $v$ on $Q_1$ and $Q_2$ are distinct.
    Since the nodes after $v$ on $Q_1$ and $Q_2$ are $\set{b_1, b_2}$, these nodes are also distinct.
    Finally, the node before $v$ on $Q_1$ is $a_1$  and the node after $v$ on $Q_2$ is $b_1$, and $a_1\neq b_1$ since $a_1$ and $b_1$ are distinct vertices on path $P_1$, the node before $v$ on $Q_1$ and after $v$ on $Q_2$ are distinct.
    
    The discussion in the previous paragraph shows that $\langle Q_1,Q_2\rangle\in\ca{S}$.

    Finally, if we apply the subpath swapping procedure above to $\langle Q_1,Q_2\rangle$, we recover $\langle P_1,P_2\rangle$.
    Then the map $\Phi$ on $\ca{S}$ sending $\langle P_1,P_2\rangle$ to $\langle Q_1,Q_2\rangle$ satisfies the conditions from the statement of \Cref{lem:subpath-swapping-cancels}, so the enumerating polynomial for $\ca{B}$ in fact enumerates $\ca{B}\setminus\ca{S} = \Fdis$ as claimed.
\end{proof}

By \Cref{lem:relax-B,lem:relax-disagreeing-paths}, enumerating $\Fdis$ reduces to enumerating $\tilde{\mathcal{B}}(v)$ for each vertex $v$.
Our next goal is to  perform this enumeration efficiently.
To do this, we start by defining and establishing formulas for  some additional helper polynomials.

\begin{definition}[Relaxed Target Linkages]
\label{def:relaxed-target}
For each vertex $v$, let $T(v)$ be the enumerating polynomial for the set of pairs of paths $\langle P_1,P_2\rangle$ where 
\begin{enumerate}
    \item each $P_i$ is a $(v,t_i)$-path in $G_i$, and
    \item the second nodes of $P_1$ and $P_2$ are distinct.
\end{enumerate} 
\end{definition}

\begin{lemma}
\label{target-relax}
    For each vertex $v$, we have 
       \[T(v) = R_1(v)R_2(v) - \sum_{w\in\vout(v)}x_{vw}^2R_1(w)R_2(w).\]
\end{lemma}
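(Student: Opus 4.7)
The plan is to mirror the argument used for \Cref{enum:relax-linkage}, exploiting the symmetry between the ``left'' polynomials $L_i$ (for paths from $s_i$ to $v$) and the ``right'' polynomials $R_i$ (for paths from $v$ to $t_i$). The desired identity is a ``starting-edge analogue'' of the penultimate-edge identity used to relax source linkages.

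First, I would observe that by expanding the product, $R_1(v)R_2(v)$ enumerates exactly the collection of pairs $\pair{P_1,P_2}$ where each $P_i$ is a $(v,t_i)$-path in $G_i$, with no further restriction on the second vertices. To obtain $T(v)$, we then need to subtract out those pairs whose second vertex on $P_1$ coincides with the second vertex on $P_2$.

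Next, I would show that
\[\sum_{w\in\vout(v)}x_{vw}^2R_1(w)R_2(w)\]
enumerates exactly the pairs $\pair{P_1,P_2}$ where both paths leave $v$ through the same edge $(v,w)$ with $w\in \vout(v)$. The key point is that $w$ must be a common out-neighbor in $G_1$ and $G_2$: any pair of shortest paths $\pair{P_1,P_2}$ sharing a common second vertex $w$ must satisfy $w \in \vout^1(v) \cap \vout^2(v) = \vout(v)$, since $P_i$ lies in $G_i$. For each such $w$, splitting $P_i = (v,w) \dia P_i[w,t_i]$ expresses its monomial as $x_{vw}$ times the monomial of an $(w,t_i)$-path in $G_i$; the two copies of $x_{vw}$ combine into $x_{vw}^2$, and the remaining suffixes are enumerated by $R_1(w)R_2(w)$. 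Conversely, any monomial in $x_{vw}^2 R_1(w)R_2(w)$ corresponds to such a pair by prepending the edge $(v,w)$ to a pair of $(w,t_i)$-paths in $G_i$. Summing over $w \in \vout(v)$ therefore enumerates exactly the pairs whose common second vertex is some element of $\vout(v)$.

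Combining the two enumerations gives the desired identity: the difference enumerates precisely the pairs of $(v,t_i)$-paths in $G_i$ whose second vertices differ, which is the definition of $T(v)$ from \Cref{def:relaxed-target}. I do not anticipate a substantive obstacle here; the only subtle point is justifying the restriction of the sum to $\vout(v)$ (rather than $\vout^1(v) \cup \vout^2(v)$), which follows from the observation that the only pairs with matching second vertex $w$ that contribute to $R_1(v) R_2(v)$ are those with $w$ an out-neighbor of $v$ in both $G_1$ and $G_2$.
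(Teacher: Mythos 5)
Your proof is correct and takes essentially the same approach as the paper: the paper's proof of \Cref{target-relax} is a one-line appeal to ``symmetric reasoning'' from \Cref{enum:relax-linkage}, and your proposal simply carries out that mirror-image argument (splitting off the first edge rather than the last, and observing that a shared second vertex must lie in $\vout(v)$).
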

\begin{proof}
    This follows from symmetric reasoning to the proof of \Cref{enum:relax-linkage}.
\end{proof}

\begin{definition}
\label{def:error-term}
    For each vertex $v$, let $H(v)$ be the enumerating polynomial for the set of standard pairs of paths $\langle P_1,P_2\rangle$ intersecting at $v$, such that the vertex immediately before $v$ on $P_1$ is the same as the vertex immediately after $v$ on $P_2$.
\end{definition}

Intuitively, $H(v)$ is a polynomial we use to enforce the $a_1\neq b_2$ condition from \Cref{def:local-relaxation}.

\begin{lemma}
\label{lem:error-computation}
    For each vertex $v$, we have 
    % \[H(v) = \sum_{u\in\vmix(v)} L_1(u)x_{uv}\grp{L_2(v) - L_2(u)x_{uv}}x_{vu}R_1(v)R_2(v).\]    
    \[H(v) = \sum_{u\in\vmix(v)} L_1(u)x_{uv}R_1(v)L_2(v)x_{vu}R_2(u).\]
\end{lemma}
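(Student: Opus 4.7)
The plan is to directly enumerate the pairs contributing to $H(v)$ by parameterizing over the shared vertex $u$ that appears immediately before $v$ on $P_1$ and immediately after $v$ on $P_2$. First I would observe that for any pair $\langle P_1, P_2\rangle$ in the family enumerated by $H(v)$, this common neighbor $u$ necessarily lies in $\vmix(v)$: since $(u,v)$ is an edge on the shortest path $P_1$ in $G_1$ we have $u \in \vin^1(v)$, and since $(v,u)$ is an edge on the shortest path $P_2$ in $G_2$ we have $u \in \vout^2(v)$.

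Next, for each fixed $u \in \vmix(v)$, I would decompose the two paths around $u$ and $v$ as
\[
P_1 = P_1[s_1,u] \dia \langle u,v\rangle \dia P_1[v,t_1]
\quad\text{and}\quad
P_2 = P_2[s_2,v] \dia \langle v,u\rangle \dia P_2[u,t_2],
\]
where the subpaths of $P_1$ lie in $G_1$ and the subpaths of $P_2$ lie in $G_2$. No further constraints are imposed on these four subpaths, because the only global conditions in the definition of $H(v)$ --- that both paths contain $v$ and that the neighbor-equality holds --- are already witnessed by the two middle edges. Thus the four subpaths are enumerated independently by $L_1(u)$, $R_1(v)$, $L_2(v)$, and $R_2(u)$, and the two middle edges contribute the factors $x_{uv}$ and $x_{vu}$. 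Multiplying these five factors and summing over $u \in \vmix(v)$ yields exactly the right-hand side of the claimed formula.

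The only thing left to verify is that the map from pairs to monomials is a genuine bijection, and I would establish both directions explicitly. Given a pair in the family, the vertex $u$ is uniquely recovered as the predecessor of $v$ on $P_1$ (or equivalently the successor of $v$ on $P_2$), so the decomposition above is unique. Conversely, every monomial in the right-hand side arises from a unique choice of $u$ together with a unique choice of the four subpaths from the corresponding factors, and gluing them together via the two middle edges produces a valid pair in the family. I expect this step to be essentially routine: unlike the more delicate arguments earlier in the section, no modulo-$2$ cancellation is needed here, since $H(v)$ is defined so that its enumerated pairs correspond directly to the expansion of a product of path-enumerating polynomials. The main care point is simply to make sure that the decomposition uses an edge in each direction on the correct shortest-path DAG, justifying why the index $u$ ranges over $\vmix(v) = \vin^1(v) \cap \vout^2(v)$ rather than over a larger neighborhood.
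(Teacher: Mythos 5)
Your proposal is correct and matches the paper's proof essentially step for step: both argue that the shared vertex $u$ must lie in $\vmix(v)=\vin^1(v)\cap\vout^2(v)$, decompose each path into two subpaths joined by the middle edge, map those four subpaths and two edges onto the factors $L_1(u)$, $x_{uv}$, $R_1(v)$, $L_2(v)$, $x_{vu}$, $R_2(u)$, and verify the correspondence in both directions. Your observation that no modulo-$2$ cancellation is needed here (unlike the surrounding lemmas) is also exactly the point of this step.
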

\begin{proof}
For any pair of paths $\langle P_1,P_2\rangle$ satisfying the conditions from \Cref{def:error-term}, there exists a unique vertex $u$ such that $u$ appears immediately before $v$ on $P_1$ and immediately after $v$ on $P_2$.
Any such $u$ must lie in $\vin^1(v)\cap \vout^2(v) = \vmix(v)$ by definition.

Given $v\in V$ and $u\in\vmix(v)$,
let $\ca{F}(u,v)$ be the set of all standard pairs of paths $\langle P_1,P_2\rangle$ such that $P_1$ traverses edge $(u,v)$ and $P_2$ traverses edge $(v,u)$.
The discussion in the previous paragraph implies that to prove the lemma, it suffices to show that the polynomial 
\begin{equation}
    \label{eq:error:addend}
    L_1(u)x_{uv}R_1(v)\cdot L_2(v)x_{vu}R_2(u)
\end{equation}
enumerates $\ca{F}(u,v)$ for all $v$ and $u$.

To that end, let $\pair{P_1,P_2}\in\ca{F}(u,v)$.
We claim the monomial $f(P_1,P_2)$ appears in the expansion of \cref{eq:error:addend}.
Indeed, by definition, we can split 
    \[P_1 = P_1[s_1,u]\dia (u,v)\dia P_1[v,t_1]\quad\text{and}\quad
    P_2 = P_2[s_2,v]\dia (v,u)\dia P_2[u,t_2].\]
Then path $P_1[s,u]$ is enumerated by $L_1(u)$, edge $(u,v)$ is enumerated by $x_{uv}$, path $P_1[v,t_1]$ is enumerated by $R_1(v)$, path $P_2[s,v]$ is enumerated by $L_2(v)$, edge $(v,u)$ is enumerated by $x_{vu}$, and path $P_2[u,t_2]$ is enumerated by $R_2(u)$, so the expansion of \cref{eq:error:addend} has the term $f(P_1,P_2)$.

Conversely, we claim that any monomial in the expansion of \cref{eq:error:addend} corresponds to a pair of paths in $\ca{F}(u,v)$.
Indeed, any such monomial is the product of monomials from each of the factors in \cref{eq:error:addend}.
By definition, $L_1(u)$ enumerates $(s_1,u)$-paths $A_1$ in $G_1$, $R_1(v)$ enumerates $(v,t_1)$-paths $B_1$ in $G_1$, $L_2(v)$ enumerates $(s_2,v)$-paths $A_2$ in $G_2$, and $R_2(u)$ enumerates $(u,t_2)$-paths $B_2$ in $G_2$.
So an arbitrary monomial in \cref{eq:error:addend} is of the form 
    \[f(A_1)x_{uv}f(B_1)\cdot f(A_2)x_{vu}f(B_2)\]
for $A_i$ and $B_i$ satisfying the conditions above.

Since $u\in\vmix(v)$, if we define paths 
    \[P_1 = A_1\dia (u,v)\dia B_1\quad\text{and}\quad P_2=A_2\dia (v,u)\dia B_2\]
then we see that $P_i$ is an $(s_i,t_i)$-path in $G_i$.
Moreover, $P_1$ traverses $(u,v)$ and $P_2$ traverses $(v,u)$.
So $\pair{P_1,P_2}\in\ca{F}(u,v)$, and since 
    \[f(P_1,P_2) = f(P_1)\cdot f(P_2) = f(A_1)x_{uv}f(B_1)\cdot f(A_2)x_{vu}f(B_2)\]
we get that monomials of \cref{eq:error:addend} correspond to pairs in $\ca{F}(u,v)$.
Thus \cref{eq:error:addend} enumerates $\ca{F}(u,v)$.

By the discussion at the beginning of the proof, this proves the claim.
\end{proof}

\begin{lemma}[Enumerating Disagreeing Pairs]
\label{enum:disgreeing-paths}
    We have 
    \[\Frev = \sum_{v\in V} \grp{D(v)T(v) - H(v) }.\]
\end{lemma}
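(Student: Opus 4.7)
The plan is to combine the structural lemmas already proven with a direct set-theoretic identification. Since any pair in $\ca{B}$ uniquely determines the first vertex $v$ of $P_1$ lying in $P_1\cap P_2$, the sets $\ca{B}(v)$ partition $\ca{B}$. Combining this with \Cref{lem:relax-disagreeing-paths} (which says $\Frev$ enumerates $\ca{B}$) and \Cref{lem:relax-B} (which says the enumerating polynomial for $\tilde{\ca{B}}(v)$ enumerates $\ca{B}(v)$), we get
    \[\Frev = \sum_{v\in V}\bigl[\text{enumerating polynomial for }\tilde{\ca{B}}(v)\bigr].\]
Hence it suffices to prove, for each vertex $v$, that the enumerating polynomial for $\tilde{\ca{B}}(v)$ equals $D(v)T(v)-H(v)$.

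Next, I would identify the sets that $D(v)T(v)$ and $H(v)$ individually enumerate. By \Cref{linkage-relaxation} and \Cref{def:relaxed-target}, the product $D(v)T(v)$ enumerates $4$-tuples $(A_1,A_2,B_1,B_2)$ where $A_i$ is an $(s_i,v)$-path in $G_i$, $B_i$ is a $(v,t_i)$-path in $G_i$, the penultimate vertices of $A_1$ and $A_2$ are distinct, and the second vertices of $B_1$ and $B_2$ are distinct. Because each $G_i$ is a DAG whose edges strictly increase the $\dist(s_i,\cdot)$ value, every vertex of $A_i$ has $s_i$-distance at most $\dist(s_i,v)$ while every vertex of $B_i$ has $s_i$-distance at least $\dist(s_i,v)$, so $A_i$ and $B_i$ share only $v$ and the concatenation $P_i:=A_i\dia B_i$ is a simple $(s_i,t_i)$-path in $G_i$. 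Since each $P_i$ meets $v$ exactly once, the splitting at $v$ is unique, and the monomials multiply as $f(A_i)f(B_i)=f(P_i)$. Thus $D(v)T(v)$ is the enumerating polynomial for the set $\ca{X}(v)$ of standard pairs $\pair{P_1,P_2}$ passing through $v$ with $a_1\ne a_2$ and $b_1\ne b_2$, where $a_i, b_i$ denote the vertices immediately before and after $v$ on $P_i$. By construction, $H(v)$ enumerates the set $\ca{Y}(v)$ of standard pairs passing through $v$ with $a_1 = b_2$.

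The final step is to observe that $\ca{Y}(v)\sub\ca{X}(v)$. If some $\pair{P_1,P_2}\in\ca{Y}(v)$ had $a_1 = a_2$, then $a_2 = b_2$, contradicting that $a_2$ and $b_2$ are distinct neighbors of $v$ on the simple path $P_2$; similarly, $b_1 = b_2$ would force $a_1 = b_1$, contradicting distinctness on $P_1$. Given the inclusion $\ca{Y}(v)\sub\ca{X}(v)$, working over a field of characteristic two yields
    \[D(v)T(v)-H(v)\ =\ \sum_{\pair{P_1,P_2}\in\ca{X}(v)\setminus\ca{Y}(v)} f(\pair{P_1,P_2}).\]
Unwinding the definitions, $\ca{X}(v)\setminus\ca{Y}(v)$ is exactly the set of standard pairs through $v$ with $a_1\ne a_2$, $b_1\ne b_2$, and $a_1\ne b_2$, which is $\tilde{\ca{B}}(v)$, completing the proof.

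The main obstacle is cleanly identifying $\ca{X}(v)$, which relies on the DAG structure of each $G_i$ to ensure $A_i\dia B_i$ is always a simple path meeting $v$ exactly once. Once $\ca{X}(v)$ is in hand, the inclusion $\ca{Y}(v)\sub\ca{X}(v)$ is essentially free, reducing to the trivial fact that the two neighbors of $v$ on any simple path are distinct---so, unlike in the earlier helper lemmas, no further subpath-swapping argument via \Cref{lem:subpath-swapping-cancels} is required for this step.
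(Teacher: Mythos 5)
Your proof is correct and follows essentially the same approach as the paper's: decompose $\Frev$ via the $\tilde{\ca{B}}(v)$ using \Cref{lem:relax-disagreeing-paths,lem:relax-B}, identify $D(v)T(v)$ and $H(v)$ as enumerating pairs satisfying conditions $1,2$ of \Cref{def:local-relaxation} and the pairs failing only condition $3$ respectively, and conclude by set difference. Your added justification that $A_i\dia B_i$ is simple (via monotonicity of $\dist(s_i,\cdot)$ along $G_i$) makes explicit a step the paper leaves implicit, but it is not a different argument.
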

\begin{proof}
    By \Cref{lem:relax-disagreeing-paths}, $\Fdis$ is the enumerating polynomial for $\ca{B}$.
    By \Cref{def:local-relaxation}, the enumerating polynomial for $\ca{B}$ is the sum of the enumerating polynomials for $\ca{B}(v)$ over all vertices $v$
    (here, we are using the fact that the $\ca{B}(v)$ are disjoint for each $v$ by definition).
    Applying \Cref{lem:relax-B}, we see this sum is equal to the sum of the enumerating polynomials for $\tilde{\ca{B}}(v)$ over all vertices $v$.
    So, it suffices to show that for each vertex $v$, the polynomial
        \begin{equation}
\label{remove:condition}
        D(v)T(v) - H(v)
        \end{equation}
    enumerates $\tilde{\mathcal{B}}(v)$.

    By \Cref{linkage-relaxation,enum:relax-linkage}, the polynomial $D(v)$ enumerates all pairs $\langle A_1,A_2\rangle$ where each $A_i$ is an $(s_i,v)$-path in $G_i$, and the vertices $a_i$ immediately before $v$ in $A_i$ are distinct.
    By \Cref{def:relaxed-target}, the polynomial $T(v)$ enumerates all pairs $\langle B_1,B_2\rangle$ where each $B_i$ is a $(v,t_i)$-path in $G_i$, and the  vertices $b_i$ after $v$ in $B_i$ are distinct.
    
    Then the product $D(v)T(v)$ enumerates all pairs of paths \[\langle A_1\dia B_1,A_2\dia B_2\rangle\] for $A_i$ and $B_i$ satisfying the conditions from the previous paragraph.
    Note that this pair satisfies conditions 1 and 2 from \Cref{def:local-relaxation}.
    Moreover, any standard pair of paths $\langle P_1,P_2\rangle$ satisfying conditions 1 and 2 from \Cref{def:local-relaxation} can be decomposed into 
        \[P_i = A_i\dia B_i\]
    where $A_i$ is an $(s_i,v)$-path in $G_i$, $B_i$ is a $(v,t_i)$-path in $G_i$, the penultimate vertices of the $A_i$  are distinct, and the second vertices of the $B_i$ are distinct.

    Thus $D(v)T(v)$ enumerates precisely all standard pairs of paths satisfying the first two conditions from \Cref{def:local-relaxation}.

    By \Cref{def:error-term}, $H(v)$ enumerates the standard pairs of paths $\langle P_1,P_2\rangle$ intersecting at $v$, such that the node immediately before $v$ on $P_1$ is the same as the node immediately after $v$ on $P_2$.
    If we let $a_i$ and $b_i$ again denote the nodes immediately before and after $v$ on $P_i$ respectively, this means we are enumerating all standard pairs of paths intersecting at $v$ such that $a_1 = b_2$.
    Note that all such paths also have $a_1\neq a_2$ (since $a_2$ and $b_2$ are distinct nodes in $P_2$) and $b_1\neq b_2$ (since $a_1$ and $b_1$ are distinct nodes in $P_1$).

    Thus $H(v)$ enumerates precisely the standard pairs of paths which satisfy the first two conditions but fail the third condition of \Cref{def:local-relaxation}.

    Consequently, the difference \cref{remove:condition} enumerates the standard pairs of paths which satisfy all conditions from \Cref{def:local-relaxation}.
    Thus the polynomial from \cref{remove:condition}   enumerates $\tilde{B}(v)$ as claimed, which proves the desired result.
\end{proof}

% Given a vertex $v$, let $\ca{B}(v)$ be the set of intersecting pairs of paths $\langle P_1,P_2\rangle$ where $P_i$ is an $(s_i,t_i)$-path in $G_i$ such that if we let $a_i$ be the node immediately before $v$ in $P_i$ and $b_i$ be the node immediately after $v$ on $P_i$, then
% $a_1\neq a_2$, $b_1\neq b_2$, and $a_1\neq b_2$.
% Let $B(v)$ be the enumerating polynomial for $\ca{B}(v)$.

% \begin{lemma}
%     We have 
%         \[\Frev = \sum_{v\in V} B(v).\]
% \end{lemma}
% \begin{proof}
%     Throughout, let $P_i$ denote an $(s_i,t_i)$-path in $G_i$.

%     Any disagreeing pair of paths $\langle P_1, P_2\rangle$ has a unique vertex $v$.
    
% \end{proof}

We are finally ready to prove our main theorem.
Note that the first part of the proof is completely identical to the proof of \Cref{2dsp-DAG}, since our algorithms for \textsf{2-DSP} in DAGs and undirected graphs have the same overall structure.

\twodspundir*
\begin{proof}
% The beginning of this proof is an exact copy of the beginning of the proof for Thm 2 (2-DSP in a DAG in linear time).
    Each pair of internally vertex-disjoint paths produces a distinct monomial in $F_{\disj}$.
    It follows that disjoint $(s_i,t_i)$-shortest paths exist in $G$ if and only if $F_{\disj}$ is nonzero as a polynomial.

    So we can solve \textsf{2-DSP} as follows.
    We assign each $x_{uv}$ variable an independent, uniform random element of $\mathbb{F}_{2^q}$, and then evaluate $F_{\disj}$ on this assignment.
    If the evaluation is nonzero we return YES (two disjoint shortest paths exist), and otherwise we return NO.
    
    If two disjoint shortest paths do not exist, then $F_{\disj}$ is the zero polynomial, and our algorithm correctly returns NO.
    If two disjoint shortest paths do exist, then $F_{\disj}$ is a nonzero polynomial of degree strictly less than $2n$, so by Schwartz-Zippel (\Cref{obvious}) our algorithm correctly returns YES with high probability for large enough $q = O(\log n)$.

    It remains to show that we can compute $F_{\disj}$ in linear time.

        First, we can compute $G_1$ and $G_2$ in linear time by \Cref{prop:sp-DAG}.

            Then, by dynamic programming forwards and backwards over the topological order of $G$, we can evaluate the polynomials  $L_i(v)$   and $R_i(v)$ for each $i\in\set{1,2}$ and vertex $v$ at our given assignment in linear time, using the recurrences from \Cref{lem:dp}.

    %% Things change!
    
    Having computed these values, \Cref{lem:DAG-left-disjoint,target-relax} shows that for any vertex $v$ we can compute $D(v)$ and $T(v)$ at the given assignment in $O(\deg_{\text{in}}(v))$ and $O(\deg_{\text{out}}(v))$ time respectively.
    So we can evaluate $D(v)$ and $T(v)$ for all $v$ in $O(m)$ time.

    Then by \Cref{enum:agreeing-pairs}, we can evaluate $\Fagree$ in $O(m)$ time.

    By \Cref{lem:error-computation}, we can compute $H(v)$ at any given vertex $v$ in $O(\deg_{\text{in}}(v))$ time.
    So we can evaluate $H(v)$ for all vertices $v$ in $O(m)$ time.

    From the values for $D(v)$, $T(v)$, and $H(v)$, by \Cref{enum:disgreeing-paths} we can compute $\Frev$ in $O(n)$ time. 

    Having computed $\Fagree$ and $\Frev$, by \cref{agree-taut} we can compute $F_{\cap}$ as well in $O(1)$ time.

    Finally, given the value of $F_{\cap}$, we can evaluate $F_{\disj}$ in $O(1)$ additional time by \Cref{disj-to-intersect}.

     Thus we can solve \textsf{2-DSP} in linear time.
    \end{proof}

\subsection{Search to Decision Reduction}
\search*
% \begin{theorem}
%     We can solve \textsf{2-DSP} over weighted DAGs and undirected graphs with $n$ vertices and $m$ edges, and find a solution if it exists, in $O(mn)$ time.
% \end{theorem}
\begin{proof}
    The proofs of \Cref{2dsp-undir,2dsp-DAG} construct arithmetic circuits of size $O(m)$ for the two disjoint shortest paths polynomial $F_{\disj}$ for DAGs and undirected graphs.
    By the Baur-Strassen theorem (see \cite{BaurStrassen1983} for the original proof, and \cite[Theorem 2.5]{ShpilkaYehudayoff2009} for a more recent exposition), we can construct arithmetic circuits of size $O(m)$ which simultaneously compute all single-order partial derivatives of $F_{\disj}$.

    Given an edge $(u,v)$, the polynomial $(\partial /\partial x_{uv})\, F_{\disj}$ is nonzero if and only if  edge $(u,v)$ appears in some solution to the \textsf{2-DSP} problem.
    So by Schwartz-Zippel (\Cref{obvious}), with high probability,  $(\partial /\partial x_{uv})\, F_{\disj}$ has nonzero evaluation at a uniform random assignment over $\mathbb{F}_{2^q}$ if and only if $(u,v)$ is an edge occurring in some pair of disjoint shortest paths, for $q = O(\log n)$ sufficiently large.

    We can compute all partial derivatives at some random evaluation point in $O(m)$ time using the arithmetic circuit for these polynomials.
    We pick an edge $(s_1, v)$ such that $(\partial /\partial x_{s_1v})\, F_{\disj}$ has nonzero evaluation.
    Then we delete vertex $s_1$ from $G$, and consider a smaller instance of \textsf{2-DSP} on the graph, where source $s_1$ is replaced with $v$.
    We can repeat this process on the new instance, to find the first edge on a $(v,t_1)$-shortest path which is disjoint from some $(s_2,t_2)$-shortest path.
    Repeating this process at most $n$ times, we can recover an $(s_1,t_1)$-shortest path $P_1$, which is disjoint from some $(s_2,t_2)$-shortest path.
    
    At this point, we just delete all vertices of $P_1$ from the original graph $G$, find an $(s_2,t_2)$-shortest path $P_2$ in the resulting graph in linear time, and then return $\langle P_1,P_2\rangle$ as our answer.

    Overall, we compute at most $n$ evaluations of arithmetic circuits of size $O(m)$, so the algorithm runs in $O(mn)$ time.
\end{proof}

\section{Edge-Disjoint Paths Algorithm}\label{sec:edge}

In this section, we present our algorithm for \textsf{$k$-EDSP} in weighted DAGs and prove \Cref{thm:edsp}. 
The algorithm works by constructing a large graph on $n^k$ nodes, whose paths correspond to collections of edge-disjoint paths in the original graph.
As we previously mentioned in \Cref{sec:intro,sec:overview}, our algorithm uses the same framework as that of \cite{FortuneHopcroftWyllie1980}.

Throughout this section, we let $G$ be the input DAG on $n$ nodes and $m$ edges.
For each $i\in [k]$, we let $G_i$ denote the $s_i$-shortest paths DAG of $G$.

% \subsection{Disjoint Paths Graph}
% \label{subsec:k-EDSP-graph-construction}

    Fix a topological order $(\prec)$ of $G$.
    
    We construct a graph $G'$ on $n^k$ nodes, whose paths encode $k$-tuples of paths in $G$.

    The graph $G'$ has a node for each $k$-tuple of vertices from $G$.

    Given a node $\vec{v} = (v_1, \dots, v_k)$ in $G'$ (where each $v_i$ is a vertex in $G$), let $v = \early(\vec{v})$ be the unique vertex $v$ such that 
        \begin{itemize}
            \item $v = v_i$ for some index $i\in [k]$, and 
            \item $v\preceq v_j$ for all $j\in [k]$.
        \end{itemize}
    In other words, $\early(\vec{v})$ is the earliest coordinate of $\vec{v}$ with respect to the topological order of $G$.
    Let $I(\vec{v})$ denote the set of indices $i\in [k]$ such that $v_i = \early(\vec{v})$.
    Then we include an edge from node $\vec{v} = (v_1, \dots, v_k)$ to node $\vec{w} = (w_1,\dots, w_k)$ in $G'$ precisely when 

    \begin{enumerate}
        \item for all $i\in I(\vec{v})$, the pair $(v_i,w_i)$ is an edge in $G_i$,
        \item the vertices $w_i$ are pairwise distinct over all $i\in I(\vec{v})$, and 
        \item for all $j\not\in I(\vec{v})$, we have $w_j = v_j$.
    \end{enumerate}
        We refer to the rules above as the \emph{conditions for edges of $G'$}.
    Intuitively, the conditions say that we can move from a node $\vec{v}$ consisting of $k$ vertices $v_i$ by stepping from all of the earliest vertices to new distinct vertices.
    An example of these edge transitions is depicted in \Cref{fig:k-EDSP}.

    Let $\vec{s} = (s_1, \dots, s_k)$ and $\vec{t} = (t_1, \dots, t_k)$ be nodes in $G'$ containing all of the source and target vertices in $G$ respectively.
    As the following lemmas show, our conditions for including edges in $G'$ allow us to relate paths from $\vec{s}$ to $\vec{t}$ in $G'$ to solutions to the \textsf{$k$-EDSP} problem in $G$.

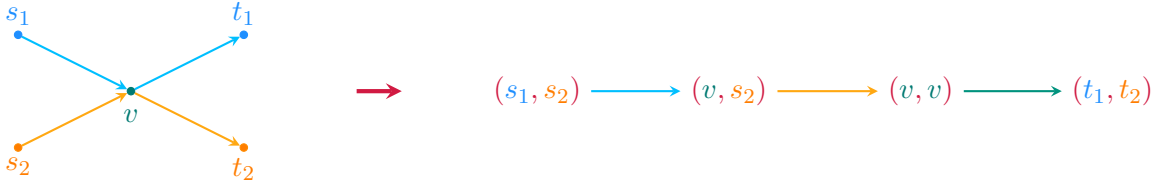
\begin{figure}[t]
    \centering
    \begin{tikzpicture}[scale=1.5,
bvtx/.style={draw=dblue, fill=dblue, circle, inner sep=1pt, minimum width=1pt},
gvtx/.style={inner sep=1pt, circle, draw=dorange, fill=amber, minimum width=1pt},
cvtx/.style={inner sep=1pt, circle, draw=skobeloff, fill=skobeloff, minimum width=1pt},
svtx/.style={inner sep=1pt, circle, draw=skobeloff, fill=skobeloff, minimum width=1pt},
bpath/.style={dsblue, thick,-stealth},
gpath/.style={dtang,thick,-stealth},
mpath/.style={skobeloff!70!cgreen,thick,-stealth}]

%% Vertical Displacement
\def\yspace{1cm};

%% Horizontal Displacement
\def\xstep{2cm};
\def\xsml{1.7cm};

\def\extrw{0.4cm};
\def\extr{1cm};
\def\extrlabel{0.2cm};

\node[bvtx] (s1) at (0,\yspace) {};
\node[bvtx] (t1) at (\xstep, \yspace) {};
\node[cvtx] (a) at (0.5*\xstep, 0.5*\yspace) {};
\node[gvtx] (s2) at (0,0) {};
\node[gvtx] (t2) at (\xstep, 0) {};

\draw[bpath] (s1) -- (a);
\draw[bpath] (a) -- (t1);
\draw[gpath] (s2) -- (a);
\draw[gpath] (a) -- (t2);

\node[above, dblue] at (s1.center) {$s_1$};
\node[below, amber] at (s2.center) {$s_2$};

\node[above, dblue] at (t1.center) {$t_1$};
\node[below, amber] at (t2.center) {$t_2$};

\node[below=3pt, skobeloff] at (a.center) {$v$};

\begin{scope}[xshift = \xstep + \extr]
        \draw[ultra thick, alizarin,-stealth] (0,0.5*\yspace) -- (\extrw, 0.5*\yspace);
\end{scope}

\begin{scope}[xshift = \xstep + \extr + \extrw + \extr+\extrlabel]

    \node (s1s2) at (0,0.5*\yspace) {$\textcolor{alizarin}{(\textcolor{dblue}{s_1},\textcolor{amber}{s_2})}$};

    \node (vs2) at (\xsml,0.5*\yspace) {$\textcolor{alizarin}{(\textcolor{skobeloff}{v},\textcolor{amber}{s_2})}$};

    \node (vv) at (2*\xsml,0.5*\yspace) {$\textcolor{alizarin}{(\textcolor{skobeloff}{v},\textcolor{skobeloff}{v})}$};

    \node (s2t2) at (3*\xsml,0.5*\yspace) {$\textcolor{alizarin}{(\textcolor{dblue}{t_1},\textcolor{amber}{t_2})}$};

    \draw[bpath] (s1s2) -- (vs2);
    \draw[gpath] (vs2) -- (vv);
    \draw[mpath] (vv) -- (s2t2);
\end{scope}
\end{tikzpicture}
    \caption{For $k=2$, the paths $\pair{s_1, v, t_1}$ and $\pair{s_2, v, t_2}$ in $G$ map to a single path in $G'$, pictured on the right, with respect to the topological order $s_1 \prec s_2 \prec v \prec t_1 \prec t_2$.
    Since both coordinates of the node $(v,v)$ are equal, we can step from $(v,v)$ to $(t_1, t_2)$ with a single edge in $G'$.} 
    \label{fig:k-EDSP}
\end{figure}

    \begin{lemma}
        \label{metagraph-EDP-to-path}
        If $G$ contains edge-disjoint $(s_i,t_i)$-shortest paths, then $G'$ contains a path from $\vec{s}$ to $\vec{t}$.
    \end{lemma}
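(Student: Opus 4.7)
I would construct the required path in $G'$ explicitly from the edge-disjoint shortest paths $P_1,\dots,P_k$, simulating them in lockstep while always advancing the topologically earliest coordinates. Concretely: maintain a tuple $\vec{v}^{(\tau)} = (v_1^{(\tau)},\dots,v_k^{(\tau)})$ with the invariant that each $v_i^{(\tau)}$ lies on $P_i$; start with $\vec{v}^{(0)} = \vec{s}$. At step $\tau$, if $\vec{v}^{(\tau)} \neq \vec{t}$, form $\vec{v}^{(\tau+1)}$ by moving every coordinate $i \in I(\vec{v}^{(\tau)})$ to the next vertex after $v_i^{(\tau)}$ along $P_i$, and leaving all other coordinates fixed. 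The resulting sequence $\vec{v}^{(0)}, \vec{v}^{(1)}, \dots$ will be the candidate walk in $G'$.

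The main verification is that each consecutive pair $(\vec{v}^{(\tau)}, \vec{v}^{(\tau+1)})$ satisfies the three edge conditions for $G'$. Condition (3) is immediate from the construction. Condition (1) is clear because every edge of $P_i$ lies in $G_i$, as $G_i$ by definition contains all edges traversed by $(s_i,\cdot)$-shortest paths. Condition (2), the distinctness of the advanced coordinates, is exactly the place where edge-disjointness enters: if $i, j \in I(\vec{v}^{(\tau)})$ were distinct and $v_i^{(\tau+1)} = v_j^{(\tau+1)} = w$, then since $v_i^{(\tau)} = v_j^{(\tau)} = \early(\vec{v}^{(\tau)})$, the paths $P_i$ and $P_j$ would both traverse the edge $(v_i^{(\tau)}, w)$, contradicting edge-disjointness.

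For termination, the quantity $\sum_i \pos_{P_i}(v_i^{(\tau)})$ strictly increases at each step, so within at most $\sum_i |P_i| \leq kn$ steps the procedure halts. The subtle point, which I expect to be the main obstacle, is ensuring the procedure only halts at $\vec{t}$ and never gets stuck at a configuration where some advancing coordinate has already reached its target and has no successor on its path. I would resolve this by first arranging (via the standard preprocessing of appending a unit-weight pendant sink to each $t_i$, which does not change the $k$-\textsf{EDSP} instance) that $t_1,\dots,t_k$ are distinct sinks and appear as the very last vertices in the chosen topological order $(\prec)$. Under this ordering, if $v_i^{(\tau)} = t_i = \early(\vec{v}^{(\tau)})$, then every other $v_j^{(\tau)}$ must also be among the last $k$ vertices, hence a target; but $v_j^{(\tau)}$ lies on $P_j$, whose only target is $t_j$ (the targets being distinct sinks), forcing $v_j^{(\tau)} = t_j$, and hence $\vec{v}^{(\tau)} = \vec{t}$. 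This means the procedure can only halt at $\vec{t}$, so the sequence $\vec{v}^{(0)}, \vec{v}^{(1)}, \dots, \vec{v}^{(T)} = \vec{t}$ is the desired path in $G'$.
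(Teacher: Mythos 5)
Your proof follows the same lockstep-simulation strategy as the paper's, and discharges the three edge conditions in the same way (edge-disjointness gives condition~2). The genuinely new content is your treatment of termination. The paper simply asserts that the potential $\sum_i\dist(v_i,t_i)$ strictly decreases to zero, but this presupposes that the step is always defined --- i.e., that no coordinate $i\in I(\vec v)$ is already at $t_i$ while $\vec v\neq\vec t$. That presupposition can fail when some $t_i$ is an interior vertex of another path $P_j$, which edge-disjointness allows: with $P_1=\langle s_1,t_1\rangle$ and $P_2=\langle s_2,t_1,t_2\rangle$ (shortest paths in the four-vertex DAG with edges $(s_1,t_1),(s_2,t_1),(t_1,t_2)$), the procedure is forced to $(t_1,s_2)$ and then to $(t_1,t_1)$, where $I(\vec v)=\{1,2\}$ but $P_1$ has no successor from $t_1$; worse, since $t_2$ is the only out-neighbor of $t_1$ in both $G_1$ and $G_2$, condition~2 leaves $(t_1,t_1)$ with \emph{no} out-edges at all, so the paper's $G'$ has no $\vec s$-to-$\vec t$ path on this YES instance. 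Your preprocessing (append a fresh pendant sink at each $t_i$ and place the new targets last in the topological order) correctly eliminates the problem: with distinct sink targets sitting at the end of the order, a target coordinate can be the earliest only at $\vec t$ itself, so the procedure never stalls. In other words, your proof does not merely rewrite the paper's argument --- it repairs a real gap in it (the lemma as stated appears to be false without some such normalization of the terminals). Once well-definedness of each step is secured, your potential $\sum_i\pos_{P_i}(v_i^{(\tau)})$ and the paper's $\sum_i\dist(v_i,t_i)$ are interchangeable.
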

    \begin{proof}
        For $i\in [k]$, let $P_i$  be $(s_i,t_i)$-shortest paths in $G$ which are all edge-disjoint.
        We show how to simultaneously traverse these paths $P_i$ to recover a path in $G'$ from $\vec{s}$ to $\vec{t}$.

                Initialize $P \leftarrow \langle \vec{s}\,\rangle$ and  $\vec{v} \leftarrow \vec{s}$.

        Suppose $\vec{v} = (v_1, \dots, v_k)$.
        Define the new node $\vec{w} = (w_1, \dots, w_k)$ in $G'$ by setting $w_i$ to be the node after $v_i$ on $P_i$ for each $i\in I(\vec{v})$, and setting $w_j = v_j$ for all $j\not\in I(\vec{v})$.
        Then append $\vec{w}$ to $P$, and update $\vec{v} \leftarrow \vec{w}$.

        Repeat the process in the above paragraph until $\vec{v} = \vec{t}$.
        We claim this procedure halts and produces a path $P$ from $\vec{s}$ to $\vec{t}$ in $G'$.

        Indeed, an easy induction argument shows that the value of $\vec{v} = (v_1, \dots, v_k)$ always has the property that $v_i$ is a vertex in $P_i$, and each step strictly decreases the value of 
            \[\sum_{i=1}^k \dist(v_i,t_i).\]
        The above quantity is a nonnegative integer, and so stops decreasing once it reaches zero, at which point we must have $\vec{v} = \vec{t}$.
        Moreover, our procedure is designed so that each step from a node $\vec{v}$ to a node $\vec{w}$ on $P$ satisfies conditions 1 and 3 for edges in $G'$.
        Condition 2 for edges in $G'$ is also satisfied, because the $P_i$ are edge-disjoint, so $P$ is a valid path in $G'$ as desired.
    \end{proof}

    \begin{lemma}
        \label{metagraph-path-to-EDP}
        If $G'$ contains a path form $\vec{s}$ to $\vec{t}$, then $G$ contains edge-disjoint $(s_i,t_i)$-shortest paths.
    \end{lemma}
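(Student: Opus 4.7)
My plan is to extract $k$ paths from the given $G'$-path by reading off coordinate sequences, and then verify that these are shortest paths in $G$ which are pairwise edge-disjoint. Let $\vec{s} = \vec{u}_0, \vec{u}_1, \ldots, \vec{u}_r = \vec{t}$ be the path in $G'$, and for each $i \in [k]$ let $P_i$ be the walk in $G$ obtained by listing the $i$-th coordinates of $\vec{u}_0, \ldots, \vec{u}_r$ and collapsing consecutive repeats. Clearly $P_i$ starts at $s_i$ and ends at $t_i$, and every edge of $P_i$ arises from some step $\vec{u}_t \to \vec{u}_{t+1}$ with $i \in I(\vec{u}_t)$, which by condition 1 is an edge of $G_i$. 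Because $G_i \subseteq G$ respects the topological order $(\prec)$ of $G$ and edge weights are positive, the $i$-th coordinate is nondecreasing in $(\prec)$ over time and strictly advances at each nontrivial step, so $P_i$ is a simple path in $G_i$ and hence an $(s_i,t_i)$-shortest path in $G$.

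The main content of the proof will be to show the $P_i$ are pairwise edge-disjoint. Assume for contradiction that some edge $(a,b)$ is used by both $P_i$ and $P_j$ with $i \neq j$, arising at steps $t_1$ and $t_2$ of the $G'$-path respectively. The case $t_1 = t_2$ is immediately impossible: then $i,j \in I(\vec{u}_{t_1})$, and condition 1 forces the $i$-th and $j$-th coordinates of $\vec{u}_{t_1+1}$ to both equal $b$, contradicting condition 2. So assume without loss of generality $t_1 < t_2$. Since the $j$-th coordinate equals $a$ at time $t_2$ and moves only forward along $(\prec)$, its value at time $t_1$ must be $\preceq a$. On the other hand, $i \in I(\vec{u}_{t_1})$ with $i$-th coordinate equal to $a$, so $\early(\vec{u}_{t_1}) = a$ and every coordinate at time $t_1$ is $\succeq a$. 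Hence the $j$-th coordinate at time $t_1$ is exactly $a$, so $j \in I(\vec{u}_{t_1})$ as well. Then condition 2 forces the $j$-th coordinate at time $t_1 + 1$ to differ from $b$, while condition 1 forces it to strictly succeed $a$ in $(\prec)$. This coordinate can therefore never return to $a$, contradicting its value at time $t_2$.

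The main obstacle is precisely this edge-disjointness argument; extracting the shortest paths $P_i$ from the coordinate sequences is immediate from the definitions, but ruling out reuse of an edge at different time steps is subtle. The key observation driving the argument is that because advancing coordinates in $I(\vec{v})$ all start from the common earliest vertex $\early(\vec{v})$ and are required by condition 2 to step to pairwise distinct successors, any attempt to revisit an edge $(a,b)$ at a later $G'$-step would compel some coordinate to travel backward in the topological order, which the $G_i \subseteq G$ structure forbids.
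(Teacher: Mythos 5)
Your proof is correct and follows essentially the same approach as the paper's: extract each $P_i$ by reading the $i$-th coordinate along the $G'$-path, use monotonicity of each coordinate in the topological order, and derive the edge-disjointness contradiction from the observation that when one path leaves the shared vertex $a$, the other colliding coordinate must also equal $a = \early(\cdot)$ and hence also lie in $I(\cdot)$, which forces distinct successors via condition~2 or a violation of monotonicity. One minor misattribution: in your $t_1 = t_2$ case, it is the hypothesis that both $P_i$ and $P_j$ traverse edge $(a,b)$ at that step — not condition~1 — that forces both successor coordinates to equal $b$; condition~2 then yields the contradiction exactly as you conclude.
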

    \begin{proof}
        Let $P$ be a path from $\vec{s}$ to $\vec{t}$ in $G'$.
        We show how to recover edge-disjoint paths in $G$ by reading off the edges traversed in $P$.

        Initialize $\vec{v}\leftarrow \vec{s}$. 
        For each $i\in [k]$, initialize $P_i \leftarrow \langle s_i\rangle$.

        Suppose $\vec{v} = (v_1, \dots, v_k)$.
        Let $\vec{w} = (w_1, \dots, w_k)$ be the node after $\vec{v}$ on $P$.
        For each index $i$ with $w_i\neq v_i$, append $w_i$ to $P_i$.
        Then update $\vec{v}\leftarrow \vec{w}$.

        Repeat the process in the above paragraph until $\vec{v} = \vec{t}$.
        We claim this procedure produces edge-disjoint shortest paths $P_i$ from $s_i$ to $t_i$.

        Indeed, an easy induction argument shows that at any time in the procedure, the $i^{\text{th}}$ coordinate of $\vec{v}$ is the last vertex in path $P_i$.
        Since $\vec{v}$ begins at $\vec{s}$ and ends at $\vec{t}$, and $G$ is a DAG, each $P_i$ is an $(s_i,t_i)$-path.
        By condition 1 of edges in $G'$, any edge $(\vec{v},\vec{w})$ in $G'$ which changes the value of the $i^{\text{th}}$ coordinate must change that coordinate by stepping along an edge lying in a shortest path from $s_i$.
        It follows that each $P_i$ is an $(s_i,t_i)$-shortest path.

        It remains to prove that the $P_i$ are edge-disjoint.
        Suppose to the contrary that some of these paths overlap at some edge. 
        Without loss of generality, suppose paths $P_1$ and $P_2$ overlap at some edge $(a,b)$.

        Let $\vec{v} = (v_1, \dots, v_k)$ be the last node in $P$ which has $v_1 = a$ as its first coordinate. 
        Such a node exists since $P_1$ passes through $a$.
        By condition 3 for edges in $G'$, we must have $a = \early(\vec{v})$.
        We claim that $v_2 = a$ as well.

        Suppose $v_2\neq a$.
        Then since $a = \early(\vec{v})$, we must have $v_2 \succ a$.
        
        In this case, define $\vec{u}$ to be the last node in $P$ which has $u_2 = a$ as its second coordinate.
        Such a node exists since $P_2$ passes through $a$.
        By condition 3 for edges in $G'$, we have $a = \early(\vec{u})$.
        
        By the assumption that $v_2\succ a$ and condition 1 for edges of $G'$, node $\vec{u}$ occurs before $\vec{v}$ in $P$.
        Then since $v_1 = a$, by condition 1 for edges of $G'$, the first coordinate $u_1$ of $\vec{u}$ must satisfy $u_1\preceq a$.
        Since $a = \early(\vec{u})$, this forces $u_1 = a = \early(\vec{u})$.

        Since $u_1 = \early(\vec{u})$,
        by conditions 1 and 2 for edges in $G'$, all nodes in $P$ after $\vec{u}$ cannot have first coordinate equal to $a$.
        This contradicts the fact that $\vec{u}$ occurs before $\vec{v}$ in $P$.
        So our initial assumption was false, and in fact $v_2 = a$ as claimed.

        Let $\vec{w} = (w_1, \dots, w_k)$ be the node after $\vec{v}$ on $P$.
        Since $v_1 = v_2 = \early(\vec{v})$, by condition 2 for edges of $G'$ we have $w_1\neq w_2$.
        Our procedure for constructing the paths $P_i$ from $P$ has $P_1$ traverse edge $(a,w_1)$ and $P_2$ traverse edge $(a,w_2)$.
        Since $w_1$ and $w_2$ distinct, this contradicts the assumption that paths $P_1$ and $P_2$ both traverse edge $(a,b)$.

        It follows that our original assumption was false, and the paths $P_i$ are edge-disjoint as claimed.
        This completes the proof.   
    \end{proof}

% \subsection{Algorithm}
% \label{subsec:k-EDSP-alg}

\edsp*
\begin{proof}

    First, we compute a topological order of $G$ in linear time.
    Then we compute the $s_i$-shortest paths DAGs $G_i$ of $G$ for all $i\in [k]$, which takes linear time by \Cref{prop:sp-DAG}.

    By \Cref{metagraph-EDP-to-path,metagraph-path-to-EDP}, we can solve \textsf{$k$-EDSP} on $G$ by constructing the graph $G'$ described in 
    % \Cref{subsec:k-EDSP-graph-construction}, 
    this section,
    and checking whether this graph contains a path from $\vec{s}$ to $\vec{t}$.
    Checking whether there is a path from $\vec{s}$ to $\vec{t}$ takes time linear in the size of $G'$, so to prove the theorem, it suffices to show that we can construct $G'$ in $O(mn^{k-1})$ time.

    We can construct all nodes of $G'$ in $O(n^k)$ time.
    We then go through each node $\vec{v}$ in $G$, and add in edges from $\vec{v}$ to $\vec{w}$ in the out-neighborhood of $\vec{v}$ according to the three conditions.
    Each addition of an edge takes $O(1)$ time, so it suffices to show that $G'$ has $O(mn^{k-1})$ edges.

    Consider a node $\vec{v}$ in $G$.
    Let $v = \early(\vec{v})$ be the earliest coordinate of $\vec{v}$, and let $\ell = |I(\vec{v})|$ denote the number of coordinates in $\vec{v}$ equal to this earliest vertex.
    There are at most $n^{k-\ell}$ choices for the coordinates of $\vec{v}$ not equal to $v$.
    By conditions 1 and 3 for edges in $G'$, the node $\vec{v}$  can have edges to at most $\grp{\outdeg(v)}^\ell$ nodes in $G'$.

    Summing over all possible values for $\ell = |I(\vec{v})|$ and $v=\early(\vec{v})$, we see that the number of edges in $G'$ is bounded above by 
        \[\sum_{\ell=1}^{k}\sum_{v\in V}  n^{k-\ell}\grp{\outdeg(v)}^\ell = \sum_{\ell=1}^{k} \sum_{v\in V} \grp{n^{k-\ell}\grp{\outdeg(v)}^{\ell-1}\cdot \outdeg(v)}.\]
    Substituting the inequality $\outdeg(v)\le n$ in the right hand side above, we see that the number of edges in $G'$ is at most 
    \[\sum_{\ell=1}^{k} \sum_{v\in V} \grp{n^{k-\ell}\cdot n^{\ell-1}\cdot \outdeg(v)} = kn^{k-1} \grp{\sum_{v\in V} \outdeg(v)} = kmn^{k-1}.\]
    For constant $k$, the above expression is $O(mn^{k-1})$, which proves the desired result.
\end{proof}

\section{Lower Bounds}\label{sec:lb}

\subsection{Disjoint Shortest Paths}

Our goal in this section is to prove the following theorem. 

\lbdspboth*

% Our reduction is based on \cite{Geometric-Lens} with the additional changes outlined in the technical overview.

\paragraph{Construction}

Let $G = V_1\sqcup \dots\sqcup V_k$ be the input instance of \textsf{$k$-Clique}.

Order each of the vertices in each $V_i$.
Order the set of all vertices $V$ in the graph, by putting all vertices of $V_i$ before vertices of $V_j$ for $i < j$ (and within a part $V_i$ using the order of $V_i$).
For each index $i$, let $\bar{V}_i = V\setminus V_i$.
The order on $V$ induces an order on each $\bar{V}_i$ as well.

We now produce an instance $G'$ of \textsf{$k$-DSP}, depicted in \Cref{fig:dsp-reduction}.

\begin{figure}[t]
    \centering
    \begin{tikzpicture}[scale=1.2,
term/.style={draw=midnight, fill=midnight, circle, inner sep=1pt, minimum width=1pt},
disj/.style={draw=dblue, fill=dblue, circle, inner sep=1pt, minimum width=1pt},
hit/.style={inner sep=1pt, circle, draw=dorange, fill=amber, minimum width=1pt},
dedge/.style={dsblue, thick,-stealth},
hedge/.style={dtang,thick,-stealth},
tedge/.style={cdblue!50!white,thick,-stealth},
bedge/.style={very thick, cdblue!70!midnight},
lbedge/.style={thick, cdblue!70!white}]

    %% Horizontal Spacing
    \def\xstep{1cm}; % main step size
    \def\xsml{0.35cm}; % tiny offset for disjoint edges

    %% Vertical Spacing
    \def\ystep{1cm}; % main step size
    \def\ysml{-0.35cm}; % tiny offset for disjoint edges

    %% Horizontal Transition
    \def\extr{1.2cm};
    \def\extrw{0.6cm};

    %% Circle Positioning
    \def\crad{2cm};
    \def\ycent{2.5cm};

    %% Label Offset
    \def\ldist{0.25cm};

    %%% The 3-Clique instance
    % V_1 set
    \node[term,yshift=\ycent] (A1) at (100:\crad) {};
    \node[term,yshift=\ycent] (A2) at (80:\crad) {};

    \node[midnight,above] at (A1) {$a_1$};
    \node[midnight,above] at (A2) {$a_2$};
    % V_2 set
    \node[term,yshift=\ycent] (B1) at (200:\crad) {};    \node[term,yshift=\ycent] (B2) at (220:\crad) {};

    \node[midnight,yshift=\ycent] at (200:1.12*\crad) {$b_1$};
    \node[midnight,yshift=\ycent] at (220:1.12*\crad) {$b_2$};
    % V_3 set
    \node[term,yshift=\ycent] (C1) at (340:\crad) {};    \node[term,yshift=\ycent] (C2) at (320:\crad) {};    

    \node[midnight,yshift=\ycent] at (340:1.12*\crad) {$c_1$};
    \node[midnight,yshift=\ycent] at (320:1.12*\crad) {$c_2$};

    %% Clique edges
    \draw[bedge] (A1) -- (B1);
        \draw[bedge] (A1) -- (C1);
        \draw[bedge] (B1) -- (C1);
    \draw[lbedge] (A2) -- (B1);
    \draw[lbedge] (A2) -- (C2);
    \draw[lbedge] (B2) -- (A1);
    \draw[lbedge] (B2) -- (C2);

    %% Highlight triangle
    \begin{scope}[on background layer]
        \draw[Yellow!40!white, line width = 10pt, rounded corners] (A1.center) -- (B1.center) -- (C1.center) -- cycle;
    \end{scope}

    %% Transition Arrow
    \draw[xshift=\crad+\extr, line width = 1.8pt, alizarin, yshift=2.5*\ystep,-stealth] (0,0) -- (\extrw,0);

\begin{scope}[xshift=2*\extr+\extrw+\crad]
    %%% The 3-DSP instance
    
    % s_A 
    \node[term] (sA) at (1.5*\xstep, 5*\ystep) {};
    % a_1(b_1)
    \node (a1b1pre) at (\xstep, 4*\ystep) {};
    \node[disj,yshift=\ysml] (a1b1) at (a1b1pre) {};
    % a_1(b_2)
    \node (a1b2pre) at (\xstep, 3*\ystep) {};
    \node[disj,yshift=\ysml] (a1b2) at (a1b2pre) {};
    % a_1(c_1)
    \node (a1c1pre) at (\xstep, 2*\ystep) {};
    \node[disj,yshift=\ysml] (a1c1) at (a1c1pre) {};
    % a_1(c_2) = c_2(a_1)
    \node[hit] (a1c2) at (\xstep, \ystep) {};
    
    % a_2(b_1)
    \node (a2b1pre) at (2*\xstep, 4*\ystep) {};
    \node[disj,yshift=\ysml] (a2b1) at (a2b1pre) {};
    % a_2(b_2) = b_2(a_2)
    \node[hit] (a2b2) at (2*\xstep,3*\ystep) {};
    % a_2(c_1) = c_1(a_2)
    \node[hit] (a2c1) at (2*\xstep,2*\ystep) {};
    % a_2(c_2)
    \node (a2c2pre) at (2*\xstep, \ystep) {};
    \node[disj,yshift=\ysml] (a2c2) at (a2c2pre) {};

    % t_A
    \node[term] (tA) at (1.5*\xstep, 0) {};

    % s_B
    \node[term] (sB) at (0, 3.5*\ystep) {};
    % b_1(a_1)
    \node[disj,xshift=\xsml] (b1a1) at (a1b1pre) {};
    % b_1(a_2)
    \node[disj,xshift=\xsml] (b1a2) at (a2b1pre) {};
    % b_1(c_1)
    \node (b1c1pre) at (3*\xstep, 2*\ystep) {};
    \node[disj,yshift=\ysml] (b1c1) at (b1c1pre) {};
    % b_1(c_2) = c_2(b_1)
    \node[hit] (b1c2) at (3*\xstep, \ystep) {};
    % b_2(a_1)
    \node[disj,xshift=\xsml] (b2a1) at (a1b2pre) {};
    % b_2(a_2) = a_2(b_2)
    \node[hit] (b2a2) at (a2b2) {};
    % b_2(c_1) = c_1(b_2)
    \node[hit] (b2c1) at (4*\xstep, 2*\ystep) {};
    % b_2(c_2)
    \node (b2c2pre) at (4*\xstep, \ystep) {};
    \node[disj,yshift=\ysml] (b2c2) at (b2c2pre) {};

    % t_B
    \node[term] (tB) at (3.5*\xstep, 0) {};

    % s_C
    \node[term] (sC) at (0, 1.5*\ystep) {};
    % c_1(a_1)
    \node[disj,xshift=\xsml] (c1a1) at (a1c1pre) {};
    % c_1(a_2) = a_2(c_1)
    \node[hit] (c1a2) at (a2c1) {};
    % c_1(b_1)
    \node[disj,xshift=\xsml] (c1b1) at (b1c1pre) {};
    % c_1(b_2) = b_2(c_1)
    \node[hit] (c1b2) at (b2c1) {};
    % c_2(a_1) = a_1(c_2)
    \node[hit] (c2a1) at (a1c2) {};
    % c_2(a_2)
    \node[disj,xshift=\xsml] (c2a2) at (a2c2pre) {};
    % c_2(b_1)
    \node[hit] (c2b1) at (b1c2) {};
    % c_2(b_2)
    \node[disj,xshift=\xsml] (c2b2) at (b2c2pre) {};

    % t_C
    \node[term] (tC) at (5*\xstep, 1.5*\ystep) {};

    % Edges exiting s_A
    \node (sAa1pre) at (\xstep, 4.5*\ystep) {};
    \draw[dedge] (sA) -- (sAa1pre.center) -- (a1b1);

    \node (sAa2pre) at (2*\xstep,4.5*\ystep) {};
    \draw[dedge] (sA) -- (sAa2pre.center) -- (a2b1);

    % a_1 path
    \draw[dedge] (a1b1) -- (a1b2);
    \draw[dedge] (a1b2) -- (a1c1);
    \draw[hedge] (a1c1) -- (a1c2);

    % a_2 path
    \draw[hedge] (a2b1) -- (a2b2);
    \draw[hedge] (a2b2) -- (a2c1);
    \draw[dedge] (a2c1) -- (a2c2);

    % Edges entering t_A
    \node (a1tApre) at (\xstep, 0.5*\ystep) {};
    \node (a2tApre) at (2*\xstep, 0.5*\ystep) {};

    \draw[tedge] (a1c2) -- (a1tApre.center) -- (tA);
    \draw[tedge] (a2c2) -- (a2tApre.center) -- (tA);

    % Edges exiting s_B
    \node (sBb1pre) at (0.5*\xstep, 4*\ystep) {};
    \draw[dedge] (sB) -- (sBb1pre.center) -- (b1a1);

    \node (sBb2pre) at (0.5*\xstep,3*\ystep) {};
    \draw[dedge] (sB) -- (sBb2pre.center) -- (b2a1);

    % b_1 path
    \draw[dedge] (b1a1) -- (b1a2);
    \node (b1c1anchor) at (3*\xstep, 3*\ystep - \ysml) {};
    \draw[dedge] (b1a2) -- (b1c1anchor.center) -- (b1c1);
    \draw[hedge] (b1c1) -- (b1c2);

    % b_2 path
    \draw[hedge] (b2a1) -- (b2a2);
    \node (b2c1anchor1) [xshift=\xstep+\xsml] at  (b2a2) {};
    \node (b2c1anchor2) [yshift=-\ysml] at  (b2c1) {};
    \draw[hedge] (b2a2) -- (b2c1anchor1.center) -- (b2c1anchor2.center) -- (b2c1);
    \draw[dedge] (b2c1) -- (b2c2);

    % Edges entering t_B
    \node (b1tBpre) at (3*\xstep, 0.5*\ystep) {};
    \node (b2tBpre) at (4*\xstep, 0.5*\ystep) {};

    \draw[tedge] (b1c2) -- (b1tBpre.center) -- (tB);
    \draw[tedge] (b2c2) -- (b2tBpre.center) -- (tB);

    % Edges exiting s_C
    \node (sCc1pre) at (0.5*\xstep, 2*\ystep) {};
    \draw[dedge] (sC) -- (sCc1pre.center) -- (c1a1);

    \node (sCc2pre) at (0.5*\xstep, \ystep) {};
    \draw[hedge] (sC) -- (sCc2pre.center) -- (c2a1);

    % c_1 path
    \draw[hedge] (c1a1) -- (c1a2);
    \draw[dedge] (c1a2) -- (c1b1);
    \draw[hedge] (c1b1) -- (c1b2);
    % c_2 path
    \draw[dedge] (c2a1) -- (c2a2);
    \draw[hedge] (c2a2) -- (c2b1);
    \draw[dedge] (c2b1) -- (c2b2);

    % Edges entering t_C
    \node (c1tCpre) at (4.5*\xstep, 2*\ystep) {};
    \node (c2tCpre) at (4.5*\xstep, \ystep) {};

    \draw[tedge] (c1b2) -- (c1tCpre.center) -- (tC);
    \draw[tedge] (c2b2) -- (c2tCpre.center) -- (tC);

    %% Terminal node labels
    \node[above,midnight] at (sA) {$s_1$};
        \node[left,midnight] at (sB) {$s_2$};
        \node[left,midnight] at (sC) {$s_3$};
        \node[below,midnight] at (tA) {$t_1$};
        \node[below,midnight] at (tB) {$t_2$};
        \node[right,midnight] at (tC) {$t_3$};

    \node[cdblue, xshift=-\ldist, yshift=\ldist] at (sAa1pre) {$a_1$};
    \node[cdblue, xshift=\ldist, yshift=\ldist] at (sAa2pre) {$a_2$};

    \node[cdblue, xshift=-\ldist,yshift=\ldist] at (sBb1pre) {$b_1$};
    \node[cdblue, xshift=-\ldist,yshift=-\ldist] at (sBb2pre) {$b_2$};

    \node[cdblue, xshift=-\ldist, yshift=\ldist] at (sCc1pre) {$c_1$};
    \node[cdblue, xshift=-\ldist, yshift=-\ldist] at (sCc2pre) {$c_2$};

    % Highlight a1 path
    \begin{scope}[on background layer]
        \draw[Yellow!40!white, line width = 8pt, rounded corners] (sA.center) -- (sAa1pre.center) -- (a1tApre.center) -- (tA.center);
    \end{scope}
    % Highlight b1 path
    \begin{scope}[on background layer]
        \draw[Yellow!40!white, line width = 8pt, rounded corners] (sB.center) -- (sBb1pre.center) -- (b1a2.center) -- (b1c1anchor.center) -- (b1tBpre.center) -- (tB.center);
    \end{scope}
    % Highlight c1 path
    \begin{scope}[on background layer]
        \draw[Yellow!40!white, line width = 8pt, rounded corners] (sC.center) -- (sCc1pre.center) -- (c1tCpre.center) -- (tC.center);
    \end{scope}
\end{scope}
    
\end{tikzpicture}
    \caption{An example of the reduction from \textsf{$k$-Clique} to \textsf{$k$-DSP} for $k=3$ and $n=2$.
    The \textsf{$k$-Clique} instance $G$ has vertex parts $V_1 = \set{a_1, a_2}$, $V_2 = \set{b_1, b_2}$, and $V_3 = \set{c_1, c_2}$. 
    The unique triangle $(a_1,b_1,c_1)$ in $G$ on the left has  highlighted bold edges. The vertices in this triangle are mapped to the disjoint highlighted paths in $G'$ on the right. For vertices $v,w$ in different parts of $G$, if $(v,w)$ is an edge, the paths for $v$ and $w$ in $G'$ pass by each other with blue edges (since $v(w)\neq w(v)$ in $G'$), and if $(v,w)$ is not an edge, the paths for $v$ and $w$ intersect at the orange node $v(w) = w(v)$. }
    \label{fig:dsp-reduction}
\end{figure}
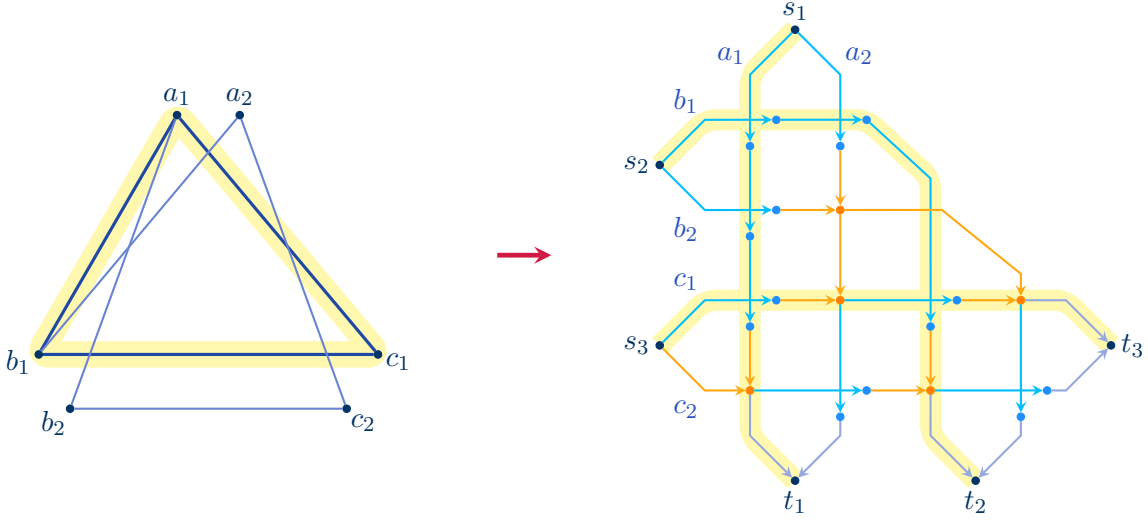

For each $i\in[k]$, graph $G'$ has a source node $s_i$ and a target node $t_i$.

For each choice of vertices $v,w$ from different parts, we introduce node $v(w)$ in $G'$.
% Intuitively, the nodes $v(w)$ and $w(v)$ will form a gadget checking if $(v,w)$ is an edge in $G$.

For every vertex $v\in V_i$, the graph $G'$ includes a path $P(v)$ from $s_i$ to $t_i$, whose internal nodes are all the nodes  of the form $v(w)$ where $w\in\bar{V}_i$, traversed according to the order of $\bar{V}_i$.

Finally, for each pair $(v,w)$ which is not an edge in $G$, we identify the nodes $v(w)=w(v)$ in $G'$.

This completes the construction of $G'$.

\paragraph{Correctness}

To prove our reduction is correct, we characterize shortest paths in $G'$.

\begin{lemma}
\label{canonical-sp}
    The $(s_i,t_i)$-shortest paths in $G'$ are precisely the $P(v)$ paths for vertices $v\in V_i$.
\end{lemma}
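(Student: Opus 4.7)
My plan is to establish both directions of the claimed equality: every $P(v)$ with $v \in V_i$ is an $(s_i,t_i)$-shortest path in $G'$, and conversely every $(s_i,t_i)$-shortest path in $G'$ equals some $P(v)$. By construction each $P(v)$ for $v \in V_i$ has exactly $|\bar{V}_i| = (k-1)n$ internal nodes, and thus length $L := (k-1)n + 1$. So it suffices to prove that every $(s_i,t_i)$-path in $G'$ has length at least $L$, and that this bound is attained only by paths of the form $P(v)$ for $v \in V_i$.

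For the lower bound, I plan to compute BFS distances from $s_i$ in $G'$, and show by induction over the ordering of $\bar{V}_i$ the following core claim: for every $v \in V_i$ and every $w \in \bar{V}_i$, the distance from $s_i$ to the node $v(w)$ in $G'$ equals the position of $w$ in the ordering of $\bar{V}_i$. The base case is immediate, since the only out-neighbors of $s_i$ are precisely the nodes $v(w_1)$ for $v \in V_i$, where $w_1$ is the first vertex of $\bar{V}_i$. The inductive step is straightforward for non-shared nodes, whose only in-edge is from the preceding node on $P(v)$. The shared case $v(w) = w(v)$ (with $(v,w)$ a non-edge and $w \in V_j$ for some $j \neq i$) requires ruling out shortcuts via $P(w)$; I argue that to reach $w(v)$ along $P(w)$ from $s_i$, a path must have previously crossed some shared node $v'(w') = w'(v')$ with $v' \in V_i$ and $w' \in V_j$, and a careful counting of positions along each track shows such detours cannot be shorter than the direct route along $P(v)$. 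Once the BFS distances at all $v(w)$ are established, and since $t_i$'s only in-neighbors are nodes at position $(k-1)n$ on some $P(v)$, the distance from $s_i$ to $t_i$ is exactly $L$, confirming $P(v)$ are $(s_i,t_i)$-shortest paths.

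For the uniqueness direction, I would observe that any $(s_i,t_i)$-path of length exactly $L$ must traverse edges that each increment the distance-from-$s_i$ by exactly one. Starting at $s_i$, the first such edge enters some $P(v)$ for $v \in V_i$; and inductively the path must continue along $P(v)$, since switching off $P(v)$ at any shared node $v(w) = w(v)$ would land on a node whose BFS distance from $s_i$ does not match the required increment, contradicting length-optimality. I expect the main obstacle to be the BFS argument at shared nodes, because ruling out shortcuts requires tracing how a path in $G'$ can jump between tracks $P(v')$ (with $v' \in V_i$) and $P(w)$ (with $w \notin V_i$) and then return to a $V_i$-track; the argument crucially exploits the fact that shared nodes exist only at non-edges of the input \textsf{$k$-Clique} instance and that the ordering on $V$ places all of $V_1, \dots, V_{i-1}$ before $V_i$ and all of $V_{i+1}, \dots, V_k$ after it, which constrains the positions where such jumps can occur.
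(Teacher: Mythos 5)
Your plan mirrors the paper's argument, but runs the distance computation \emph{from} $s_i$ rather than \emph{to} $t_i$. Since $G'$ is invariant under simultaneously reversing all edges and reversing the ordering on $V$, both directions should in principle succeed, and your uniqueness step is a correct consequence once the distance characterization is established. However, the inductive claim you state is too weak to close the induction, and the structural lemma you lean on at the shared-node case is false.

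First, bounding only $\dist(s_i, v(w))$ for $v \in V_i$ and $w \in \bar{V}_i$ does not cover all the in-edges you must analyze. At a shared node $v(w) = w(v)$ with $w \in V_j$, the second in-edge arrives from $w(v^-)$, where $v^-$ is the predecessor of $v$ in $\bar{V}_j$. When $v$ is the first vertex of $V_i$, $v^-$ lies in some $V_{i''}$ with $i'' < i$, $i'' \neq j$, and $w(v^-)$ is a node neither of the form $v''(w'')$ with $v'' \in V_i$ nor $w''(v'')$ with $v'' \in V_i$, so it is simply not covered by your hypothesis. The paper's version tracks both $\dist(v(w), t_i)$ and $\dist(w(v), t_i)$ simultaneously in a nested induction (outer over $\pos(w)$, inner over $v$ along $V_i$) precisely to control the $P(w)$-edges; you would need the symmetric claim $\dist(s_i, w(v)) \ge \pos'(w)$ proved alongside your bound on $\dist(s_i, v(w))$, and even that leaves the boundary case above to handle.

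Second, the claim you invoke --- that any path from $s_i$ reaching $w(v)$ along $P(w)$ must have previously crossed a shared node $v'(w') = w'(v')$ with $v' \in V_i$ \emph{and} $w' \in V_j$ --- is not correct. From $s_i$ the path first sits on some $P(v_1)$ with $v_1 \in V_i$ and may leave the $V_i$-tracks at a shared node $v_1(u) = u(v_1)$ with $u \in V_{j'}$ for $j' \notin \{i,j\}$; it can then chain through further tracks $P(u_2), P(u_3), \dots$ with $u_\ell \notin V_i$ before entering $P(w)$ at a node $u_\ell(w) = w(u_\ell)$ with $u_\ell \notin V_i$. The only shared node on such a path involving a $V_i$-track then has its other coordinate in $V_{j'}$, not $V_j$. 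So the ``careful counting of positions along each track'' cannot rest on that simplification; it is precisely where the content of the proof lies, and as written your proposal identifies the obstacle but does not surmount it.
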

\begin{proof}
Fix $i\in[k]$.

Given a vertex $w\in \bar{V}_i$, let $\pos(w)$ denote the position of $w$ in $\overline{V}$ from the end of this list.
For example, if $w$ is the final vertex in $\overline{V}_i$ we set $\pos(w) = 1$, and if $w$ is the first vertex in $\bar{V}_i$, we set $\pos(w) = |\bar{V}_i| = (k-1)n$.

Given nodes $a$ and $b$ in $G'$, in this proof we let $\dist(a,b)$ denote the distance from $a$ to $b$ in $G'$.

\begin{claim}
\label{claim:distance}
    For every $v\in V_i$ and $w\in \bar{V}_i$,
    we have 
    \begin{equation}
    \label{induction}
    \dist(v(w),t_i) \ge \pos(w)\quad\text{and}\quad \dist(w(v),t_i)\ge \pos(w).
    \end{equation}
\end{claim}
\begin{claimproof}
    We prove the inequalities by induction on the value of $\pos(w)$.

    For the base case, suppose $\pos(w) = 1$.
    Then $v(w)\neq t_i$ and $w(v)\neq t_i$, so the distance from these nodes to $t_i$ is at least $1$ as claimed.

    For the inductive step, suppose now that $\pos(w) \ge 2$, and we have already proven the claim for all vertices occurring after $w$ in $\bar{V}_i$.
    
    % The node $v(w)$ has at most two edges exiting it in $G'$.
    
    % The path $P(v)$ always has an edge exiting $v(w)$.
    % Second, if $v(w) = w(v)$ (because $(v,w)$ is not an edge in $G$), then the path $P(w)$ contains an edge exiting $v(w)$ as well.
    % No other edges exit $v(w)$.
    The only edges exiting $v(w)$ and $w(v)$ come from the paths $P(v)$ and $P(w)$.
    We consider these two cases separately below.

Suppose we take a path from $v(w)$ or $w(v)$ which begins with an edge from $P(v)$.
Let $w'$ be the node after $w$ in $\bar{V}_i$.
Then $v(w')$ is the second vertex in this path.
By the induction hypothesis, 
\[\dist(v(w'),t_i) \ge \pos(w') = \pos(w)-1.\]
By the above equation, any path from $v(w)$ or $w(v)$ to $t_i$ in this case has length at least 
    \[\dist(v(w'),t_i) + 1  \ge \pos(w).\]

Suppose now instead that we take a path from $v(w)$ or $w(v)$ to $t_i$ which begins with an edge from $P(w)$.
No such path exists if $v$ is the final vertex in $V_i$, so we may assume that $v$ is not the final vertex in $V_i$.
Let $v'$ be the vertex after $v$ in $V_i$.
Then $w(v')$ is the second vertex in our path.
By the induction hypothesis, 
\[\dist(w(v'),t_i)\ge \pos(w)\]
which implies that the path in this case has length at least $\pos(w)$ as well.

So in either case, \cref{induction} holds.
This completes the induction, and proves the desired result.
\end{claimproof}

By \Cref{claim:distance}, for any $v\in V_i$ and $w\in\bar{V}_i$ we have 
    \[\dist(v(w),t_i)\ge \pos(w).\]

The $v(w)$ to $t_i$ subpath of $P(v)$ shows that $\dist(v(w),t_i)\le \pos(w)$, so in fact we have 
    \[\dist(v(w),t_i)= \pos(w).\]

Now, every out-neighbor of $s_i$ is of the form $v(w^\star)$, where $v\in V_i$ and $w^\star$ is the first node in $\bar{V}_i$.
Then by the above discussion,
    \begin{equation}
    \label{eq:sp-length}
    \dist(s_i,t_i) = |\bar{V}_i| + 1.
    \end{equation}
This immediately implies that for all $v\in V_i$, each $P(v)$ is an $(s_i,t_i)$-shortest path.

It remains to show that these are the only $(s_i,t_i)$-shortest paths in $G'$.

To that end, let $Q$ be an $(s_i,t_i)$-shortest path in $G'$.
Then the the second vertex in $Q$ is of the form $v(w^\star)$, for $v\in V_i$ and $w^\star$ defined as above.
Let $w$ be the last vertex in $\bar{V}_i$ with the property that the $s_i$ to $v(w)$ subpaths of $Q$ and $P(v)$ agree.

Suppose $w$ is not the final vertex in $\bar{V}_i$.
In case, immediately after $v(w)$, path $Q$ must traverse an edge of $P(w)$.

If $v$ is the final vertex in $V_i$, then this edge brings $Q$ to a node of the form $w(v')$, where $v'\in V_j$ for some $j > i$.
However, there is no path from such a node $w(v')$ to $t_i$.
Since we assumed $Q$ is a path to $t_i$, this case cannot occur.

So $v$ is not the final vertex in $V_i$.
Let $v'$ be the vertex after $v$ in $V_i$.

Then using the edge of $P(w)$, $Q$ goes from node $v(w)$ to $w(v')$.
By \Cref{claim:distance}, we know that 
    \[\dist(w(v'),t_i)\ge \pos(w).\]

So the $w(v')$ to $t_i$ subpath of $Q$ has length at least $\pos(w)$.

However, we know that $P(v)$ and $Q$ agree up to $v(w)$, so the $s_i$ to $v(w)$ subpath of $Q$ has length 
    \[|\bar{V}_i| - \pos(w) + 1.\]
    
The path $Q$ consists of these two subpaths and the edge from $v(w)$ to $w(v')$.

Thus, $Q$ has length at least 
\[\grp{|\bar{V}_i| - \pos(w) + 1} + 1 + \pos(w) = |\bar{V}_i| + 2\]

which is greater than $\dist(s_i,t_i)$ by \cref{eq:sp-length}.

This violates the definition of $Q$ as a shortest path.
So $w$ must be the final vertex in $\bar{V}_i$, which forces $Q = P(v)$.
So all $(s_i,t_i)$-shortest paths in $G'$ are of the desired form.
\end{proof}

We are now ready to prove our lower bound for \textsf{$k$-DSP}.

% \begin{lemma}
%     \label{lbdsp}
%     There is a reduction from \textsf{$k$-Clique} to \textsf{$k$-DSP} on a DAG with $O((kn)^2)$ vertices, that runs in $O((kn)^2)$ time.
% \end{lemma}
\lbdspboth*
\begin{proof}
    Let $G = V_1\sqcup\dots\sqcup V_k$ instance of \textsf{$k$-Clique}.
    Construct the graph $G'$ defined in this section in $O((kn)^2)$ time.
    By definition, $G'$ has $O((kn)^2)$ nodes.
    
    We claim that $G$ contains a $k$-clique if and only if $G'$ contains disjoint $(s_i,t_i)$-shortest paths.

    Indeed, suppose $G$ contains a clique of the form $(v_1, \dots, v_k)\in V_1\times\dots\times V_k$.
    
    Take the $P(v_i)$ paths.
    By \Cref{canonical-sp} these are $(s_i,t_i)$-shortest paths.
    
    The internal nodes of $P(v_i)$ are of the form $v_i(w)$, for $w\in \bar{V}_i$.
    
    Thus the only way paths $P(v_i)$ and $P(v_j)$ could intersect for $i\neq j$ is if $v_i(v_j) = v_j(v_i)$ in $G'$.
    However, vertices $v_i$ and $v_j$ belong to a clique, which means $(v_i,v_j)$ is an edge in $G$, so $v_i(v_j)\neq v_j(v_i)$.

    Hence the $P(v_i)$ are vertex-disjoint paths.

    Conversely, suppose $G'$ contains disjoint $(s_i,t_i)$-shortest paths.
    By \Cref{canonical-sp}, these paths are of the form $P(v_i)$ for some vertices $v_i\in V_i$.
    Since these paths are vertex-disjoint, we must have $v_i(v_j)\neq v_j(v_i)$ for all $i\neq j$.
    This means that $(v_i,v_j)$ are edges in $G$ for all $i\neq j$.
    
    Thus $(v_1,\dots, v_k)$ forms a clique in $G$, which proves the desired result.    
\end{proof}

% Our lower bound for \textsf{$k$-DSP} in undirected graphs is very similar to our proof of \Cref{lbdsp} above, using a slight modification from \cite[Section 4]{Geometric-Lens}, so we only sketch its proof.

% \begin{lemma}
%     \label{lbdspundir}
%     There is a reduction from \textsf{$k$-Clique} on a $kn$-vertex graph to \textsf{$k$-DSP} on an undirected graph with $O((kn)^2)$ vertices, that runs in $O((kn)^2)$ time.
% \end{lemma}
% \begin{proof}[Proof sketch.]
%         Let $G = V_1\sqcup\dots\sqcup V_k$ instance of \textsf{$k$-Clique}.
%     Construct the graph $G'$ defined in this section in $O((kn)^2)$ time.
%     Modify $G'$ to be undirected, by ignoring the orientation of each its edges. 
%     This reduction does not work as written, because $G'$ can contain $(s_i,t_i)$-shortest paths which use edges incident to terminals $s_j$ and $t_j$ for $j\neq i$.
%     To prevent this from happening, modify $G'$ further by subdividing each edge incident to terminal $kn-1$ times (i.e., we replace each edge incident to a source $s_i$ or a target $t_i$ in $G'$ with a new path of length $kn$).

%     With these changes,  \Cref{canonical-sp} once again holds for the undirected graph $G'$.
    
%     Then similar reasoning to the proof of \Cref{lbdsp} proves the desired result.
% \end{proof} 

\subsection{Disjoint Paths}

Our goal in this section is to prove the following theorem.

\lbdp*

To establish these reductions, we first introduce the notion of a \emph{covering family} of increasing lists, and then apply the construction of such a family to the reduction framework of \cite{Slivkins2010}.

% \subsubsection*{The Increasing Lists Puzzle}

\subsubsection*{Covering Pairs with Lists}

% \begin{framed}
%     \noindent {\bf Increasing Lists Covering Family}\\
%     Given a positive integer $k$, we say a collection what is the smallest positive integer $\lambda = \lambda(k)$ such that there exists $\lambda$ lists of increasing integers with the property that for all integers $i,j$ with $1\le i < j\le k$, the pair $(i,j)$ appears as consecutive members of some list? 
% \end{framed}
% The increasing lists puzzle is defined as follows: Given a positive integer $k$, what is the minimum  $\ell$ such that there exist $\ell$ lists of increasing positive integers with the property that for all positive integers $i,j$ with $i < j\leq k$, the pair $(i,j)$ appear as consecutive members in some list?

\begin{definition}[Covering by Increasing Lists]
    Given a positive integer $k$, we say a collection of lists $\cal{L}$ of increasing integers is a \emph{$k$-covering family} if for all integers $i,j$ with $1\le i<j \le k$, the integers $i$ and $j$ appear as consecutive members of some list in $\mathcal{L}$.
    We let $\lambda(k)$ denote the minimum possible number of lists in a $k$-covering family.
\end{definition}

% \noindent Throughout this section, as above, we let $\lambda(k)$ denote the solution to the {\bf Increasing Lists Puzzle} for the input integer $k$.

The following lemma lower bounds the minimum possible size of a $k$-covering family. 

\begin{lemma}
    \label{lem:atleast-list}
        We have $\lambda(k)\ge \lfloor k^2/4\rfloor$.
\end{lemma}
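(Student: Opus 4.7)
The plan is to establish this lower bound via a short counting argument that exploits the increasing structure of each list. Specifically, I would split $[k]$ into two consecutive blocks $A = \{1, 2, \ldots, \lfloor k/2 \rfloor\}$ and $B = \{\lfloor k/2 \rfloor + 1, \ldots, k\}$, and call a pair $(i,j)$ with $1 \le i < j \le k$ a \emph{crossing pair} if $i \in A$ and $j \in B$. A direct count gives exactly $|A| \cdot |B| = \lfloor k/2 \rfloor \cdot \lceil k/2 \rceil = \lfloor k^2/4 \rfloor$ crossing pairs, so the target bound is precisely the number of such pairs.

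The heart of the argument would then be to show that every list $L$ in a $k$-covering family $\mathcal{L}$ covers at most one crossing pair as a pair of consecutive members. This is essentially immediate from the definitions: because $L$ is strictly increasing and every element of $A$ is strictly less than every element of $B$, the entries of $L$ lying in $A$ must form a prefix of $L$ and those lying in $B$ must form a suffix. Hence the only position in $L$ at which a consecutive pair can straddle the $A/B$ boundary is the unique transition from the last $A$-element of $L$ to the first $B$-element of $L$, contributing at most one crossing pair.

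Combining the two observations with the covering condition immediately yields $|\mathcal{L}| \ge \lfloor k^2/4 \rfloor$, and hence $\lambda(k) \ge \lfloor k^2/4 \rfloor$, since each of the $\lfloor k^2/4 \rfloor$ crossing pairs must appear consecutively in some list and each list covers at most one. I do not anticipate any serious technical obstacle in this proof; the entire argument is a one-shot pigeonhole, and the only real creative step is identifying the balanced midpoint bipartition that makes the count come out to exactly $\lfloor k^2/4 \rfloor$. The sole item requiring mild care in the write-up is handling the parity of $k$ consistently, so that $|A|\cdot|B|$ is verified to equal $\lfloor k^2/4 \rfloor$ in both the even and odd cases.
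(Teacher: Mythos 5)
Your proof is correct and takes essentially the same approach as the paper: both use the balanced bipartition $A = [1,\lfloor k/2\rfloor]$, $B = [\lfloor k/2\rfloor+1,k]$, count $|A|\,|B| = \lfloor k^2/4\rfloor$ crossing pairs, and observe that since each list is increasing, its $A$-elements form a prefix and its $B$-elements a suffix, so at most one consecutive pair per list can cross the boundary.
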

\begin{proof}
Partition the set of the first $k$ positive integers into sets 
    \[[k] = A\sqcup B\]
of consecutive positive integers 
    \[A = [1, \lfloor k/2\rfloor] \quad \text{and}\quad B = [\lfloor k/2\rfloor + 1, k]\]
of size $\lfloor k/2 \rfloor$ and $\lceil k/2 \rceil$ respectively. 
We claim that it cannot be the case that for $a,a'\in A$ and $b,b'\in B$ both $(a,b)$ and $(a',b')$ appear consecutively in the same list $L$. 
This is because $L$ must be increasing, so both $a$ and $a'$ must appear before both $b$ and $b'$ in $L$. 
This means that whichever of $a$ or $a'$ appears first, cannot appear consecutively with $b$ or $b'$.

It follows that the number of lists needed to cover all pairs $(i,j)\in [k]^2$ with $i<j$ is at least 
    \[|A||B| = \lfloor k/2\rfloor\lceil k/2\rceil = \lfloor k^2/4\rfloor\]
    
    as claimed.
\end{proof}

In light of \Cref{lem:atleast-list}, the following lemma shows that we can efficiently construct a minimum size $k$-covering family (and that $\lambda(k) = \lfloor k^2/4\rfloor$).
This construction will be helpful in designing a reduction from \textsf{$k$-Clique}.
The proof of this lemma is due to an anonymous reviewer. 

\begin{lemma}
    \label{puzzle-solution}
    There is an algorithm which, given a positive integer $k$, runs in $O(k^3)$ time and constructs a 
     $k$-covering family $\mathcal{L}$ of $\lfloor k^2/4\rfloor$ lists with elements from $[k]$, with the property that each integer in $[k]$ appears in fewer than $k$ lists of $\mathcal{L}$.
\end{lemma}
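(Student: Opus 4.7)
The plan is to give an explicit construction of a $k$-covering family of size $\lfloor k^2/4\rfloor$ satisfying the desired appearance bound, guided by the partition used in the proof of \Cref{lem:atleast-list}. Let $m = \lfloor k/2\rfloor$, put $A = \{1, \dots, m\}$ and $B = \{m+1, \dots, k\}$, and write $b_c = m + c$ for the $c$-th element of $B$. For each $(a, c) \in [m] \times [k-m]$, I define the increasing list $L_{a,c}$ to contain $a$ followed immediately by $b_c$, prepend $c$ before $a$ exactly when $c < a$, and append $b_{c+a}$ after $b_c$ exactly when $c + a \leq k - m$. Each list has at most four elements, so writing down all $m(k-m) = \lfloor k^2/4\rfloor$ of them takes $O(k^2)$ time, well within the $O(k^3)$ budget.

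Two properties must be checked. For covering: every cross pair $(a, b_c)$ is consecutive in $L_{a, c}$ by construction; each within-$A$ pair $(i, j)$ with $i < j \leq m$ is consecutive in $L_{j, i}$, whose $A$-portion is exactly $\{i, j\}$; and each within-$B$ pair $(b_i, b_j)$ with $i < j \leq k - m$ is consecutive in $L_{j-i,\, i}$, whose $B$-portion is exactly $\{b_i, b_j\}$. The only subtle index bound here is $j - i \leq m$, which follows from $\lceil k/2\rceil - 1 \leq \lfloor k/2\rfloor$. For the appearance bound, a direct double-count shows that each $a^\star \in A$ appears in exactly $(k-m) + (m - a^\star) = k - a^\star \leq k - 1$ lists (as the ``$a$'' coordinate and as the optional prepended ``$c$''), while each $b_{c^\star} \in B$ appears in at most $m + \min(c^\star - 1,\, k - m - 1)$ lists (as $b_c$ and as the optional appended $b_{c+a}$), which is bounded by $2\lfloor k/2\rfloor \leq k-1$ after a short parity casework on whether $k$ is even or odd.

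The main obstacle is engineering the four-slot list structure so that with exactly $\lfloor k^2/4 \rfloor$ lists (one per cross pair) every within-part pair still gets absorbed somewhere and no element ever appears $k$ or more times. The key insight is that every increasing list has a unique transition point from $A$ to $B$, forcing a bijection between lists and cross pairs; the specific assignments $(i, j) \mapsto L_{j, i}$ for within-$A$ pairs and $(b_i, b_j) \mapsto L_{j - i,\, i}$ for within-$B$ pairs then fit into the available slots without collision precisely because $|A| \leq |B| \leq |A| + 1$, which is the same parity fact that makes the lower bound $\lfloor k^2/4\rfloor$ tight.
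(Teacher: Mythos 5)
Your construction is correct but genuinely different from the paper's. The paper takes arithmetic progressions: for each pair $(a,d)$ with $d\le k-1$ and $a\le\min(d,k-d)$, the list $L_{a,d}$ is $a, a+d, a+2d,\dots$ truncated at $k$; these lists have variable length (up to $\Theta(k)$), covering is immediate from the residue $i\bmod(j-i)$, and the appearance bound falls out because for each common difference $d$ any fixed $x$ lies in at most one $L_{a,d}$. You instead index lists by the cross pairs $(a, m+c)\in A\times B$ around the balanced partition $A\sqcup B$ from \Cref{lem:atleast-list}, build lists of length at most four, and absorb the within-$A$ and within-$B$ pairs by optionally hanging one extra element off each end of the cross-pair core. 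The benefit of your route is that it makes the tightness of $\lfloor k^2/4\rfloor$ transparent: the lists are in explicit bijection with the cross pairs that the lower-bound argument already shows must each occupy a distinct list, so upper and lower bound collapse to the same combinatorial object. The paper's AP family is arguably slicker to write and to verify (one closed-form description, a one-line covering argument via residues), but it hides the connection to the extremal structure. One small inaccuracy in your write-up: the intermediate inequality ``$2\lfloor k/2\rfloor\le k-1$'' is false for $k$ even (the left side equals $k$), so ``parity casework'' on that display does not close the gap. It is unnecessary anyway: your own per-element count $m+\min(c^\star-1,\,k-m-1)$ is bounded by $m+(k-m-1)=k-1$ directly, with no casework, and one can check it coincides with the exact count $m+\min(m,c^\star-1)$ over the valid range $c^\star\in[k-m]$ (the ranges where they could disagree turn out to be empty for both parities of $k$). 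With that cosmetic fix the argument is complete.
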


\begin{proof} 

    For each pair $(a,d)$ of positive integers with $d\le k-1$ and $a\le \min(d,k-d)$, define $L_{a,d}$ to be the list of positive integers in $[k]$ whose $(j+1)^{\text{st}}$ term is $(a+jd)$ for each $j\ge 0$.
    Let $\mathcal{L}$ be the collection of all the $L_{a,d}$ lists, for the pairs $(a,d)$ satisfying the aforementioned conditions. 

    We claim that any integer $x\in [k]$ shows up in fewer than $k$ lists of $\mathcal{L}$.
    This is because for any $d\in [k-1]$, we have $x\in L_{a,d}$ only if $a\equiv x\pmod d$.
    Since $a\in [d]$, this means that for each $d$, there is at most one $a$ for which $x\in L_{a,d}$.
    Since $d\in [k-1]$ takes on fewer than $k$ values,  $x$ appears in fewer than $k$ lists of $\mathcal{L}$.

    We now show that $\mathcal{L}$ is a $k$-covering family.
    Take arbitrary integers $i,j\in [k]$ with $i < j$.
    
    Set $d = j-i$, and take $a$ to be the smallest positive integer with $a\equiv i\pmod d$.
    
    By construction, $a\le d$.
    Moreover, we have $a+d\le i+d = j\le k$, so $a\le k-d$ as well.
    So the pair $(a,d)$ corresponds to a list $L_{a,d}$.
    Then $i$ and $j$ are both in $L_{a,d}$ because $j\equiv i\equiv a\pmod d$.
    Finally, because $d = j-i$, we know that $i$ and $j$ are consecutive elements in $L_{a,d}$.
    So $\mathcal{L}$ is a  $k$-covering family as claimed.

    The number of lists in $\mathcal{L}$ is 
        \begin{equation}
        \label{eq:num-lists}
            \sum_{d=1}^{k-1} \min(d,k-d) = \sum_{d=1}^{\lfloor k/2\rfloor} d + \sum_{d=\lfloor k/2\rfloor + 1}^{k-1} (k-d).
        \end{equation}

    We now perform casework on the parity of $k$.

    \textbf{Case 1: $\boldsymbol{k}$ is even}

        If $k$ is even, we can write $k=2\ell$ for some positive integer $\ell$.
        Then \cref{eq:num-lists} simplifies to
            \[(1 + 2 + \dots + \ell) + (1 + 2 + \dots + \ell-1) = \frac{\ell(\ell+1)}{2} + \frac{\ell(\ell-1)}{2} = \ell^2 = \lfloor k^2/4\rfloor.\]

        \textbf{Case 2: $\boldsymbol{k}$ is odd}

    If $k$ is odd, we can write $k=2\ell+1$ for some positive integer $\ell$.
    Then \cref{eq:num-lists} simplifies to

        \[(1 + 2 + \dots + \ell) + (1 + 2 + \dots + \ell) = \ell^2 + \ell = \lfloor k^2/4\rfloor.\]

    In either case, we get that the number of lists in $\mathcal{L}$ is $\lfloor k^2/4\rfloor$ as claimed.

    We can easily construct $\mathcal{L}$ in $O(k^3)$ time by going through each of the $O(k^2)$ pairs $(a,d)$ defining a list $L_{a,d}$ and then listing its at most $O(k)$ elements by starting at $a$, and incrementing by $d$ until we reach an integer greater than $k$.
    This completes the proof of the lemma. 
\end{proof}

\subsubsection*{The Reduction}

% We first note that Slivkins's reduction is to the \textsf{EDP} problem, while our reduction is to the \textsf{DP} problem (and therefore our reduction is slightly stronger due to the known simple reduction from \textsf{DP} to \textsf{EDP}). As a result, our construction is slightly different from Slivkins's and we will present it in a self-contained way instead of referencing his work as a black box. Nevertheless, most of the following ideas are taken directly from his work. Our main novel contribution is the extraction of the increasing lists puzzle from the structure of his reduction, and the previously presented solution to the puzzle.

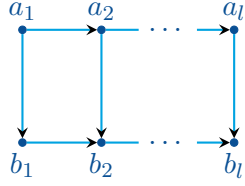
\begin{figure}[t]
    \centering
    %% A single gadget. 
\begin{tikzpicture}[scale=1.5,
gvtx/.style=
{draw=dblue!30!midnight, fill=dblue!30!midnight,
circle, inner sep=1pt, minimum width=1pt},
gedge/.style=
{draw=dsblue!80!midnight, thick, -stealth},
gnoarrow/.style=
{draw=dsblue!80!midnight, thick},
sedge/.style=
{draw=dtang!70!white, thick, -stealth, dashed}]
    %% Horizontal Spacing
    \def\xsml{0.7cm}; % step size within a gadget
    \def\xhalf{0.6*\xsml};
    \def\dotspace{0.4*\xhalf};

    %% Vertical Spacing
    \def\ygad{-1cm}; % vertical length

    \node[gvtx] (a1) at (0,0) {};
    \node[gvtx] (b1) at (0,\ygad) {};
    \node[gvtx] (a2) at (\xsml,0) {};
    \node[gvtx] (b2) at (\xsml,\ygad) {};

    \node (a2post) at (\xsml+\xhalf,0) {};
    \node (b2post) at (\xsml+\xhalf,\ygad) {};

    \node (aendpre) at (\xsml + \xhalf + 2*\dotspace,0) {};
    \node (bendpre) at (\xsml + \xhalf + 2*\dotspace,\ygad) {};

    \node[gvtx] (aend) at (\xsml + 2*\xhalf + 2*\dotspace, 0) {};
    \node[gvtx] (bend) at (\xsml + 2*\xhalf + 2*\dotspace, \ygad) {};

    \draw[gedge] (a1) -- (a2);
    \draw[gedge] (b1) -- (b2);
    
    \draw[gedge] (a1) -- (b1);
    \draw[gedge] (a2) -- (b2);

    \draw[gedge] (aend) -- (bend);

    % Ellipses position
    \node[dblue!30!midnight] at (\xsml + \xhalf + \dotspace, 0) {$\dots$};
    \node[dblue!30!midnight] at (\xsml + \xhalf + \dotspace, \ygad) {$\dots$};

    % Edges to and from ellpises
    \draw[gnoarrow] (a2) -- (a2post);
    \draw[gnoarrow] (b2) -- (b2post);
\draw[gedge] (aendpre) -- (aend);  
\draw[gedge] (bendpre) -- (bend);

% Various node labels
\node[dblue!30!midnight, above] at (a1) {$a_1$};
\node[dblue!30!midnight, below] at (b1) {$b_1$};
\node[dblue!30!midnight, above] at (a2) {$a_2$};
\node[dblue!30!midnight, below] at (b2) {$b_2$};   
\node[dblue!30!midnight, above] at (aend) {$a_l$};
\node[dblue!30!midnight, below] at (bend) {$b_l$};  
\end{tikzpicture}
    \caption{
    Given a vertex $v\in V_i$ from $G$, the gadget graph $G_v$ has top nodes $a_j = a_j(v)$ and bottom nodes $b_j = b_j(v)$ for each $j\in [l]$, where $l=l(i)$ is the number of lists in $\ca{L}$ containing  $i$.}
    \label{fig:k-DP-gadget}
\end{figure}

    Let $G = V_1\sqcup \dots \sqcup V_k$ be an instance of \textsf{$k$-Clique} on $kn$ vertices.
    
    We assume $n\ge 2$, since otherwise the problem is trivial.

    Set $\lambda = \lfloor k^2/4\rfloor$.
    By \Cref{puzzle-solution}, there exist a collection $\ca{L}$ of $\lambda$ increasing lists of integers from $[k]$, such that for every choice of $i,j\in [k]$ with $i < j$, the integers $i$ and $j$ appear as consecutive elements in some list of $\ca{L}$.
    Moreover, we can construct $\ca{L}$ in $O(1)$ time for constant $k$.
    We give this collection of lists some arbitrary order.
    
    Given $i\in [k]$, let $l(i)$ denote the number of lists in $\ca{L}$ containing $i$.
    % For each $r \in [s(i)]$, we let $L_r(i)$ denote the $r^{\text{th}}$ list containing $i$.

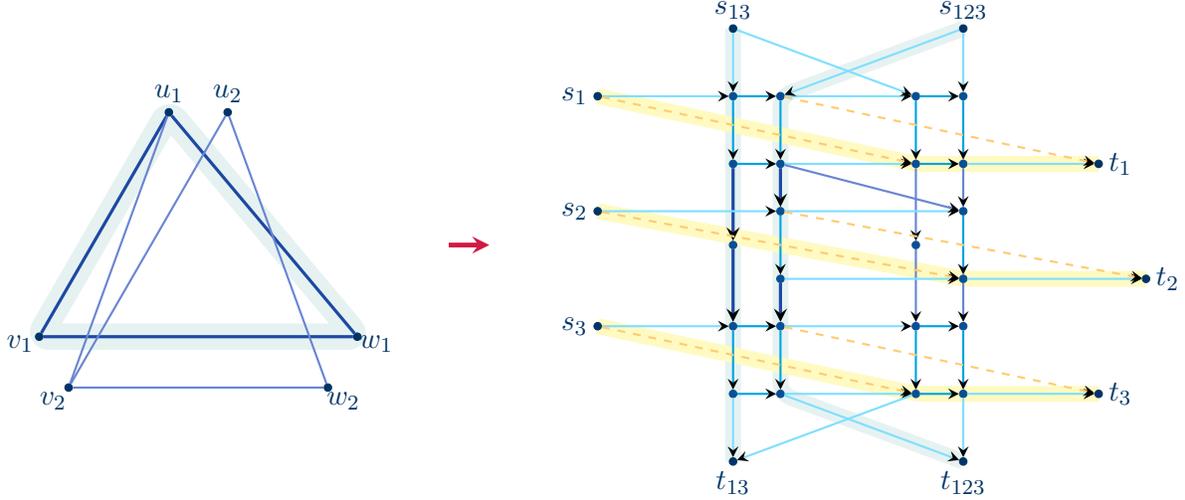
\begin{figure}[t]
    \centering
    \begin{tikzpicture}[scale=0.9,
term/.style=
{draw=midnight, fill=midnight, 
circle, inner sep=1pt, minimum width=1pt},
gvtx/.style=
{draw=dblue!30!midnight, fill=dblue!30!midnight,
circle, inner sep=1pt, minimum width=1pt},
redge/.style=
{draw=dsblue!50!white, thick, -stealth},
tedge/.style=
{draw=dsblue!50!white, thick, -stealth},
gedge/.style=
{draw=dsblue!80!midnight, thick, -stealth},
sedge/.style=
{draw=dtang!60!white, thick, -stealth, dashed},
oedge/.style=
{draw=cdblue!70!white, thick, -stealth},
cedge/.style=
{draw=cdblue!70!midnight, very thick, -stealth},
bedge/.style=
{very thick, cdblue!70!midnight},
lbedge/.style={thick, cdblue!70!white}]
    %% Horizontal Spacing
    \def\xbig{2cm}; % step size between gadgets in a row
    \def\xsml{0.7cm}; % step size within a gadget

    %% Vertical Spacing
    \def\ygad{-1cm}; % vertical length of a gadget
    \def\ybetrow{-0.7cm}; % vertical spacing between different rows (also  the y-coordinate for the top of row 1)
    \def\yterm{1cm};
    % vertical spacing between terminals and rows (remember to take into account the fact that first row has top y-coordinate at \ybetrow)

    %% Horizontal Transition
    \def\extr{-1.4cm};
    \def\extrw{-0.6cm};

    \def\extrlabel{-0.7cm};

    %% Circle Positioning
    \def\crad{2.5cm};
    \def\ycshift{0.5*\yterm};

    \begin{scope}[yshift = 0.5*((\yterm + \ybetrow) + (\ybetrow + 3*\ygad + 2*\ybetrow - \yterm)) - \ycshift, xshift=2*\extr + \extrw - \crad]
    %%% The 3-Clique instance
    % V_1 set
    \node[term] (A1) at (100:\crad) {};
    \node[term] (A2) at (80:\crad) {};

    \node[midnight,above] at (A1) {$u_1$};
    \node[midnight,above] at (A2) {$u_2$};
    % V_2 set
    \node[term] (B1) at (200:\crad) {};    \node[term] (B2) at (220:\crad) {};

    \node[midnight] at (200:1.12*\crad) {$v_1$};
    \node[midnight] at (220:1.12*\crad) {$v_2$};
    % V_3 set
    \node[term] (C1) at (340:\crad) {};    \node[term] (C2) at (320:\crad) {};    

    \node[midnight] at (340:1.12*\crad) {$w_1$};
    \node[midnight] at (320:1.12*\crad) {$w_2$};

    %% Clique edges
    \draw[bedge] (A1) -- (B1);
        \draw[bedge] (A1) -- (C1);
        \draw[bedge] (B1) -- (C1);
    \draw[lbedge] (A2) -- (B2);
    \draw[lbedge] (A2) -- (C2);
    \draw[lbedge] (B2) -- (A1);
    \draw[lbedge] (B2) -- (C2);
    \end{scope}

    \begin{scope}[xshift=2*\extr + \extrw + \extrlabel, yshift = 0.5*((\yterm + \ybetrow) + (\ybetrow + 3*\ygad + 2*\ybetrow - \yterm))]
    %% Transition Arrow
    \draw[line width = 1.8pt, alizarin,stealth-] (\crad,0) -- (\crad + \extrw,0);
    \end{scope}

    %% Row 1
    % s1
    \node[term] (s1) at (0,\ybetrow) {};
    \node[left,midnight] at (s1) {$s_1$};
    % G(a1)
    \node[gvtx] (a11top) at (\xbig, \ybetrow) {};
    \node[gvtx] (a12top) at (\xbig+\xsml, \ybetrow) {};
    \node[gvtx] (a11bot) at (\xbig, \ybetrow+\ygad) {};
    \node[gvtx] (a12bot) at (\xbig+\xsml, \ybetrow+\ygad) {};
    % G(a2)
    \node[gvtx] (a21top) at (2*\xbig+\xsml, \ybetrow) {};
    \node[gvtx] (a22top) at (2*\xbig+2*\xsml, \ybetrow) {};
    \node[gvtx] (a21bot) at (2*\xbig+\xsml, \ybetrow+\ygad) {};
    \node[gvtx] (a22bot) at (2*\xbig+2*\xsml, \ybetrow+\ygad) {};
    % t1
    \node[term] (t1) at (3*\xbig+2*\xsml,\ybetrow+\ygad) {};
    \node[right,midnight] at (t1) {$t_1$};
    % Basic row edges in row 1
    \draw[redge,-stealth] (s1) -- (a11top);
     \draw[redge,-stealth]   (a12top) -- (a21top);
     \draw[redge,-stealth]   (a12bot) -- (a21bot);
      \draw[redge,-stealth]  (a22bot) -- (t1);
    % Gadget edges in row 1
    \draw[gedge] (a11top) -- (a12top);
    \draw[gedge] (a11bot) -- (a12bot);
    \draw[gedge] (a11top) -- (a11bot);
    \draw[gedge] (a12top) -- (a12bot);

    \draw[gedge] (a21top) -- (a22top);
    \draw[gedge] (a21bot) -- (a22bot);
    \draw[gedge] (a21top) -- (a21bot);
    \draw[gedge] (a22top) -- (a22bot);

    % Skip edges in row 1
    \draw[sedge] (s1) -- (a21bot);
    \draw[sedge] (a12top) -- (t1);

    %% Row 3
    \begin{scope}[yshift={2*\ybetrow + 2*\ygad}]
    % s3
    \node[term] (s3) at (0,\ybetrow) {};
    \node[left,midnight] at (s3) {$s_3$};
    % G(c1)
    \node[gvtx] (c11top) at (\xbig, \ybetrow) {};

    \node[gvtx] (a11botc11topmid) at (\xbig,-0.5*\ygad) {};
    % \node[gvtx] (a11c11mid) at (\xbig, -0.5*\ygad)) {};
    
    \node[gvtx] (c12top) at (\xbig+\xsml, \ybetrow) {};
    \node[gvtx] (c11bot) at (\xbig, \ybetrow+\ygad) {};
    \node[gvtx] (c12bot) at (\xbig+\xsml, \ybetrow+\ygad) {};
    % G(c2)
    \node[gvtx] (c21top) at (2*\xbig+\xsml, \ybetrow) {};
    \node[gvtx] (c22top) at (2*\xbig+2*\xsml, \ybetrow) {};
    \node[gvtx] (c21bot) at (2*\xbig+\xsml, \ybetrow+\ygad) {};
    \node[gvtx] (c22bot) at (2*\xbig+2*\xsml, \ybetrow+\ygad) {};

    \node[gvtx] (a21botc21topmid) at (2*\xbig+\xsml,-0.5*\ygad) {};

    % t3
    \node[term] (t3) at (3*\xbig+2*\xsml,\ybetrow+\ygad) {};
    \node[right,midnight] at (t3) {$t_3$};
    % Basic row edges in row 3
    \draw[redge,-stealth] (s3) -- (c11top);
     \draw[redge,-stealth]   (c12top) -- (c21top);
     \draw[redge,-stealth]   (c12bot) -- (c21bot);
      \draw[redge,-stealth]  (c22bot) -- (t3);
    % Gadget edges in row 3
    \draw[gedge] (c11top) -- (c12top);
    \draw[gedge] (c11bot) -- (c12bot);
    \draw[gedge] (c11top) -- (c11bot);
    \draw[gedge] (c12top) -- (c12bot);

    \draw[gedge] (c21top) -- (c22top);
    \draw[gedge] (c21bot) -- (c22bot);
    \draw[gedge] (c21top) -- (c21bot);
    \draw[gedge] (c22top) -- (c22bot);

    % Skip edges in row 3
    \draw[sedge] (s3) -- (c21bot);
    \draw[sedge] (c12top) -- (t3);
    \end{scope}

    %% Terminals for lists
    % s_{13}, t_{13}
    \node[term] (s13) at (\xbig,\ybetrow+\yterm) {};
\node[midnight, above] at (s13) {$s_{13}$};
        \node[term] (t13) at (\xbig,\ybetrow+3*\ygad + 2*\ybetrow - \yterm) {};
\node[midnight, below] at (t13) {$t_{13}$};

    % s_{123}, t_{123}
    \node[term] (s123) at 
    (2*\xbig + 2*\xsml, \ybetrow+\yterm) {};
    \node[midnight, above] at (s123) {$s_{123}$};
    \node[term] (t123) at 
    (2*\xbig + 2*\xsml, \ybetrow + 3*\ygad + 2*\ybetrow -\yterm) {};
    \node[midnight, below] at (t123) {$t_{123}$};

    % Edges exiting s13
    \draw[tedge] (s13) -- (a11top);
    \draw[tedge] (s13) -- (a21top);
    % Edges exiting s123
    \draw[tedge] (s123) -- (a12top);
    \draw[tedge] (s123) -- (a22top);
    % Edges entering t13
    \draw[tedge] (c11bot) -- (t13);
    \draw[tedge] (c21bot) -- (t13);
    % Edges exiting s123
    \draw[tedge] (c12bot) -- (t123);
    \draw[tedge] (c22bot) -- (t123);

    % Edges between rows which we include here to make sure they don't cover elements of row 2
    \draw[cedge] (a11bot) -- (a11botc11topmid);
    \draw[cedge] (a11botc11topmid) -- (c11top);
    \draw[oedge] (a21bot) -- (a21botc21topmid);
    \draw[oedge] (a21botc21topmid) -- (c21top);

    %% Row 2
    \begin{scope}[yshift=\ygad + \ybetrow]
    % s2
    \node[term] (s2) at (0,\ybetrow) {};
    \node[left,midnight] at (s2) {$s_2$};
    % G(b1)
    \node[gvtx] (b1top) at (\xbig+\xsml, \ybetrow) {};
    \node[gvtx] (b1bot) at (\xbig+\xsml, \ybetrow+\ygad) {};
    % G(b2)
    \node[gvtx] (b2top) at (2*\xbig+2*\xsml, \ybetrow) {};
    \node[gvtx] (b2bot) at (2*\xbig+2*\xsml, \ybetrow+\ygad) {};
    % t2
    \node[term] (t2) at (3*\xbig+3*\xsml,\ybetrow+\ygad) {};
    \node[right,midnight] at (t2) {$t_2$};
    % Basic row edges in row 2
    \draw[redge] (s2) -- (b1top);
    \draw[redge] (b1top) -- (b2top);
    \draw[redge] (b1bot) -- (b2bot);
    \draw[redge] (b2bot) -- (t2);
    % Gadget edges in row 2
    \draw[gedge] (b1top) -- (b1bot) {};
    \draw[gedge] (b2top) -- (b2bot) {};
        % Skip edges in row 2
    \draw[sedge] (s2) -- (b2bot);
    \draw[sedge] (b1top) -- (t2);
    \end{scope}
    
    % Remaining edges between rows
    \draw[cedge] (a12bot) -- (b1top);
    \draw[cedge] (b1bot) -- (c12top);

    \draw[oedge] (a12bot) -- (b2top);
    \draw[oedge] (a22bot) -- (b2top);
    \draw[oedge] (b2bot) -- (c22top);

    % Highlighted paths 
    \begin{scope}[on background layer]
        \draw[skobeloff!10!white, line width = 6pt, rounded corners, opacity=0.7] (s13.center) -- (t13.center); 

            \draw[skobeloff!10!white, line width = 6pt, rounded corners, opacity=0.7] (s123.center) -- (a12top.center) -- (c12bot.center) --  (t123.center); 
    \end{scope}

    \begin{scope}[on background layer]
        \draw[Yellow!30!white, line width = 6pt, rounded corners, opacity=0.7] (s1.center) -- (a21bot.center) -- (t1.center);

    \draw[Yellow!30!white, line width = 6pt, rounded corners, opacity=0.7] (s2.center) -- (b2bot.center) -- (t2.center);    

    \draw[Yellow!30!white, line width = 6pt, rounded corners, opacity=0.7] (s3.center) -- (c21bot.center) -- (t3.center);    
    \end{scope}

    % Highlight Triangle
        \begin{scope}[on background layer]
        \draw[skobeloff!10!white, line width = 10pt, rounded corners, opacity=0.7] (A1.center) -- (B1.center) -- (C1.center) -- cycle;
    \end{scope}
\end{tikzpicture}
    \caption{Given an instance $G$ of \textsf{$k$-Clique} for $k=3$ and $n=2$ on the left, with parts $V_1 = \set{u_1, u_2}$, $V_2 = \set{v_1, v_2}$, and $V_3 = \set{w_1, w_2}$, and lists $\langle 1,3\rangle$ and $\langle 1,2,3\rangle$ 
    forming a $k$-covering family 
    for $k=3$, we produce the instance $G'$ of \textsf{$p$-DP} on the right for $p=5$.
    There are three rows of gadgets in $G'$, corresponding to the  parts of $G$.
    The gadgets in rows one and three have two columns because the  integers $1$ and $3$ are covered by two lists, while the gadgets in row two have 
a single column because the integer $2$ is covered by one list.
Note that the paths $\Pi_{13}(u_1,w_1)$ and $\Pi_{13}(u_2,w_2)$ from gadgets in row 1 to row 3 have length two. 
The triangle $(u_1,v_1,w_1)$ in $G$, highlighted on the left, is mapped to the \textsf{$p$-DP}  solution in $G'$ which takes the yellow highlighted paths $P(u_1), P(v_1), P(w_1)$ and the blue highlighted paths from $s_{13}$ to $t_{13}$ and $s_{123}$ to $t_{123}$.
These paths are disjoint because the first three paths skip the gadgets for vertices $u_1$, $v_1$, $w_1$ respectively, which leaves room for the $(s_{13},t_{13})$-path checking for edge $(u_1,w_1)$ in $G$, and the $(s_{123}, t_{123})$-path checking for edges $(u_1,v_1)$ and $(v_1,w_1)$ in $G$. 
} 
    \label{fig:dp-reduction}
\end{figure}

    \subparagraph{Vertex Gadgets}
    \hfill

    For each $i\in [k]$ and  vertex $v\in V_i$, we construct a gadget graph $G_{v}$ as depicted in \Cref{fig:k-DP-gadget}.
    
    For every $r\in [l(i)]$, we include nodes $a_r(v)$ and $b_r(v)$.
    
    We call the $a_r(v)$ and $b_r(v)$ the \emph{top} and \emph{bottom} nodes of $G_v$ respectively.
    
    For every $r\in [l(i)]$, we include edges $e_r(v) = (a_r(v), b_r(v))$.
    For each $r\in [l(i)-1]$, we include  edges from $a_r(v)$ to $a_{r+1}(v)$ and from $b_r(v)$ to $b_{r+1}(v)$.
    
    This completes the description of $G_{v}$.
    
    For convenience, if $L$ is the $r^{\text{th}}$ list containing $v$, we write $a_L(v) = a_r(v)$ and $b_L(v) = b_r(v)$.

    \subparagraph{Arranging Gadgets in Rows}\hfill

    % We now describe how these $G_v$ graphs are arranged and connected in rows.

    For each $i\in [k]$, we list the vertices of $V_i$ in some arbitrary order
        \[v_{i,1}, \dots, v_{i,n}.\]

    For every $j\in [n-1]$, we connect $G_{v_{i,j}}$ to $G_{v_{i,j+1}}$ by  including edges from 
    \[a_{l(i)}(v_{i,j})\text{ to }a_1(v_{i,j+1})\quad \text{and} \quad b_{l(i)}(v_{i,j})\text{ to }b_1(v_{i,j+1}).\]
    
    For every $j\in [n-2]$, we additionally include the edge  from $a_{l(i)}(v_{i,j})$ to $b_{1}(v_{i,j+2})$, which we call the  \emph{skip edge} for $v_{i,j+1}$, because traversing this edge corresponds to skipping the gadget for $v_{i,j+1}$.
    Skip edges are depicted with dashed lines in \Cref{fig:dp-reduction}.
    We view this construction as arranging the $G_v$ gadgets for all $v\in V_i$ in the ``$i^{\text{th}}$ row of $G'$.''

    \subparagraph{Terminal Vertices Checking Nodes}\hfill

    We now introduce source and target nodes and describe how they connect to the $G_v$ gadgets.

For each $i\in [k]$, we introduce new vertices $s_i$ and $t_i$.

    We include edges from $s_i$ to $a_1(v_{i,1})$ and from $b_{l(i)}(v_{i,n})$ to $t_i$.

    Additionally, we include skip edges from $s_i$ to $b_1(v_{i,2})$ and from $a_{l(i)}(v_{i,n-1})$ to $t_i$ (traversing these edges corresponds to skipping $G_{v_{i,1}}$ and $G_{v_{i,n}}$ respectively).

        \subparagraph{Paths Between Rows of Gadgets}\hfill
        
    For every choice of indices $x,y\in [k]$ with $x<y$, list $L\in\ca{L}$ covering $(x,y)$, and vertices $u\in V_x$  and $w\in V_y$ such that $(u,w)$ is an edge in $G$, we include a path 
        $\Pi_L(u,w)$ of length $2(y-x) - 1$ from $b_L(u)$ to $a_L(w)$ in $G'$.
    These paths encode adjacency information about $G$.
    We say the $\Pi_L(u,w)$ are the paths between rows corresponding to list $L$.

    \subparagraph{Terminal Vertices Checking Edges}\hfill

    For each list $L$ in our collection, we introduce source node $s_L$ and target node $t_L$.
    
    Suppose $i$ is the first element in $L$.
    Then for all $v\in V_i$, we have edges from $s_L$ to $a_L(v)$.

    Suppose $j$ is the final element in $L$.
    Then for all $v\in V_j$, we have edges from $a_L(v)$ to $t_L$.

    We observe that any $(s_L,t_L)$-path in $G'$ must  pass through nodes in row $i$ of $G'$ for all $i\in L$.

    \hfill
    
    This completes our construction of the graph $G'$, an example of which is depicted in \Cref{fig:dp-reduction}.
    
    We claim that $G'$ has disjoint paths from its sources to its targets if and only if $G$ has a $k$-clique.

    To prove this result, it will be helpful to first identify some special paths in $G'$.

    For any $i\in [k]$ and $v\in V_i$, we let $P(v)$ be the path which begins at $s_i$, passes through the top nodes of $G_u$ for all  $u\in V_i$ before $v$, then takes the skip edge skipping over $G_v$, passes through the bottom nodes of $G_w$ for all $w\in V_i$ after $v$, and then finally ends at $t_i$.

    \begin{lemma}
        \label{lem:row-sp}
        For every $v\in V$, $P(v)$ is a shortest path in $G'$.
    \end{lemma}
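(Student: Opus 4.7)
The plan is to show that $P(v)$ has length exactly $(n-1)l(i)+1$ when $v \in V_i$, and that no $(s_i,t_i)$-path in $G'$ is shorter.

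First, I would argue that every $(s_i,t_i)$-path lies entirely within ``row $i$'' of $G'$, i.e., the subgraph induced by $s_i$, $t_i$, and the gadgets $G_u$ for $u\in V_i$, together with the skip edges and the between-gadget edges within that row. The graph $G'$ is a DAG whose only inter-row edges are the between-row paths $\Pi_L(u,w)$, all directed from a lower-indexed row $x$ to a strictly higher-indexed row $y$, and whose terminals $s_L$ and $t_L$ are respectively a source and a sink that touch only row nodes. Since the only incoming edges of $t_i$ come from nodes in row $i$, any $(s_i,t_i)$-path that leaves row $i$ can never return, and so it must stay entirely within row $i$.

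Next, I would compute the length of $P(v_{i,j})$ directly. This path uses exactly one skip edge (the initial skip from $s_i$ if $j=1$, the final skip into $t_i$ if $j=n$, or a middle skip for $G_{v_{i,j}}$ otherwise), and otherwise traverses the remaining $n-1$ gadgets of row $i$ entirely on the top or entirely on the bottom. A routine case analysis on the value of $j$ yields $|P(v_{i,j})| = (n-1)l(i)+1$ in every case.

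For the matching lower bound, the key observation is that \emph{at most one} skip edge can appear in any $(s_i,t_i)$-path. Every skip edge exits from a top node and enters at a bottom node, while every non-skip edge inside row $i$ is of the form $a_r\to a_{r+1}$, $b_r\to b_{r+1}$, or $a_r\to b_r$ (both inside a single $G_u$ and between consecutive gadgets of the row). Hence once the path arrives at any bottom node, it can never return to a top node, and so no second skip edge is reachable. A direct count then shows that an $(s_i,t_i)$-path using zero skip edges has length $n\cdot l(i)+2$ (two terminal edges, $n-1$ between-gadget edges, $n(l(i)-1)$ within-gadget edges, and one top-to-bottom transition), while each of the three kinds of skip edges, when used, replaces a subpath of $l(i)+2$ edges by a single edge, producing a path of total length $(n-1)l(i)+1$. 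Combined with the length computation for $P(v)$, this proves the lemma.

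The main technical point is establishing the ``no return from a bottom node'' property that forbids multiple skips; once this is in hand the rest of the proof is a careful but routine edge count, where the only thing to watch is the correct handling of the boundary cases $j=1$ and $j=n$.
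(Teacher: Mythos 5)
Your proposal is correct and follows essentially the same approach as the paper's proof: restrict attention to row $i$, observe that a path can use at most one skip edge (since skip edges go from top to bottom and no edge returns from a bottom node to a top node), and compare lengths to conclude that any skip-using path, and hence $P(v)$, is shortest. Your edge counts are slightly more explicit than the paper's (and in fact more accurate — a one-skip path has $(n-1)l(i)$ internal nodes and $(n-1)l(i)+1$ edges, where the paper reports $(n-1)l(i)+1$ internal nodes), but this does not affect the validity of either argument since only the comparison between the two cases matters.
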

    \begin{proof}
        Fix $v\in V$.
        Let $i\in [k]$ be the index such that $v\in V_i$.

        Recall that for any $j\in [k]$, row $j$ of $G'$ consists of the all gadgets $G_u$ for vertices $u\in V_i$.
        The graph $G'$ is structured so that there is no edge from row $q$ to $p$ whenever $q > p$.
        Then since $s_i$ only has edges to nodes in row $i$ and $t_i$ only has edges from nodes in row $i$, any $(s_i,t_i)$-path in $G'$ must lie completely in row $i$.
        Such a path can use at most one skip edge, since a skip edge goes from a top node to a bottom node, and there are no edges from bottom nodes to top nodes within row $i$.
        If an $(s_i,t_i)$-path uses no skip edges, it has exactly $n\cdot l(i) + 1$ internal nodes.
        If instead the path uses one skip edge, it has exactly $(n-1)\cdot l(i)+1$ internal nodes, because it skips over one gadget.
        
        This implies that any $(s_i,t_i)$-path in $G'$ which uses a skip edge is a shortest path.
        
        Since $P(v)$ uses a skip edge, it is an $(s_i,t_i)$-shortest path as claimed.
    \end{proof}
    %     For any $u\in V_i$, let $\pos(u)$ denote the position of $u$ from the end of the ordered set $V_i$.
        
    %     For example, if $u$ is the final vertex in $V_i$, we set $\pos(u) = 1$, and if $u$ is the first vertex in $V_i$ we set $\pos(u) = |V_i| = n$.

    % \begin{claim}
    %     For every $u\in V_i$, we have 
    % \end{claim}
    % \begin{claimproof}
        
    % \end{claimproof}

    \begin{lemma}[Clique $\Rightarrow$ Disjoint Shortest Paths]
    \label{lem:clique-to-dsp}
        If $G$ has a $k$-clique, then $G'$ has node-disjoint shortest paths from its sources to its targets.
    \end{lemma}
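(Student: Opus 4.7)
The plan is to use the given $k$-clique to route $k+\lambda=p$ pairwise node-disjoint shortest paths in $G'$, which will settle both the \textsf{$p$-DP} and \textsf{$p$-DSP} versions of the theorem at once. Let $(v_1,\ldots,v_k)$ with $v_i\in V_i$ be the clique. For each $i\in[k]$, I would assign $P(v_i)$ as the $(s_i,t_i)$-path; by \Cref{lem:row-sp} it is an $(s_i,t_i)$-shortest path, and by its very definition it skips the gadget $G_{v_i}$ entirely through a single skip edge, so it touches no node of $G_{v_i}$ at all.

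Next, for each list $L=\langle i_1,\ldots,i_t\rangle\in\mathcal{L}$, I would build an $(s_L,t_L)$-path $Q_L$ by starting with the edge $s_L\to a_L(v_{i_1})$, then for $r=1,\ldots,t$ traversing the gadget edge $a_L(v_{i_r})\to b_L(v_{i_r})$, inserting between consecutive gadget traversals the between-row path $\Pi_L(v_{i_r},v_{i_{r+1}})$, and finishing with the terminal edge into $t_L$. The between-row pieces exist precisely because $(v_{i_r},v_{i_{r+1}})$ is an edge of $G$, which is the one and only place the clique hypothesis enters. To see that $Q_L$ is actually an $(s_L,t_L)$-shortest path (needed for \textsf{$p$-DSP}), I would argue that any such path must visit some row-$i$ gadget for every $i\in L$, and each transition from row $i_r$ to row $i_{r+1}$ must consume at least the $2(i_{r+1}-i_r)-1$ edges of some between-row path together with at least one gadget edge, which matches the length of $Q_L$ exactly.

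To check pairwise node-disjointness of these $p$ paths, I would split the argument into four cases. Paths $P(v_i)$ and $P(v_j)$ with $i\neq j$ lie in disjoint rows $i$ and $j$ of $G'$. For $P(v_i)$ versus $Q_L$: if $i\notin L$, then $Q_L$ never enters row $i$; if $i\in L$, then $Q_L$'s only row-$i$ vertices are $a_L(v_i)$ and $b_L(v_i)$ inside $G_{v_i}$, which is the gadget that $P(v_i)$ skips. For $Q_L$ versus $Q_{L'}$ with $L\neq L'$: the between-row paths $\Pi_L(\cdot,\cdot)$ and $\Pi_{L'}(\cdot,\cdot)$ are internally vertex-disjoint by construction as separately inserted paths; and in any shared row $i\in L\cap L'$, the two paths touch only the distinct gadget positions $(a_L(v),b_L(v))$ and $(a_{L'}(v),b_{L'}(v))$, since $L$ and $L'$ correspond to different indices among the lists covering $i$. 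The main subtlety I anticipate is this last case: I must use carefully the feature that each $v\in V_i$ comes with $l(i)$ parallel $(a_r(v),b_r(v))$ pairs, one per list through $i$, so that distinct lists really do route through disjoint nodes of any shared gadget and no accidental collision arises at the endpoints of the between-row paths.
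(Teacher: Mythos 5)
Your proposal is correct and follows essentially the same approach as the paper: route $P(v_i)$ in each row $i$ (skipping $G_{v_i}$ via the skip edge), route $Q_L$ through $a_L(v_{i_1}),b_L(v_{i_1}),\ldots,a_L(v_{i_t}),b_L(v_{i_t})$ using the between-row paths enabled by the clique edges, and then check pairwise disjointness by splitting into the same three cases ($P$ vs.\ $P$, $P$ vs.\ $Q$, $Q$ vs.\ $Q$). The one place where your sketch is a bit too casual is the shortest-path argument for $Q_L$: you phrase the lower bound as if an arbitrary $(s_L,t_L)$-path transitions \emph{directly} from row $i_r$ to row $i_{r+1}$, consuming ``the $2(i_{r+1}-i_r)-1$ edges of some between-row path.'' In general such a path may wander through intermediate rows not in $L$, and then no single between-row path accounts for that cost. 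The paper handles this cleanly by letting $j_1<\cdots<j_r$ be \emph{all} rows the path visits, telescoping the between-row lengths $\sum (2(j_{q+1}-j_q)-1)=2(j_r-j_1)-(r-1)$, and separately charging at least one gadget edge per visited row; the intermediate-row gadget edges exactly compensate for the savings in between-row length, recovering the same lower bound. You should make that accounting explicit rather than implicitly assuming direct transitions, but this does not change the conclusion. Your treatment of the $Q_L$ vs.\ $Q_{L'}$ disjointness case is actually spelled out in more detail than the paper's (which dismisses it in one sentence); your observation that distinct lists route through distinct column indices $(a_L,b_L)$ vs.\ $(a_{L'},b_{L'})$ within a shared gadget is exactly the point.
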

    \begin{proof}
        Let $(v_1,\dots,v_k)\in V_1\times\dots\times V_k$ be a $k$-clique in $G$.

        For each list 
        \[L = \pair{i_1, \dots, i_\ell}\] in $\cal{L}$, 
        let $Q(L)$ be the path which begins at $s_L$, passes through nodes         
            \[a_L(v_{i_1}), b_L(v_{i_1}), \dots, a_L(v_{i_\ell}), b_L(v_{i_\ell})\]
        in that order using the paths corresponding to list $L$, and then ends at $t_L$.
        More precisely, $Q(L)$ goes from $a_L(v_{i_j})$ to $b_L(v_{i_j})$ for each $j\in [\ell]$ by taking the edge between these vertices in $G'$, and goes from $b_L(v_{i_j})$ to $a_L(v_{i_{j+1}})$ for each $j\in [\ell-1]$ by traversing the path $\Pi(v_{i_j}, v_{i_{j+1}})$.

        \begin{claim}
            \label{claim:Q-are-shortest}
            For every list $L\in \cal{L}$,  $Q(L)$ is a shortest path. 
        \end{claim}
        \begin{claimproof}
            The path $\Pi(v_{i_j}, v_{i_{j+1}})$ has length $2(i_{j+1} - i_j) - 1$.
            By telescoping, this means the sum of the lengths of the paths of $\Pi(v_{i_j}, v_{i_{j+1}})$ over all $j\in [\ell-1]$ is $2(i_{\ell} - i_1) - \ell$.
            The only edges of $Q(L)$  not in the $\Pi(v_{i_j}, v_{i_{j+1}})$ paths are the edge from $s_L$ to $a_L(v_{i_1})$, the edges from $a_L(v_{i_j})$ to $b_L(v_{i_j})$ for each $j\in [\ell]$, and the edge from $b_L(v_{i_\ell})$ to $t_L$.
            It follows that the length of $Q(L)$ is
            \[2 + \ell + \left(2(i_{\ell} - i_1) - \ell\right) = 2(i_{\ell} - i_1 + 1).\] 

            To prove the claim, it suffices to show that any $(s_L,t_L)$-path in $G'$ has length at least \[2(i_\ell - i_1 + 1).\]

            To that end, take an arbitrary $(s_L, t_L)$-path $Q$ in $G'$.
            Let $j_1, \dots, j_r\in [k]$ be the sequence of distinct rows $Q$ visits in order. 
            Since $G'$ only includes edges going from row $x$ to row $y$ for $x < y$, we must have $j_1 < j_2 < \dots < j_r$.

            By construction the only way to go from row $x$ to row $y$ in $G'$ is to traverse a path $\Pi_{L'}(u,w)$ for some nodes $u\in V_x$ and $w\in V_y$, and list $L'\in\mathcal{L}$.
            Such a path begins at the bottom of row $x$ at $b_L'(u)$, ends at the top of row $y$ at $a_L'(w)$, and has length $2(y-x) - 1$.
            So the sum of the lengths of all such paths $Q$ traverses to hit rows $j_1, \dots, j_r$ is at least $2(j_r - j_1) - r$ by telescoping. 
            Moreover, $Q$ must traverse at least one edge within each row it visits, to get from the top of that row to the bottom of that row. 
            This accounts for at least $r$ additional edges, so $Q$ has length at least
                \[\left(2(j_r-j_1) - r\right) + r = 2(j_r - j_1).\]
            Finally, $Q$ also must contain an edge leaving $s_L$, and an edge entering $t_L$.
            This accounts for two more edges, so $Q$ has length at least $2(j_r-j_1+1)$.

            By definition, $i_1$ and $i_\ell$ are the first and final elements in list $L$ respectively.
            Consequently, edges exiting $s_L$ must go to row $i_1$, and edges entering $t_L$ must depart from $i_\ell$.
            This forces $j_1 = i_1$ to be the first row $Q$ enters, and $j_r = i_\ell$ to be the last row $Q$ enters.
            Thus $Q$ has length at least 
                \[2(j_r-j_1+1) = 2(i_\ell - i_1 + 1).\]
            This lower bound for the length of an arbitrary $(s_L,t_L)$-path in $G'$ is equal to the length of $Q(L)$.
            Thus $Q(L)$ is a shortest path in $G'$, as claimed.
        \end{claimproof}

        We claim that the collection of paths obtained by taking $P(v_i)$ for each $i\in [k]$ and $Q(L)$ for each $L\in\ca{L}$ is a collection of node-disjoint paths in $G'$.

        Indeed, for each $i\in [k]$, $P(v_i)$ is in the $i^{\text{th}}$ row of $G'$, so $P(v_i)$ and $P(v_j)$ are disjoint for $i\neq j$.
        Similarly, for each choice of lists $L\neq K$ in $\ca{L}$, $Q(L)$ and $Q(K)$ pass through distinct vertices.

        Take $i\in [k]$ and $L\in \ca{L}$.
        The only way $Q(L)$ can intersect $P(v_i)$ is if $i\in L$.
        If $i\in L$, the only nodes at which $Q(L)$ can intersect $P(v_i)$ are $a_L(v_i)$ and $b_L(v_i)$.
        However, $P(v_i)$ skips $G_{v_i}$, so neither of these nodes occur in $P(v_i)$.
        Thus $P(v_i)$ and $Q(L)$ are disjoint.

        Since by \Cref{lem:row-sp} the $P(v_i)$ are shortest paths, and by \Cref{claim:Q-are-shortest} the $Q(L)$ are shortest paths, we have proven the desired result. 
    \end{proof}

    \begin{lemma}[Disjoint Paths $\Rightarrow$ Clique]
    \label{lem:dp-to-clique}
        If $G'$ has node-disjoint paths from its sources to its targets, then $G$ has a $k$-clique.
    \end{lemma}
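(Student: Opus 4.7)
The plan is to extract, from the given collection of node-disjoint paths, a $k$-tuple $(v_1, \ldots, v_k) \in V_1 \times \cdots \times V_k$ and argue that it forms a $k$-clique in $G$. Throughout, let $P_i$ denote the $(s_i, t_i)$-path and $Q_L$ the $(s_L, t_L)$-path from the given disjoint collection. Since $s_i$ and $t_i$ connect to the rest of $G'$ only through row $i$, and all inter-row edges travel strictly ``forward'' via the $\Pi_L$ paths, each $P_i$ must lie entirely in row $i$.

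The first and most delicate step is to show that each $P_i$ traverses exactly one skip edge, and therefore skips a unique gadget which I will call $G_{v_i}$. Suppose for contradiction that $P_i$ uses no skip edge. Then within row $i$ the path $P_i$ enters the first gadget at a top node and exits the last gadget at a bottom node, and since no edge leads from a bottom node back to a top node, $P_i$ must ``drop'' from some $a_{r^*}(v_{i,j^*})$ to $b_{r^*}(v_{i,j^*})$ at a unique position $(j^*, r^*)$. Consequently, $P_i$ occupies every top node of the gadgets strictly before $G_{v_{i,j^*}}$ and every bottom node of the gadgets strictly after. The covering family property guarantees $l(i) \geq 1$, so there is some list $L \in \mathcal{L}$ containing $i$, and the path $Q_L$ must traverse row $i$ entering at some $a_L(v_L)$ and leaving at some $b_L(w_L)$ with $v_L \leq w_L$ in the order on $V_i$. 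A short case analysis on the positions of $v_L$ and $w_L$ relative to $v_{i,j^*}$ then shows that requiring both endpoints to avoid the nodes occupied by $P_i$ forces either $v_L > w_L$ or a contradictory condition ($L > r^*$ and $L < r^*$) on the level $L$ versus the drop level $r^*$, completing the contradiction.

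Having defined the $v_i$, the next step is to show that $Q_L$ uses precisely $\set{a_L(v_i), b_L(v_i)}$ in row $i$ for each $i \in L$. Now $P_i$ occupies all top nodes strictly before $v_i$ and all bottom nodes strictly after $v_i$, while $G_{v_i}$ is entirely unused; hence $a_L(v_L)$ is free only if $v_L$ is at or after $v_i$ in the row order, and $b_L(w_L)$ is free only if $w_L$ is at or before $v_i$. Combining with $v_L \leq w_L$ forces $v_L = w_L = v_i$, and since there are no backward edges within row $i$, the only $a_L(v_i)$-to-$b_L(v_i)$ subpath inside $G_{v_i}$ is the direct edge, so $Q_L$ touches no other node of row $i$.

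Writing $L = \langle i_1, \ldots, i_\ell \rangle$, it then follows that $Q_L$ visits the rows $i_1, i_2, \ldots, i_\ell$ in this order and uses the gadget $G_{v_{i_j}}$ inside each row $i_j$; each transition from row $i_j$ to $i_{j+1}$ must be made through the path $\Pi_L(v_{i_j}, v_{i_{j+1}})$, which is present in $G'$ if and only if $(v_{i_j}, v_{i_{j+1}})$ is an edge of $G$. Because $\mathcal{L}$ is a $k$-covering family, every pair $(i, j)$ with $i < j$ appears as consecutive elements of some list, so $(v_i, v_j)$ is an edge of $G$ for every such pair, showing that $(v_1, \ldots, v_k)$ is a $k$-clique. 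The main obstacle is the case analysis in the first step, which must carefully account for all possible drop positions $(j^*, r^*)$ of $P_i$ and relate them to the level at which $Q_L$ crosses row $i$.
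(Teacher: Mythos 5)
The argument is missing the key structural claim that the paper establishes as Claim \ref{claim:lists-are-safe}: that each $Q_L$ only ever uses $\Pi_{L}$-paths (equivalently, that $Q_L$ stays in ``column $L$'' and never traverses an edge of $\Pi_{L'}(\cdot,\cdot)$ for $L'\ne L$). You repeatedly \emph{assume} this. In your first step you write that $Q_L$ ``must traverse row $i$ entering at some $a_L(v_L)$ and leaving at some $b_L(w_L)$''; in your second step you assert $Q_L$ ``uses precisely $\{a_L(v_i),b_L(v_i)\}$ in row $i$''; and in the final step you assert that ``each transition from row $i_j$ to $i_{j+1}$ must be made through the path $\Pi_L(v_{i_j},v_{i_{j+1}})$.'' None of these is automatic: $Q_L$ could, a priori, cross rows via some $\Pi_{L'}$ with $L'\neq L$. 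Doing so would let $Q_L$ enter and leave row $i$ at a \emph{different} column $L'$, or skip over a row $i\in L$ entirely (the $\Pi$-paths for non-consecutive rows in another list bypass intermediate rows). If that happens, the pair of rows whose adjacency $Q_L$ certifies is not the pair assigned to $L$, and your final conclusion that $(v_{i_j},v_{i_{j+1}})\in E$ for consecutive members of $L$ does not follow.

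This is not a presentational nitpick; the paper devotes a non-trivial argument to ruling out exactly this behavior. Its proof of Claim \ref{claim:lists-are-safe} proceeds by a minimal-counterexample argument over the (arbitrary but fixed) ordering of $\mathcal{L}$: choosing $L$ latest in this order such that $Q_L$ uses a foreign $\Pi_{L'}$-edge and deducing that $Q_{L'}$ must then also misbehave with $L'$ strictly later, a contradiction. Your local disjointness computation (entry at position $\ge v_i$, exit at position $\le v_i$, forward edges force $v_L=w_L=v_i$) is correct as far as it goes, and it is essentially the reasoning the paper also uses once lists-are-safe is in hand; but it is a purely within-row argument and cannot by itself prevent cross-list $\Pi$-path usage, nor can it guarantee that $Q_L$ even enters row $i$ for an interior $i\in L$. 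The same issue infects your first step (showing each $P_i$ uses a skip edge): for $i$ not a first/last element of any list, you need lists-are-safe to know that $Q_L$ actually visits row $i$. So the proposal, as written, has a genuine gap: the lists-are-safe claim (or an equivalent) must be stated and proved before any of your three steps go through.
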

    \begin{proof}
        Let $\ca{F}$ be a family of node-disjoint paths from the sources to the targets of $G'$.

        For each $i\in [k]$, let $P_i$ denote the $(s_i,t_i)$-path in $\ca{F}$.
        
        For each $L\in\ca{L}$, let $Q_L$ denote the $(s_L,t_L)$-path in $\ca{F}$.

        By construction of $G'$, each $P_i$ must stay in row $i$.
        Since skip edges take us from the top of a row to the bottom of a row, each $P_i$ uses at most one skip edge. 

        We say a path $Q$ ``uses the edges corresponding to a list $L\in\mathcal{L}$'' if $Q$ uses an edge of $\Pi_L(u,w)$, for some vertices $u$ and $w$.

        \begin{claim}
            \label{claim:lists-are-safe}
            For each list $L\in \cal{L}$, the path $Q_L$ does not use edges corresponding to list $L'\in\mathcal{L}$, for any $L'\neq L$.
        \end{claim}
        \begin{claimproof}
            Suppose to the contrary that there does exist a list $L\in\cal{L}$ such that $Q_L$ uses edges corresponding to lists distinct from $\cal{L}$.
            Pick such an $L$ which occurs latest in the ordering of $\cal{L}$.

            Consider the first edge $Q_L$ uses which corresponds to another list.
            
            Suppose this edge belongs to  $\Pi_{L'}(u,w)$ for some list $L'\neq L$, and vertices $u$ and $w$.
            Let $x\in [k]$ be the index such that $u\in V_x$.
            Since the edge we are considering is the first edge $Q_L$ uses which corresponds
            to a list other than $L$, before  this edge $Q_L$ can only have entered rows with indices contained in $L$.
            Thus $x\in L$, and $Q_L$ must have entered row $x$ at a vertex $a_L(v)$, for some $v\in V_x$.
            We also know that $x\in L'$, since otherwise $\Pi_{L'}(u,w)$ would not exist.
            Since $Q_L$ uses an edge of $\Pi_{L'}(u,w)$, we know that the starting vertex of this path $b_{L'}(u)$ must occur after $a_L(v)$ in the topological order of $G'$.
            
            Previously, we observed that each $P_i$ uses at most one skip edge.
            We claim that path $P_x$ uses the skip edge to skip gadget $G_v$ in row $x$.
            Suppose to the contrary that $P_x$ does not use this skip edge.
            Since $P_x$ and $Q_L$ are disjoint, $P_x$ must pass through $b_L(v)$ (otherwise it would hit $a_L(v)$).
            Since $b_{L'}(u)$ occurs after $a_L(v)$ in the topological order, this means that $P_x$ passes through $b_{L'}(u)$.
            This contradicts our assumption that $P_x$ and $Q_L$ are disjoint.

            Thus $P_x$ uses the skip edge corresponding to $G_v$ in row $x$, as claimed.

            Since $P_x$ uses this skip edge, we know that $P_x$ passes through $b_K(v')$ for all lists $K$ containing $x$ and all $v'\in V_x$ after $v$ in the order on $V_x$.
            Since $b_{L'}(u)$ is not in $P_x$ (because $P_x$ and $Q_L$ are disjoint), we must have $u=v$.

            Then for $a_L(v)$ to occur before $b_{L'}(v)$ in the topological order, we must have $L$ occur before $L'$ in the ordering for $\cal{L}$.

            Now, observe that $Q_{L'}$ cannot cross between rows while only using edges corresponding to $L'$.
            Indeed, if $Q_{L'}$ only uses edges corresponding to $L'$ to go between rows, it would pass through a column of row $x$, since $x\in L'$.
            To be disjoint from $P_x$, this is only possible if $Q_{L'}$ uses a column of $G_v$, since $P_x$ hits the columns of every other gadget in row $x$. 
            But $Q_{L}$ and $Q_{L'}$ are disjoint, and we already said that $Q_L$ hits the column of $G_v$ corresponding to $L'$.

            Thus, $Q_{L'}$ must use an edge corresponding to a list distinct from $L'$.
            We also observed earlier that $L'$ occurs after $L$ in the ordering on $\cal{L}$.

            This contradicts our choice of $L$ as the final list in $\cal{L}$ with the property that $Q_L$ uses an edge corresponding to a list distinct from $L$.

            Thus our initial assumption was false, and the claim holds.            
        \end{claimproof}

        \begin{claim}
            \label{captain-skipper}
            For every $i\in [k]$, $P_i$ uses a skip edge.
        \end{claim}
        \begin{claimproof}
        Suppose to the contrary that there exists an index $i\in [k]$ such that $P_i$ does not use a skip edge.
        Then for every $v\in V_i$ and $r\in [s(i)]$, the path $P_i$ passes through  at least one of $a_r(v)$ and $b_r(v)$.
        Let $L\in\ca{L}$ be a list containing $i$ (such a list exists because
        $\ca{L}$ is a $k$-covering family).
        By \Cref{claim:lists-are-safe} and the fact that each $P_i$ uses at most one skip edge, the path $Q_L$ must pass through vertices $a_L(v)$ and $b_L(v)$ for some $v\in V_i$.
        Consequently, $Q_L$ intersects $P_i$, which contradicts the assumption that paths in $\ca{F}$ are disjoint.
        So each $P_i$ traverses a skip edge as claimed.
        \end{claimproof}

        Since a skip edge moves a path from the top nodes to the bottom nodes of a row, by \Cref{captain-skipper} we know that each $P_i$ traverses exactly one skip edge.    
        
        For each $i\in [k]$, let $v_i\in V_i$ be the unique vertex such that $P_i$ traverses the edge skipping $G_{v_i}$.
        Then we claim that $(v_1, \dots, v_k)$ is a clique in $G$.

        Indeed, take any pair $(i,j)\in[k]^2$ with $i<j$.
        Let $L\in\ca{L}$ be a list covering $(i,j)$.

By \Cref{claim:lists-are-safe},  
        the path $Q_L$ traverses a path $\Pi_L(u,w)$ for some $u\in V_i$ and $w\in V_j$.
        
        Since the only gadgets skipped by $P_i$ and $P_j$ are $G_{v_i}$ and $G_{v_j}$ respectively, and $Q_L$ is disjoint from $P_i$ and $P_j$, we see that in fact $Q_L$ traverses the path $\Pi_L(v_i,v_j)$.
        
        But this path exists in $G'$ only if $(v_i,v_j)$ is an edge in $G$.
        
        So $(v_i,v_j)$ is an edge in $G$ for all $i<j$, and thus $G$ contains a $k$-clique as claimed.        
    \end{proof}

\lbdp*
\begin{proof}
    Construct a collection $\ca{L}$ of $\lambda=\lfloor k^2/4\rfloor$ lists which form a $k$-covering family using \Cref{puzzle-solution}.
    For constant $k$, this takes $O(1)$ time.
    This collection has the property that each element of $[k]$ shows up in fewer than $k$ list of $\mathcal{L}$.

    Let $G$ be the input instance of \textsf{$k$-Clique}.

    Using $\ca{L}$ and $G$, construct the graph $G'$ described previously in this section.
    
    By \Cref{lem:clique-to-dsp,lem:dp-to-clique}, solving \textsf{$p$-DP} or \textsf{$p$-DSP} on $G'$ solves \textsf{$k$-Clique} on $G$, where 
        \[p = k + \lambda = k + \lfloor k^2/4\rfloor.\]
    
    Since $G'$ consists of $2(k + \lambda)$ terminals and $kn$ gadget graphs $G_v$, each on fewer than $2k$ nodes, the graph $G'$ has at most $O(kn)$ nodes.
    Since we can construct $G'$ in time linear in its size, this proves the desired result.
\end{proof}

\section{Conclusion}
\label{sec:conclusion}

In this work, we obtained linear time algorithms for \textsf{2-DSP} in undirected graphs and DAGs.
These algorithms are based off algebraic methods, and as a consequence are \emph{randomized} and only solve the \emph{decision}, rather than search, version of \textsf{2-DSP}.
This motivates the following questions:

\begin{open}
    Is there a \emph{deterministic} linear time algorithm solving \textsf{2-DSP}?
\end{open}

\begin{open}
    Given a DAG or undirected graph $G$  with sources $s_1, s_2$ and targets $t_1, t_2$, is there a linear time algorithm \emph{finding} disjoint $(s_i,t_i)$-shortest paths in $G$ for $i\in \set{1,2}$?
\end{open}

It is also an interesting research direction to see if algebraic methods can help design faster algorithms for \textsf{$k$-DSP} in undirected graphs and DAGs when $k\ge 3$, or help tackle this problem in the case of general directed graphs.

In this work, we also established tighter reductions from finding cliques to disjoint path and shortest path problems.
There still remain large gaps however, between the current best conditional lower bounds and current fastest algorithms for these problems.

\begin{open}
    Is there a fixed integer $k\ge 3$ and constant $\delta > 0$ such that \textsf{$k$-DSP} in DAGs can be solved in $O(n^{k+1-\delta})$ time? 
    Or does some popular hypothesis rule out such an algorithm?
\end{open}

Since \textsf{$k$-Clique} admits nontrivial algorithms by reduction to matrix multiplication, it is possible that \textsf{$k$-DSP} can be solved faster using fast matrix multiplication algorithms.
On the other hand, if we want to rule out this possibility and obtain better conditional lower bounds for \textsf{$k$-DSP}, we should design reductions from problems which are harder than \textsf{$k$-Clique}.
In this context, a natural strategy would be to reduce from \textsf{Negative $k$-Clique} and \textsf{3-Uniform $k$-Hyperclique} instead, since these problems are conjectured to require $n^{k-o(k)}$ time to solve (and it is not known how to leverage matrix multiplication to solve these problems faster than exhaustive search).

For all $k\ge 3$, the current fastest algorithm for \textsf{$k$-DSP} in undirected graphs takes $n^{O(k\cdot k!)}$ time, much slower than the $O(mn^{k-1})$ time algorithm known for the problem in DAGs.
Despite this, every conditional lower bound that has been established for \textsf{$k$-DSP} in undirected graphs so far also extends to showing the same lower bound for the problem in DAGs.
This is bizarre behavior, and suggests we should try establishing a lower bound which separates the complexities of \textsf{$k$-DSP} in undirected graphs and DAGs.
If designing such a lower bound proves difficult, that would offer circumstantial evidence that far faster algorithms for \textsf{$k$-DSP} in undirected graphs exist.

\begin{open}
    Can we show a conditional lower bound for \textsf{$k$-DSP} in undirected graphs, which is stronger than any conditional lower bound known for \textsf{$k$-DSP} in DAGs? 
\end{open}

Finally, for large $k$, the best conditional time lower bounds we have for \textsf{$k$-DP} in DAGs are far weaker than the analogous lower bounds we have for \textsf{$k$-DSP} in DAGs.
This is despite the fact that the fastest algorithms we have for both problems run in the same time.
It would nice to resolve this discrepancy, either by designing faster algorithms for the latter problem, or showing better lower bounds for the former problem.

\begin{open}
Is there a fixed integer $k\ge 3$  such that we can solve \textsf{$k$-DP} in DAGs faster than we can solve \textsf{$k$-DSP} in weighted DAGs? 
\end{open}

\begin{open}
    Can we show a conditional lower bound for \textsf{$k$-DP} in DAGs matching the best known conditional lower bound for \textsf{$k$-DSP} in DAGs? 
\end{open}

\newpage

\bibliographystyle{alpha}
\bibliography{main}

\newpage

\appendix 

% \appendix 

\section{Standard Reductions}
\label{app:unnecessary}

In this section, we collect proofs of some standard reductions that allow us to move between different variants of disjoint path problems.

\begin{proposition}[Edge-Disjoint $\le$ Vertex-Disjoint]
    \label{app-prop:edge-to-vertex}
    There is a reduction from \textsf{$k$-EDSP} on $n$ vertices and $m$ edges to \textsf{$k$-DSP} on $m+k(n+2)$ nodes and $2k(m+1)$ edges.
\end{proposition}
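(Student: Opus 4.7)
The plan is to use a standard vertex-for-edge construction. The first step is to define a graph $G'$ as follows: for each $i \in [k]$ take a copy of $V(G)$, writing $v^{(i)}$ for the copy of vertex $v$; for each edge $e \in E(G)$ introduce a single \emph{edge-vertex} $x_e$ that is shared across all copies; and introduce fresh sources $s_i'$ and targets $t_i'$ for each $i \in [k]$. For each $i \in [k]$ and each edge $e = (u,v) \in E(G)$ of weight $\ell(u,v)$, add edges $u^{(i)} \to x_e$ and $x_e \to v^{(i)}$, each of weight $\ell(u,v)/2$; in the undirected setting these become an undirected path $u^{(i)} - x_e - v^{(i)}$. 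Finally, for each $i \in [k]$, add edges $s_i' \to s_i^{(i)}$ and $t_i^{(i)} \to t_i'$ of weight $1$. A direct count gives $m + kn + 2k = m + k(n+2)$ vertices and $2km + 2k = 2k(m+1)$ edges, matching the proposition.

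Next, the plan is to prove that $G$ admits $k$ edge-disjoint $(s_i,t_i)$-shortest paths if and only if $G'$ admits $k$ internally vertex-disjoint $(s_i',t_i')$-shortest paths. For the forward direction, given edge-disjoint shortest paths $P_i$ in $G$, I would lift each $P_i$ to a path $P_i'$ in $G'$ by interleaving the $i$-th vertex-copies along $P_i$ with the edge-vertices $x_e$ of its edges, prepending $s_i' \to s_i^{(i)}$ and appending $t_i^{(i)} \to t_i'$. By construction the length of each $P_i'$ equals $\dist(s_i,t_i) + 2$. Vertex-disjointness then follows because $v^{(i)}$ and $w^{(j)}$ are distinct whenever $i \neq j$, and because each $x_e$ can appear in at most one $P_i'$ given that the $P_i$ are edge-disjoint in $G$.

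For the reverse direction, I would project any vertex-disjoint family of $(s_i',t_i')$-shortest paths in $G'$ back to $G$ by reading off only the vertex-copies along each path in order and stripping the superscripts. The structure of $G'$ forces this projection to yield a walk in $G$, which must be simple since the lifted path is simple; it is also edge-disjoint from the other projections since any shared edge of $G$ would entail sharing an $x_e$ in $G'$. A brief length check then confirms each projection is an $(s_i,t_i)$-shortest path.

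The main obstacle will be verifying that shortest paths in $G'$ correspond precisely to shortest paths in $G$, with no ``shortcut'' paths that would invalidate the correspondence. Every $(s_i',t_i')$-path in $G'$ is forced to begin with the unique edge out of $s_i'$ and end with the unique edge into $t_i'$, and once inside the $i$-th copy the only permitted transitions have the form $v^{(i)} \to x_e \to w^{(i)}$ for $e=(v,w) \in E(G)$, contributing weight exactly $\ell(v,w)$. A routine length computation then shows that the $(s_i',t_i')$-distance in $G'$ equals $\dist(s_i,t_i) + 2$, which is the crux making the equivalence go through. The construction also preserves the DAG property when $G$ is acyclic, so the reduction handles both the directed-acyclic and undirected cases uniformly.
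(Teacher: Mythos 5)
Your construction is essentially the paper's: $k$ vertex-copies, one shared node per edge of $G$, fresh source/target terminals, with the same node and edge counts ($m+k(n+2)$ and $2k(m+1)$). The only cosmetic difference is that you put weight $\ell(u,v)/2$ on each half-edge so lifted path lengths match exactly, whereas the paper uses $\ell(u,v)$ and works with the uniformly-doubled length $2\dist_G(s_i,t_i)+2$; either scaling preserves which paths are shortest.

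One slip to fix in the reverse direction: the projection of $P_i'$ need not be simple merely because $P_i'$ is simple. A simple path in $G'$ could visit two distinct copies $v^{(a)}$ and $v^{(b)}$ of the same vertex $v$, whose projection then revisits $v$. The paper closes this by noting that a shortest $(s_i',t_i')$-path never visits two copies of one vertex (the subpath between them could be spliced out, since all copies of $v$ share out-neighborhoods, strictly shortening the path). Equivalently, the length check you mention at the end already forces the projection to be a shortest and hence (with positive weights) simple walk; but as written you invoke simplicity before that check, so the order of the argument should be adjusted.
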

\begin{proof}
    Let $G$ be the instance of \textsf{$k$-EDSP} on $n$ vertices and $m$ edges, with sources $s_1, \dots, s_k$ and targets $t_1, \dots, t_k$.
    Let $V$ and $E$ be the vertex and edge sets of $G$ respectively.
    
    We construct a graph $G'$ as follows.
    For every vertex $v\in V$, $G'$ has $k$ nodes $v_1, \dots, v_k$ (we call these the \emph{copies} of $v$ in $G'$).
    For every edge $e\in E$, $G'$ has a node $e$.
    For every edge $e = (v,w)\in E$, we include edges in $G'$ from $v_i$ to $e$ and from $e$ to $w_i$ for all $i\in [k]$.
    If $e = (v,w)$ had weight $\ell(v,w)$ in $G$, then the $(v_i,e)$ and $(e,w_i)$ edges in $G$ each have weight $\ell(v,w)$.

    Finally, we introduce new sources $s'_1, \dots, s'_k$ and targets $t'_1, \dots, t'_k$ in $G'$.
    Then for every $i,j\in [k]$, we add edges from $s'_i$ to $(s_i)_j$ and from $(t_i)_j$ to $t'_i$ of weight $1$.

    From this construction, $G'$ has $m+kn + 2k$ nodes and $2km + 2k$ edges as claimed.

    Suppose we have vertex-disjoint $(s'_i,t'_i)$-shortest paths $P'_i$ in $G'$.
    We can assume each $P'_i$ never traverses two copies of the same vertex $v\in V$ (otherwise, we could remove the subpath between two copies of $v$ and obtain a shorter path than $P'_i$).
    We map each $P'_i$ to an $(s_i,t_i)$-shortest path $P_i$ in $G$, by having $P_i$ pass through the vertices $v\in V$ for which $P'_i$ contains a copy of $v$, in the order the copies appear in $P'_i$.

    By construction, if $P_i$ has length $\ell$, then $P'_i$ has length $2\ell + 2$.
    So since the $P'_i$ are shortest paths, the $P_i$ are also shortest paths.
    The $P_i$ are also edge-disjoint, since if some $P_i$ and $P_j$ overlap at an edge $e$, the paths $P'_i$ and $P'_j$ would overlap at node $e$ in $G'$, which would contradict the assumption that the $P'_i$ are vertex-disjoint.

    So any solution to \textsf{$k$-DSP} on $G'$ pulls back to a solution to \textsf{$k$-EDSP} on $G$.

    Conversely, given edge-disjoint $(s_i,t_i)$-shortest paths $P_i$ in $G$ of the form 
    \[P_i = \langle v_{i,1}, \dots, v_{i,\ell_i}\rangle \]
    we can produce $(s'_i,t'_i)$-paths $P'_i$ in $G'$ of the form 
    \[P'_i = \langle s'_i, (v_{i,1})_i, ((v_{i,1})_i, (v_{i,2})_i), (v_{i,2})_i, \dots, (v_{i,\ell_i})_i, t'_i\rangle.\]
    Similar reasoning to the above shows that the $P'_i$ are shortest paths in $G'$.
    These paths are vertex-disjoint because each $P'_i$ only uses the $i^{\text{th}}$ copies of $v\in V$, and the $P_i$ were edge-disjoint, so the $P'_i$ cannot overlap at any nodes of the form $e\in E$.

    So any solution to \textsf{$k$-EDSP} in $G$ has a corresponding solution to \textsf{$k$-DSP} in $G'$.

    This proves the desired result.
\end{proof}

As mentioned in \Cref{sec:intro}, \Cref{app-prop:edge-to-vertex} combined with \Cref{2dsp-undir,2dsp-DAG} implies \Cref{corr:2-EDSP}, and  \Cref{app-prop:edge-to-vertex} combined with the $O(mn^{k-1})$ time algorithm for \textsf{$k$-DSP} in DAGs from  \cite[Theorem 3]{FortuneHopcroftWyllie1980} implies that \textsf{$k$-EDSP} in DAGs can be solved in $O(m^k)$ time for constant $k$.

\begin{proposition}[Disjoint Paths $\le$ Disjoint Shortest Paths]
    \label{app-prop:vertex-to-edge}
    There are reductions from \textsf{$k$-DP} on DAGs  to  \textsf{$k$-DSP} on DAGs and from \textsf{$k$-EDP} on DAGs to \textsf{$k$-EDSP} on DAGs with the same number of nodes and edges respectively.
\end{proposition}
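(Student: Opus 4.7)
The plan is to reweight the input DAG so that every path from $s_i$ to $t_i$ automatically becomes a shortest path. Once we have such a reweighting, the sets of $(s_i,t_i)$-paths and $(s_i,t_i)$-shortest paths coincide, so any solution to \textsf{$k$-DP} is a solution to \textsf{$k$-DSP} (and vice versa), both in the vertex-disjoint and edge-disjoint variants. Since we only change edge weights, the number of vertices and edges is preserved.

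The key step is the reweighting. Given the input DAG $G$, compute a topological order $v_1,\dots,v_n$ of its vertices in linear time. For each edge $(v_i,v_j)$ of $G$ (where necessarily $i<j$, since $G$ is a DAG), assign weight $\ell(v_i,v_j)=j-i$. All such weights are positive integers, as required. For any path $P=\langle v_{i_0},v_{i_1},\dots,v_{i_r}\rangle$ in $G$, its weight under $\ell$ telescopes to
\[
\sum_{q=0}^{r-1}(i_{q+1}-i_q) \;=\; i_r-i_0,
\]
which depends only on the endpoints of $P$. In particular, every $(s_i,t_i)$-path in $G$ has the same length, so every such path is an $(s_i,t_i)$-shortest path in the reweighted graph.

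This immediately yields both reductions. Given an instance of \textsf{$k$-DP} (resp.\ \textsf{$k$-EDP}) on $G$, output the weighted DAG $G'$ obtained from $G$ by the above reweighting, together with the same sources and targets, as an instance of \textsf{$k$-DSP} (resp.\ \textsf{$k$-EDSP}). Any collection of $(s_i,t_i)$-paths in $G$ is, verbatim, a collection of $(s_i,t_i)$-shortest paths in $G'$, and conversely. Vertex-disjointness (resp.\ edge-disjointness) is a purely combinatorial condition on paths and is likewise preserved in both directions. Since $G'$ has exactly the same vertex set and edge set as $G$, the reduction preserves the number of nodes and edges. There is no real obstacle here: the only subtle point is to observe that the telescoping identity makes all paths between fixed endpoints have identical total weight, which is precisely what is needed to collapse \textsf{$k$-DSP} onto \textsf{$k$-DP} (and \textsf{$k$-EDSP} onto \textsf{$k$-EDP}).
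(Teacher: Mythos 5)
Your proposal is correct and follows exactly the paper's argument: reweight each edge $(v_i,v_j)$ by $j-i$ using a topological order, so that telescoping forces every path between fixed endpoints to have the same length, making all $(s_i,t_i)$-paths shortest. No substantive differences to note.
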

\begin{proof}
    Let $G$ be the DAG which is an instance of \textsf{$k$-DP} or \textsf{$k$-EDP}.
    Take a topological order $v_1, \dots, v_n$ of the vertices in $G$.
    Let $G'$ be the DAG with the same vertex and edge sets as $G$, but where each edge $(v_i,v_j)$ has weight $(j-i)$.
    Then by telescoping, for any vertices $v_i$ and $v_j$ in $G'$, every path from $v_i$ to $v_j$ in $G'$ has total length $(j-i)$.
    
    Consequently, every path in $G$ becomes a shortest path in $G'$.
    
    So solving \textsf{$k$-DSP} and \textsf{$k$-EDSP} in $G'$ corresponds to solving \textsf{$k$-DP} and \textsf{$k$-EDP} respectively in $G$, which proves the desired result.
\end{proof}

As mentioned in \Cref{sec:intro}, \Cref{app-prop:vertex-to-edge} shows that \textsf{$k$-EDSP} in weighted DAGs generalizes the \textsf{$k$-EDP} problem in DAGs.
\end{document}